\newtheorem{question}{Question}
\def\denseformat{
\setlength{\textheight}{9in}
\setlength{\textwidth}{6.9in}
\setlength{\evensidemargin}{-0.2in}
\setlength{\oddsidemargin}{-0.2in}
\setlength{\headsep}{10pt}
\setlength{\topmargin}{-0.3in}
\setlength{\columnsep}{0.375in}
\setlength{\itemsep}{0pt}
}
\newtheorem{theorem}{Theorem}[section]
\newtheorem{proposition}[theorem]{Proposition}
\newtheorem{definition}[theorem]{Definition}
\newtheorem{claim}[theorem]{Claim}
\newtheorem{lemma}[theorem]{Lemma}
\newtheorem{corollary}[theorem]{Corollary}
\newtheorem{fact}[theorem]{Fact}
\newtheorem{observation}[theorem]{Observation}
\def\boldhead#1:{\par\vskip 7pt\noindent{\bf #1:}\hskip 10pt}
\def\ithead#1:{\par\vskip 7pt\noindent{\it #1:}\hskip 10pt}
\def\ceil#1{\lceil #1\rceil}
\def\inline#1:{\par\vskip 7pt\noindent{\bf #1:}\hskip 10pt}
\def\midinline#1:{\par\noindent{\bf #1:}\hskip 10pt}
\def\dnsinline#1:{\par\vskip -7pt\noindent{\bf #1:}\hskip 10pt}
\def\ddnsinline#1:{\newline{\bf #1:}\hskip 10pt}
\def\largeinline#1:{\par\vskip 7pt\noindent{\large\bf #1:}\hskip 10pt}
\long\def\commhide #1\commhideend{}
\long\def\commfull #1\commend{#1}
\long\def\commabs #1\commenda{}
\long\def\commtim #1\commendt{#1}
\long\def\commb #1\commbend{}
\long\def\commedit #1\commeditend{} 
\long\def\commB #1\commBend{}       
\long\def\commex #1\commexend{}     
\long\def\commsiena #1\commsienaend{}  
\long\def\commBI #1\commBIend{}  
\long\def\CProof #1\CQED{}
\def\blackslug{\hbox{\hskip 1pt \vrule width 4pt height 8pt
    depth 1.5pt \hskip 1pt}}
\def\QED{\quad\blackslug\lower 8.5pt\null\par}
\def\inQED{\quad\quad\blackslug}
\long\def\PPP#1{\noindent{\bf Proof:}{ #1}{\quad\blackslug\lower 8.5pt\null}}
\long\def\denspar #1\densend
\newif\ifnotesw\noteswtrue
\ifnotesw\marginpar[\hfill\(\top\)]{\(\top\)}\fi}%
\ifnotesw\marginpar[\hfill\(\bot\)]{\(\bot\)}\fi}
\newcommand{\mnote}[1]%
    {\ifnotesw\marginpar%
        [{\scriptsize\it\begin{minipage}[t]{\marginparwidth}
        \raggedleft#1%
                        \end{minipage}}]%
        {\scriptsize\it\begin{minipage}[t]{\marginparwidth}
        \raggedright#1%
                        \end{minipage}}%
    \fi}
\def\MathF{\hbox{\rm I\kern-2pt F}}
\def\MathP{\hbox{\rm I\kern-2pt P}}
\def\MathR{\hbox{\rm I\kern-2pt R}}
\def\MathZ{\hbox{\sf Z\kern-4pt Z}}
\def\MathN{\hbox{\rm I\kern-2pt I\kern-3.1pt N}}
\def\MathC{\hbox{\rm \kern0.7pt\raise0.8pt\hbox{\footnotesize I}
\kern-4.2pt C}}
\def\MathQ{\hbox{\rm I\kern-6pt Q}}
\newsavebox{\ttop}\newsavebox{\bbot}
\newcommand{\emp}{\varnothing}
\def\eps{\epsilon}
\def\epsi{\varepsilon}
\newcommand{\mst}{\mathrm{MST}}
\newcommand{\mc}{\mathcal{C}}
\newcommand{\dm}{\mathrm{\textsc{Dm}}}
\newcommand{\adm}{\mathrm{\textsc{Adm}}}
\newcommand{\adj}{\mathtt{Adj}}
\newcommand{\ce}{c(\epsilon)}
\newcommand{\dk}{\Delta_{\mathcal{K}}}
\newcommand{\dl}{\mathtt{L_{left}}}
\newcommand{\dr}{\mathtt{L_{right}}}
\newcommand{\pr}{\mathtt{proj}}
\newcommand{\ngc}{\mathtt{negCost}}
\newcommand{\be}{\mathtt{Bell}}
\newcommand{\mv}{\mathcal{V}}
\newcommand{\me}{\mathcal{E}}
\newcommand{\mg}{\mathcal{G}}
\newcommand{\un}{\mathsf{ucon}}
\newcommand{\cn}{\mathsf{con}}
\newcommand{\ddim}{\mathsf{ddim}}
\newcommand{\cluster}{\mathrm{\textsc{TreeClustering}}}
\def\eps{\epsilon}
\DeclareMathAlphabet{\mathpzc}{OT1}{pzc}{m}{it}
\newcommand{\gr}{\mathrm{grd}}
\newcommand{\child}{\mathtt{child}}
\newcommand{\cred}{\mathtt{cred}}
 \newcommand{\dwh}[1]{\widetilde{#1}}
\newcommand {\ignore} [1] {}
\title{Truly Optimal Euclidean Spanners}
\author{Hung Le}
\affil{University of Victoria and University of Massachusetts Amherst}
\author{Shay Solomon}
\affil{Tel Aviv University}
\date{}
\begin{document}
\maketitle

\maketitle
\begin{abstract}
Euclidean spanners are important geometric structures, having found numerous applications over the years. Cornerstone results in this area from the late 80s and early 90s state that for any $d$-dimensional $n$-point Euclidean space, there exists a $(1+\eps)$-spanner with $n \cdot O(\epsilon^{-d+1})$ edges and lightness (normalized weight) $O(\eps^{-2d})$.\footnote{The lightness of a spanner is the ratio of its weight and the MST weight.}
Surprisingly, the fundamental question of whether or not these dependencies on $\eps$ and $d$ for small $d$ can be improved  has remained elusive, even for $d = 2$.
This question naturally arises in any application of Euclidean spanners where precision is a necessity (thus $\eps$ is tiny).
In the most extreme case $\eps$ is inverse polynomial in $n$, and then one could potentially improve the size and lightness bounds by factors that are polynomial in $n$.

The state-of-the-art bounds $n \cdot O(\eps^{-d+1})$  and $O(\eps^{-2d})$ on the size and lightness of spanners are realized by the {\em greedy} spanner.
In 2016, Filtser and Solomon \cite{FS16} proved that, in low dimensional spaces, the greedy spanner is ``near-optimal''; informally, their result states that the greedy spanner for dimension $d$ is just as sparse and light as any other spanner {\em but for dimension larger by a constant factor}. Hence the question of whether the greedy spanner is truly optimal remained open to date.

The contribution of this paper is two-fold.
\begin{enumerate}
\item We resolve these longstanding questions by nailing down the dependencies on $\eps$ and $d$ and showing that the greedy spanner is truly optimal.
Specifically, for any $d= O(1), \eps = \Omega({n}^{-\frac{1}{d-1}})$:
\begin{itemize}
\item We show that there are $n$-point sets in $\mathbb{R}^d$ for which any $(1+\eps)$-spanner must have $n \cdot \Omega(\eps^{-d+1})$ edges,  implying that the greedy (and other)
spanners
achieve the optimal size.
\item We show that there are $n$-point sets in $\mathbb{R}^d$ for which any $(1+\eps)$-spanner must have lightness $\Omega(\eps^{-d})$,
and then improve the upper bound on the lightness of the greedy spanner from $O(\eps^{-2d})$   to $O(\eps^{-d} \log(\eps^{-1}))$. (The lightness upper and lower bounds match up to a lower-order term.)
\end{itemize}
\item We then complement our negative result for the size of spanners with a  rather counterintuitive positive result: Steiner points lead to a quadratic improvement in the size of spanners! Our bound for the size of Steiner spanners  in $\mathbb{R}^2$ is tight as well (up to a lower-order term).
\end{enumerate}

\end{abstract}

\section{Introduction}

\paragraph{1.1~ Background and motivation}

\paragraph{Sparse spanners.}
\vspace{-9pt}
Let $P$ be a set of $n$ points in $\mathbb R^d, d \ge 2$, and consider the complete weighted graph $G_P = (P,{P \choose 2})$ induced by $P$,
where the weight of any edge $(x,y) \in {P \choose 2}$ is  the Euclidean distance $|xy|$ between its endpoints.
Let $H = (P,E)$ be a spanning subgraph of $G_P$, with $E \subseteq {P \choose 2}$, where,  as in $G_P$, the weight function is given by the Euclidean distances. For any $t \ge 1$, $H$ is called a \emph{$t$-spanner} for  $P$ if for
every $x,y \in P$, the distance $d_G(x,y)$ between $x$ and $y$ in $G$ is at most $t |xy|$;
the parameter $t$ is called the {\em stretch} of the spanner and the most basic goal is to get it down to $1+\eps$, for arbitrarily small $\eps > 0$,
without using too many edges.
Euclidean spanners were introduced
in the pioneering SoCG'86 paper of Chew~\cite{Chew86}, who showed that $O(n)$ edges can be achieved with stretch $\sqrt{10}$, and later improved the stretch bound to 2~\cite{Chew89}.
The first Euclidean spanners with stretch $1+\eps$, for an arbitrarily small $\eps > 0$, were presented independently in the seminal works of
Clarkson~\cite{Clarkson87} (FOCS'87) and Keil~\cite{Keil88} (see also \cite{KG92}), which introduced the {\em $\Theta$-graph} in $\mathbb R^2$ and $\mathbb R^3$, and soon afterwards was generalized for any $\mathbb R^d$ in \cite{RS91,ADDJS93}.
The $\Theta$-graph is a natural variant of the {\em Yao graph}, introduced by Yao~\cite{Yao82} in 1982, where, roughly speaking, the space $\mathbb R^d$ around each point $p \in P$ is partitioned into cones of angle $\Theta$ each,
and then edges are added between each point $p \in P$ and its closest points in each of the cones centered around it.
The $\Theta$-graph is defined similarly, where, instead of connecting $p$ to its closest point in each cone, we connect it to a point
whose orthogonal projection to some fixed ray contained in the cone is closest to $p$. Taking $\Theta$ to be $c  \eps$, for   small enough constant $c$, one obtains a $(1+\eps)$-spanner with $O(n\epsilon^{-d+1})$ edges.

Euclidean spanners turned out to be a fundamental geometric construct, which evolved into an important research area \cite{Keil88,DN94,ADMSS95,RS98,GLN02,
AWY05,CG06,DES08,ES15},
with a plethora of applications, such as in geometric approximation algorithms \cite{RS98,GLNS02,GLNS02b,GLNS08}, geometric distance oracles
\cite{GLNS02,GLNS02b,GNS05,GLNS08}, network design \cite{HP00,MP00} and machine learning \cite{GKK17}. (See the book by Narasimhan and Smid \cite{NS07} for an excellent account on  Euclidean spanners and some of their applications.)

The tradeoff  between stretch $1+\eps$ and $O(n\epsilon^{-d+1})$ edges is the current state-of-the-art, and is also achieved by other spanner constructions,
including the {\em path-greedy} (abbreviated   as ``greedy'') spanner~\cite{ADDJS93,CDNS92,NS07}  and the
gap-greedy spanner~\cite{Salowe92,AS97}.
Surprisingly, despite the extensive body of work on Euclidean spanners since the  80s, the following fundamental question remained open, even for $d = 2$.
\begin{question} \label{qsize}
Is the tradeoff between stretch $1+\eps$ and $n \cdot O(\epsilon^{-d+1})$ edges tight?
\end{question}

We remark that the $\Theta$-graph and its variants provide stretch $1+\eps$ only for sufficiently small angle $\Theta$. These graphs have also been studied for fixed values of $\Theta$; see \cite{BDDOSSW12,BD13,BR14,BBDFKORTVX14,LZ16,JLZ18,BCHS19}, and the references therein.
The general goal here is to determine the best possible stretch for small values of $\Theta$.
E.g., it was shown in SODA'19 \cite{BCHS19} that the $\Theta$ graph with 4 cones, $\Theta_4$, has stretch $\le 17$. This line of work is somewhat orthogonal to Question~\ref{qsize}, which concerns the {\em asymptotic} behavior of the tradeoff.

\paragraph{Light spanners.} Another basic property of spanners,  important for various applications, is {\em lightness}, defined as the ratio of the spanner \emph{weight} (i.e., the sum of all edge weights in it) to the weight $w(\mst(P))$ of the minimum spanning tree $\mst(P)$ for
$P$. In SoCG'93, Das et al.~\cite{DHN93} showed that the aforementioned greedy spanner of \cite{ADDJS93} has constant lightness in $\mathbb R^3$,
which was generalized in SODA'95~\cite{DNS95} to $\mathbb R^d$ for any constant $d$;
the dependencies on $\eps$ and $d$ in the lightness bound were not explicated in \cite{ADDJS93,DHN93,DNS95}.
Later, in their seminal STOC'98 paper on approximating TSP in $\mathbb{R}^d$ using light spanners, Rao and Smith~\cite{RS98} showed that the greedy spanner has lightness $\epsilon^{-O(d)}$ in $\mathbb{R}^d$ for any constant $d$.
 In the open problems section of their paper~\cite{RS98}, they raised the question of determining the exact constant hiding in the $O$-notation $O(d)$ in the exponent of their upper bound.
 \footnote{In the full (unpublished) version of their paper, Rao and Smith remarked that in the Euclidean plane, a lightness bound of $O(\epsilon^{-2})$ is optimal by  pointing out that any $(1+\eps)$-spanner of a set of $\Theta(\frac{1}{\epsilon})$ points evenly placed on the boundary of a circle has lightness $\Omega(\frac{1}{\epsilon^2})$;
 this statement was not accompanied with a proof. In general, the full unpublished version of \cite{RS98} contains several claims on light spanners whose proofs are incomplete.}
All the proofs in \cite{ADDJS93,DHN93,DNS95,RS98} had many missing details.
The first complete proof was given in the book of \cite{NS07},  where a 60-page chapter was devoted to it, showing that the greedy $(1+\eps)$-spanner has lightness $O(\eps^{-2d})$.
In SODA'19, Borradaile, Le and Wulff-Nilsen~\cite{BLW19} presented a much shorter and arguably simpler alternative proof that, in fact, applies to the wider family of {\em doubling metrics} (see also \cite{Gottlieb15}),
but the lightness bound of $O(\eps^{-2d})$ remains the state-of-the-art.\footnote{The \emph{doubling dimension} of a metric space $(X,\delta)$ is the smallest value $\ddim$
such that every ball $B$ in the metric space can be covered by at most
$2^{\ddim}$ balls of half the radius of $B$.
This notion generalizes the Euclidean dimension, since the doubling dimension
of the Euclidean space $\mathbb R^d$ is $\Theta(d)$.
A metric space is called \emph{doubling} if its doubling dimension is constant.}
Therefore,  the following question remained open all these years, even for $d=2$.
\begin{question} \label{qlight}
Is the tradeoff between stretch $1+\eps$ and lightness $O(\epsilon^{-2d})$ tight?
\end{question}

\paragraph{Existential near-optimality}
In PODC'16, Filtser and Solomon \cite{FS16} studied the optimality of the greedy spanner in {\em doubling metrics},
which is wider than the family of low-dimensional Euclidean spaces.
They showed that the greedy spanner is \emph{existentially near-optimal} with respect to both the size and the lightness.
Roughly speaking, the greedy spanner is said to be existentially optimal
for a graph family $\mathcal G$ if its worst performance (in terms of size and/or lightness) over all graphs in $\mathcal G$
is just as good as the worst performance of an optimal spanner over all graphs in $\mathcal G$.
For doubling metrics, the loss encapsulated by the ``near-optimality'' guarantee comes into play with the dimension $\ddim$:
one compares the greedy $(1+\eps)$-spanner over metrics with doubling dimension $d$ with any other $(1+\eps)$-spanner, but over metrics with doubling dimension $2d$.
This loss in the dimension becomes more significant if we restrict the attention to Euclidean spaces,
as then the comparison is between Euclidean dimension $d$ and doubling dimension $2d$,
but spanners for metrics with doubling dimension $2d$ (or even $d$) tend to admit significantly weaker guarantees (as a function of $\eps$ and $d$)
than the corresponding ones for $d$-dimensional Euclidean spaces.

Consequently, this result by \cite{FS16} does not resolve Questions~\ref{qsize} and~\ref{qlight} for two reasons.
First, it only implies near-optimality of the greedy spanner, which, as mentioned, comes with a constant factor loss in the dimension,
and this constant factor slack appears in the exponents of the size and lightness bounds.
Second, and more importantly, even if we knew that the greedy spanner is truly optimal,
this still does not unveil the tight dependencies on $\eps$ and $d$.
In the current work we unveil the tight dependencies on $\eps$ and $d$
and as a corollary conclude that the greedy spanner is truly optimal. 

\paragraph{1.2~ Our contribution.~} Throughout we assume that $\epsilon \ll 1$.  We use $\tilde{O}_{\epsilon}$ and $\tilde{\Omega}_{\epsilon}$ to suppress poly-logarithmic factors of $ \frac{1}{\epsilon}$.
Our starting point is a surprisingly simple observation regarding evenly spaced point sets on the $d$-dimensional sphere,
using which we  prove:
\begin{theorem}\label{thm:lb-d} For any constant $d$  and any $n$ and $\eps$ such that $\eps = \Omega({n}^{-\frac{1}{d-1}})$,
there is a set $P$ of $n$ points in $\mathbb{R}^d$ such that any $(1+\epsilon)$-spanner for $P$ must have lightness $\Omega(\epsilon^{-d})$
and $n \cdot \Omega(\epsilon^{-d+1})$ edges.
\end{theorem}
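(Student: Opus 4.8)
The plan is to build $P$ as a disjoint union of $k=\Theta(n\epsilon^{d-1})$ mutually far‑apart copies of a single \emph{test cluster} $Q$, where $Q$ is a set of $m=\Theta(\epsilon^{-(d-1)})$ points forming a $\delta$‑net of the unit sphere $S^{d-1}\subset\mathbb R^d$ with $\delta=\Theta(\epsilon)$; thus the points of $Q$ are pairwise $\Omega(\epsilon)$‑separated and every point of $S^{d-1}$ is within $O(\epsilon)$ of $Q$. The hypothesis $\epsilon=\Omega(n^{-1/(d-1)})$ is exactly the condition $n=\Omega(\epsilon^{-(d-1)})$ that makes $k\ge 1$, and the at most $m$ leftover points (when $m\nmid n$) are placed far from everything and ignored. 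The copies are placed at pairwise distance at least $10\ge 5\operatorname{diam}(Q)$, so a path of length $<3$ between two points of one copy cannot leave it; since for $x,y$ in the same copy the spanner distance is $\le(1+\epsilon)|xy|<3$, the path realizing it stays inside that copy, so the restriction of any $(1+\epsilon)$‑spanner of $P$ to a single copy is a $(1+\epsilon)$‑spanner of that copy. Hence it suffices to show that a $(1+\epsilon)$‑spanner of one copy $Q$ has $\Omega(m^2)$ edges and weight $\Omega(m^2)$: summing over the $k$ copies then gives $n\cdot\Omega(\epsilon^{-(d-1)})$ edges and weight $n\cdot\Omega(\epsilon^{-(d-1)})$, while $\operatorname{MST}(P)=O(k\,m\delta)+O(k)=O(n\epsilon)$ (inside each copy the graph joining pairs at distance $\le 3\delta$ is connected with $O(m)$ edges of length $O(\delta)$, and $k-1$ further edges of length $O(1)$ join the copies), yielding lightness $\Omega(\epsilon^{-(d-1)}/\epsilon)=\Omega(\epsilon^{-d})$.

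The core is a geometric claim about one copy $Q\subseteq S^{d-1}$. Fix $p,q\in Q$ with $|pq|\in[\tfrac14,\tfrac12]$ and let $\pi$ be any $(1+\epsilon)$‑path from $p$ to $q$ in the spanner. Every vertex $u$ of $\pi$ obeys $|pu|+|uq|\le(1+\epsilon)|pq|$, i.e.\ $u$ lies in the (thin) ellipsoid $\mathcal E$ with foci $p,q$ and major axis $(1+\epsilon)|pq|$, whose transverse half‑width is $O(\sqrt\epsilon)$. The ``simple observation'' is the elementary inequality — obtained by comparing the chord lengths with the great‑circle arc through $p$ and $q$ — that \emph{every} $u\in S^{d-1}$ satisfies $|pu|+|uq|-|pq|=\Omega\!\big(\min(|pu|,|uq|)\cdot|pq|^2\big)$, with an absolute constant independent of $d$. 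Combined with membership in $\mathcal E$ this forces $\min(|pu|,|uq|)=O(\epsilon/|pq|)=O(\epsilon)$, so $\mathcal E\cap S^{d-1}$ lies in the union of two caps $B(p,c\epsilon)$ and $B(q,c\epsilon)$ for an absolute constant $c$; since $|pq|\ge\tfrac14\gg c\epsilon$ these caps are disjoint. Hence $\pi$ starts in $B(p,c\epsilon)$, ends in $B(q,c\epsilon)$, and visits no vertices of $Q$ outside these two caps, so it must use a spanner edge $(x,y)$ with $x\in B(p,c\epsilon)$ and $y\in B(q,c\epsilon)$; such an edge has length $\ge|pq|-2c\epsilon\ge\tfrac18$. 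I expect this geometric claim — pinning down that $\mathcal E$ meets $S^{d-1}$ only in two $O(\epsilon)$‑caps, uniformly in the ambient dimension — to be the main obstacle; the rest is a counting argument.

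To count: for two uniformly random points of $S^{d-1}$ the chord length lies in $[\tfrac14,\tfrac12]$ with probability bounded below by a positive constant (depending only on $d$), so since $Q$ is $\epsilon$‑evenly spaced the number of pairs $\{p,q\}\subseteq Q$ with $|pq|\in[\tfrac14,\tfrac12]$ is $\Omega(m^2)$. By the previous paragraph each such pair forces a spanner edge $(x,y)$ with $x\in B(p,c\epsilon)$, $y\in B(q,c\epsilon)$ and $|xy|\ge\tfrac18$. Conversely a single edge $(x,y)$ is forced in this way by only $O(1)$ pairs: $x\in B(p,c\epsilon)$ forces $p\in B(x,c\epsilon)$, and the ball $B(x,c\epsilon)$ meets the $\Omega(\epsilon)$‑separated set $Q$ in $O(1)$ points (and symmetrically for $q$, over the two ways of matching $\{x,y\}$ to $\{p,q\}$). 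Therefore the spanner of one copy contains $\Omega(m^2)/O(1)=\Omega(m^2)=\Omega(\epsilon^{-2(d-1)})$ distinct edges, each of length $\ge\tfrac18$, so it has $\Omega(\epsilon^{-2(d-1)})$ edges and weight $\Omega(\epsilon^{-2(d-1)})$. Feeding this into the reduction of the first paragraph yields the claimed $n\cdot\Omega(\epsilon^{-(d-1)})$ edges and lightness $\Omega(\epsilon^{-d})$, completing the proof.
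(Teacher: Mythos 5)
Your proposal is correct and follows essentially the same route as the paper: $\Omega(\epsilon)$-separated, evenly spaced points on the unit sphere, the detour inequality $|pu|+|uq|-|pq|=\Omega\big(\min(|pu|,|uq|)\cdot|pq|^2\big)$ (the paper's Claim 3.1, proved the same way via the circle through $p,q,u$), and mutually far-apart copies to scale up to $n$ points. The only real difference is the finishing step: the paper uses the $\Omega(\epsilon)$ separation to force the direct edge $(p,q)$ itself into the spanner for each $\Omega(1)$-separated pair, whereas you only force a long cap-to-cap edge and then recover the $\Omega(m^2)$ edge count via an $O(1)$-overcounting charging argument — both are valid, and your single multi-copy point set has the minor advantage of yielding the size and lightness bounds simultaneously (the paper instead pads with duplicate points for the lightness bound).
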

Theorem \ref{thm:lb-d} immediately resolves Question \ref{qsize} in the affirmative, and it also shows that the greedy spanner is truly optimal with respect to the size parameter.

We then improve the lightness bound of the greedy spanner to match our lower bound.
\begin{theorem}\label{thm:ub-d}
The greedy $(1+\epsilon)$-spanner in $\mathbb{R}^d$ has lightness $\tilde{O}_\eps\left(\epsilon^{-d}\right)$.
\end{theorem}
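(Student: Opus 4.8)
The plan is to bound the weight of the greedy spanner $H$ scale by scale and then combine the per-scale bounds without incurring a dependence on the spread (aspect ratio) of $P$. For an integer $i$, let $E_i \subseteq E(H)$ be the set of greedy edges whose length lies in $[2^i, 2^{i+1})$, and write $\ell_i = 2^i$. Since $w(H) = \sum_i w(E_i)$, it suffices to show $\sum_i w(E_i) = \tilde{O}_\eps(\eps^{-d})\, w(\mst(P))$.

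\emph{The per-scale bound.} Fix a scale $i$ and set $\ell = \ell_i$. Following the standard cluster-based approach (as in \cite{NS07,BLW19}), partition $P$ into clusters: the connected components of the subgraph of $H$ consisting of edges of length $< c\eps\ell$ for a suitable absolute constant $c$. Because greedy processes edges in nondecreasing order of length and only adds an edge when its endpoints are not yet $(1+\eps)$-connected, any two points in distinct clusters are at distance $\ge c\eps\ell$; hence, picking one representative per cluster gives a $(c\eps\ell)$-separated subset of $P$, and therefore the number of clusters is $O\!\left(\frac{w(\mst(P))}{\eps\ell}\right)$. Next I would invoke the "gap'' (a.k.a.\ leapfrog) property of the greedy spanner — two greedy edges of length $\Theta(\ell)$ emanating from the same vicinity must differ in direction by $\Omega(\eps)$ — to show via a cone/packing argument in $\mathbb{R}^d$ that the number of edges of $E_i$ incident to any single cluster is $O(\eps^{-(d-1)})$. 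Combining, $|E_i| = O\!\left(\eps^{-(d-1)} \cdot \frac{w(\mst(P))}{\eps\ell}\right)$, and since each edge of $E_i$ has length $< 2\ell$ we get $w(E_i) = O\!\left(\eps^{-d}\, w(\mst(P))\right)$.

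\emph{Eliminating the dependence on the spread.} Summing the above naively over all scales $i$ would cost an extra $\log(\mathrm{spread})$ factor, so the crux is to replace it by $O(\log(1/\eps))$. The idea is not to charge every scale against the \emph{full} MST, but against a scale-local portion of it: discretize $\mst(P)$ into segments of length $\Theta(\eps\ell)$ and charge the edges of $E_i$ only to segments lying within distance $O(\ell)$ of a cluster they touch. A segment of length $\Theta(\eps\ell_i)$ carved out of an MST edge of length $r$ is then used only by scales $i$ with $\eps\ell_i \lesssim r \lesssim \ell_i$, i.e.\ a window of $O(\log(1/\eps))$ consecutive scales; and when a cluster at scale $\ell_i$ fails to have $\Theta(\ell_i)$ worth of MST in its neighborhood, the leapfrog property forces there to be no edge of $E_i$ on it that still needs charging at this scale (such an edge would admit a short detour through already-charged, smaller-scale material). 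Making this precise via a hierarchical clustering of $P$ and amortizing across the $O(\log(1/\eps))$ active scales yields $\sum_i w(E_i) = O(\log(1/\eps)) \cdot \eps^{-d}\, w(\mst(P)) = \tilde{O}_\eps(\eps^{-d})\, w(\mst(P))$.

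\emph{Main obstacle.} The per-scale bound is essentially the classical argument; the real work is the spread-elimination step. One must set up the hierarchical clustering so that (i) the $\Theta(\eps\ell)$-separation of clusters is maintained at every level, (ii) the leapfrog property is invoked exactly in the regime where a cluster is "light'' relative to its scale, and (iii) the amortization genuinely collapses the telescoping sum to $O(\log(1/\eps))$ active windows rather than $\log(\mathrm{spread})$ terms. A secondary point requiring care is the packing/cone bound behind the $O(\eps^{-(d-1)})$ per-cluster edge count: the exponent must be exactly $d-1$ (matching the size bound of Theorem~\ref{thm:lb-d}), as this is what makes the final lightness exponent $d$ rather than $d+1$.
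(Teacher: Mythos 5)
Your per-scale skeleton matches the first half of the paper's argument: the paper likewise partitions the greedy edges into length classes, works with $\Omega(\eps\ell)$-separated clusters, and proves via a cone/packing argument combined with the greedy property (Fact~\ref{fact:edge-path-weight}) that each cluster is incident to only $O(\eps^{-d+1})$ edges of its scale (Lemma~\ref{lm:deg-K}) — this is exactly your "Step 1." One caveat: that cone argument requires the two competing edges in a cone to have lengths within a factor $1+\delta$ of each other for a \emph{small constant} $\delta$ (the paper takes $\delta\approx 4\times 10^{-4}$), not within a factor of $2$; with dyadic classes the shortcut inequality breaks and the packing argument only yields $O(\eps^{-d})$ per cluster. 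This is repairable by splitting each dyadic scale into $O(1)$ subclasses, but it must be done.

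The genuine gap is the spread-elimination step, which is where essentially all of the paper's work lies. Your mechanism — charge $E_i$ to MST segments of length $\Theta(\eps\ell_i)$ near each cluster, and claim each segment is active at only $O(\log(1/\eps))$ scales because a cluster lacking $\Theta(\ell_i)$ worth of nearby MST "has no edge of $E_i$ that still needs charging" — is not correct as stated. The generic situation is precisely that a cluster contains only $\Theta(\eps\ell_i)$ worth of MST yet carries level-$i$ edges; the $\Omega(\ell_i)$ of MST that "justifies" such an edge is the MST path to the edge's other endpoint, which is shared among $\Theta(1/\eps)$ intermediate clusters and is simultaneously needed to account for the diameter of the next-level cluster, so it cannot be charged locally and exclusively. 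The paper resolves this not by local charging but by allocating $\ce\, w(\mst)$ credits once and passing them up a hierarchy with level ratio $1/\eps$, each level paying for its edges out of the \emph{leftover} after maintaining the invariant that a level-$i$ cluster $X$ retains $\ce\max(\dm(X),\Omega(L_i))$ credits. Showing the leftover is $\Omega(\ce L_{i-1})$ per child — rather than the $\Omega(\ce\,\eps L_{i-1})$ that a one-spare-child argument gives, which would cost an extra $\eps^{-1}$ — is the hard part, and it requires the zoom-out decomposition of the cluster tree, a debt mechanism for leaf (Type-III) clusters, and the quantitative curvature lemma (Lemma~\ref{lm:t-node-IV}) showing that $t$ greedy edges among the children force the children's path to exceed the straight-line diameter by $\Omega(t L_{i-1})$. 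Nothing of this strength is present in, or implied by, the "leapfrog forces a detour through already-charged material" step of your sketch.
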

Theorem \ref{thm:ub-d} answers Question \ref{qlight} in the negative,
and it also shows that the greedy spanner is truly optimal with respect to the lightness parameter. The exact upper bound on the lightness is $O\left(\epsilon^{-d}\log(\frac{1}{\epsilon})\right)$.
The proof of Theorem \ref{thm:ub-d} is intricate.

Our lightness analysis of the greedy algorithm builds on exciting developments on light spanners from recent years, which started from the works of Gottlieb~\cite{Gottlieb15}
and Chechik and Wulff-Nilsen~\cite{CW16} on non-greedy spanners. Using the result of \cite{FS16}, the framework of \cite{CW16} was refined in the works of Borradaile, Le and Wulff-Nilsen~\cite{BLW17,BLW19}.
As mentioned, it was shown in~\cite{BLW19} that the greedy spanner in metrics of doubling dimension $\ddim$ has lightness   $\epsilon^{-O(\ddim)}$.
We demonstrate that, by adapting the analysis in~\cite{BLW19} to Euclidean spaces and applying a few tweaks, one can obtain a lightness bound of $\epsilon^{-(d+2)}$.
To shave the remaining slack of $\epsilon^{-2}$ factor, we introduce several  highly nontrivial geometric insights to the analysis.

\vspace{-3pt}
\paragraph{Sparse Steiner spanners?}
{\em Steiner points} are additional points that are not part of the input point set.
A standard usage of Steiner points is for reducing the weight of the tree,
with the Steiner Minimum Tree (SMT) problem serving as a prime example:
In any metric, the Steiner ratio, which is the ratio of the SMT weight to the MST weight,
is at least $\frac{1}{2}$ (by the triangle inequality) and at most 1 (by definition).
In $\mathbb R^2$ the Steiner ratio is known to be between $\approx 0.824$ and $\frac{\sqrt{3}}{2} \approx 0.866$, and the famous (still open) ``Gilbert-Pollak Conjecture'' is that the upper bound $\frac{\sqrt{3}}{2}$ is tight \cite{GP68}.
As another example, a spanning tree that simultaneously approximates a shortest-path tree
and a minimum spanning tree is called a \emph{shallow-light tree} (shortly, SLT).
In FOCS'11, Elkin and Solomon~\cite{ES11} showed that in general metric spaces, Steiner points can be used to get an exponential improvement to the lightness of SLTs.
The construction of \cite{ES11} does not apply to Euclidean spaces,
but Solomon~\cite{Solomon14} showed that Steiner points can be used to get a quadratic improvement to the lightness of SLTs in $\mathbb R^d$ for $d= O(1)$.

Although these examples demonstrate that Steiner points could be very useful for reducing the weight of tree structures,
note that the resulting Steiner trees must contain more edges than the original trees by definition.
In other words, in trees, Steiner points cannot be used for reducing the number of edges by their very definition. 
Broadly speaking, it seems counterintuitive that Steiner points could be used as means for reducing the number of edges of graph structures such as spanners.
And indeed, essentially all the prior work in this context only support this intuition; in particular, Alth\"{o}fer et al.~\cite{ADDJS93} assert that, in general metrics,
Steiner points provably do not help (much) in reducing the spanner size (see Theorems 6-8 therein), and this result was strengthened in \cite{RR98} (see Theorem 1.2 therein).
We remark that these hardness results of \cite{ADDJS93,RR98} are based on girth arguments, and are not applicable in low-dimensional Euclidean spaces.

The size lower bound provided by Theorem \ref{thm:lb-d} implies that cornerstone spanner constructions from the 80s, such as the $\Theta$-graph and the greedy spanner, cannot be improved in size.
We contrast this negative message with a positive and counterintuitive one: Steiner points can be used to obtain a quadratic improvement on the size of spanners!
We'll focus on the Euclidean plane $\mathbb R^2$, but we get this quadratic improvement in any constant dimension $d \ge 2$:
$n \cdot \tilde{O}_{\epsilon}(\epsilon^{(-d+1)/2})$ edges using Steiner points versus $n \cdot \Omega(\eps^{-d+1})$ edges without using them.

 \begin{theorem}\label{thm:sparse-Steiner}
 For any set of $n$ points $P$ in $\mathbb{R}^2$, there is a Steiner $(1+\eps)$-spanner for $P$ with $\tilde{O}_\epsilon(\frac{n}{\sqrt{\epsilon}})$ edges. For a general constant $d$, there is a Steiner spanner with $\tilde{O}_\epsilon(\frac{n}{\epsilon^{(d-1)/2}})$ edges,
for any set of $n$ points $P$ in $\mathbb R^d$.
  \end{theorem}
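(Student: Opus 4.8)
The plan is to build the spanner scale by scale on a hierarchical decomposition of $P$, reducing everything to one geometric gadget used at a single scale. First I would fix a net‑tree (equivalently, a compressed quadtree) for $P$: for each length scale $2^i$ a net $N_i\subseteq P$ with pairwise distances $\Omega(2^i)$ and covering radius $O(2^i)$, and I would invoke the standard composition fact that a union of graphs $H_i$, where each $H_i$ connects every pair of $N_i$‑points at distance $\Theta(2^i/\epsilon)$ with multiplicative detour $1+O(\epsilon)$ ``at scale $2^i$'', is a global $(1+O(\epsilon))$‑spanner of $P$. After rescaling so that $2^i=1$, the task becomes the following single‑scale problem: given a $1$‑separated set $Q\subseteq\mathbb R^d$, connect every pair at distance in $[1,c\epsilon^{-1}]$ with detour $1+O(\epsilon)$ using only $\tilde O_\epsilon\!\left(|Q|\,\epsilon^{-(d-1)/2}\right)$ edges. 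Working with a \emph{compressed} decomposition (and treating the long ``compressed'' links between a cell and its unique non‑empty descendant chain as well‑separated pairs, each handled by $O(\epsilon^{-(d-1)/2})$ edges) keeps $\sum_i|N_i|=O(n)$, so the theorem follows by summing the single‑scale bound over $i$; this is the only way the construction can avoid a spurious $\log\Delta$ (or $\log n$) factor, since $\tilde O_\epsilon$ is allowed to hide only $\mathrm{polylog}(1/\epsilon)$.

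\textbf{The single‑scale gadget.} Here is where Steiner points buy the quadratic saving. I would partition the direction sphere $S^{d-1}$ into $\Theta(\epsilon^{-(d-1)/2})$ caps of angular radius $\Theta(\sqrt\epsilon)$, and group $Q$ into clusters of diameter $\Theta(1)$, so that each cluster has $O(1)$ points and there are $O(|Q|)$ clusters. For a cluster $X$ and a direction cap $\sigma$, consider the ``wide cone'' $C_{X,\sigma}$ of half–angle $\Theta(\sqrt\epsilon)$; for radii $r\gtrsim\epsilon^{-1/2}$ the wide cones of all points of $X$ essentially coincide, and the near part $r\lesssim\epsilon^{-1/2}$ is handled by recursing on a problem whose radius range is the square root of the current one (only $O(\log\log(1/\epsilon))$ recursion levels, absorbed in $\tilde O_\epsilon$). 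Inside $C_{X,\sigma}$ I would place a \emph{shared} Steiner skeleton $\Gamma_{X,\sigma}$: a ``staircase'' of transversal grids at geometrically growing radii of ratio $1+\Theta(\sqrt\epsilon)$, but with transversal resolution $\Theta(\epsilon r)$ at radius $r$ — so the cone is $\sqrt\epsilon$‑wide yet the scaffold inside it is $\epsilon$‑fine. This has $\tilde O_\epsilon(\epsilon^{-(d-1)/2})$ edges, which I charge to $X$. Each point of $X$ plugs into $\Gamma_{X,\sigma}$ by $O(1)$ edges; crucially, the net points \emph{inside} $C_{X,\sigma}$ do \emph{not} each pay to attach to $\Gamma_{X,\sigma}$ — instead, skeletons of ``facing'' clusters that share a corridor are merged, so a point $q\in C_{X,\sigma}$ reaches the merged skeleton through $q$'s own plug‑in in direction $-\sigma$, which is already paid for. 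The resulting count is $O(1)$ plug‑in edges per (point, direction) pair plus $\tilde O_\epsilon(\epsilon^{-(d-1)/2})$ scaffold edges per (cluster, direction) pair, i.e.\ $\tilde O_\epsilon(|Q|\,\epsilon^{-(d-1)/2})$ in total.

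\textbf{Correctness and the main obstacle.} For the stretch, given $a,b\in Q$ with $|ab|=\rho\in[1,c\epsilon^{-1}]$, let $\sigma$ be the cap containing the direction $\vec{ab}$; route $a\to\Gamma_{X_a,\sigma}\to\cdots\to b$ by always stepping to the nearest scaffold vertex that is ``toward $b$'' in both the radial and the transversal sense. Because the transversal resolution is $\Theta(\epsilon r)$ rather than $\Theta(\sqrt\epsilon r)$, each step deviates from the current direction‑to‑$b$ by only $O(\epsilon)$, so the usual cone‑routing telescoping yields detour $1+O(\epsilon)$ rather than the naive $1+O(\sqrt\epsilon)$ one would get from an $\sqrt\epsilon$‑wide cone; the near‑radius recursion and the facing‑cluster merges are glued in at the $O(\epsilon r)$ scale so they cost nothing extra in stretch. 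The $d\ge3$ case is identical with spherical caps in place of planar sectors. I expect the main obstacle to be exactly this tension inside the gadget: designing one Steiner skeleton per (cluster, direction) that \emph{simultaneously} (i) fits the $\tilde O_\epsilon(\epsilon^{-(d-1)/2})$ edge budget, (ii) supports $1+O(\epsilon)$ routing despite spanning a $\Theta(\sqrt\epsilon)$‑wide cone, and (iii) can be merged with the skeletons of all facing clusters without double‑charging — i.e.\ getting the geometry of the ``$\sqrt\epsilon$ wide, $\epsilon$ fine'' scaffold and its charging scheme to line up, together with the compression needed to kill the $\log\Delta$ factor. (Tightness of the $\mathbb R^2$ bound up to the lower‑order term is a separate matter and is not needed for the construction.)
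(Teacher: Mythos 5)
Your high-level architecture (net-tree levels, reduction to a single-scale ``ring'' problem on a separated set $Q$, avoiding the $\log\Delta$ factor by compression/charging) matches the paper's in spirit and is fine in outline. The gap is in the single-scale gadget, and it is fatal as written. First, the arithmetic of your own accounting does not produce the theorem: you charge each cluster $\Theta(\epsilon^{-(d-1)/2})$ direction caps, each carrying a skeleton of (claimed) $\tilde O_\epsilon(\epsilon^{-(d-1)/2})$ edges, which is $\tilde O_\epsilon(\epsilon^{-(d-1)})$ per cluster — for $d=2$ that is $\tilde O_\epsilon(\epsilon^{-1})$ per point, no better than the $\Theta$-graph, and the merging of ``facing'' skeletons is only asserted, not shown, to recover the missing $\epsilon^{-(d-1)/2}$ factor. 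Second, the claimed per-skeleton count is itself unachievable for the structure you describe: a cone of half-angle $\Theta(\sqrt\epsilon)$ carrying transversal grids of resolution $\Theta(\epsilon r)$ has $\Theta(\epsilon^{-(d-1)/2})$ grid points per cross-section, and with radial ratio $1+\Theta(\sqrt\epsilon)$ over the range $[\epsilon^{-1/2},c\epsilon^{-1}]$ there are $\tilde\Theta(\epsilon^{-1/2})$ cross-sections, so the skeleton has $\tilde\Theta(\epsilon^{-d/2})$ vertices and hence at least that many edges — a factor $\epsilon^{-1/2}$ over your budget. The root cause is your insistence that the scaffold be ``$\epsilon$-fine'' transversally so that each routing step deviates by only $O(\epsilon)$; that is a linear-in-angle error bound, and it forfeits exactly the quadratic saving that Steiner points are supposed to buy.

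The missing idea is that a \emph{$\sqrt\epsilon$-resolution} Steiner screen suffices, because the detour incurred by snapping to a point at angular offset $\theta$, when the snap happens at distance comparable to $|pq|$ from both endpoints, is $1/\cos\theta = 1+O(\theta^2) = 1+O(\epsilon)$. Concretely, the paper's gadget takes two parallel slabs $R_1,R_2$ of width $W$ at distance $\Theta(W)$, places a single flat $(d-1)$-dimensional grid of $O(\epsilon^{-(d-1)/2})$ Steiner points (spacing $\Theta(\sqrt\epsilon\,W)$) on a separating hyperplane between them, and connects \emph{every} input point of $R_1\cup R_2$ to \emph{every} screen point: the cost is $O(\epsilon^{-(d-1)/2})$ edges per input point with no per-cluster scaffold at all, and the two-hop path $p\to x\to q$ through the screen point nearest to $pq$'s crossing has stretch $1+O(\epsilon)$ by the cosine bound. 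A constant number of such slab pairs per point (via an overlapping-squares-and-bands decomposition at each distance scale) then gives $O(|Q|\epsilon^{-(d-1)/2})$ per scale. If you want to salvage your cone-based design you would need the scaffold to be $\sqrt\epsilon$-coarse transversally \emph{and} to have only $O(1)$ radial layers per (point, cap) charge, at which point it degenerates into the paper's screen.
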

  \noindent{\bf Remarks.} 
  The exact upper bound is  $O(\frac{n}{\sqrt{\epsilon}} \log^2\frac{1}{\epsilon})$
  in $\mathbb R^2$ and $O(\frac{n}{\epsilon^{(d-1)/2}}\log^2\frac{1}{\epsilon})$ in $\mathbb R^d$; we did not try to optimize $\log^2\frac{1}{\epsilon}$ factor.

The following lower bound shows that our construction of sparse Steiner spanners (Theorem~\ref{thm:sparse-Steiner}) is optimal to within polylogarithmic factor of $\frac{1}{\epsilon}$ for 2-dimensional Euclidean spaces.
 \begin{theorem}\label{thm:lb-Steiner-R^2}
 For any $n$ and $\eps$ such that $\eps = \tilde{\Omega}(\frac{1}{n^2})$, there exists a set of $n$ points in $\mathbb{R}^2$ such that any Steiner $(1+\epsilon)$-spanner must have at least $\tilde{\Omega}_\epsilon(\frac{n}{\sqrt{\epsilon}})$ edges and lightness at least $\tilde{\Omega}_\epsilon(\frac{1}{\epsilon})$.
 \end{theorem}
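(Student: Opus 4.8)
The plan is to take $P$ to be a ``genuinely two–dimensional'' point set that, unlike the evenly spaced circle which is extremal in the non–Steiner regime, does not collapse under a hierarchical construction; the choice I would try first is (a rescaling of) the $\sqrt n\times\sqrt n$ integer grid $P\subseteq\mathbb{Z}^2$. The reason the circle is the wrong instance here is that its points lie on its chords, so near–straight detours can always be routed through input points, and in fact the circle admits a Steiner $(1+\eps)$–spanner with only $\tilde{O}_\eps(n)$ edges; on a grid this is impossible because along most directions the segment between two points, together with a thin neighbourhood of it, contains no other grid point. The elementary ingredient I would use throughout is the \emph{thin–lens confinement} of near–straight paths: if $H$ is a $(1+\eps)$–spanner and $|xy|=\ell$, then any $x$–$y$ path in $H$ of length at most $(1+\eps)\ell$ lies in the ellipse with foci $x,y$ and string $(1+\eps)\ell$, hence in a ``sliver'' of length $(1+O(\eps))\ell$ and width $O(\sqrt{\eps}\,\ell)$ around the segment $xy$.

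Next I would fix the \emph{critical scale} $R=\Theta(\eps^{-1/2})$, taken (essentially) as the largest scale $\le 1/\sqrt{2\eps}$ realizable in $P$; the hypothesis $\eps=\tilde{\Omega}(1/n^2)$ is exactly what guarantees $R=\tilde{O}(\sqrt n)$, so that $R$ fits inside the grid. The point of this choice is that for a \emph{primitive} $w\in\mathbb{Z}^2$ with $|w|\le R$, the orthogonal lattice width equals $1/|w|\ge 1/R$, which exceeds the sliver width $O(\sqrt{\eps}\,R)$; hence the sliver around $[u,u+w]$ contains no grid point other than its endpoints. There are $\Theta(nR^2)$ ordered pairs $(u,u+w)$ with $u\in P$ and $w$ primitive of length in $[R,2R]$ (for $R=\Theta(\eps^{-1/2})$ this is $\tilde{\Theta}(n/\eps)$), and for each of them the $(1+\eps)$–path in $H$ is a polygonal curve inside a grid–point–free sliver, so all of its interior vertices are Steiner points — each such pair is therefore ``paid for'' by genuine Steiner structure near its sliver. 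I would then assign to each of these $\tilde{\Theta}(n/\eps)$ pairs a single canonical edge of its path (for concreteness, the edge crossing the perpendicular bisector of $[u,u+w]$) and count how many distinct canonical edges arise.

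The heart of the argument — and where I expect the main difficulty — is the \emph{packing bound}: a single edge $e$ of $H$ can be the canonical edge of at most $\tilde{O}(R)=\tilde{O}(\eps^{-1/2})$ of these pairs. Morally, if $e$ is the bisector–crossing edge for $(u,u+w)$, then the direction of $e$ agrees with that of $w$ up to $O(\sqrt{\eps})$ (the waste attributable to $e$ along the path is at most $\eps\ell$), leaving only $\tilde{O}(R^2\sqrt{\eps})=\tilde{O}(R)$ admissible $w$; and the midpoint $u+w/2$ must lie within the sliver width $O(\sqrt{\eps}\,R)=O(1)$ of $e$, which, once the constraint pinning $|uv|$ near $R$ is used, leaves only $\tilde{O}(1)$ admissible $u$. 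Turning this into a theorem is the delicate step: one must rule out that a long edge is reused at many positions along many nearly parallel slivers, and this is precisely where Steiner points are prevented from being amortized by more than a $\eps^{-1/2}$ factor — a full argument will likely require a dyadic decomposition over scales $r\le R$, a separate treatment of paths built only from short edges (which are edge–expensive on their own), and a case analysis on the length and orientation of $e$. Granting the packing bound, $|E(H)|\ \ge\ \tilde{\Theta}(n/\eps)\,/\,\tilde{O}(\eps^{-1/2})\ =\ \tilde{\Omega}(n/\sqrt{\eps})$. For the lightness bound one runs the same argument charging edge \emph{lengths} rather than cardinalities — the forced edges near critical slivers have length $\Omega(R)$ — and sums over all dyadic scales $r\le R$; after normalizing by $w(\mathrm{MST}(P))$ this gives lightness $\tilde{\Omega}(1/\eps)$, for which a long, multi–scale variant of the grid (with large aspect ratio) is the natural instance so that the bound stays non-vacuous down to $\eps=\tilde{\Theta}(1/n^2)$. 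The whole argument is purely planar; the restriction to $\mathbb{R}^2$ comes only from wanting it to match the upper bound of Theorem~\ref{thm:sparse-Steiner}.
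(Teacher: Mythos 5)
Your proposal takes a genuinely different route from the paper (grid instance, primitive directions, sliver confinement, and a per-edge packing bound, versus the paper's two-sided-square instance and an inclusion--exclusion weight bound), but as written it has two concrete gaps. The first is the one you flag yourself: the packing bound that a single edge of $H$ is the canonical bisector-crossing edge for at most $\tilde{O}(\eps^{-1/2})$ pairs is the entire content of the lower bound, and it is not proved. The heuristic ``the direction of $e$ agrees with that of $w$ up to $O(\sqrt{\eps})$'' fails for short edges: a $(1+\eps)$-path only controls the \emph{aggregate} waste $\sum_i \delta_i\theta_i^2 = O(\eps\ell)$, so an edge of length $O(\eps\ell)$ may point in an arbitrary direction and still cross the bisector. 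Your proposed remedy (treat short-edge paths separately, dyadic scales, case analysis) is exactly where the difficulty lives, and without it the theorem is not established. The paper avoids this by never arguing edge-by-edge; its key lemma (Lemma~\ref{lm:intersect-length}) bounds the total weight $w(Q_x\cap Q_y)$ of the shared portion of two spanner paths by $O(1/j^2)$ when the pairs are at distance $j\sqrt{\eps}$, and then an inclusion--exclusion charge over all $\Theta(1/(\eps\log\frac{1}{\eps}))$ pairs yields a \emph{weight} lower bound, from which the size bound follows because all pairwise distances are $O(1)$. That intersection lemma is precisely the quantitative substitute for your missing packing lemma.

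The second gap is the lightness bound, which in your sketch is not an argument. The assertion that ``the forced edges near critical slivers have length $\Omega(R)$'' is unsupported: nothing prevents every canonical edge from being a tiny Steiner--Steiner segment, so a count of canonical edges gives no weight bound. If instead you charge path lengths and divide by the maximum multiplicity of a point across slivers, the accounting gives only $\Omega(n)$ total weight on the $\sqrt{n}\times\sqrt{n}$ grid (a point lies in up to $\eps^{-3/2}$ critical slivers, cancelling the $nR^3=n\eps^{-3/2}$ total path length), i.e., constant lightness rather than $\tilde{\Omega}(1/\eps)$; summing over dyadic scales does not close this polynomial gap. This is why the paper's argument is organized the other way around: it proves the weight bound first, via the pairwise-intersection lemma (which controls overlap far more tightly than a worst-case multiplicity bound), and derives the edge count as a corollary. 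To salvage your approach you would need either a proof of the packing lemma strong enough to also control total edge \emph{length}, or a separate overlap estimate of the inclusion--exclusion type.
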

 \noindent{\bf Remark.} Since the SMT and MST weights are the same up to a small constant,  we can define the lightness of Steiner spanners with respect to the MST weight, just as with non-Steiner spanners.  The exact lower bound on the number of edges is $\Omega(\frac{n}{\sqrt{\epsilon  \log \frac{1}{\epsilon}}})$ and the exact lightness lower bound is $\Omega(\frac{1}{\epsilon \log \frac{1}{\epsilon}})$.

To prove our upper and lower bounds for Steiner spanners (Theorems \ref{thm:sparse-Steiner} and \ref{thm:lb-Steiner-R^2}),
we come up with novel geometric insights, which may be of independent interest, as discussed  in the next section.

\paragraph{1.3~ Proof overview, comparison with prior work, and technical highlights.~}
The starting point of this work is
in making a remarkably simple observation regarding a set of evenly spaced points along the boundary of a circle, which suffices for
getting the lower bound for the size and lightness of spanners in $\mathbb{R}^2$ (Theorem~\ref{thm:lb-d}).
Numerous papers have identified this point set as a natural candidate for lower bounds (see, e.g., \cite{RS98,ES11,Sol14}),
yet we are not aware of any paper that managed to rigourously prove such a result.
The $d$-dimensional analogue is a set of evenly spaced points along the sphere,
providing a set of $\Theta(\epsilon^{-d+1})$ points corresponding to the codewords of a spherical code in $\mathbb{R}^d$.
The distance between any two codewords is $\Omega(\epsilon)$, using which we show that for any two points $x,y$ with $|xy| = \Theta(1)$, any $(1+\eps)$-spanner must take $xy$ as an edge.  Since $P$ has $\Theta(\epsilon^{-2d+2})$ pairs of points of distance $\Theta(1)$, any spanner for $P$ must have weight $\Omega(\epsilon^{-2d+2})$
and
$\Omega(\epsilon^{-2d+2})$ edges. Noting that $w(\mst(P))= O(\epsilon|P|)$, the lightness bound of $\Omega(\epsilon^{-d})$ immediately follows. The size lower bound holds for a point set of size $\Theta(\epsilon^{-d+1})$; to extend it to an $n$-point set, we consider multiple copies of the same point set that lie sufficiently far from each other,
so that each point set must be handled with a separate vertex-disjoint spanner; see Section~\ref{sec:lb-spanners}.

We bypass the size lower bound by using Steiner points. For simplicity of presentation, we  mostly focus on the Euclidean plane $\mathbb{R}^2$, but the argument can be naturally generalized for $\mathbb{R}^d, d >2$.
We start with constructing Steiner spanners for point sets of bounded spread $\Delta$, with
$O(\frac{n}{\sqrt{\epsilon}}\log \Delta)$ edges.\footnote{The \emph{spread} of a point set $P$, denoted by $\Delta(P)$, is the ratio of the largest to the smallest pairwise distance.} We start our construction by partitioning pairs of points into $O(\log \Delta)$ sets where $i$-th set contains pairs of distance in $[2^{i-1},2^i)$, and focusing on preserving distances between pairs in each set separately. To this end, we divide the bounding box of the point set into \emph{overlapping subsquares} of side length $5\cdot 2^{i}$. The overlap allows us to treat each subsquare separately.  Each subsquare is then divided into horizontal bands and vertical bands. The  observation is that it suffices to construct a Steiner spanner for each pair of (non-adjacent) horizontal/vertical bands. We then show that the total number of edges of all Steiner spanners (to preserve distances between pairs in $i$-th set) is $O(\frac{n}{\sqrt{\epsilon}})$, which implies the desired bound  on  number of edges for all sets. We then show a reduction from a general point set (of possibly huge spread) to a point set of spread $O(\frac{1}{\epsilon})$.
This reduction builds on the standard net-tree spanner (see, e.g., \cite{GGN04,CGMZ16,GR082}) in a novel way, using a notion that we shall refer to as a \emph{ring spanner}.
A $t$-ring spanner of a point set, for $t \ge 1$, is a spanner that preserves (to within a factor of $t$) distances between every pair of points $p,q$ such that $q$ belongs to a ring (or annulus) around $p$.
The net-tree spanner is obtained, in fact, as a union of $\Theta(\log \Delta)$ ring 1-spanners, where the inner and outer radii of the annulus are within a factor of $1/\eps$.
Using this fact, we are able to reduce the problem of constructing a Steiner spanner for a general point set to the problem of constructing $\Theta(\log \Delta)$ ring $(1+\eps)$-spanners for
point sets of spread $O(\frac{1}{\epsilon})$ each, and show how to reduce it further to the construction of just one such ring $(1+\eps)$-spanner.
Our strategy of constructing Steiner spanners by building on the net-tree spanner is somewhat surprising, since all known (non-Steiner) net-tree spanners have $\Omega((\frac{1}{\epsilon})^2)$ edges, which exceeds the optimal bound  $O(\frac{1}{\epsilon})$ obtained by other spanners (such as the $\Theta$-graph) by a factor of $1/\eps$.
However,  by looking at the net-tree spanner through the lens of ring spanners and, of course, through the use of Steiner points, we are able to achieve the improved size bound;
 the details appear in Section~\ref{sec:ub-steiner-spanner}.

To prove the lower  bound on the size of Steiner spanners in $\mathbb{R}^2$, we can use the same point set used for our lower bounds for non-Steiner spanners,
of evenly spaced points along the boundary of a circle.
The argument here, however, is significantly more intricate.
It is technically more convenient to work with a similar point set $P$, where the points are evenly spaced along two opposite sides of a unit square $U$,
denoted by $N$ (``north'') and $S$ (``south'').
The distance between any two consecutive points along $N$ and along $S$ will be $\Theta(\sqrt{\epsilon \log \frac{1}{\epsilon}})$,
so that  $|P| = \Theta_{\epsilon}(\frac{1}{\sqrt{\epsilon}})$.
Our goal is to show that any Steiner spanner must use roughly $\Theta(|P|^2)$ edges to preserve the distances for all pairs of points from $N$ and $S$ to within a factor of $1+\eps$, and then taking multiple copies of the same point set that are sufficiently far from each other would complete the proof.
Instead of proving the size lower bound  directly, we show that any Steiner spanner for $P$ must incur a weight of $\Omega(|P|^2)$; the size lower bound would follow easily, as the distance between any pair of points in $P$ is $O(1)$.
We then demonstrate that the problem of lower bounding the spanner weight for $P$ boils down to the problem of determining the lengths of intersecting shortest paths in the spanner.
Next,  we say that the {\em intersecting pattern} of two shortest paths of two pairs of points is ``good'' if the total length of all intersecting subpaths between them is small; the smaller the intersection is, the ``better'' the pattern is. Determining the ``quality'' of intersecting patterns of arbitrary pairs of shortest paths is challenging.
To this end, define the {\em distance} between two pairs of points $\{x_1,x_2\}, \{y_1,y_2\}$, where $x_1,y_1 \in N$ and $x_2,y_2 \in S$, to be $\max\{|x_1y_1|, |x_2y_2|\}$, and denote it by $d(\{x_1,x_2\}, \{y_1,y_2\})$.
Let $Q_x$ and $Q_y$ denote  fixed  shortest paths between the pairs $x_1,x_2$ and $y_1,y_2$ in the Steiner spanner, respectively;
the key ingredient in our proof is establishing an inverse-quadratic relationship between $w(Q_x\cap Q_y)$ and $d(\{x_1,x_2\}, \{y_1,y_2\})$:  $w(Q_x\cap Q_y) =  O(\frac{\epsilon}{d(\{x_1,x_2\}, \{y_1,y_2\})^2})$. A charging argument that employs this relationship is then applied to derive the weight lower bound;
see Section \ref{sec:lb-steiner-spanners}.

As mentioned, our lightness analysis of the greedy algorithm builds on several earlier works.
In particular, the framework of \cite{CW16} was refined in the works of Borradaile, Le and Wulff-Nilsen~\cite{BLW17,BLW19};
in what follows, BLW shall be used as a shortcut for the approach of Borradaile, Le and Wulff-Nilsen~\cite{BLW19}, though we emphasize that some of
the credit that we attribute to BLW (for brevity reasons) should be attributed to the aforementioned previous works.
In BLW, the first step is to construct a hierarchical clustering $\mathcal{C}_0, \mathcal{C}_1, \ldots, \mathcal{C}_L$. Clusters in $\mathcal{C}_i$ have diameter roughly $\Theta(L_i)$ where $L_i = \frac{L_{i-1}}{\epsilon}$ and $L_0 = \frac{w(\mst)}{n-1}$.  The edge set of the greedy spanner, denoted by $E$, is also partitioned according to the clustering hierarchy, $E = E_0 \cup \ldots \cup E_L$, where the edges in $E_i$ have length $\Theta(L_i)$. Credit is then allocated to the clusters in $\mathcal{C}_0$ for a total amount of $\ce w(\mst)$ for some constant $\ce$ depending on $\epsilon$ and $d$, which will ultimately be the lightness bound. Clusters in $\mathcal{C}_0$ spend their credits in two different ways: (1) they give the clusters in $\mathcal{C}_1$ a (major) part of their credit and (2)
 use the remaining credit to pay for the spanner edges in $L_1$. Clusters in $\mathcal{C}_1$, after getting the credit from $\mathcal{C}_0$, also spend their credit in the same way:  they give clusters in $\mathcal{C}_2$ a part of their credit and use the remaining credit to pay for the spanner edges in $L_2$. Inductively, clusters in $\mathcal{C}_{i-1}$, after being given credit by the clusters in $\mathcal{C}_{i-2}$, give the clusters in $\mathcal{C}_i$ a part of their credit and use the remaining credit to pay for the edges in $L_i$. BLW showed roughly that for all $2\leq  i \leq L$:
\\(a) Each cluster $C \in \mathcal{C}_{i-1}$ would get roughly $\Theta(\ce L_{i-1})$ credits from clusters in $\mathcal{C}_{i-2}$.
\\(b) Each cluster $C \in \mathcal{C}_{i-1}$, after giving their credit to clusters in $\mathcal{C}_i$, has $\Omega(\epsilon^{O(1)}\ce L_{i-1})$ leftover credits.

Using a standard packing argument, BLW showed that each cluster in $C \in \mathcal{C}_{i-1}$ is incident to  $O(\epsilon^{-O(d)})$ edges in $E_i$ (of length $\Theta(L_i) = \Theta(L_{i-1}/\epsilon)$). Thus, by choosing $\ce = \epsilon^{- c_0d}$ for some constant $c_0$, $C$ can pay for its incident spanner edges in $E_i$. Inductively, every spanner edge will be paid at the end. Since only $\ce w(\mst)$ credits are allocated at the beginning (to $\mathcal{C}_0$), the total weight of all spanner edges is $O(\ce) = O(\epsilon^{-O(d)})$.
We first observe that the packing argument in $\mathbb{R}^d$ gives an upper bound $O(\epsilon^{-d})$ in the number of edges in $E_i$ incident to a cluster $C\in \mathcal{C}_{i-1}$. Furthermore, the bound in (b) can be made as good as $\epsilon \ce L_{i-1}$. Thus, if we are careful, choosing $\ce = \Theta(\epsilon^{-(d+2)})$ suffices, which as a result, gives us lightness bound $O(\ce) = O(\epsilon^{-(d+2)})$. To shave the extra $\epsilon^{-2}$ factor, we introduce two new ideas. Firstly, by carefully constructing the hierarchical clustering and partitioning the edge set $E$, we can reduce the worst-case bound on the number of edges in $E_i$ incident to a cluster $C\in \mathcal{C}_{i-1}$ from $O(\epsilon^{-d})$ to $O(\epsilon^{-(d-1)})$. This shaves the first $\epsilon^{-1}$ factor. Secondly, we show that in most cases, each cluster $C \in \mathcal{C}_{i-1}$, after giving its credit to clusters in $\mathcal{C}_i$, has at least $\Omega(\ce L_{i-1})$ leftover credits. Note that the leftover credit bound in BLW argument is $\Omega(\ce \epsilon L_{i-1})$. Thus, the second idea helps us in shaving another $\epsilon^{-1}$ factor.

The major technical difficulty that we are faced with is in realizing the second idea. Achieving the weaker credit leftover bound $\Omega(\ce \epsilon L_{i-1})$
(as done in BLW) is already a challenge and, in fact, sometimes impossible. This is because the credit argument has several subtleties in the way credit is distributed; a more detailed explanation is provided in Section~\ref{sec:ub-spanner-Rd}. To achieve the stronger $\Omega(\ce L_{i-1})$ leftover credit bound, we employ two new insights: (1) we can loosen the credit lower bond of each cluster $C$ to be proportional to the number of edges in $E_i$ incident to $C$ and (2) the amount of the leftover credits that $C$ has is  proportional to the number of edges in $E_i$ incident to $C$. The details of this argument are presented in Section~\ref{sec:ub-spanner-Rd}.
We remark that our argument for obtaining the optimal lightness bound is elaborate and intricate, but this should be acceptable, given that the previous  lightness bound of $O(\eps^{-2d})$
required an intricate proof, spreading over a 60-paged chapter in \cite{NS07}.

\section{Preliminaries}

For a pair $x,y$ of points in $\mathbb R^d$, we denote by $xy$ the line segment between $x$ and $y$. The distance between $x$ and $y$ will be denoted by $|xy|$.  We use $B_d(x,r)$ to denote the ball of radius $r$ centered at $x$ in $\mathbb{R}^d$.

Let $G$ be a weighted graph with vertex set $V$. We shall denote the distance between $x$ and $y$ in $G$ by $d_G(x,y)$.
 Whenever $G$ is clear from the context, we may omit the subscript $G$ in the distance notation. We use $V(G)$ and $E(G)$ to denote the vertex set and edge set of $G$. Sometimes, the vertex set of $G$ is a set of points in $\mathbb{R}^d$ and the weights of edges are given by the corresponding Euclidean distances. In this case, we use the term \emph{vertex} and \emph{point} interchangeably.

 We use $u\stackrel{G}{\leadsto}v$ to denote the shortest path from $u$ to $v$ in a graph $G$. Given two paths $P,Q$ such that the last point of $P$ is the first point of $Q$, we use $P\circ Q$ to denote the \emph{composition} of $P$ and $Q$, which is the path obtained by identifying the last point of $P$ and the first point of $Q$.  We say that two paths $P$ and $Q$ are \emph{internally vertex-disjoint} if they are vertex-disjoint except at one of their endpoints. 
 
 Let $T$ be a tree.  The  (unique) path from two nodes $x,y\in T$ is denoted by $T[x,y]$. If we remove a node $x$ (resp. $y$) from $T[x,y]$, we denote the resulting path by $T(x,y]$ (resp. $T[x,y)$). The subpath obtained by removing both $x$ and $y$ from $T[x,y]$ is denoted by $T(x,y)$.

The \emph{spread} (or \emph{aspect ratio}) of a point set $P$, denoted by $\Delta(P)$, is the ratio of the largest pairwise distance to the smallest pairwise distance, i.e.,
\begin{equation}
\Delta(P) ~=~ \frac{\max\{|xy|: x,y \in P\}}{\min\{|xy|: x \ne y \in P\}}
\end{equation}

The {\em distance} between a pair $X,Y$ of point sets, denoted by $d(X,Y)$, is the minimum distance between a point in $X$ and a point in $Y$.

We call a subset $N\subseteq P$ an \emph{$\epsilon$-cover} of $P$ if for any $x \in P$, there is a point $y \in N$ such that $|xy| \leq \epsilon$. We say $N$ is an \emph{$\epsilon$-net} if it is an $\epsilon$-cover and for any two points $x\not= y \in N$, $|xy| \geq \epsilon$.

We use $[n]$ and $[0,n]$ to denote the sets $\{1,2,\ldots,n\}$ and $\{0,1,\ldots,n\}$, respectively.  We will use the following inequalities:
\begin{equation} \label{eq:useful-ineq}
\begin{split}
    x/2 ~\leq~ &\sin(x) ~\leq~ x \qquad \mbox{ when } 0 \leq x\leq \pi/2\\
    1-x^2 ~\leq~ &\cos(x) ~\leq~ 1-x^2/3 \qquad \mbox{ when } 0 \leq x \leq \pi/2
\end{split}
\end{equation}

In this work, we are mainly interested in (Steiner) spanners with stretch $(1+\epsilon)$ for some constant $\epsilon$ sufficiently smaller than $1$.  This is without loss of generality because a (Steiner) $(1+\epsilon)$-spanner is also a (Steiner) $(1+2\epsilon)$-spanner.  We use $\epsilon \ll 1/c$, for some constant $c \geq 1$, to indicate the fact that we are assuming $\epsilon$ is sufficiently smaller than $\frac{1}{c}$.

Given a  point set $P$, we use $S_{\gr}(P)$ to denote the greedy $(1+\epsilon)$-spanner of $P$.  $S_{\gr}(P)$  is obtained by considering all pairs of points in $P$ in increasing distance order and adding to the spanner edge $xy$ whenever the distance between $x$ and $y$ in the current spanner is at least $(1+\epsilon)|xy|$. When $P$ is clear from the context, we simply denote the greedy $(1+\epsilon)$-spanner of $P$ by $S_{\gr}$.

In the analysis of the greedy spanner for doubling metrics, we often rely on the packing property  that is formally stated in the following lemma.

\begin{lemma}[Packing Lemma for Doubling Metrics] \label{lm:packing-dd} Let $P$ be a subset of points in a metric $(X,\delta)$ of doubling dimension $\ddim$ that is contained in a ball of radius $R$.  If for every $x\not= y \in P$, $\delta(x,y) >  r$, then $|P|\leq \left( \frac{4R}{r}\right)^\ddim$.\footnote{The proof of this lemma can be found in many places, e.g.~\cite{Smid09}.}
\end{lemma}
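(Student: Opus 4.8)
The plan is a routine iterated-doubling argument: repeatedly halve the radius using the definition of doubling dimension until the covering balls are small enough to contain at most one point of $P$ each, and then count.

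First I would fix $k = \lceil \log_2(2R/r) \rceil$ and claim that the ball $B$ of radius $R$ containing $P$ can be covered by at most $2^{dk}$ balls of radius $R/2^k$. This follows by induction on $k$ from the definition of doubling dimension: a single ball of radius $\rho$ is covered by at most $2^d$ balls of radius $\rho/2$, so applying this $k$ times to $B$ yields a cover of $B$ by at most $(2^d)^k = 2^{dk}$ balls of radius $R/2^k$. (One subtlety worth a sentence: the intermediate covering balls may have centers outside $B$ or may stick out of $B$, but this is harmless, since we only need that their union contains $B \supseteq P$.)

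Next I would observe that, by the choice of $k$, we have $2^k \ge 2R/r$, hence each covering ball has radius $R/2^k \le r/2$ and therefore diameter at most $r$. Since by hypothesis every two distinct points of $P$ are at distance strictly greater than $r$, each such covering ball contains at most one point of $P$. Consequently $|P|$ is at most the number of covering balls, i.e. $|P| \le 2^{dk}$.

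Finally I would convert this into the stated bound. Since $k = \lceil \log_2(2R/r)\rceil \le \log_2(2R/r) + 1$, we get $2^k \le 2 \cdot (2R/r) = 4R/r$, and therefore $|P| \le 2^{dk} = (2^k)^d \le (4R/r)^d$, as desired. I do not anticipate any genuine obstacle here — the only point requiring mild care is the bookkeeping in the ceiling/logarithm step to land exactly on the constant $4$ rather than a larger constant, and noting that the covering is valid even when intermediate balls are not contained in $B$.
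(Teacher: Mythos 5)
Your argument is correct: it is the standard iterated-doubling proof (halve the radius $k=\lceil\log_2(2R/r)\rceil$ times, note each resulting ball of radius at most $r/2$ captures at most one point of $P$, and bound $2^{dk}\le(4R/r)^d$), and your remark that intermediate covering balls need not lie inside $B$ is the right subtlety to flag. The paper itself gives no proof — it defers to a citation — and what you wrote is exactly the argument found there, so there is nothing to reconcile.
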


A similar packing lemma holds for Euclidean metric~\cite{Verger04}.

\begin{lemma}[Packing Lemma for Euclidean Metric] \label{lm:packing-Euclidean} Let $P$ be a subset of points in Euclidean metric of dimension $d$ that is contained in a ball of radius $R$.  If for every $x\not= y \in P$, $|xy| >  r$, then $|P| =  2^{O(d)}\left( \frac{R}{r}\right)^d$.
\end{lemma}

\section{ Lower bounds for spanners}\label{sec:lb-spanners}
In this section we provide our lower bounds for spanners, which are tight for both size and lightness for any $d = O(1)$. We start with the lower bound for $\mathbb R^2$, which is our main focus, and then generalize the argument for higher constant dimension $d = O(1)$.  For simplicity of presentation, let us consider stretch $1+c\epsilon$ for some constant $c \leq  1$ independent of $\epsilon$; the same lower bounds for stretch $1+\eps$   follow by scaling.

\paragraph{Lower bounds for spanners in $\mathbb R^2$.}
Let $C$ be a unit circle on the plane $\mathbb{R}^2$ and let $P$ be a set of points of size $ k = \frac{1}{\epsilon}$ evenly placed on the boundary of $C$. The $\mst$ of $P$ has weight at most the circumference of $C$ which is at most $2\pi$. Note that for every $p\not= q \in P$, the length of the (short) arc in $C$ connecting $p$ and $q$ is at least 
$2\pi \eps$, hence $|pq| = \Omega(\eps)$; we shall use this observation to argue that:

\begin{claim}\label{clm:lb-R2}
Let $x,y \in P$ with $|xy| = \Omega(1)$. For any $z \in P$, we have $|xz| + |yz| \ge (1 + \Omega(\epsilon))|xy|$.
\end{claim}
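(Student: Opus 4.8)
The plan is to exploit the fact that the points of $P$ lie on a unit circle, so the straight-line distances are comparable to arc-lengths, and then reduce the claim to a triangle-inequality-with-slack statement about a point $z$ sitting on the "wrong side" of the chord $xy$ or far from it. First I would set up coordinates: place $x$ and $y$ on the unit circle $C$ with $|xy| = \Omega(1)$, and let $\theta_{xy}$ be the arc from $x$ to $y$ (the shorter one), so that $|xy| = 2\sin(\theta_{xy}/2) = \Theta(\theta_{xy})$ and $\theta_{xy} = \Omega(1)$. Any $z \in P \setminus \{x,y\}$ lies on $C$, and since consecutive points of $P$ are at circular distance $2\pi\epsilon$ apart (chord distance $\ge 2\pi\epsilon \cdot (1-o(1))$, or just $\ge 2\sin(\pi\epsilon) \ge 2\epsilon$ using \eqref{eq:useful-ineq}), $z$ is at chord-distance $\Omega(\epsilon)$ from both $x$ and $y$.

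The core geometric fact I would isolate is this: for three points $x,z,y$ on a unit circle, the "detour" $|xz| + |zy| - |xy|$ is bounded below by a quantity that grows when $z$ is pushed off the chord $xy$. Concretely, write the half-arcs $\alpha$ (arc $xz$) and $\beta$ (arc $zy$), so the chords are $2\sin\alpha$, $2\sin\beta$, and $|xy| = 2\sin(\alpha+\beta)$ (taking $z$ on the minor arc; the major-arc case only makes the detour larger, since then $|xz|+|zy|$ is even bigger relative to the chord $xy$). Using the concavity of $\sin$ on $[0,\pi/2]$, or a direct expansion, one gets $\sin\alpha + \sin\beta - \sin(\alpha+\beta) = \Omega(\alpha\beta)$ — this is essentially $2\sin\frac{\alpha+\beta}{2}\big(\cos\frac{\alpha-\beta}{2} - \cos\frac{\alpha+\beta}{2}\big)$, and the difference of cosines is $\Omega((\alpha+\beta)^2 - (\alpha-\beta)^2) = \Omega(\alpha\beta)$ by \eqref{eq:useful-ineq}. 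Hence
\[
|xz| + |zy| - |xy| ~=~ \Omega(\alpha\beta).
\]
Now $\alpha, \beta = \Omega(\epsilon)$ (from the $\Omega(\epsilon)$ lower bound on chord distances, hence on arc distances), and $\alpha + \beta \ge \theta_{xy}/? $ — more carefully, $\min(\alpha,\beta) \ge$ the arc-distance from $z$ to the nearer of $x,y$, which is $\Omega(\epsilon)$, and $\max(\alpha,\beta) \ge \frac{1}{2}(\alpha+\beta) \ge \frac{1}{2}\cdot\frac{\theta_{xy}}{2} = \Omega(1)$. Therefore $\alpha\beta = \Omega(\epsilon)$, giving $|xz| + |zy| \ge |xy| + \Omega(\epsilon) = (1 + \Omega(\epsilon))|xy|$ since $|xy| = O(1)$.

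The main obstacle, and the step I would be most careful about, is handling the degenerate regime where $z$ is extremely close to $x$ or to $y$ along the circle: there $\min(\alpha,\beta)$ could in principle be as small as the single inter-point spacing $\Theta(\epsilon)$, which is exactly why the lower bound is $\Omega(\epsilon)$ and not larger — so I need the constant in the packing/spacing bound $|pq| \ge 2\pi\epsilon$ to be used honestly, and I need to make sure the $\Omega(\alpha\beta)$ estimate is uniform (no hidden dependence on $z$ beyond $\alpha,\beta$). The other subtlety is the case analysis on which arc $z$ lies on relative to $x,y$; I would dispatch the major-arc case by noting that if $z$ is on the arc not between $x$ and $y$, then both $|xz|$ and $|zy|$ are each already $\Omega(1)$-comparable to $|xy|$ in a way that forces an even larger additive detour, or simply observe that replacing $z$ by its antipode-type reflection only shrinks the sum, so the minor-arc analysis dominates. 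Modulo these, the argument is a short exercise in the inequalities \eqref{eq:useful-ineq} already recorded in the Preliminaries.
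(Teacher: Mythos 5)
Your proposal is correct in outline but takes a genuinely different route from the paper's. The paper never leaves the triangle $xyz$: setting $\alpha=\angle yxz$, $\beta=\angle xyz$, it combines the projection identity $|xy|=|xz|\cos\alpha+|yz|\cos\beta$ with $\cos t\le 1-t^2/3$ to get a relative detour of order $|xz|\alpha^2+|yz|\beta^2$, and then uses the inscribed-angle bounds $\alpha\ge |yz|/2$, $\beta\ge |xz|/2$ to argue that one of the angles is $\Omega(1)$ (since $\max(|xz|,|yz|)\ge |xy|/2$) while the chord multiplying it is $\Omega(\epsilon)$. You instead parametrize by arcs and reduce to the exact identity $\sin\alpha+\sin\beta-\sin(\alpha+\beta)=4\sin\frac{\alpha+\beta}{2}\sin\frac{\alpha}{2}\sin\frac{\beta}{2}$, which for $z$ on the minor arc gives the additive detour $\Omega\bigl((\alpha+\beta)\,\alpha\beta\bigr)=\Omega(\epsilon)$ essentially in one line; the final $\min\cdot\max$ mechanism is the same in both proofs. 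Your version is cleaner where it applies (an identity rather than a chain of one-sided bounds), while the paper's buys uniformity: no case split on which arc $z$ lies on, and it transfers verbatim to the circle through $x,y,z$ used in the higher-dimensional argument.

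Two steps of your plan need repair before it is a proof. First, the justification ``the difference of cosines is $\Omega((\alpha+\beta)^2-(\alpha-\beta)^2)$ by Equation~\ref{eq:useful-ineq}'' does not go through as written: pairing the lower bound $1-t^2$ with the upper bound $1-t^2/3$ yields $\frac{(\alpha+\beta)^2}{12}-\frac{(\alpha-\beta)^2}{4}$, which is negative when $\alpha\gg\beta$. The correct move is the exact product formula $\cos\frac{\alpha-\beta}{2}-\cos\frac{\alpha+\beta}{2}=2\sin\frac{\alpha}{2}\sin\frac{\beta}{2}\ge \alpha\beta/8$. Second, your dispatch of the major-arc case is not right as stated: if $z$ sits one inter-point spacing beyond $x$ on the side away from $y$, then $|xz|=\Theta(\epsilon)$, not $\Omega(1)$, and it is not clear that any reflection of $z$ onto the minor arc only shrinks $|xz|+|zy|$. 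The case is still true, but it needs its own short computation (with $z$ at signed arc $-\phi$, $\phi\ge 2\pi\epsilon$, one checks $\sin\frac{\phi}{2}+\sin\frac{\theta+\phi}{2}-\sin\frac{\theta}{2}\ge \phi/4-O(\phi^2)=\Omega(\epsilon)$ for $z$ near $x$, and the detour is $\Omega(1)$ once $z$ is far from both endpoints) — or one can simply fall back on the paper's projection argument, which handles all positions of $z$ at once.
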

\begin{proof}
First note that the proof is immediate if $xy$ is not the largest side in the triangle $(x,y,z)$,
as then $|xz| + |yz| \ge |xy| + \min\{|xz|,|yz|\} \ge (1 + \Omega(\epsilon))|xy|$.

We henceforth assume that $xy$ is the largest side in the triangle $(x,y,z)$.
Let $\alpha = \angle yxz$ and $\beta = \angle xyz$.  By our assumption that $xy$ is the largest side in $(x,y,z)$, we have $0 \leq \alpha,\beta \leq \pi/2$;
note also that for any $0 \leq \alpha,\beta \leq \pi/2$, by Equation~(\ref{eq:useful-ineq}), $\cos \alpha \leq 1 - \alpha^2/3$ and $\cos \beta \leq 1-\beta^2/3$. 
Our assumption also implies that $|xy| = |xz|\cos \alpha + |yz|\cos \beta$. 
Clearly, $|xz|,|yz| \le 2$. We have:
\begin{equation*}
    \begin{split}
        \frac{|xz| + |yz|}{|xy|} &= \frac{|xz| + |yz|}{|xz|\cos \alpha + |yz|\cos \beta}
        ~\geq~ \frac{|xz| + |yz|}{|xz|(1-\alpha^2/3)+ |yz|(1-\beta^2/3)}\\
				&= 1 + \frac{|xz|\alpha^2/3 + |yz|\beta^2/3}{|xz|(1-\alpha^2/3)+ |yz|(1-\beta^2/3)} 
        ~>~ 1 + \frac{|xz|\alpha^2/3 + |yz|\beta^2/3}{|xz| + |yz|} \\
        &\geq  1 +  |xz|\alpha^2/12 + |yz|\beta^2/12,
    \end{split}
\end{equation*}
where the last inequality holds as $|xz|,|yz| \le 2$.
By the triangle inequality, $\max\{|xz|,|yz|\} \ge \frac{|xy|}{2} = \Omega(1)$. Since $\alpha \geq |yz|/2$ and $\beta \geq |xz|/2$, we have $\max\{\alpha^2/12, \beta^2/12\} = \Omega(1)$. Thus, we have
$\frac{|xz| + |yz|}{|xy|} ~\geq~ 1 +  \min(|xz|\Omega(1),|yz|\Omega(1)) ~\geq~ 1 + \Omega(\epsilon).$
\QED
\end{proof}

\begin{corollary}\label{lm:spanner-P}
 Any $(1+c\epsilon)$-spanner of $P$ must have at least $\Omega(\frac{1}{\epsilon^2})$ edges and weight at least $\Omega(\frac{1}{\epsilon^2})$,
 for some constant $c <1$ independent of $\epsilon$.
\end{corollary}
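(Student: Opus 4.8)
The plan is to derive Corollary~\ref{lm:spanner-P} from Claim~\ref{clm:lb-R2} by a direct counting argument. First I would observe that since the $k = 1/\epsilon$ points of $P$ are evenly spaced on the unit circle $C$, a constant fraction of the ordered pairs $(x,y) \in P \times P$ are at distance $\Omega(1)$ from each other: concretely, for any fixed $x \in P$, the points $y \in P$ lying in the arc roughly antipodal to $x$ (say within angular distance $\pi/3$ of the antipode) number $\Omega(k)$, and each such $y$ satisfies $|xy| = \Omega(1)$. So there are $\Omega(k^2) = \Omega(1/\epsilon^2)$ (unordered) pairs $\{x,y\} \subseteq P$ with $|xy| = \Omega(1)$.

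Next I would fix any $(1+c\epsilon)$-spanner $H$ of $P$ and fix one such pair $\{x,y\}$ with $|xy| = \Omega(1)$. I claim the edge $xy$ must belong to $H$. Indeed, consider a shortest path $\pi$ from $x$ to $y$ in $H$. If $\pi$ is not the single edge $xy$, then $\pi$ passes through some vertex $z \in P \setminus \{x,y\}$ (recall $H$ has no Steiner points, so all internal vertices of $\pi$ lie in $P$). Then the length of $\pi$ is at least $|xz| + |zy|$, which by Claim~\ref{clm:lb-R2} is at least $(1+\Omega(\epsilon))|xy|$. Choosing the constant $c$ smaller than the constant hidden in this $\Omega(\epsilon)$ (which is legitimate since that constant is independent of $\epsilon$), we get $d_H(x,y) \geq (1+\Omega(\epsilon))|xy| > (1+c\epsilon)|xy|$, contradicting the spanner property. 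Hence $xy \in E(H)$. Here one should be slightly careful: strictly speaking Claim~\ref{clm:lb-R2} only controls a single intermediate vertex $z$, so to be fully rigorous I would instead take $z$ to be the \emph{first} internal vertex on $\pi$ and use $d_H(x,y) = $ (length of $\pi$) $\geq |xz| + d_H(z,y) \geq |xz| + |zy| \geq (1+\Omega(\epsilon))|xy|$, where the middle inequality is the triangle inequality in the metric and the last is the Claim; this avoids needing a shortest path to have only one interior vertex.

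Finally I would assemble the bound: since every one of the $\Omega(1/\epsilon^2)$ pairs $\{x,y\}$ with $|xy|=\Omega(1)$ forces the edge $xy$ into $H$, we get $|E(H)| = \Omega(1/\epsilon^2)$, and since each such edge has weight $\Omega(1)$, the total weight is $w(H) = \Omega(1/\epsilon^2)$ as well. This proves the corollary. The main obstacle — though a minor one — is just the bookkeeping in the first step: verifying that a constant fraction of pairs are genuinely at distance bounded below by an absolute constant, and ensuring the angles $\alpha, \beta$ appearing in Claim~\ref{clm:lb-R2} actually lie in $[0,\pi/2]$ so that the claim applies (which holds precisely when $z$ is not "between" $x$ and $y$ in a degenerate way, and is automatic for the antipodal-arc pairs we select). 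The rest is routine counting.
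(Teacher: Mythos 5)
Your argument is correct and is essentially the same counting argument the paper uses: the paper fixes $F(x)$ to be the $\frac{1}{2\epsilon}$ furthest points from each $x$ (so $|xy|\geq \sqrt 2$), invokes Claim~\ref{clm:lb-R2} to force each edge $(x,y)$ into the spanner, and sums to get $\frac{1}{4\epsilon^2}$ edges of weight $\Omega(1)$ — identical in substance to your ``antipodal-arc'' selection and counting. Your refinement of taking the \emph{first} internal vertex $z$ and writing $d_H(x,y)\geq |xz|+d_H(z,y)\geq |xz|+|zy|$ is a nice, slightly more careful way to justify the step the paper states directly; and your worry about the angles $\alpha,\beta$ being in $[0,\pi/2]$ concerns the internal proof of Claim~\ref{clm:lb-R2}, not its statement, so you are free to cite the claim as given without needing that observation.
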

\begin{proof}
Fix an arbitrary point $x \in P$ and let $F(x)$ be the set of $\frac{1}{2\epsilon}$ furthest points from $x$ in $P$.
Let $y \in F(x)$ and note that $|xy| \geq \sqrt{2}$.  By Claim~\ref{clm:lb-R2}, $|xz| + |zy| > (1+c\epsilon)|xy|$, for any point $z \in P\setminus \{x,y\}$ and
some constant $c < 1$ independent of $\epsilon$.
Thus, any $(1+c\epsilon)$-spanner $S$ of $P$ must include all edges $(x,y)$, for all $y \in F(x)$.
Summing over all $1/\eps$ points $x \in P$, there are overall $(\frac{1}{\eps} \cdot \frac{1}{2\epsilon}) / 2 = \frac{1}{4\epsilon^2}$ such edges $(x,y)$ with $x \in P, y \in F(x)$, each with weight $\Omega(1)$, thus the corollary follows.\QED
\end{proof}

We now prove Theorem~\ref{thm:lb-d} for $d = 2$. In the following we assume that $\eps \ge \frac{1}{n}$.
\vspace{3pt}
\\\emph{Lightness bound.}  Let $P^*_n$ be any set of $n$ points obtained from the aforementioned set $P$
by adding $n - \frac{1}{\epsilon}$ points at the same locations of points of $P$; e.g., we can add all $n - \frac{1}{\epsilon}$ points to coincide with a single arbitrary point of $P$.
The weight of  $\mst(P^*_n)$ remains unchanged, i.e., $O(1)$.
By Corollary~\ref{lm:spanner-P},  any $(1+c\eps)$-spanner for $P^*_n$ must have weight $\Omega(\frac{1}{\epsilon^2})$, yielding the lightness bound.
\vspace{5pt}
\\\emph{Sparsity bound.}
Let $n' = n\eps$, and take $n'$ vertex-disjoint copies of the aforementioned point set $P$, denoted by $P_1, P_2, \ldots, P_{n'}$,
where each $P_i$ is defined with respect to a separate unit circle $C_i$ and the $n'$ circles are sufficiently far   from each other;
it suffices for the circles to be horizontally aligned so that any consecutive circles are at distance 3 from each other.
Let $Q^*_n = P_1\cup P_2\cup \ldots \cup P_{n'}$.

Let $S:= S(Q^*_n)$ be any $(1+c\epsilon)$-spanner for $Q^*_n$. For each $i\in [n']$, let $S[P_i]$ be the induced subgraph of $S$ on $P_i$.
Since the circles are sufficiently far from each other, for each $i \in [n']$, no $(1+\eps)$-spanner path between any pair $x,y \in P_i$ in $S$ may contain a point in $P_j$, for any $j \not= i$,
hence $S[P_i]$ is an $(1+c\epsilon)$-spanner of $P_i$. By Corollary~\ref{lm:spanner-P},
 we conclude that $|E(S)| ~\geq~  \sum_{i\in [n']} |E(S[P_i])| \geq n' \cdot \Omega(\frac{1}{\epsilon^2}) ~=~ \Omega(\frac{n}{\epsilon})$.

Theorem~\ref{thm:lb-d} asserts the existence of a single point set to which both the lightness and sparsity lower bounds apply. We next argue that both lower bounds apply to $Q^*_n$. Although we only showed the sparsity lower bound with respect to $Q^*_n$, the lightness lower bound follows along similar lines; specifically, 
Corollary~\ref{lm:spanner-P} implies that any $(1+c \eps)$-spanner $S$ for $Q^*_n$ must have weight at least  
$w(S) ~\geq~  \sum_{i\in [n']} w(S[P_i]) \geq n' \cdot \Omega(\frac{1}{\epsilon^2}) ~=~ \Omega(\frac{n}{\epsilon})$,
where $S[P_i]$ is the induced subgraph of $S$ on $P_i$, for each $i\in [n']$.
Since $w(\mst(Q^*_n)) = O(n') = O(n \eps)$, the required lightness bound follows, which concludes the proof of Theorem~\ref{thm:lb-d}.

\paragraph{Lower bounds for spanners in higher dimensional spaces.}
Let $\mathbb{S}_{d}$ be a $d$-dimensional unit sphere centered at the origin. Our lower bounds make use of spherical codes.

\begin{definition}[Spherical Code]\label{def:spherical-code}
	A \emph{$(d,\theta)$-spherical code} $C$ is the set of unit vectors $c_1,c_2,\ldots, c_k \in \mathbb{S}_{d}$, called \emph{codewords}, such that the angle between any two vectors is at least $\theta$. 
\end{definition}
Let $A(d,\theta)$ be the size of the largest $(d,\theta)$-spherical code. A classic bound on $A(d,\theta)$ (see~\cite{Shannon59} or~\cite{Wyner65}) is:
\begin{equation}\label{eq:spherical}
    A(d,\theta) ~\geq~ (1 + o(1))\sqrt{2\pi d}\frac{\cos \theta}{\sin^{d-1}\theta}
\end{equation}

Let $P$ be a $(d,2\pi\epsilon)$-spherical code of maximum size.
Observe that $P = \Theta(\epsilon^{-d+1})$; indeed, Equation (\ref{eq:spherical}) yields $P = \Omega(\epsilon^{-d+1})$
and $P = O(\epsilon^{-d+1})$ follows from a standard packing argument (see Lemma~\ref{lm:spherical-code-up}).

Consider any $x,y \in P$ with $|xy| = \Omega(1)$. 
Let $z$ be any point in $P\setminus \{x,y\}$ and let $\tilde C$ be the circle that goes through $x,y$ and $z$. Since $|xy| = \Omega(1)$, $\tilde{C}$ has radius $\Theta(1)$.
For any $z \in P\setminus \{x,y\}$, we have $\min(|xz|, |yz|) \ge 2\pi\epsilon$,
hence we can apply Claim~\ref{clm:lb-R2} to obtain $|xz| + |zy| \geq (1+ \Omega(\eps))|xy|$, where the constant hiding in the $\Omega$-notation
might be smaller than that in the claim statement, since the claim is stated w.r.t.\ a unit circle
whereas $\tilde C$ has radius $\Theta(1)$.
It follows that any edge $(x,y)$ with $|xy| = \Omega(1)$ must be included in any $(1+\tilde c\epsilon)$-spanner for $P$, for some constant $\tilde c < 1$ independent of $d$ and $\epsilon$.
Note also that for each point $x \in P$, there are $\Omega(|P|)$ points in $P$ lying on the hemisphere opposite to $x$.

This enables us to generalize Corollary \ref{lm:spanner-P}: Any $(1+ \tilde c\epsilon)$-spanner $S$ of $P$
must have  $\Omega(|P|^2) = \Omega(\epsilon^{-2d+2})$ edges and weight $w(S) = \Omega(|P|^2) = \Omega(\epsilon^{-2d+2})$,
 for some constant $\tilde c <1$ independent of $d$ and $\epsilon$.
Since the distance between any two nearby points in $P$ is $O(\epsilon)$, $w(\mst(P))= O(\epsilon|P|)$, and the lightness of $S$ is thus
$\Omega\left(\frac{\epsilon^{-2d+2}}{\epsilon|P|}\right)   = \Omega\left(\epsilon^{-d}\right)$.
For the size bound, we again use the trick of taking $n / |P|$ copies of the same point set $P$ that are sufficiently far from each other,
and get that the spanner size is  $\Omega( \frac{n}{|P|}\cdot |P|^2) = \Omega(n \cdot \eps^{-d+1})$;
moreover, the lightness lower bound $\Omega\left(\epsilon^{-d}\right)$ applies to this extended point set as well.
For the size and lightness bounds to apply to $n$-point sets, we assume that $n \ge |P|$, i.e., $\eps = \Omega({n}^{-\frac{1}{d-1}})$.

\section{Sparse Steiner spanners}\label{sec:ub-steiner-spanner}

In this section, we prove Theorem~\ref{thm:sparse-Steiner}. Our proof strategy consists of two steps.
In the first step we prove a relaxed version of Theorem~\ref{thm:sparse-Steiner}, where the size of the spanner depends on the spread $\Delta$ of the point set,  $O(\frac{n}{\sqrt{\epsilon}} \cdot \log(\Delta))$.
In the second step, we reduce the general problem to the relaxed one proved in the first step.
For the reduction, we solve $\log \Delta$ spanner construction problems, for point sets of spread $O(\frac{1}{\epsilon})$ each,
and demonstrate that no dependency on $\Delta$, even a logarithmic one, is incurred.
This reduction employs the standard net-tree spanner, based on a hierarchical net structure, which consists of $\log \Delta$ edge sets $E_0, E_1, \ldots, E_{\log \Delta - 1}$
that we refer to as {\em ring spanners}. Each ring spanner $E_i$ connects pairs of points at distance in the range $(r_i, O(r_i / \eps))$, where $r_i$ grows geometrically with $i$. A central ingredient of the reduction is a careful replacement of the ring spanners $R_i$ by much sparser {\em Steiner} ring spanners.

\subsection{Steiner spanners for point sets of bounded spread}

In this section we handle point sets of bounded spread, which constitutes a central ingredient in the proof of Theorem~\ref{thm:sparse-Steiner}.
Specifically, we prove the following statement.
\begin{proposition}\label{lm:bounded-spread}
For any set $P$ of $n$ points in $\mathbb{R}^2$ with spread $\le \Delta$
and any $\eps = \tilde \Omega(\frac{1}{n^2})$, there is a Steiner spanner  of size
$O(\frac{n \log \Delta}{\sqrt{\epsilon}})$.
\end{proposition}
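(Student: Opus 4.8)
The plan is to build the Steiner spanner scale by scale. Partition the $\binom{n}{2}$ pairs of $P$ into $O(\log\Delta)$ \emph{distance classes}, where class $i$ consists of the pairs $\{x,y\}$ with $|xy|\in[2^{i-1},2^i)$, and construct, for each $i$, a Steiner $(1+\eps)$-spanner $\mathcal{S}_i$ that preserves to within $1+\eps$ every pair in class $i$; the output is $\bigcup_i\mathcal{S}_i$. Since each pair is preserved by the $\mathcal{S}_i$ of its own class and there are $O(\log\Delta)$ classes, it suffices to realize each $\mathcal{S}_i$ with $O(n/\sqrt{\eps})$ edges. Fix $i$ and set $\ell=2^i$. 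Tile the bounding box by axis-parallel subsquares of side $5\ell$ whose centers form a grid of side $2\ell$; then every class-$i$ pair lies in a common subsquare and every point lies in $O(1)$ subsquares. Hence it suffices to construct, for each subsquare $Q$ with point set $P_Q$ of size $m_Q$, a Steiner $(1+\eps)$-spanner that preserves the class-$i$ pairs within $P_Q$ using $O(m_Q/\sqrt{\eps})$ edges: summing over $Q$ and using $\sum_Q m_Q=O(n)$ gives $O(n/\sqrt{\eps})$ edges for $\mathcal{S}_i$, hence $O(\tfrac{n\log\Delta}{\sqrt{\eps}})$ in total.

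Inside a subsquare $Q$ I would reduce to a \emph{band-pair} problem. Subdivide $Q$ into horizontal bands and into vertical bands, each of width a fixed small constant times $\ell$, and take $O(1)$ shifted copies of both partitions. A short case analysis — splitting on which coordinate-gap of $\{a,b\}$ dominates and choosing an appropriate shift — shows that for every class-$i$ pair $\{a,b\}$ there is one of these $O(1)$ partitions in which $a$ and $b$ lie in two \emph{non-adjacent} parallel bands $A$ and $B$. Thus it suffices to handle, for every such pair $(A,B)$, all class-$i$ pairs with one endpoint in $A\cap P_Q$ and the other in $B\cap P_Q$. Since the bands are wide, each point of $P_Q$ lies in only $O(1)$ band-pairs, so $\sum_{(A,B)}(m_A+m_B)=O(m_Q)$, and it is enough to build for each band-pair a Steiner $(1+\eps)$-spanner preserving the relevant distances with $\tilde{O}((m_A+m_B)/\sqrt{\eps})$ edges.

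The band-pair construction is where the quadratic saving — and the main difficulty — resides. For a band-pair $(A,B)$ the distances to be preserved satisfy $|ab|=\Theta(\ell)$, and the segments $ab$ span a cone of constant angular width. The geometric fact producing the $\sqrt{\eps}$ improvement (rather than $\eps$, as in the $\Theta$-graph) is that a detour through a point \emph{near the midpoint} of a segment is quadratically cheap: if $x$ lies within distance $\rho$ of the midpoint $m$ of $ab$ with $\rho\ll|ab|$, then $|ax|+|xb|\le|ab|+O(\rho^2/|ab|)$. Hence placing Steiner portals at resolution $\Theta(\sqrt{\eps}\,\ell)$ over the set of midpoints of $A$–$B$ pairs, and routing each pair $a\to x\to b$ through the portal $x$ nearest the midpoint of $ab$, yields stretch $1+O(\eps)$. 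A single such layer of portals, however, forces some points to connect to $\Theta(1/\eps)$ portals; to reduce this to $\tilde{O}(1/\sqrt{\eps})$ connections per point I would place portals at $O(\log\tfrac1\eps)$ geometrically spaced resolutions between $\Theta(\eps\,\ell)$ and $\Theta(\ell)$ and route through a short chain of portals whose resolutions decrease geometrically, so that the accumulated detour is a convergent geometric series summing to $O(\eps\,\ell)$. A packing bound on the number of occupied portals at each resolution, together with a charging of edges to input points and to portals, then gives the $\tilde{O}((m_A+m_B)/\sqrt{\eps})$ bound and completes the proof; I expect the hypothesis $\eps=\tilde{\Omega}(1/n^2)$ to enter here, keeping the portal point sets of bounded spread so that the hierarchy is shallow.

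The step I expect to be hardest is precisely this last one: arranging the hierarchical portals so that the stretch stays $1+O(\eps)$ — which requires every routing chain to hug the segment it approximates, keeping each detour second order — while simultaneously ensuring the portals are shared across enough input pairs that the global edge count does not exceed $\tilde{O}((m_A+m_B)/\sqrt{\eps})$ per band-pair. Balancing these two requirements against each other is the delicate heart of the argument and is presumably the source of the extra $\log^2\tfrac1\eps$ factors in the final bound.
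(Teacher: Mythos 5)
Your outer scaffolding matches the paper almost exactly: the paper also partitions pairs into $O(\log\Delta)$ distance classes, covers each scale with overlapping subsquares of side $\Theta(2^i)$, slices each subsquare into $O(1)$ horizontal and vertical bands, observes that every class-$i$ pair lies in a non-adjacent band pair, and reduces to a per-band-pair Steiner spanner with $O((|A|+|B|)/\sqrt\eps)$ edges. Up to that point you and the paper are doing the same thing.

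The divergence --- and the gap --- is in the band-pair lemma, which is the heart of the construction. You place portals in a two-dimensional region near the set of \emph{midpoints} of $A$--$B$ pairs. Since those midpoints fill a region of area $\Theta(\ell^2)$, a grid of resolution $\Theta(\sqrt\eps\,\ell)$ has $\Theta(1/\eps)$ portals, which is why you are forced into a multi-scale hierarchy to bring the per-point degree down to $\tilde O(1/\sqrt\eps)$, and you yourself flag the resulting stretch-versus-degree balancing as the step you cannot yet close. That step is genuinely unresolved in your write-up: you would need to show that a routing chain through geometrically-refined portals both (i) accumulates only $O(\eps\,\ell)$ of total detour, which requires each link to be within quadratic reach of the direct segment, and (ii) does not blow up the number of occupied portal--portal and point--portal edges beyond $\tilde O((|A|+|B|)/\sqrt\eps)$ when summed over all pairs. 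Neither is established.

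The paper's Lemma~\ref{lm:rec-spanner} avoids the hierarchy entirely with a much simpler observation: the Steiner points need only live on a \emph{single one-dimensional segment} $L$ placed between the two bands, evenly spaced at resolution $\Theta(\sqrt\eps\,\ell^{-1})$ (after scaling $W=1$), so there are only $O(\ell/\sqrt\eps)=O(1/\sqrt\eps)$ of them; each Steiner point is then joined to \emph{every} input point in $A\cup B$. For any $p\in A,\ q\in B$, the segment $pq$ crosses $L$ at some $y$, and the nearest Steiner point $x$ to $y$ satisfies $\angle xpy = O(\sqrt\eps)$ because $|py|=\Omega(1/\ell)$ and $|xy|=O(\sqrt\eps/\ell)$; hence $|px|+|xq|\le(1+O(\eps))|pq|$. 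The quadratic saving you identified is exactly this angle-squared effect, but the correct place to exploit it is a fixed separating line (where displacement is one-dimensional), not a two-dimensional cloud of pairwise midpoints. Because of that, a single layer of portals already gives $O(m/\sqrt\eps)$ edges with no hierarchy, no geometric-series accumulation of errors, and no dependence of the band-pair lemma on $\eps$ versus $n$. As it stands your proof of the band-pair lemma is a different and substantially harder route, and the hardest part of it is left as a hope rather than a proof, so the proposal does not yet establish Proposition~\ref{lm:bounded-spread}.
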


\subsubsection{An auxiliary lemma}
We shall assume that $\epsilon$ is sufficiently smaller than $1$.
In what follows $P$ is an arbitrary (fixed) set of $n$ points in $\mathbb R^2$.
Let $X$ be a point set in $\mathbb{R}^2$; abusing notation,  when $X$ is of infinite size (such as a rectangle or any other polygonal shape),
we may use $X$ as a shortcut for $X \cap P$, i.e., to denote the set of points in $X$ that belong to $P$; we may henceforth use $|X|$ as a shortcut for $|X \cap P|$.
Let $R_1$ and $R_2$ be two rectangles of the same length and width whose sides are parallel to the $x$ and $y$ axis.
We say that  $R_1,R_2$ are \emph{horizontally} (respectively, {\em vertically}) {\em parallel} if there is a vertical (respectively, horizontal) line going through the left (respectively, top) sides of both rectangles.  The following lemma is crucial in our proof.

\begin{lemma}\label{lm:rec-spanner} Let $R_1, R_2$ be two horizontally (respectively, vertically) parallel rectangles of width (resp., length) $W$
and the same length (resp., width).  Let $d(R_1,R_2) = \frac{W}{\ell}$ where $\ell > 1$. There is a Steiner spanner $S$ with $O(\frac{\ell}{\sqrt{\epsilon}}( |R_1| + |R_2|))$ edges such that for any point $p\in R_1, q\in R_2$, $d_S(p,q) \leq (1 + \epsilon)|pq|$, assuming $\epsilon$ is sufficiently smaller than $1$.
\end{lemma}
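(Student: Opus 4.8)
I want to build the Steiner spanner $S$ for the two horizontally parallel rectangles $R_1,R_2$ of width $W$ at distance $W/\ell$. The key geometric fact to exploit is that for a point $p\in R_1$ and $q\in R_2$, the segment $pq$ is ``long in the horizontal direction'' — its horizontal extent is at least $d(R_1,R_2)=W/\ell$ — while its vertical extent is at most $W$ (the common length of the rectangles) plus possibly the width. So the direction of $pq$ lies in a cone of half-angle $O(\ell)$ around the horizontal axis... wait, that's backwards; let me restate. Since $d(R_1,R_2)=W/\ell$ is the minimum horizontal gap and $W$ bounds the other coordinate spreads, the slope of $pq$ can be as large as $\Theta(\ell)$, so the set of directions is a wide range. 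The right idea is to discretize by \emph{Steiner points arranged on vertical lines (``pillars'')} placed between the two rectangles, together with a sparse set of directions, exactly as in a Steiner $\Theta$-graph, but taking advantage of the fact that we only need to serve pairs with one endpoint in each rectangle.

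\medskip

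\textbf{Key steps, in order.}

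\emph{Step 1 (Reduce to a canonical grid of Steiner points).} Subdivide the horizontal gap of width $W/\ell$ into $O(1/\sqrt{\epsilon})$... no: the natural parameter count suggests placing, on each of $O(\ell)$ equally spaced vertical lines spanning the region between and including $R_1,R_2$, a set of $\Theta(1/\sqrt{\epsilon})$ Steiner points, but more economically one places the Steiner points only where needed. I would instead introduce a single family of \emph{portal lines}: $O(\sqrt{1/\epsilon})$ vertical lines, and on each such line a set of Steiner portals spaced $\Theta(\sqrt{\epsilon}\,W)$ apart, giving $O(1/\sqrt{\epsilon})$ portals per line, hence $O(1/\epsilon)$ portals total — but this is independent of $|R_1|+|R_2|$, which would be too few edges unless every real point connects cheaply to a nearby portal. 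The correct accounting: each point of $R_1\cup R_2$ connects to $O(\ell)$ portals (one per portal line, its ``horizontal projection rounded to the grid''), and portals on adjacent lines are connected in a grid; the grid has $O(\ell/\sqrt\epsilon)$ vertices hence $O(\ell/\sqrt\epsilon)$ edges, and the point-to-portal edges number $O(\ell(|R_1|+|R_2|))$. Hmm, that gives an extra $\sqrt\epsilon$ factor off. Let me reconsider: the intended bound $O(\tfrac{\ell}{\sqrt\epsilon}(|R_1|+|R_2|))$ suggests each real point connects to $O(\ell/\sqrt\epsilon)$ Steiner points — one per portal line ($O(\ell)$ lines) times $O(1/\sqrt\epsilon)$ is wrong dimensionally; rather it is $O(\ell)$ lines contributing total and each connection is to $O(1/\sqrt{\epsilon})$... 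I will settle this by the following clean construction.

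\emph{Step 2 (The construction I will actually use).} Place $m = \Theta(1/\sqrt{\epsilon})$ vertical ``walls'' $\ell_1,\dots,\ell_m$ evenly spaced across the strip between $R_1$ and $R_2$ (so consecutive walls are at horizontal distance $\Theta(\tfrac{W}{\ell\sqrt\epsilon})$ — this uses $d(R_1,R_2)=W/\ell$). On wall $\ell_j$ put Steiner points spaced $\delta_j$ apart vertically, where $\delta_j$ is chosen so that a path $p \to (\text{portal on }\ell_1) \to \dots \to (\text{portal on }\ell_m) \to q$, each hop snapping the ideal straight-line crossing point to the nearest portal, accrues total additive error $O(\epsilon |pq|)$; since there are $m=\Theta(1/\sqrt\epsilon)$ hops and each snap costs $O(\delta_j^2 / (\text{horizontal hop length}))$ by the standard ``chord vs.\ segment'' estimate (Equation~\ref{eq:useful-ineq}), choosing each $\delta_j = \Theta(\sqrt{\epsilon}\cdot |pq|/m) = \Theta(\epsilon |pq|)$ makes the per-hop error $O(\epsilon^{3/2}|pq|)$ and the total $O(\epsilon|pq|)$. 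Because $|pq| \in \Theta(W/\ell)$ up to the aspect of the rectangles... actually $|pq|$ ranges over $[\,W/\ell,\ \Theta(W)\,]$, so I will instead run $O(\log \ell)$ scales of this portal grid, one per dyadic range of $|pq|$; this contributes only a $\log\ell \le \log(1/\epsilon)$-type overhead absorbed in the $\tilde O$. Connect consecutive walls' portals in the obvious planar grid pattern (each portal to the $O(1)$ portals on the next wall within the allowed direction cone), and connect each $p\in R_1$ (resp.\ $q\in R_2$) to the $O(1/\sqrt\epsilon)$ portals on the \emph{first (resp.\ last)} wall that lie within its direction cone.

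\emph{Step 3 (Edge count).} Point-to-portal edges: $O(\tfrac{1}{\sqrt\epsilon}(|R_1|+|R_2|))$ per scale — wait, I need the $\ell$. The direction cone from $p\in R_1$ toward $R_2$ has angular width $\Theta(\ell)$... capped at $\pi$, so really $\Theta(\min(\ell,1))$; for $\ell>1$ this is $\Theta(1)$, and the number of portals on the first wall inside a width-$\Theta(1)$ cone is $\Theta((\text{wall extent})/\delta)=\Theta(W/(\epsilon|pq|))$. Since $|pq|\gtrsim W/\ell$ this is $O(\ell/\epsilon)$, too many. The resolution is to \emph{not} connect $p$ to all portals in its cone but route $p$ to a single ``entry portal'' (nearest grid point to $p$ on wall $1$) and let the \emph{grid} fan out the directions. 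Then point-to-portal edges are just $O(|R_1|+|R_2|)$ per scale, the grid has $O(m \cdot (\text{portals per wall})) = O(\tfrac{1}{\sqrt\epsilon}\cdot \tfrac{W}{\delta})=O(\tfrac{\ell}{\epsilon\sqrt\epsilon})$ vertices/edges per scale — still not matching. I clearly need the portal grid to be non-uniform: dense near $R_1,R_2$, sparse in the middle, so that the \emph{total} number of grid portals is $O(\tfrac{\ell}{\sqrt\epsilon})$ and each real point sees $O(\tfrac{1}{\sqrt\epsilon})$ of them...

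\medskip

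\textbf{Main obstacle.} The hard part — and where I expect to spend all the real effort — is precisely this \emph{balancing of the portal geometry}: designing the (necessarily non-uniform, multi-scale) arrangement of Steiner portals and the routing scheme so that simultaneously (i) every $p\in R_1,q\in R_2$ has a portal-path of length $(1+\epsilon)|pq|$, which forces the portal spacing at ``crossing depth'' proportional to $\epsilon\cdot(\text{local feature size})$ and limits the number of wall-crossings to $O(1/\sqrt\epsilon)$ so errors don't accumulate past $\epsilon|pq|$; and (ii) the total edge count is $O(\tfrac{\ell}{\sqrt\epsilon}(|R_1|+|R_2|))$, which forces each real point to pay for only $O(\ell/\sqrt\epsilon)$ portal edges and the internal portal grid to have only $O(\tfrac{\ell}{\sqrt\epsilon}(|R_1|+|R_2|))$ edges total. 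Reconciling (i) and (ii) is the crux; I expect the $\sqrt\epsilon$ (rather than the naive $\epsilon$) to emerge exactly because one is free to choose how many intermediate walls to use, and $1/\sqrt\epsilon$ walls is the sweet spot balancing ``$O(\text{walls})$ additive snapping errors each of size $O(\epsilon^{?}|pq|)$'' against ``$O(1/\text{walls})$ portals per real point per wall''. Once the arrangement is pinned down, verifying the stretch is a routine application of the chord-length bounds in Equation~\ref{eq:useful-ineq} and verifying the count is routine summation over the $O(\log(1/\epsilon))$ scales.
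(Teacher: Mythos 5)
Your proposal never settles on a construction and, in the end, explicitly leaves the crux (``balancing the portal geometry'') unresolved, so there is a genuine gap. The specific idea you missed is that you do not need a multi-wall grid, a routing scheme, or any accumulation of snapping errors at all: the paper uses a \emph{single} line segment $L$ of length $W$ placed in the middle of the gap (at distance $\ge d(R_1,R_2)/2 = W/(2\ell)$ from each rectangle), with $\frac{\ell}{\sqrt{\epsilon}}$ evenly spaced Steiner points on it, and \emph{each real point connects to all of these Steiner points}. The edge count $\frac{\ell}{\sqrt{\epsilon}}(|R_1|+|R_2|)$ is then immediate — you repeatedly constrained yourself to ``connect each real point to one entry portal'' or ``$O(1)$ nearby portals,'' which is the right instinct when both endpoint sets are large, but here the all-to-portals count is exactly the target, so the complicated grid machinery is not needed.

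The stretch analysis is a single snap, not $\Theta(1/\sqrt\epsilon)$ snaps: the segment $pq$ crosses $L$ at some $y$, the nearest Steiner point $x$ is within $\delta = \sqrt\epsilon W/\ell$ of $y$, and since $|py|, |qy| \ge W/(2\ell)$, the deflection angle $\alpha = \angle xpy$ satisfies $\sin\alpha \le \delta/(W/(2\ell)) = 2\sqrt\epsilon$, so the cosine penalty is $1/\cos\alpha \le 1+O(\alpha^2) = 1+O(\epsilon)$, giving $|px|+|xq| \le (1+O(\epsilon))|pq|$. This is exactly where the $\sqrt\epsilon$ comes from: lateral deviation $\delta$ at viewing distance $\Theta(W/\ell)$ induces angle $\Theta(\ell\delta/W)$ and hence multiplicative error $\Theta((\ell\delta/W)^2)$, so spacing $\delta = \sqrt\epsilon W/\ell$ yields error $\epsilon$ and $\ell/\sqrt\epsilon$ Steiner points. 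Your repeated attempts to split into $1/\sqrt\epsilon$ walls, multi-scale grids, or non-uniform portal families were circling a much simpler picture, and the ``balancing'' obstacle you flagged dissolves once you allow each real point to be adjacent to the entire $O(\ell/\sqrt\epsilon)$-element Steiner set.
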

\begin{proof}
By symmetry, it suffices to prove the lemma for two horizontally parallel rectangles.  By scaling, we may assume that $W = 1$. Let $\dl,\dr$ be two vertical lines that contain the left sides and right sides of $R_1$ and $R_2$, respectively. Let $L$ be the horizontal line segment of length $1$
with endpoints touching $\dl$ and $\dr$ and that is within distance $1 / (2\ell)$ from both $R_1$ and $R_2$
(see Figure~\ref{fig:two-rec}). We then place a set $X$ of  $\frac{\ell}{\sqrt{\epsilon}}$ evenly spaced Steiner points along $L$, and take to the Steiner spanner $S$ all edges that connect each of the Steiner points with all the points in $R_1\cup R_2$. The vertex set of $S$ is $X \cup R_1\cup R_2 $ and its edge set is of size
$|X|(|R_1| + |R_2|) = \frac{\ell}{\sqrt{\epsilon}}(|R_1| + |R_2|)$.  

We next prove the stretch bound for an arbitrary pair $p,q$ of points with $p \in R_1,q \in R_2$,
assuming $\epsilon$ is sufficiently smaller than $1$.
Let $y$ be the intersection of the line segments $pq$ and $L$
and let $x$ be the closest point of $X$ to $y$. Since the distance between consecutive points of $X$ along $L$ is $\frac{\sqrt{\epsilon}}{\ell}$, we have $|xy| \leq \frac{\sqrt{\epsilon}}{\ell}$. Note also that $|py| \geq \frac{d(R_1,R_2)}{2} = \frac{1}{2\ell}$.
Defining $\alpha = \angle xpy$, we conclude that
\begin{equation*}
\sin(\alpha)  ~=~ \frac{|xy| \sin \angle pxy}{|py|} ~\leq~ \frac{|xy|}{|py|} \leq 2\ell |xy| ~\leq~   2\sqrt{\epsilon},
\end{equation*}
hence $\alpha \leq 4\sqrt{\epsilon}$  by Equation~(\ref{eq:useful-ineq}); to apply Equation~(\ref{eq:useful-ineq}),
we rely on the fact that $\alpha < \pi/2$, which holds since $\eps$ is sufficiently small.
Let $x'$ be the projection of $x$ onto $pq$.
We have:
\begin{equation*}
|px|  ~=~ \frac{|px'|}{\cos(\alpha)} ~\leq~ \frac{|px'|}{1-\alpha^2} \leq\frac{|px'|}{1-16\epsilon} ~\le~ (1 + O(\epsilon)) |px'|,
\end{equation*}
where the first inequality follows from Equation~(\ref{eq:useful-ineq}) (recall that $\alpha < \pi/2$) and
the last inequality holds since $\eps$ is sufficiently small.
By symmetry, we have $|xq|  \le (1+O(\epsilon))|x'q|$. Since $S$ contains both edges $(p,x)$ and $(x,q)$, it follows that
$$d_S(p,q) ~\le~ |px| + |xq| ~\le~ (1 + O(\epsilon))(|px'| + |x'q|) ~=~ (1+O(\epsilon))|pq| ~\leq~ (1 + \epsilon')|pq|,$$
where $c\epsilon = \epsilon'$ and $c$ is the constant hiding in the $O$-notation above.
Thus we obtain  a spanner with stretch $1+\epsilon'$ and $O(\frac{\ell}{\sqrt{\epsilon'}} (|R_1| +|R_2|))$ edges, and the required result now follows by scaling. \QED
\end{proof}

 \begin{figure}[!htb]
        \center{\includegraphics[width=0.4\textwidth]
        {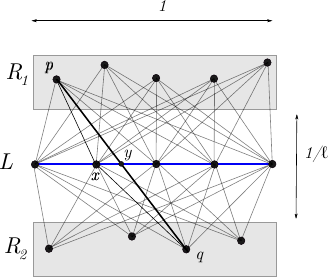}}
        \caption{Illustration for the proof of Lemma~\ref{lm:rec-spanner}. The solid blue line is $L$.}
        \label{fig:two-rec}
      \end{figure}

\subsubsection{Proof of Proposition \ref{lm:bounded-spread}} \label{s:boundedspread}
This section is devoted to the proof of Proposition \ref{lm:bounded-spread}.

Since the spread is $\le \Delta$, we may assume without loss of generality that the point set $P$ is contained in a unit axis-parallel square, $Q_0$, of side length $O(\Delta)$ and the minimum pairwise distance is $1$.

We partition all pairs of points $P\times P$ into $I = O(\log \Delta)$ sets $\mathcal{P}_1, \ldots, \mathcal{P}_{I}$ where $\mathcal{P}_i = \{(x,y) \in P\times P: 2^{i-1} \leq |x y| < 2^{i}\}$. The following definition of a ring spanner will be used here and in subsequent sections.
\begin{definition} [Ring Spanner] \label{def:ring}
Fix $c_1,c_2$ such that $0 < c_1 < c_2$ and let $t \ge 1$ be a stretch parameter.
We say that a spanner (i.e., an edge set) $R$ is a \emph{($c_1,c_2)$-ring  $t$-spanner} for a point set $P$ if $d_R(p,q) \leq t |pq|$, for any $p,q \in P$ with $c_1 \leq |pq| \leq c_2$. That is, for every $p \in P$, the ring spanner $R$ preserves distances to within a factor of $t$ between $p$ and every point $q$  in the annulus (or ring) $B_2(p,c_2)\setminus B_2(p,c_1)$.
\end{definition}
For each index $i$, we will construct a  $(2^{i-1},2^{i})$-ring Steiner $(1+\eps)$-spanner for $P$ of size $O(\frac{n}{\sqrt{\epsilon}})$, denoted by $S_i$; in particular, $S_i$ will ``handle'' all pairs in the set $\mathcal{P}_i$, i.e., 
we will have $d_{S_i}(x,y) \leq (1+\epsilon) |xy|$ for every pair $(x,y) \in \mathcal{P}_i$. Thus, $S = \cup_{i=1}^{I} S_i$ will provide a Steiner $(1+\eps)$-spanner for $P$ of the required size, thus completing the proof of Proposition~\ref{lm:bounded-spread}.

Fix an arbitrary index $i \in I$; we next describe the construction of $S_i$.
If $\mathcal{P}_i =\emptyset$, then $S_i = \emptyset$ trivially satisfies the requirements. We may henceforth assume that $\mathcal{P}_i \not= \emptyset$,
in which case the side length of $Q_0$ is at least $2^{i-1}/\sqrt{2}$. We first divide $Q_0$ into subsquares of side length $3 \cdot 2^{i}$.
(Since the side length of $Q_0$ is at least $2^{i-1}/\sqrt{2}$, we can extend the side length of $Q_0$ by a constant factor so that it is divisible by $3\cdot 2^{i}$.)
We then extend each subsquare in four directions by an additive factor of $2^{i}$ so that each extended subsquare has side length $5\cdot 2^i$.
Let $\mathcal{Q} = \mathcal{Q}_i$ be the set of all extended subsquares; we omit the subscript $i$ (in $\mathcal{Q}_i$) to avoid cluttered notation in what follows. (See Figure~\ref{fig:overlappingSQ}(a).) We make two simple observations:

\begin{observation}\label{obs:covering} For every $(x,y) \in \mathcal{P}_i$, there is at least one square in $\mathcal{Q}$ that contains both $x$ and $y$.
\end{observation}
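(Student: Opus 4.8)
The plan is to prove Observation~\ref{obs:covering} by a direct geometric argument about how the original (unextended) subsquares partition the bounding box $Q_0$, together with the $2^i$-extension.

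\textbf{Setup.} Recall that $Q_0$ was first subdivided into a grid of axis-parallel subsquares of side length $3\cdot 2^i$ (after extending $Q_0$'s side length by a constant to make it divisible by $3\cdot 2^i$), and then each such subsquare was grown by an additive $2^i$ in each of the four directions, yielding squares of side length $5\cdot 2^i$ in the collection $\mathcal{Q}$. Fix a pair $(x,y)\in\mathcal{P}_i$, so $2^{i-1}\le |xy| < 2^i$; in particular the $\ell_\infty$-distance between $x$ and $y$ is also at most $|xy| < 2^i$.

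\textbf{Main step.} Let $G_x$ be the grid cell of side length $3\cdot 2^i$ containing $x$ (break ties by, say, taking the half-open cell $[a,a+3\cdot 2^i)\times[b,b+3\cdot 2^i)$, so every point lies in exactly one grid cell). I claim the extended square $G_x^+ \in \mathcal{Q}$ obtained from $G_x$ contains both $x$ and $y$. It obviously contains $x$. For $y$: writing $G_x = [a,a+3\cdot 2^i)\times[b,b+3\cdot 2^i)$, the point $x$ has first coordinate in $[a,a+3\cdot 2^i)$, and $y$ differs from $x$ by less than $2^i$ in each coordinate, so $y$'s first coordinate lies in $(a-2^i, a+4\cdot 2^i)$, and similarly for the second coordinate. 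The extended square $G_x^+ = [a-2^i, a+4\cdot 2^i]\times[b-2^i,b+4\cdot 2^i]$ (side length $5\cdot 2^i$) therefore contains $y$. Hence $G_x^+$ is the desired square in $\mathcal{Q}$.

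\textbf{Boundary caveat.} One should check that $G_x^+$ is genuinely a member of $\mathcal{Q}$, i.e.\ that the extension does not run off the (already constant-factor-enlarged) bounding box in a way that removes it from the collection; but since the construction defines $\mathcal{Q}$ as the extensions of \emph{all} grid subsquares and $x\in P \subseteq Q_0$ guarantees $G_x$ is one of them, this is immediate. The only genuinely quantitative point is the slack comparison $|xy|_\infty < 2^i$ against the extension amount $2^i$, which is exactly why the extension was chosen to be $2^i$ rather than something smaller — this is the (mild) ``obstacle,'' and it is handled by the computation above. I would present this as a two or three line argument with the half-open-cell tie-breaking convention stated once, referencing Figure~\ref{fig:overlappingSQ}(a) for intuition.
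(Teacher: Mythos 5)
Your proof is correct and is exactly the argument the paper leaves implicit (the observation is stated without proof): the $\ell_\infty$-distance between $x$ and $y$ is at most $|xy|<2^i$, which is precisely the extension amount, so the extended cell containing $x$ also contains $y$. The half-open tie-breaking convention and the boundary remark are fine but not strictly needed, since membership in a single extended square (rather than uniqueness of the grid cell) is all the observation asks for.
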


\begin{observation}\label{obs:low-ply} Every point of $P$ is contained in at most 4 different  squares in $\mathcal{Q}$.
\end{observation}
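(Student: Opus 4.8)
The plan is to reduce the statement to an elementary one-dimensional counting fact: how many members of a fixed arithmetic progression with common difference $3\cdot 2^i$ can lie within a bounded distance of a given real number. Everything factors across the two coordinate axes, so it suffices to bound the count in each axis separately and multiply.

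First I would fix coordinates for the grid of original (unextended) subsquares. By construction these tile the (slightly enlarged) square $Q_0$ and have side length $s := 3\cdot 2^i$, so the generic one is $Q_{k,\ell} = [x_0 + ks,\, x_0 + (k+1)s) \times [y_0 + \ell s,\, y_0 + (\ell+1)s)$ for integers $k,\ell$, where $(x_0,y_0)$ is a corner of $Q_0$. The corresponding member of $\mathcal{Q}$ is obtained by enlarging $Q_{k,\ell}$ by an additive $2^i$ on each of its four sides:
\[
\hat Q_{k,\ell} \;=\; \bigl[x_0 + ks - 2^i,\; x_0 + (k+1)s + 2^i\bigr] \times \bigl[y_0 + \ell s - 2^i,\; y_0 + (\ell+1)s + 2^i\bigr].
\]

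Next, I would fix a point $p=(p_x,p_y)\in P$ and note that $p\in\hat Q_{k,\ell}$ is equivalent to two independent one-dimensional conditions, one constraining $k$ through $p_x$ and one constraining $\ell$ through $p_y$. Rearranging the $x$-condition $x_0+ks-2^i\le p_x\le x_0+(k+1)s+2^i$ shows that every admissible integer $k$ lies in a fixed real interval of length $1 + 2\cdot(2^i/s) = 1 + 2/3 = 5/3$; any real interval of length $5/3$ contains at most $\lfloor 5/3\rfloor + 1 = 2$ integers. The identical argument gives at most $2$ admissible values of $\ell$. Hence $p$ belongs to at most $2\cdot 2 = 4$ squares of $\mathcal{Q}$, which is the claim.

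I do not anticipate a genuine obstacle; the only things needing care are (i) the arithmetic of the enlargement — growing an $s$-square by $2^i$ on each side makes the interval of admissible indices have length $5/3$, so the bound is truly $4$ and not, say, $1$ — and (ii) fixing a half-open/closed convention for the tiling boxes so that a boundary coincidence cannot inflate the count (it cannot: even a closed interval of length $5/3$ holds at most two integers). One can also note the bound $4$ is tight: a point lying within distance $2^i$ of a corner of its own original subsquare in both coordinates is covered by exactly four extended subsquares.
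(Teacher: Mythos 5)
Your proof is correct, and it supplies a clean argument for a fact the paper states as an Observation without any proof. The paper's construction makes this a direct consequence of the geometry (each extended subsquare of side $5\cdot 2^i$ is obtained by symmetrically inflating a tile of side $3\cdot 2^i$, so overlaps are a rectangular frame of width $2^i$), and your one-dimensional reduction — the admissible indices along each axis lie in an interval of length $1 + 2\cdot 2^i/(3\cdot 2^i)=5/3$, hence at most two per axis, hence at most four in total — is exactly the natural way to make that intuition rigorous. The tightness remark is a nice extra and is also correct.
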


 \begin{figure}[!h]
        \center{\includegraphics[width=1.0\textwidth]
        {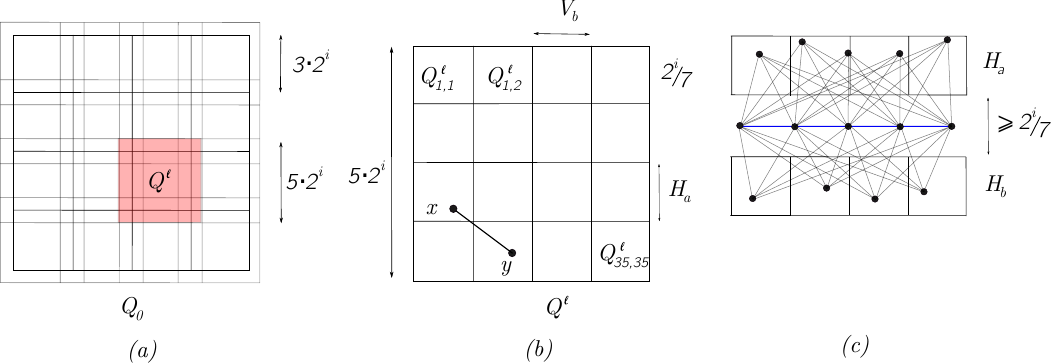}}
        \vspace{-0.4cm}
        \caption{(a) $Q_0$ is divided into extended subsquares of side length $5\cdot 2^i$. (b) $Q^{\ell}$ is further divided into $35\cdot 35$ subsquares of side length $2^{i}/7$. Then for two points $x$ and $y$ that are in two adjacent vertical bands and horizontal bands, their distance is at most $2\sqrt{2}\cdot 2^i/7 < 2^{i-1}$. (c) A Steiner spanner is constructed for every pair of non-adjacent horizontal/vertical bands. }
        \label{fig:overlappingSQ}
      \end{figure}

Let $Q^{\ell}$,  $1\leq \ell \leq |\mathcal{Q}|$, be a square in $\mathcal{Q}$. Let $\mathcal{P}_i^{\ell}\subseteq \mathcal{P}_i$ be the set of pairs of points
from $\mathcal{P}_i$ that are contained in $Q^{\ell}$. We  construct a spanner $S_{i}^{\ell}$ for $\mathcal{P}_i^{\ell}$ in the following two steps.

\paragraph{Step 1} Divide $Q^{\ell}$ into subsquares of side length $\frac{2^i}{7}$, denoted by $Q^{\ell}_{a,b}$, where $1\leq a,b\leq 35$; the first subscript index $a$ goes from top to bottom, the second subscript index $b$ goes from left to right. We call $H_{a} = \cup_{1\leq b\leq 35} Q^{\ell}_{a,b}$ a \emph{horizontal band} of $Q^{\ell}$ and  $V_{b} = \cup_{1\leq a\leq 35} Q^{\ell}_{a,b}$ a \emph{vertical band}. See Figure~\ref{fig:overlappingSQ}(b).

\paragraph{Step 2} For each pair of non-adjacent horizontal bands $H_{a}, H_{b}$, where $1\leq a < b \leq 35, b\not= a+1$, we apply  the construction in Lemma~\ref{lm:rec-spanner} with $W = 5 \cdot 2^i$ and $\ell = \frac{W}{d(H_{a},H_{b})}$ to obtain a  \emph{horizontal Steiner spanner} $HS_{a,b}$ (Figure~\ref{fig:overlappingSQ}(c)). Note that $1 < \ell = O(1)$ since $\frac{2^i}{7} \leq d(H_a,H_b) < 5\cdot 2^i$.  Thus, $HS_{a,b}$ has:
\begin{equation} \label{eq:HS-size}
|E(HS_{a,b})| = O\left(\frac{|H_{a}| + |H_b|}{\sqrt{\epsilon}}\right).
\end{equation}
Similarly, we construct a \emph{vertical Steiner spanner} $VS_{a,b}$ for each pair of non-adjacent vertical bands $V_a,V_b$, where  $1\leq a < b \leq 35, b\not= a+1$. 
Thus, $VS_{a,b}$ has:
\begin{equation} \label{eq:VS-size}
|E(VS_{a,b})| = O\left(\frac{|V_{a}| + |V_b|}{\sqrt{\epsilon}}\right).
\end{equation}
The spanner for $P_i^{\ell}$ is given by:

\begin{equation}\label{eq:S-i-ell}
S_i^{\ell} = \left(  \bigcup_{\substack{1\leq a < b \leq 35 \\ b\not= a+1}} HS_{a,b} \right) \cup \left(  \bigcup_{\substack{1\leq a < b \leq 35 \\ b\not= a+1}}VS_{a,b}  \right).
\end{equation}

This completes the description of the spanner construction $S_{i}^{\ell}$.

The spanner $S_i$ for $\mathcal{P}_i$ is given by:

\begin{equation}\label{eq:S-i}
S_i = \bigcup_{1\leq \ell \leq |\mathcal{Q}|} S_i^{\ell},
\end{equation}
and the ultimate spanner $S$ for all pairs $P\times P$ is given by:

\begin{equation}\label{eq:S}
S = \bigcup_{1\leq i \leq I} S_i.
\end{equation}

\paragraph{Stretch analysis.} Let $x,y$ be two arbitrary points of $P$ and let $i\in I$ be the index satisfying $(x,y) \in \mathcal{P}_i$.  By Observation~\ref{obs:covering}, there is an extended subsquare $Q^{\ell} \in \mathcal{Q}$ containing both $x$ and $y$. We argue that $x$ and $y$ must belong to two non-adjacent horizontal bands and/or two non-adjacent vertical bands.  Indeed, otherwise (see Figure~\ref{fig:overlappingSQ}(b)) $x$ and $y$ are contained in the same square of side length $2\cdot\frac{2^i}{7}$ consisting of four subsquares of ${Q}^{\ell}$, which implies $|xy| \leq \sqrt{2} \cdot (2\cdot\frac{2^i}{7}) < 2^{i-1}$, a contradiction to the fact that $(x,y) \in \mathcal{P}_i$. It follows that the distance between $x$ and $y$ is preserved to within a factor of $1+\eps$ by either a horizontal Steiner spanner and/or a vertical Steiner spanner.

\paragraph{Size analysis.} By Equations~\ref{eq:HS-size},~\ref{eq:VS-size} and~\ref{eq:S-i-ell}, we have:
\begin{equation*}
|E(S_i^\ell)| = O\left(\frac{\sum_{1\leq a\leq 35}|H_a|}{\sqrt{\epsilon}} + \frac{\sum_{1\leq b\leq 35}|V_b|}{\sqrt{\epsilon}}\right)  = O\left(\frac{|Q^{\ell}|}{\sqrt{\epsilon}}\right).
\end{equation*}

Thus, by Equation~\ref{eq:S-i} and Observation~\ref{obs:low-ply}, $|E(S_i)| = O(\frac{\sum_{Q^{\ell} \in \mathcal{Q}}|Q^{\ell}|}{\sqrt{\epsilon}}) = O(\frac{n}{\sqrt{\epsilon}})$.
Using Equation~\ref{eq:S}, it follows that $|E(S)| = O(\frac{nI}{\sqrt{\epsilon}}) = O(\frac{n\log \Delta}{\sqrt{\epsilon}})$, which concludes the proof of
Proposition \ref{lm:bounded-spread}.

\subsection{From bounded spread to general point sets}
In this section we provide the reduction from the general case to point sets of bounded spread.
We start (Section \ref{s:overview}) with an overview of the net-tree spanner construction, on which our reduction builds.

\subsubsection{The net-tree spanner: A short overview} \label{s:overview}

In this section we describe the {\em net-tree spanner} construction, which has several variants (see, e.g.,  \cite{GGN04,CGMZ16,GR082}).
For concreteness we consider the constructions of \cite{GGN04,CGMZ16}, which were discovered independently but are similar to each other, and apply to the wider family of   doubling metrics.
(The construction of \cite{GGN04} was presented for Euclidean spaces.)
We will later (Section \ref{subsub}) demonstrate that,  in Euclidean spaces, the   constructions of \cite{GGN04,CGMZ16} can be strengthened via the usage of Steiner points, to obtain a quadratic improvement to the spanner size.
As will be shown, this improvement requires several new insights.

Let $X$ be a point set in the Euclidean metric of dimension $d$.
By scaling, we   assume that the minimum pairwise distance in $X$ is 1, thus the spread $\Delta = \Delta(X)$ coincides with the \emph{diameter} of $X$,
i.e., $\Delta  = \max_{u, v \in X}|uv|$.
Fix any $r > 0$ and any set $Y \subseteq X$; $Y$ is called an \emph{$r$-net} for $X$ if (1) $|y, y'| \ge r$, for any  $y \ne y' \in Y$, and (2)
for each $x \in X$, there is $y \in Y$ with $|x, y| \le r$; such a net can be constructed by a greedy algorithm.

\vspace{-8pt}
\paragraph{Hierarchical Nets.}  
Write $\ell = \ceil{\log_2 \Delta}+1$, and let $\{N_i\}_{i \geq 0}^{\ell}$ be a sequence
of {\em hierarchical nets}, where $N_0 = X$, and for each $i \in [\ell]$, $N_i$ is
an $2^{i}$-net for $N_{i-1}$.
For each $i \in [0,\ell]$, $N_i$ is called the \emph{$i$-level net}.  
Note that $N_0 = X \supseteq N_1 \supseteq \ldots \supseteq N_\ell$, and $N_\ell$ contains exactly one point.
The same point of $X$ may have instances in many nets (any point of $N_i$ is necessarily also a point of $N_j$,
for each $j \in [0,i]$).

The hierarchical nets induce a hierarchical tree $T = T(X)$, called \emph{net-tree}; this tree is not required for the construction itself,
but is rather used in the stretch analysis; we refer to \cite{GGN04,CGMZ16} for the details.  

\vspace{-4pt}
\paragraph{Spanner via Cross Edges.}
The spanner   $H = H(X)$ of \cite{GGN04,CGMZ16} is obtained by adding,
for each   $i \in [0,\ell-1]$, a set $E_i$ of edges between all points of $N_i$ that are within distance $(4 + \frac{32}{\eps})2^{i}$ from each other, called \emph{cross edges}.
That is,
$E_i := \left\{(p,q) ~\vert~ p,q \in N_i, |pq| ~\le~ \left(4 + \frac{32}{\eps}\right)2^{i} \right\}$.

\begin{lemma}[Theorem 5.6~\cite{CGMZ16}, Theorem 3.2~\cite{GGN04}]\label{lm:cross-edge-spanner} Let $H = \bigcup_{i=0}^{\ell-1} E_i$. Then $H$ is a $(1+\eps)$-spanner for $X$ and $E|(H)| = n \cdot \eps^{-O(d)}$.
\end{lemma}

The bound on the number of edges in Lemma~\ref{lm:cross-edge-spanner} is weaker than the state-of-the-art in Euclidean spaces by a factor of $1 / \eps$. Our goal is to obtain a quadratic improvement over the state-of-the-art size bound in Euclidean spaces, namely $O(n \cdot \eps^{(-d+1)/2})$, using Steiner points.

\subsubsection{The reduction} \label{subsub}
To improve the size of the net-tree spanner $H$, we will improve the size bound of each edge set $E_i$.
We shall focus on 2-dimensional point sets, but our construction naturally generalizes for higher dimensions, as discussed at the end of this section.
Packing arguments yield $|E_i| \le |N_i| \cdot \eps^{-2}$, and our goal is to replace $E_i$ by an edge set $E'_i$ of size $ |N_i| \cdot O(\frac{1}{\sqrt{\epsilon}} \log \frac{1}{\epsilon})$ without increasing the pairwise distances by  much.

\paragraph{Ring Steiner spanners.}
Recall Definition \ref{def:ring} from Section \ref{s:boundedspread} of a ring spanner.
Note that the edge set $E_i$, which handles pairwise distances in the range $[2^i, \left(4 + \frac{32}{\eps}\right)2^{i}]$,
 provides a $(2^i,\left(4 + \frac{32}{\eps}\right)2^{i})$-ring 1-spanner for $N_i$. Recall that $|E_i| = |N_i| \cdot O(\eps^{-2})$.
We next show that $E_i$ can be replaced by a significantly sparser set $E'_i$ that uses Steiner points, by building on the result for bounded spread.

\begin{lemma} \label{subsets}
For each $i$, there exist subsets $N^1_i,N^2_i, \ldots, N^k_i$ of $N_i$ that form a covering, i.e., $\bigcup_{j=1}^k N^j_i = N_i$, such that
(1) for each $j \in [k]$, $N^j_i$ has spread $O(1/ \eps)$, (2) each point of $N_i$ belongs to at most four subsets from $N^1_i,N^2_i, \ldots, N^k_i$, and
(3) for each edge $(p,q) \in E_i$, there exists an index $j$ such that $p,q \in N^j_i$.
\end{lemma}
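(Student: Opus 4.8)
The plan is to reuse the overlapping-grid idea from the proof of Proposition~\ref{lm:bounded-spread}, but now applied at a single scale. Recall that the cross-edge set $E_i$ only joins pairs $p,q \in N_i$ with $\delta(p,q) \le D_i$, where $D_i := \bigl(4 + \tfrac{32}{\eps}\bigr)2^i$, and that $N_i$, being a $2^i$-net, has minimum pairwise distance at least $2^i$. First I would tile $\mathbb{R}^2$ by an axis-parallel grid of pairwise-disjoint \emph{base squares} of side length $3D_i$, and then enlarge each base square by an additive $D_i$ on all four sides to obtain an \emph{extended square} of side length $5D_i$. I then define $N^1_i, N^2_i, \ldots, N^k_i$ to be the sets $N_i \cap B'$, ranging over all extended squares $B'$ that contain at least one point of $N_i$.

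Then I would verify the three properties, all of which follow from elementary interval arithmetic exactly as in Observations~\ref{obs:covering} and~\ref{obs:low-ply}. For (1): each $N^j_i$ is contained in a $5D_i \times 5D_i$ square, so its diameter is at most $5\sqrt{2}\,D_i$, while its minimum pairwise distance is at least $2^i$ since $N^j_i \subseteq N_i$; hence its spread is at most $5\sqrt{2}\,(4 + 32/\eps) = O(1/\eps)$. For (2): a point $p$ lies in an extended square precisely when it is within $L_\infty$-distance $D_i$ of the corresponding base square, and since the base squares have side $3D_i > 2D_i$, in each coordinate the value of $p$ falls inside the ``slack'' of at most two consecutive base-square columns (the one whose interval it lies in, plus at most one neighbour); thus $p$ lies in at most $2 \cdot 2 = 4$ extended squares. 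For (3), which also yields the covering $\bigcup_{j} N^j_i = N_i$ by taking the special case $p = q$: if $(p,q) \in E_i$ then $\delta(p,q) \le D_i$, so letting $B$ be a base square containing $p$, each coordinate of $q$ differs from that of $p$ by at most $D_i$ and hence lies within $D_i$ of $B$, which means $q$ lies in the extension $B'$ of $B$; therefore $p,q \in N_i \cap B'$, one of the sets $N^j_i$.

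No step here is a genuine obstacle; the only care required is in fixing the ratio between the base-square side ($3D_i$) and the extension ($D_i$) so that the overlap bound of $4$ and the spread bound $O(1/\eps)$ hold simultaneously, which is the same balancing act as in Section~\ref{s:boundedspread}. I would close with the remark that running the identical argument in $\mathbb{R}^d$ replaces the overlap bound $4$ by $2^d = O(1)$ and leaves the spread bound $O(1/\eps)$ unchanged, which is exactly what the higher-dimensional extension discussed at the end of the section needs.
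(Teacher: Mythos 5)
Your proof is correct and follows essentially the same route as the paper's: an overlapping grid of base squares, each enlarged so that every cross edge of $E_i$ fits inside some extended square, with the overlap bound of $4$ and the spread bound $O(1/\eps)$ checked by the same coordinate-wise interval arithmetic. The only difference is the choice of constants (base side $3D_i$ extended to $5D_i$ versus the paper's $2\tau_i$ extended to $3\tau_i$), which is immaterial; if anything, your larger extension makes property (3) slightly more immediate.
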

\begin{proof}
Let $B$ be the bounding box of $N_i$,
define $\tau_i = \left(4 + \frac{32}{\eps}\right)2^{i}$, and assume without loss of generality that the side lengths of $B$ are divisible by $2\tau_i$.
We partition $B$ into squares of side length $2\tau_i$ each. We extend each square equally in four directions to obtain a square of side length $3\tau_i$.
Let  $N^1_i,N^2_i, \ldots, N^k_i$ be the nonempty point sets lying in the extended squares.
Clearly, $N^1_i,N^2_i, \ldots, N^k_i$ form a covering of $N_i$.
Since $N_i$ is a $2^i$-net for $N_{i-1}$, every two points in $N_i$ are at distance at least $2^i$ from each other; thus for each $j \in [k]$,
every two points in $N^j_i$ are at distance at least $2^i$  and at most $\sqrt{2} \cdot 3\tau_i$ from each other, and item (1) holds.
Note that the overlapping region of any pair of neighboring extended squares is a rectangle of side lengths $\tau_i$ and $3\tau_i$, which implies not only item (2), but also the fact that for any pair of points within distance $\tau_i$ from each other, there is at least one extended square to which they both belong; since $|pq| \le \tau_i$ for each   $(p,q) \in E_i$, item (3) holds as well.   
\QED
\end{proof}

Fix any $i \in [0,\ell-1]$,
and consider the subsets $N^1_i,N^2_i, \ldots, N^k_i$ of $N_i$ guaranteed by Lemma \ref{subsets}.
For each $j \in [k]$, we construct a spanner $S^j_i$ for $N^j_i$ as follows.  
If $|N^j_i| \le \frac{2}{\sqrt{\epsilon}} \cdot \log \frac{1}{\epsilon}$,
we take $S^j_i$ to be the complete graph over $N^j_i$, and get a  1-spanner for $N^j_i$
(without Steiner points) with ${|N^j_i| \choose 2} \le \frac{|N^j_i|}{\sqrt{\epsilon}}\log \frac{1}{\epsilon}$ edges.
Otherwise $|N^j_i| > \frac{2}{\sqrt{\epsilon}} \log \frac{1}{\epsilon}$,
and we take $S^j_i$ to be the Steiner $(1+\eps)$-spanner for $N^j_i$ provided by Proposition \ref{lm:bounded-spread}. 
Item (1) of Lemma \ref{subsets} implies that the spread of each $N^j_i$ is $O(1/\epsilon)$, thus by Proposition \ref{lm:bounded-spread} the number of edges in $S^j_i$ in this case is also bounded by
$O(\frac{|N^j_i|}{\sqrt{\epsilon}}\log \frac{1}{\epsilon})$.
Define $E'(S^j_i)$ to be the edge set of $S^j_i$,
and let $E'_i$ be the set of edges in the union of all the spanners $S^j_i$, i.e., $E'_i = \bigcup_{j=1}^k E'(S^j_i)$.

 \begin{corollary}   \label{edgesize}
The edge set $E'_i$ is defined over a superset $N'_i = N_i \cup S_i$ of $N_i$, where $S_i$ is a set of Steiner points,
such that $|E'_i| = O(\frac{|N_i|}{\sqrt{\epsilon}} \log \frac{1}{\epsilon})$ and $E'_i$ is a  $(2^i,\left(4 + \frac{32}{\eps}\right)2^{i})$-ring $(1+\eps)$-spanner for $N_i$.
\end{corollary}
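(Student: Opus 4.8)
The plan is to assemble Corollary~\ref{edgesize} directly from Lemma~\ref{subsets}, Proposition~\ref{lm:bounded-spread}, and the two cases of the construction of $S^j_i$ described just above. First I would fix $i\in[0,\ell-1]$ and recall from Lemma~\ref{subsets} that $N_i$ is covered by subsets $N^1_i,\dots,N^k_i$, each of spread $O(1/\eps)$, with each point of $N_i$ lying in at most four of them, and with both endpoints of every $E_i$-edge lying in a common $N^j_i$. The Steiner point set is then defined as $S_i := \bigcup_{j=1}^k (V(S^j_i)\setminus N^j_i)$, so that $E'_i$ is an edge set over $N'_i := N_i\cup S_i$, as claimed.

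Next I would bound $|E'_i|$. For each $j$, the construction gives $|E'(S^j_i)| = O\!\big(\tfrac{|N^j_i|}{\sqrt\eps}\log\tfrac1\eps\big)$: in the small case this is the trivial bound ${|N^j_i|\choose 2}\le \tfrac{|N^j_i|}{\sqrt\eps}\log\tfrac1\eps$ since $|N^j_i|\le \tfrac{2}{\sqrt\eps}\log\tfrac1\eps$, and in the large case it is exactly the bound of Proposition~\ref{lm:bounded-spread} applied with spread $\Delta = O(1/\eps)$, so that $\log\Delta = O(\log\tfrac1\eps)$. Summing over $j$ and invoking item (2) of Lemma~\ref{subsets} (each point of $N_i$ counted at most four times, so $\sum_j |N^j_i| \le 4|N_i|$) yields $|E'_i| = \sum_{j=1}^k |E'(S^j_i)| = O\!\big(\tfrac{|N_i|}{\sqrt\eps}\log\tfrac1\eps\big)$.

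It remains to verify the ring-spanner property. Take $p,q\in N_i$ with $2^i \le |pq| \le \tau_i = \left(4+\tfrac{32}{\eps}\right)2^i$; then $(p,q)\in E_i$, so by item (3) of Lemma~\ref{subsets} there is an index $j$ with $p,q\in N^j_i$. In the small case $S^j_i$ is the complete graph on $N^j_i$, so $d_{S^j_i}(p,q)=|pq|$; in the large case Proposition~\ref{lm:bounded-spread} guarantees $d_{S^j_i}(p,q)\le (1+\eps)|pq|$. Either way $d_{E'_i}(p,q)\le d_{S^j_i}(p,q)\le (1+\eps)|pq|$ since $E'_i\supseteq E'(S^j_i)$, so $E'_i$ is a $(2^i,\tau_i)$-ring $(1+\eps)$-spanner for $N_i$, completing the proof.

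I do not anticipate a genuine obstacle here: this corollary is purely a bookkeeping step that packages the bounded-spread construction for use inside the net-tree reduction, and all the work has already been done in Lemma~\ref{subsets} (the covering with bounded overlap and bounded spread) and Proposition~\ref{lm:bounded-spread} (the bounded-spread Steiner spanner). The only mild subtlety worth stating carefully is that the two cases in defining $S^j_i$ give the \emph{same} asymptotic edge bound $O(\tfrac{|N^j_i|}{\sqrt\eps}\log\tfrac1\eps)$, which is what makes the final summation clean; and one should note explicitly that the Steiner points of distinct $S^j_i$'s are taken to be disjoint (or at least that double-counting them only helps the bound), so that $|S_i|$ and hence $|N'_i|$ pose no issue.
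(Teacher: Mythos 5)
Your proof is correct and follows essentially the same route as the paper: sum the per-subset bounds $O(\frac{|N^j_i|}{\sqrt{\epsilon}}\log\frac{1}{\epsilon})$ using the bounded-overlap property $\sum_j |N^j_i|\le 4|N_i|$ from item (2) of Lemma~\ref{subsets}, and derive the ring-stretch guarantee from item (3) together with the stretch of each $S^j_i$. The extra remarks about the two cases agreeing asymptotically and about the Steiner points of distinct $S^j_i$'s are accurate but not a departure from the paper's argument.
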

\begin{proof}
By item (2) of Lemma \ref{subsets}, we have $\sum_{j=1}^k |N^j_i| \le 4 |N_i|$.
It follows that
$$|E'_i| ~=~
\sum_{j=1}^k |E'(S^j_i)| ~=~ \sum_{j=1}^k O\left(\frac{|N^j_i|} {\sqrt{\epsilon}} \log \frac{1}{\epsilon}\right)
~=~  O\left(\frac{4|N_i|}{\sqrt{\epsilon}} \log \frac{1}{\epsilon}\right).$$

To show that $E'_i$ is a  $(2^i,\left(4 + \frac{32}{\eps}\right)2^{i})$-ring $(1+\eps)$-spanner for $N_i$,
consider any pair $p,q \in N_i$ such that $2^i \leq |pq| \leq \left(4 + \frac{32}{\eps}\right)2^{i}$.
We have $(p,q) \in E_i$ by construction, thus item (3) of Lemma \ref{subsets} implies that
 there exists an index $j$ such that $p,q \in N^j_i$. Hence there is a $(1+\eps)$-spanner path between $p$ and $q$ in the   spanner $S^j_i$ for $N^j_i$,
 and thus also in the superset $E'_i$ of $E(S^j_i)$, i.e.,  $d_{E'_i}(p,q) \leq (1+\epsilon)|pq|$. \QED
\end{proof}
\paragraph{A sparser Steiner spanner via ring Steiner spanners.}
Denote by $H'$ the spanner obtained as the union of all the Steiner ring spanners $E'_i$, i.e., $H' = \bigcup_{i=0}^{\ell-1} E'_i$.
To complete the reduction from the general case to the case of bounded spread,   thus finishing the proof of  Theorem ~\ref{thm:sparse-Steiner},
we argue that $H'$ is a Steiner $(1+O(\eps))$-spanner for $P$ with $O(\frac{n}{\sqrt{\epsilon}}\log^2 \frac{1}{\epsilon})$ edges.
(One can reduce the stretch down to $1+\eps$ by scaling.)
\vspace{4pt}
\\\emph{Stretch analysis.~} For each $0 \le i \le \ell - 1$, $E_i$ is a $(2^i,\left(4 + \frac{32}{\eps}\right)2^{i})$-ring 1-spanner for $N_i$, i.e., all   distances in
$[2^i,\left(4 + \frac{32}{\eps}\right)2^{i}]$ are preserved precisely by $E_i$; on the other hand, the stretch bound of $E'_i$ is $(1+\eps)$.
Since $H$, obtained as the union of all ring 1-spanners $E_i$, is a $(1+\eps)$-spanner for $P$ by Lemma~\ref{lm:cross-edge-spanner}, the stretch of $H'$, the spanner obtained as the union of all ring $(1+\eps)$-spanners $E'_i$, will be bounded by $(1+\eps)^2 = 1 + O(\eps)$.
\vspace{4pt}
\\\emph{Size analysis.~} We next prove that the size bound of $H'$ is   in check.
\begin{lemma}
$H'$ consists of at most $O(\frac{n}{\sqrt{\epsilon}}\log^2 \frac{1}{\epsilon})$ edges.
\end{lemma}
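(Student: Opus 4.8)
The plan is to avoid the naive summation $\sum_{i=0}^{\ell-1}|E'_i|\le\sum_i O(\frac{|N_i|}{\sqrt\epsilon}\log\frac1\epsilon)$, which is useless since $\sum_i|N_i|$ can be as large as $\Omega(n\log\Delta)$. First I would sharpen Corollary~\ref{edgesize}: a net point $p\in N_i$ that lies alone in each of its (at most four, by item~(2) of Lemma~\ref{subsets}) extended squares contributes nothing to the corresponding $S^j_i$, and any point sharing a square with another point of $N_i$ has a second point of $N_i$ within the square's diameter $O(2^i/\epsilon)$. Writing $\rho_i(p)$ for the distance from $p$ to its nearest other point of $N_i$ and $N_i^{\mathrm{act}}:=\{p\in N_i:\rho_i(p)=O(2^i/\epsilon)\}$, this upgrades Corollary~\ref{edgesize} to $|E'_i|=O(\frac{\log(1/\epsilon)}{\sqrt\epsilon})\cdot|N_i^{\mathrm{act}}|$, so the whole lemma reduces to proving $\sum_{i=0}^{\ell-1}|N_i^{\mathrm{act}}|=O(n\log\frac1\epsilon)$.

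To prove that bound I would split $N_i^{\mathrm{act}}$ into \emph{tight} points ($\rho_i(p)<2^{i+1}$) and \emph{loose} points ($2^{i+1}\le\rho_i(p)=O(2^i/\epsilon)$), bounding each by a charging argument. If $p$ is tight at level $i$, its nearest $N_i$-neighbour lies within $2^{i+1}$, so at least one of the two is deleted from the hierarchy at level $i+1$; charging $(i,p)$ to that deletion and using the Packing Lemma (Lemma~\ref{lm:packing-dd}) --- the net points within $2^{i+1}$ of a fixed point are pairwise $\ge 2^i$ apart, hence $O(1)$ in number --- shows $\sum_i|\{p\text{ tight at }i\}|=O(n)$. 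For loose points I would track the ratio $\lambda_i(p):=\rho_i(p)/2^i\ (\ge 2)$; since the nearest $N_i$-neighbour of a loose $p$ is too far to be covered by $p$ itself when passing to level $i+1$, one gets $\rho_{i+1}(p)\le\rho_i(p)+2^{i+1}$ and hence $\lambda_{i+1}(p)\le\frac12\lambda_i(p)+1\le\lambda_i(p)$: the ratio is non-increasing along a run of loose levels and actually halves whenever $p$'s nearest neighbour does not change, so such a run contains only $O(\log\frac1\epsilon)$ distinct values of $\rho_i(p)$ before the ratio would fall below $2$. A second charging argument --- each change of $\rho_i(p)$ inside a loose run, and each transition from tight back to loose, is the deletion of a point within $O(2^i/\epsilon)$ of $p$, and by a packing-on-annuli estimate (the points having $q$ as their nearest net-neighbour are pairwise separated by the larger of their distances to $q$, so only $O(1)$ sit in each distance-scale) each deletion is charged $O(\log\frac1\epsilon)$ times --- bounds the number of loose levels by $O(n\log\frac1\epsilon)$ in total. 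Adding the two classes gives $\sum_i|N_i^{\mathrm{act}}|=O(n\log\frac1\epsilon)$, and with $|E'_i|=O(\frac{\log(1/\epsilon)}{\sqrt\epsilon})|N_i^{\mathrm{act}}|$ this yields $|E(H')|=O(\frac{n}{\sqrt\epsilon}\log^2\frac1\epsilon)$.

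The main obstacle is exactly this amortized analysis of the loose points. Two natural shortcuts both fail: bounding $\sum_i|N_i|$ directly, and claiming that each net point is active for only $O(\log\frac1\epsilon)$ levels (a point at the centre of a dense cluster is active for $\Theta(\log\Delta)$ levels). The crux is to isolate that (i) only \emph{active} points can contribute edges to $E'_i$; (ii) $\rho_i(p)/2^i$ is a potential that never increases between consecutive loose levels and strictly contracts whenever the nearest neighbour is unchanged, so active-but-not-tight levels come in short bursts of length $O(\log\frac1\epsilon)$; and (iii) the ``resetting'' events --- deletions of a nearest neighbour at scale $\approx 2^i$ --- are globally $O(n)$ in number and each is over-counted only $O(\log\frac1\epsilon)$ times, by packing. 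Once these three facts are in hand, the sharpening of Corollary~\ref{edgesize} and both charging computations are routine.
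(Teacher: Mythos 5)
Your route is genuinely different from the paper's. The paper never controls $\sum_i|N_i|$ at all: it replaces $E'$ by a multigraph $\tilde E$ on the original points whose level-$i$ part has maximum degree $O(D)$ with $D=O(\frac{1}{\sqrt\epsilon}\log\frac{1}{\epsilon})$, orients each edge toward the endpoint of larger $i^*(\cdot)=\max\{i:p\in N_i\}$, and observes that for a fixed $p$ all outgoing edges live in a window of $O(\log\frac{1}{\epsilon})$ consecutive levels (if $p$ survived $\lceil\log\mu\rceil$ levels above the lowest such level, so would $q$, contradicting the net-separation property there). This bounds each point's out-degree by $O(D\log\frac{1}{\epsilon})$ and is done. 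You instead aggregate by level, sharpening Corollary~\ref{edgesize} to $|E'_i|=O\bigl(\frac{\log(1/\epsilon)}{\sqrt\epsilon}\bigr)|N_i^{\mathrm{act}}|$ and amortizing $\sum_i|N_i^{\mathrm{act}}|$; the paper's orientation trick sidesteps that amortization entirely and is shorter, though your target identity $\sum_i|N_i^{\mathrm{act}}|=O(n\log\frac{1}{\epsilon})$ is a clean structural fact in its own right.

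There is, however, a real gap in your loose-level accounting. Both the claim that a loose run ``contains only $O(\log\frac{1}{\epsilon})$ distinct values of $\rho_i(p)$ before the ratio would fall below $2$'' and the summary that ``active-but-not-tight levels come in short bursts of length $O(\log\frac{1}{\epsilon})$'' are false. From $\lambda_{i+1}\le\frac12\lambda_i+1$ the ratio decays geometrically toward $2$ but need never cross it: if $p$'s nearest neighbour is replaced at every level with a new one at distance $\approx\rho_i+2^{i+1}$, then $\lambda_i$ can sit at $\approx 2$ and $p$ stays loose for $\Theta(\log\Delta)$ levels, with $\Theta(\log\Delta)$ distinct $\rho$-values in a single run. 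Taken at face value, your tally (an $O(\log\frac{1}{\epsilon})$-long unchanged block after each change, times $O(n\log\frac{1}{\epsilon})$ changes globally) yields only $O(n\log^2\frac{1}{\epsilon})$ loose levels, which loses a log factor. The bound $\sum_i|N_i^{\mathrm{act}}|=O(n\log\frac{1}{\epsilon})$ is nonetheless true; the fix is to telescope $\log_2\lambda_i$ over the whole run at once. Since $\rho_i(p)$ is monotone in $i$, a $\rho$-unchanged level decreases $\log_2\lambda_i$ by exactly $1$, while a $\rho$-changed level cannot increase it (as $\frac12\lambda_i+1\le\lambda_i$ for $\lambda_i\ge 2$); hence the number of unchanged levels in one run is at most $\log_2\lambda_{\mathrm{start}}-\log_2\lambda_{\mathrm{end}}\le\log_2(C/2\epsilon)=O(\log\frac{1}{\epsilon})$, no matter how many changes interleave. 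You must then also argue that the total number of loose runs is $O(n)$ --- each run terminates in a tight level or a deletion of $p$, and tight levels are already globally $O(n)$ by your first charging argument, noting that $\lambda$ is non-increasing while $\ge 2$ so a loose run cannot exit to the inactive regime. With these two additions, changed levels contribute $O(n\log\frac{1}{\epsilon})$ by your deletion charging and unchanged levels contribute $O(n)\cdot O(\log\frac{1}{\epsilon})$, and the lemma follows.
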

\begin{proof}
Denote the edge set of $H'$ by $E'$. We apply a charging argument; it will be instructive to consider another edge set $\tilde E$ with $|\tilde E| \ge |E'|$,
which has two properties that are useful for analysis purposes.

Consider the edge set $E'_i = \bigcup_{j=1}^k E'(S^j_i)$, for any level $0 \le i \le \ell - 1$.
We focus on an arbitrary index $j \in [k]$, and recall that $S^j_i$ is the $(1+\eps)$-spanner for $N^j_i$ defined above,
having at most $O(\frac{|N^j_i|}{\sqrt{\epsilon}} \log \frac{1}{\epsilon})$ edges.  
This size bound on $S^j_i$ implies that the average degree, say $D$, of a point in $N^j_i$ due to edges of $E'(S^j_i)$ is at most $O(\frac{1}{\sqrt{\epsilon}} \log \frac{1}{\epsilon})$. A priori, however, the {\em maximum} degree of a point of $N^j_i$ due to edges of $E'(S^j_i)$
could be huge and, moreover, there could be many edges in $S^j_i$ that are incident on Steiner points.
We will consider another edge set $\tilde E(S^j_i)$ of at least the same  size as $E'(S^j_i)$, defined over $N^j_i$ (i.e., with no Steiner points),
where the maximum degree of a point in $N^j_i$ due to edges of $\tilde E(S^j_i)$ does not exceed $D$.

 We distinguish between two cases.
In the case that $|N^j_i| \le D$, recall that $S^j_i$ is the complete graph over $N^j_i$, of maximum degree $|N^j_i| -1 < D$,
hence we can take $\tilde E(S^j_i)$ to be $E'(S^j_i) = {N^j_i \choose 2}$.
Otherwise $|N^j_i| > D$,
and we take $\tilde E(S^j_i)$ to be any edge set over $N^j_i$ that induces a $D$-regular graph; such an edge set clearly exists, and its size
is no smaller than that of the original edge set $E'(S^j_i)$. (It suffices for all vertices to have degree $\Theta(D)$, i.e., strict regularity is not needed.)

Observe that each edge of $\tilde E(S^j_i)$ has both endpoints in $N^j_i$, which are at distance at most $\sqrt{2} \cdot 3\tau_i$ from each other by construction,
where $\tau_i = \left(4 + \frac{32}{\eps}\right)2^{i}$ is defined in the proof of Lemma \ref{subsets}.

Define $\tilde E_i = \bigcup_{j=1}^k \tilde E(S^j_i)$.  Although $|\tilde E(S^j_i)| \ge |E'(S^j_i)|$ for each  $j \in [k]$, it may a priori be that
$|\tilde E_i| < |E'_i|$, due to potential intersections between the different edge sets $\tilde E(S^j_i)$, $j \in [k]$.
To overcome this technicality, we consider $\tilde E_i$ as a {\em multi-graph}, in which edges may appear multiple times.
By Item (2) of Lemma \ref{subsets}, each point of $N_i$
belongs to at most four edge sets from $\tilde E(S^1_i),\tilde E(S^2_i), \ldots, \tilde E(S^k_i)$, hence the degree of each point due to all edges (with all their multiplicities) of $\tilde E_i$ is at most $4D$.

Define $\tilde E = \bigcup_{i = 0}^{\ell-1} \tilde E_i$; as before, we consider this edge set $\tilde E$ as a multi-graph.
It is easy to verify that the resulting edge set $\tilde E$ satisfies $|\tilde E| \ge |E'|$. Next, we upper bound the size of $\tilde E$.

Following  \cite{CGMZ16}, for each point $p \in P$, we define $i^*(p):= \max\{i \in [0,\ell] ~\vert~ p \in N_i\}$.
To upper bound the size of $\tilde E$,  we orient each edge $(p,q) \in \tilde E$ from $p$ towards $q$ if $i^*(p) < i^*(q)$; if $i^*(p) = i^*(q)$, the edge $(p,q)$ is oriented arbitrarily.
We next bound the out-degree of an arbitrary point $p$ by all edges of $\tilde E$. Let $i$ be the minimum index such that $p$ has at least one outgoing edge in $\tilde E_i$, leading to some point $q$. We know that $|pq| \le \sqrt{2} \cdot 3\tau_i$; take $\mu$ such that $\mu \cdot 2^i = \sqrt{2} \cdot 3\tau_i + 1$, and note that
$|pq| < \mu \cdot 2^i, \mu = \Theta(1 / \eps)$.
Since $N_{i + \lceil \log \mu \rceil }$ is a $2^{i + \lceil \log \mu \rceil}$-net (of $N_{i + \lceil \log \mu \rceil - 1}$),
any two points in  $N_{i + \lceil \log \mu \rceil }$ are at distance  at least $\mu \cdot 2^i$ from each other.
Since $\mu \cdot 2^i > |pq|$ and $i^*(p) < i^*(q)$, it follows that $p$ cannot belong to $N_{i + \lceil \log \mu \rceil}$.

Thus, $p$ may only belong to the $\lceil \log \mu \rceil$ nets $N_i, N_{i+1},\ldots,N_{i + \lceil \log \mu \rceil - 1}$.
Observe that the $\lceil \log \mu \rceil$ edge sets $\tilde E_i, \tilde E_{i+1},\ldots,\tilde E_{i + \lceil \log \mu \rceil - 1}$ are defined over the nets $N_i, N_{i+1},\ldots,N_{i + \lceil \log \mu \rceil - 1}$, respectively, while all the other edge sets of $\tilde E$ are defined over different nets.
It follows that the out-degree of $p$ may increase only due to these $\lceil \log \mu \rceil$ edge sets.
In each of these edge sets the degree of $p$, let alone its out-degree,
is at most $4D$,  
hence the out-degree of $p$ due to the entire edge set $\tilde E$ is at most
$\lceil \log \mu \rceil \cdot 4D = O(\frac{1}{\sqrt{\epsilon}} \log^2\frac{1}{\epsilon})$.
Having shown that the out-degree of any point $p \in P$ due to the edge set $\tilde E$ (defined as a multi-graph) is $O(\frac{1}{\sqrt{\epsilon}} \log^2 \frac{1}{\epsilon})$,
we conclude that the size of $\tilde E$, and thus of $E'$, is bounded by $O(\frac{n}{\sqrt{\epsilon}} \log^2\frac{1}{\epsilon})$.
\QED
\end{proof}

\paragraph{Extension to any constant dimension.} The 2-dimensional construction presented here naturally generalizes to $\mathbb{R}^d$, for any constant $d$.
We shall only highlight the key components of this generalization.

Two hyperrectagles $R_1 =  [0,H]\times [0,W]^{d-1}, R_2 = [H+\frac{W}{\ell},  2H+\frac{W}{\ell}]\times [0,W]^{d-1}$, for some numbers $H, W > 0$, are called \emph{parallel hyperrectangles}; $R_1$ and $R_2$ have $d-1$ sides of length $W$ and another side of  length $H$,
and the distance between $R_1$ and $R_2$ is $\frac{W}{\ell}$.
The $d$-dimensional analogue of Lemma~\ref{lm:rec-spanner} is to construct a Steiner spanner that handles all pairs of points from $R_1$ and $R_2$ with at most $O(\frac{\ell}{\epsilon^{(d-1)/2}}(|R_1| + |R_2|))$  edges; we employ the same argument: scale the metric so that $W = 1$, and then place a grid of $O(\frac{\ell}{\epsilon^{(d-1)/2}})$ Steiner points in the $d-1$ dimensional hypercube $L_d = [H + \frac{1}{2\ell},  H + \frac{1}{2\ell}]\times [0,1]^{d-1}$ that is aligned with $R_1$ and $R_2$ in $(d-1)$ dimensions and  separates $R_1$ and $R_2$  in the middle of the remaining dimension, and finally connect all   Steiner points with all points in $R_1$ and $R_2$;
$L_d$ is the $d$-dimensional analogue of the separating segment $L$ in the proof of Lemma~\ref{lm:rec-spanner}.
Next, by using the same partitioning approach, we can construct a Steiner spanner for point sets of spread at most $\Delta$, with at most $O(\frac{n}{\epsilon^{(d-1)/2}}\log(\Delta))$ edges; this is the $d$-dimensional analogue of Proposition~\ref{lm:bounded-spread}.
Finally, the reduction from the general case to the case of bounded spread is carried out in a very similar way, by building on the net-tree spanner and replacing cross edges by Steiner ring spanners.
That is, as before, for every level $i$, we replace each edge set $E_i$, where  $E_i := \left\{(p,q) ~\vert~ p,q \in N_i, |pq| ~\le~ \left(4 + \frac{32}{\eps}\right)2^{i} \right\}$,  by  a  $(2^i, (4 + \frac{32}{\epsilon})2^i)$-ring $(1+\epsilon)$-spanner for $N_i$;
the $2$-dimensional treatment for this part extends easily to any dimension.
As a result, we get Steiner spanners for general point sets with at most $O(\frac{n}{\epsilon^{(d-1)/2}}\log^2 \frac{1}{\epsilon})$ edges.

\section{Lower bounds for sparse Steiner spanners in $\mathbb{R}^2$}\label{sec:lb-steiner-spanners}

In this section we prove Theorem~\ref{thm:lb-Steiner-R^2}. Let $U$ be a unit square with four sides $N, E, S, W$. Let $P_1$ be any set of evenly spaced points along $N$ such that the distance between two consecutive points of $P_1$ along $N$ is $c\sqrt{\epsilon\log(\frac{1}{\epsilon})}$, for a sufficiently large constant $c$.
To simplify the argument, we remove the two furthest points of $P_1$, so that every point is at distance $\ge c\sqrt{\epsilon\log(\frac{1}{\epsilon})}$ from the corners of $U$.
We define the set of points $P_2$ on $S$ similarly. Let $P = P_1\cup P_2$. See Figure~\ref{fig:lb-square}(a) for an illustration.  Our goal is to show that:

\begin{proposition}\label{lm:lb-weight} Any Steiner $(1+\epsilon)$-spanner $ST_P$ of $P$ must have $w(ST_P) = \Omega(\frac{1}{\epsilon\log \frac{1}{\epsilon}})$.
\end{proposition}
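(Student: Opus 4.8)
The plan is to lower-bound the total weight of a Steiner spanner $ST_P$ by a charging argument over pairs $\{x,y\}$ with $x\in P_1$ (on $N$) and $y\in P_2$ (on $S$), using the fact that each such pair has $|xy|=\Theta(1)$, so a $(1+\eps)$-spanner path $Q_{xy}$ between them has length between $|xy|$ and $(1+\eps)|xy|$; in particular $Q_{xy}$ stays within an $O(\sqrt\eps)$-neighborhood of the segment $xy$ (since a path of length $\le (1+\eps)|xy|$ connecting the endpoints of a segment of length $\Theta(1)$ cannot deviate by more than $O(\sqrt\eps)$ from that segment). There are $\Theta(|P|^2)=\Theta_\eps(1/(\eps\log\frac1\eps))$ such pairs, and each contributes roughly unit length; if the paths $Q_{xy}$ were edge-disjoint we would immediately get $w(ST_P)=\Omega(|P|^2)$, which is far more than claimed. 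The whole point is that they \emph{overlap}, and the key quantitative input (stated in the proof overview) is the inverse-quadratic bound $w(Q_{xy}\cap Q_{x'y'}) = O\!\left(\frac{\eps}{d(\{x,y\},\{x',y'\})^2}\right)$, where $d(\{x,y\},\{x',y'\}) = \max(|xx'|,|yy'|)$. So the first block of the proof is to establish this overlap bound — this is the main obstacle and is presumably proved separately (it should follow by a careful geometric argument: two near-straight paths whose endpoint-pairs are angularly separated by $\Omega(d)$ can only run close together over a segment of length $O(\eps/d^2)$, because after length $\ell$ along one path the two ``tubes'' of width $O(\sqrt\eps)$ separate once $\ell\cdot(\text{angular gap})\gtrsim\sqrt\eps$, i.e. $\ell\gtrsim \sqrt\eps/d$, and one then has to sum the overlap contributions along both ends).

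Granting the overlap bound, I would finish as follows. Fix, for every ordered-ish pair $\{x,y\}$ with $x\in P_1,y\in P_2$, a shortest path $Q_{xy}$ in $ST_P$; let $\mc F$ be this family of $m:=\Theta(|P|^2)$ paths, each of weight $\ge 1-\eps$. Consider $\sum_{\{x,y\}} w(Q_{xy}) \ge (1-\eps)\,m$. On the other hand, each unit of weight $w$ of $ST_P$ lying on an edge $e$ is counted once for every path in $\mc F$ through $e$; write $k(e)$ for that multiplicity, so $\sum_{\{x,y\}} w(Q_{xy}) = \sum_e k(e)\,w(e) = \int_{ST_P} k$. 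Thus $w(ST_P) \ge \frac{\int k}{\max_e k(e)} \ge \frac{(1-\eps)m}{K}$ where $K = \max_e k(e)$ is the maximum number of the chosen shortest paths through a single point of the spanner. So it suffices to show $K = O(|P|\,\eps\log\frac1\eps)\cdot\,$(something), i.e. to bound how many of the paths $Q_{xy}$ can pass through one point. I would do this by a second charging/packing argument: if a point $z$ lies on $Q_{xy}$ and on $Q_{x'y'}$, then $z$ lies in the overlap $Q_{xy}\cap Q_{x'y'}$, which by the inverse-quadratic bound has length $O(\eps/d^2)$ with $d=d(\{x,y\},\{x',y'\})$; grouping the pairs sharing $z$ by their distance $d\in[2^j c\sqrt{\eps\log\frac1\eps},\,2^{j+1}c\sqrt{\eps\log\frac1\eps})$ from a fixed one, there are $O(4^j)$ such pairs, and all their (length-$\Theta(1)$) paths through $z$ must fit into… — more cleanly, I would instead directly bound $\sum_{\{x',y'\}} w(Q_{xy}\cap Q_{x'y'}) = \sum_{j} O(4^j)\cdot O\!\big(\frac{\eps}{4^j\,\eps\log\frac1\eps}\big) = O\!\big(\frac{\log(1/\sqrt\eps)}{\log\frac1\eps}\big)=O(1)$ for each fixed $\{x,y\}$, using that $j$ ranges over $O(\log\frac1\eps)$ values.

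That last computation is really the crux of the weight bound, so let me phrase the argument so it is the \emph{only} thing needed: for each fixed pair $\{x,y\}$,
\[
\sum_{\{x',y'\}\ne\{x,y\}} w\big(Q_{xy}\cap Q_{x'y'}\big)
~=~ \sum_{j\ge 0}\ \sum_{d(\{x,y\},\{x',y'\})\asymp 2^j\delta} O\!\left(\frac{\eps}{(2^j\delta)^2}\right)
~=~ \sum_{j\ge 0} O(4^j)\cdot O\!\left(\frac{\eps}{4^j\delta^2}\right) ~=~ O\!\left(\frac{\eps\log\frac1\eps}{\delta^2}\right),
\]
where $\delta = c\sqrt{\eps\log\frac1\eps}$ and I used that there are $O(4^j)$ pairs $\{x',y'\}$ with both $x'$ within $2^j\delta$ of $x$ and $y'$ within $2^j\delta$ of $y$, and that $j$ runs over only $O(\log\frac1\eps)$ scales before $2^j\delta$ exceeds the diameter $O(1)$. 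Since $\delta^2 = \Theta(\eps\log\frac1\eps)$, this sum is $O(1)$. Therefore $\sum_{\{x',y'\}} w(Q_{xy}\cap Q_{x'y'}) = O(1)\cdot w(Q_{xy})\,$-order $= O(1)$ (including the $\{x',y'\}=\{x,y\}$ term, which contributes $w(Q_{xy})=\Theta(1)$). Summing this over all $m$ choices of $\{x,y\}$ and comparing with $\sum_{\{x,y\},\{x',y'\}} w(Q_{xy}\cap Q_{x'y'}) = \int_{ST_P} k(e)^2\,\ge\, \big(\int k\big)^2 / w(ST_P) \ge (1-\eps)^2 m^2 / w(ST_P)$ by Cauchy–Schwarz, we get
\[
\frac{(1-\eps)^2 m^2}{w(ST_P)} ~\le~ \sum_{\{x,y\}}\sum_{\{x',y'\}} w\big(Q_{xy}\cap Q_{x'y'}\big) ~=~ O(m),
\]
hence $w(ST_P) = \Omega(m) = \Omega(|P|^2) = \Omega\!\left(\frac{1}{\eps\log\frac1\eps}\right)$, as claimed. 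The remaining work — and the genuinely hard, geometric part — is the justification of the inverse-quadratic overlap estimate $w(Q_{xy}\cap Q_{x'y'})=O(\eps/d^2)$ and the tube-confinement fact that shortest $(1+\eps)$-paths between antipodal-ish points of $U$ stay within $O(\sqrt\eps)$ of the straight segment; I would isolate both as lemmas and prove them by elementary trigonometry in the spirit of Claim~\ref{clm:lb-R2} and Lemma~\ref{lm:rec-spanner}.
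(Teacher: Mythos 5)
Your charging scheme is sound and, apart from packaging, matches the paper's: the paper fixes a shortest path $Q_x$ per pair, writes $w(ST_P)\ge\sum_x\bigl(w(Q_x)-\ngc(x_1,x_2)\bigr)$ by inclusion--exclusion, and shows $\ngc(x_1,x_2)\le 1/2$ by exactly the dyadic count you perform (at distance $cj\sqrt{\eps\log\frac1\eps}$ there are $O(j)$ pairs, each overlapping in $O\bigl(\frac{1}{j^2\log(1/\eps)}\bigr)$, and the harmonic sum telescopes against the $\log\frac1\eps$ in the spacing). Your Cauchy--Schwarz/second-moment variant is a legitimate and slightly more forgiving substitute for the inclusion--exclusion step, since it only needs the per-pair total overlap to be $O(1)$ rather than strictly below $w(Q_x)$. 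The tube-confinement fact you cite is the paper's Claim~\ref{clm:bell} and is indeed easy (a reflection argument).

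The genuine gap is the inverse-quadratic overlap bound itself, which is the paper's Lemma~\ref{lm:intersect-length} and occupies most of Section~\ref{sec:lb-steiner-spanners}; you assume it and the justification you sketch for it does not deliver it. Your heuristic — two tubes of width $O(\sqrt\eps)$ around segments with angular separation $\Theta(d)$ separate after length $\gtrsim\sqrt\eps/d$ — only bounds the \emph{diameter of the intersection region} (the parallelogram $A$ in the paper) by $O(\sqrt\eps/d)$, i.e., $O(1/j)$, not $O(\eps/d^2)=O(1/j^2)$. Plugging $O(\sqrt\eps/(2^j\delta))$ into your own dyadic sum gives $\sum_j O(4^j)\cdot O(\sqrt\eps/(2^j\delta))=O(\sqrt\eps/\delta^2)=O\bigl(1/(c^2\sqrt{\eps}\log\frac1\eps)\bigr)$, which diverges as $\eps\to 0$, so the whole bound collapses. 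The quadratic gain is not a tube-packing fact: the paper obtains it by showing that a shared subpath $Q_x[p,q]$ of length $\ell$ must be nearly collinear with \emph{both} $px_2$ and $py_2$ (else one of the two spanner paths exceeds stretch $1+\eps$, via the projection inequality $\ell-|pq|\cos\gamma\le\sqrt2\,\eps$, giving $\ell\le O(\eps/\gamma^2)$), and then ruling out, via the ``admissible triangle'' case analysis, that $q$ can lie within angle $\beta/8$ of both directions simultaneously. Without this argument — or an equivalent one — the proposal does not establish the proposition.
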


Before proving Proposition~\ref{lm:lb-weight}, we show that it implies Theorem~\ref{thm:lb-Steiner-R^2}.

\begin{claim} If Proposition~\ref{lm:lb-weight} is true, then Theorem~\ref{thm:lb-Steiner-R^2} holds.
\end{claim}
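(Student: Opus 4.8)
The plan is to derive Theorem~\ref{thm:lb-Steiner-R^2} from Proposition~\ref{lm:lb-weight} by the standard device of taking many far-apart copies of $P$, together with the observation that, because $P$ has $O(1)$ diameter, a weight lower bound immediately yields both an edge-count lower bound and a lightness lower bound. Let $m := |P|$; since the spacing of $P$ along $N$ and $S$ is $c\sqrt{\epsilon\log(1/\epsilon)}$, we have $m = \Theta\!\left(1/\sqrt{\epsilon\log(1/\epsilon)}\right)$. Assume first that $m \mid n$ and set $k := n/m$; the hypothesis $\epsilon = \tilde\Omega(1/n^2)$ is precisely what guarantees $k \ge 1$. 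Take $k$ translated copies $P_1,\dots,P_k$ of $P$, each living in its own unit square, with the squares horizontally aligned so that any two consecutive ones are at distance $10$, and let $Q^*_n := \bigcup_{i=1}^k P_i$. If $m \nmid n$ we additionally place the remaining $n \bmod m < m$ points at locations already occupied by points of $Q^*_n$; this affects neither $w(\mst(Q^*_n))$ nor any bound below, so we suppress it henceforth.

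Let $ST$ be an arbitrary Steiner $(1+\epsilon)$-spanner of $Q^*_n$. For $x,y$ in a common copy $P_i$, a shortest path realizing $d_{ST}(x,y) \le (1+\epsilon)|xy|$ has length at most $(1+\epsilon)\cdot\mathrm{diam}(P_i) \le (1+\epsilon)\sqrt2 < 2$, so every vertex on it (Steiner or not) lies within distance $2$ of $x$, hence inside the ball $B_i := B_2(o_i,3)$, where $o_i$ is the center of the $i$-th unit square. As consecutive squares are at distance $10$, the balls $B_1,\dots,B_k$ are pairwise disjoint. Let $ST_i$ be the subgraph of $ST$ formed by the edges of $ST$ having both endpoints in $B_i$. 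By the previous sentence, $ST_i$ contains a $(1+\epsilon)$-spanner path between every pair of $P_i$, so $ST_i$ is a Steiner $(1+\epsilon)$-spanner of $P_i$ (a translate of $P$), and the subgraphs $ST_1,\dots,ST_k$ are pairwise edge-disjoint. Proposition~\ref{lm:lb-weight} then gives
\[
w(ST) ~\ge~ \sum_{i=1}^{k} w(ST_i) ~\ge~ k\cdot\Omega\!\left(\tfrac{1}{\epsilon\log(1/\epsilon)}\right) ~=~ \Omega\!\left(\tfrac{n}{m\,\epsilon\log(1/\epsilon)}\right) ~=~ \Omega\!\left(\tfrac{n}{\sqrt{\epsilon\log(1/\epsilon)}}\right),
\]
using $1/m = \Theta(\sqrt{\epsilon\log(1/\epsilon)})$.

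Finally I read off the two stated bounds. Every edge counted above lies in some $B_i$, hence has length at most $\mathrm{diam}(B_i)=6$, so the number of edges of $ST$ is at least $w(ST)/6 = \Omega\!\left(n/\sqrt{\epsilon\log(1/\epsilon)}\right) = \tilde\Omega_\epsilon(n/\sqrt\epsilon)$. For the lightness, the MST of one copy has weight $O(1)$ (take the $\le m/2$ consecutive-spacing edges along the top side, those along the bottom side, and a single edge joining the two sides), so $\mst(Q^*_n)$ is obtained from these $k$ trees together with $k-1$ edges of length $O(1)$ joining consecutive copies, whence $w(\mst(Q^*_n)) = O(k) = O(n/m)$. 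Therefore the lightness of $ST$ is $w(ST)/w(\mst(Q^*_n)) = \Omega\!\left(\tfrac{1}{\epsilon\log(1/\epsilon)}\right) = \tilde\Omega_\epsilon(1/\epsilon)$ (replacing the MST by the SMT as normalizer only changes the constant, by the remark following the theorem). I do not expect any real obstacle here: the only point requiring care is that the "far-apart copies" argument remains valid with Steiner points, and it does, precisely because a $(1+\epsilon)$-spanner path between two points of a single copy is short (length $<2$) and so cannot reach any Steiner vertex used to serve a different copy; all the genuine content of Theorem~\ref{thm:lb-Steiner-R^2} is concentrated in Proposition~\ref{lm:lb-weight}.
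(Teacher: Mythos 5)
Your proof is correct and follows essentially the same route as the paper: take $\Theta(n\sqrt{\epsilon\log(1/\epsilon)})$ far-apart copies of $P$, apply Proposition~\ref{lm:lb-weight} to each, and convert the resulting weight lower bound into edge-count and lightness bounds using the fact that all relevant edges have length $O(1)$. The only (minor) difference is how disjointness of the per-copy spanners is certified — you localize each copy's spanner paths inside disjoint balls, whereas the paper takes inclusion-wise minimal sub-spanners and invokes minimality; both work.
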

\begin{proof}
Assume that $w(ST_P) = \Omega(\frac{1}{\epsilon\log \frac{1}{\epsilon}})$. Since $w(\mst(P_1)) = O(1)$, the lightness of $ST_P$ is $\Omega(\frac{1}{\epsilon\log \frac{1}{\epsilon}})$.
Let $x_1$ and $x_2$ be any two points in $P_1$ and $P_2$, respectively. Since $|x_1x_2| = O(1)$, the shortest path between $x_1$ and $x_2$ in $ST_P$ must have length $O(1)$.
Thus, denoting by $e_{\max}$ the edge of maximum weight in $E(ST_P)$, we have
\begin{equation} \label{eq:lb-S-size}
|E(ST_P)| ~\geq~ \frac{w(ST_P)}{w(e_{\max})} ~=~ \Omega(\frac{1}{\epsilon\log \frac{1}{\epsilon}}).
\end{equation}

Recall that $|P| = O(\frac{1}{\sqrt{\epsilon \log (\frac{1}{\epsilon})}})$. Thus, $S$ has more edges than the number of points by $\Omega(\frac{1}{\sqrt{\epsilon\log \frac{1}{\epsilon}}})$  factors. 
We have achieved the required lightness and sparsity bounds, but we are not done; to show that Theorem~\ref{thm:lb-Steiner-R^2} holds, we next extend this argument to an $n$-point set, for any $n$
and $\eps$ with $\eps = \tilde \Omega(\frac{1}{n^2})$.

Let $g(\epsilon) = \frac{1}{\sqrt{\epsilon \log (\frac{1}{\epsilon})}}$ and $\alpha$ be such that $|P| = \alpha g(\epsilon)$. For simplicity of presentation, we assume that $n$ is divisible by $\alpha g(\epsilon)$, otherwise, we can always increase $n$ by at most $\alpha g(\epsilon)$  to guarantee this property. We make $k =  \frac{n}{\alpha g(\epsilon)}$
vertex-disjoint copies of the aforementioned point set $P$, denoted by $P_1, P_2, \ldots, P_k$,
where each $P_i$ is defined with respect to a separate unit square $U_i$,
where the squares $U_1,\ldots, U_k$ are horizontally aligned so that the distance between any two nearby squares is $3$ (see Figure~\ref{fig:lb-square}(b)).
Let $Q = P_1\cup P_2\cup \ldots \cup P_k$ and note that $|Q| = n$.
Let $S_Q$ be any Steiner $(1+\epsilon)$-spanner of $Q$
and let $S_Q[P_i]$ be any inclusion-wise minimal subgraph of $S_Q$ that provides a $(1+\eps)$-spanner for $P_i$, for each $i \in [k]$.
Since $d(U_i,U_j) \geq 3$ for every $i\not= j$, $S_Q[P_i]$ and $S_Q[P_j]$ must be vertex-disjoint (let alone edge-disjoint) by their minimality.
By Proposition~\ref{lm:lb-weight}, $w(S_Q[P_i]) = \Omega(\frac{1}{\epsilon\log \frac{1}{\epsilon}})$, thus $w(S_Q) =  \Omega(\frac{k}{\epsilon\log \frac{1}{\epsilon}})$. By construction,
 we have $w(\mst(Q)) \leq O(k)$, thus $w(S_Q) = \Omega(\frac{1}{\epsilon\log \frac{1}{\epsilon}})w(\mst(Q)))$, which proves the lightness bound.
For the size bound,  Equation~(\ref{eq:lb-S-size}) yields $|E(S_Q[P_i])| \geq \Omega(\frac{1}{\epsilon\log \frac{1}{\epsilon}})$, thus
$|E(S_Q)| \geq k \Omega(\frac{1}{\epsilon\log \frac{1}{\epsilon}}) = \frac{n}{\alpha g(\epsilon)} \Omega(\frac{1}{\epsilon\log \frac{1}{\epsilon}})= \Omega(\frac{n}{\sqrt{\epsilon \log \frac{1}{\epsilon}}}).$ \QED
\end{proof}

 \begin{figure}[!htb]
        \center{\includegraphics[width=1.0\textwidth]
        {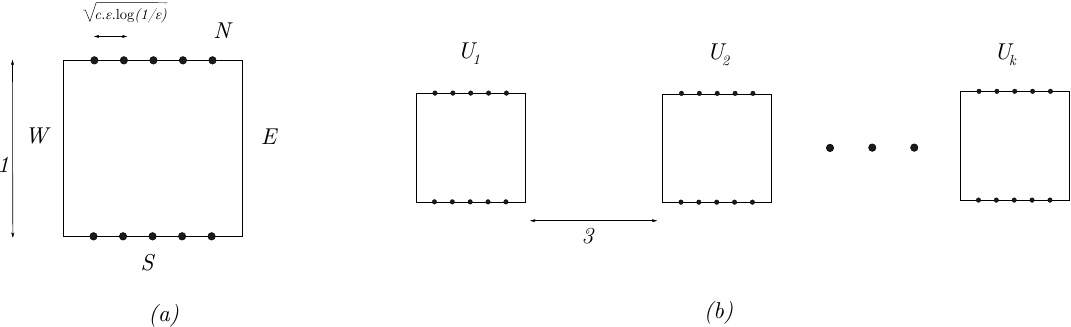}}
        \vspace{-0.4cm}
        \caption{(a) The point set $P$ lying on the $N$ and $S$ sides of the unit square in our lower bound proof. (b) Extending the example in (a)
        to the case where the point set is arbitrarily large.}
        \label{fig:lb-square}
      \end{figure}

In what follows we prove Proposition~\ref{lm:lb-weight}. Assume that $ST_P$ is a Steiner $(1+\epsilon)$-spanner for $P$ of minimum weight. Observe that one can ``planarize'' $ST_P$ without increasing its weight: whenever two edges of $ST_P$ intersect at a point on the plane that is not a point of $P$, we add the crossing point to the set of vertices of $ST_P$.
We argue that the spanner must stay inside $U$.
\begin{claim}\label{clm:inside} $ST_P \subseteq U$.
\end{claim}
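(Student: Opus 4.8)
The plan is to show that any minimum-weight Steiner $(1+\epsilon)$-spanner $ST_P$ for $P$ can be assumed to lie entirely inside the closed unit square $U$, by a "retraction to the boundary box" argument combined with minimality. First I would set up the key geometric fact: the orthogonal projection $\pi : \mathbb{R}^2 \to U$ onto the closed convex set $U$ is $1$-Lipschitz, i.e.\ $|\pi(a)\pi(b)| \le |ab|$ for all $a,b$, with equality only in degenerate situations. Since every point of $P$ lies on $\partial U \subset U$, we have $\pi(p) = p$ for all $p \in P$. Thus, given $ST_P$, consider the graph $ST_P'$ obtained by replacing every vertex $v$ (Steiner or original) with $\pi(v)$ and every edge $\{u,v\}$ with the segment $\{\pi(u),\pi(v)\}$ (re-weighted by Euclidean length, and re-planarized as in the paragraph preceding the claim). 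This does not change the vertices in $P$, and it does not increase the length of any edge, hence $w(ST_P') \le w(ST_P)$.

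Next I would verify that $ST_P'$ is still a $(1+\epsilon)$-spanner for $P$. This is where the $1$-Lipschitz property does the work: any $u$--$v$ walk in $ST_P$ maps under $\pi$ to a $u$--$v$ walk in $ST_P'$ of no greater total length, so for $p,q \in P$ we get $d_{ST_P'}(p,q) \le d_{ST_P}(p,q) \le (1+\epsilon)|pq|$. Therefore $ST_P'$ is a Steiner $(1+\epsilon)$-spanner for $P$ with $w(ST_P') \le w(ST_P)$, and by the minimality assumption on $ST_P$ we get $w(ST_P') = w(ST_P)$. The remaining point is to upgrade "there is a minimum-weight spanner inside $U$" to "$ST_P \subseteq U$" — this is really just a matter of fixing $ST_P$ at the outset to be the retracted copy, i.e.\ WLOG we may take the minimum-weight spanner to be one that is invariant under $\pi$ (apply the retraction once; a retracted graph is already inside $U$, so retracting again changes nothing). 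So after this normalization every vertex and every edge of $ST_P$ lies in $U$, which is the claim.

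The main obstacle — really the only subtle point — is making sure the projection step interacts correctly with the planarization convention: after projecting, two edges that were disjoint in $ST_P$ may now overlap or cross, and an edge may collapse to a point or fold onto the boundary. None of this increases weight (collapsing only helps, overlaps are handled by re-planarizing and identifying coincident vertices/edges, and a folded edge is still a valid edge of length $\le$ the original), and none of it hurts the spanner property (distances only go down under taking unions / identifying vertices). I would spell out that re-planarizing a graph — subdividing at crossing points and merging coincident points — never increases weight and never increases pairwise distances, so the spanner property and the weight bound both survive. A second minor point to note is that edges of $ST_P$ are straight segments by definition, and the image of a segment under the (piecewise-linear) map $\pi$ is again a segment (or a point), so $ST_P'$ is again a straight-line graph and the construction stays within the model. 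With these remarks in place the claim follows, and one may henceforth assume $ST_P \subseteq U$.
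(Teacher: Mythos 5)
Your proof is correct and takes essentially the same route as the paper: both define the nearest-point projection onto the convex set $U$, observe that it is non-expansive (so edge lengths, hence total weight and all spanner-path lengths, do not increase), and conclude that the minimum-weight spanner can be taken to lie inside $U$. Your write-up just spells out the Lipschitz property and the planarization bookkeeping more explicitly than the paper does.
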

\begin{proof}
For any point $p$, define its {\em projection onto $U$}, denoted by $\pr(p)$, as follows. If $p \in U$, then $\pr(p) = p$, otherwise $\pr(p)$ is the closest point on the boundary of $U$. Observe that for every pair $p,q$ of points, $|\pr(p) \pr(q)| \le |p q|$.
Thus the spanner obtained from $ST_P$ by replacing every edge $(p,q) \in ST_P$ with the ``projected'' edge $(\pr(p),\pr(q))$
has stretch and weight no greater than those of $ST_P$.
\QED
\end{proof}
  Let $x_1$ and $x_2$ be two points in $P_1$ and $P_2$, respectively. We define a \emph{bell} of radius $r$ of the line segment $x_1x_2$, denoted by $\be(x_1x_2,r)$, to be the set of points  in $U$ at distance at most $r$ from the line segment $x_1 x_2$.
 We call the boundary line segments of $\be(x_1x_2,r)$ connecting the $N$ and $S$ sides of $U$ the \emph{long boundaries} of the bell, and the other two boundary line segments are called the \emph{short boundaries}.
Since we made sure that every point of $P$ is at distance $\ge c\sqrt{\epsilon\log(\frac{1}{\epsilon})}$ from the corners of $U$, we have  $\be(x_1,x_2)\subseteq  U$,
which in particular means that all the bells (including the two extreme ones) are of precisely the same size.

 Let $Q_x$ be an arbitrary shortest path between $x_1$ and $x_2$ in $ST_P$. We claim that:

\begin{claim}\label{clm:bell} $Q_x \subseteq \be(x_1x_2, 2\sqrt{\epsilon})$.
\end{claim}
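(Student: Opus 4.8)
\textbf{Proof plan for Claim~\ref{clm:bell}.}
The plan is to show that any point $z$ on the shortest path $Q_x$ must lie within distance $2\sqrt{\epsilon}$ of the segment $x_1x_2$, by arguing that if $z$ were farther, then routing through $z$ would force $Q_x$ to be too long relative to $|x_1x_2|$, contradicting the $(1+\epsilon)$-stretch guarantee. First I would recall that since $ST_P$ is a $(1+\epsilon)$-spanner, $w(Q_x) = d_{ST_P}(x_1,x_2) \le (1+\epsilon)|x_1x_2|$. Next, for any point $z \in Q_x$, the portion of $Q_x$ from $x_1$ to $z$ has length at least $|x_1 z|$ and the portion from $z$ to $x_2$ has length at least $|z x_2|$, so $|x_1 z| + |z x_2| \le w(Q_x) \le (1+\epsilon)|x_1 x_2|$. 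This is the key inequality: every point of $Q_x$ lies in the ellipse with foci $x_1,x_2$ whose sum-of-distances parameter is $(1+\epsilon)|x_1x_2|$.

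The second step is to convert this ellipse containment into a bound on the distance from $z$ to the segment $x_1x_2$. Let $h = \dist(z, x_1x_2)$ be the distance from $z$ to the line through $x_1$ and $x_2$, and let $z'$ be the foot of the perpendicular. Writing $a = |x_1 z'|$ and $b = |z' x_2|$ with $a + b \ge |x_1 x_2| =: D$ (with equality when $z'$ lies between $x_1$ and $x_2$, which is the worst case — if $z'$ falls outside the segment the distances only grow, making the constraint harder to violate, so I would handle that case separately or note it only helps). Then $|x_1 z| = \sqrt{a^2 + h^2} \ge a + \frac{h^2}{2a} - O(h^4/a^3)$ by a Taylor expansion, and similarly for $|z x_2|$; more cleanly, using $\sqrt{a^2+h^2} \ge a + \frac{h^2}{a+\sqrt{a^2+h^2}} \ge a$, one gets $|x_1 z| + |z x_2| \ge D + \frac{h^2}{2a} + \frac{h^2}{2b}$ when $h \le a,b$, and by convexity/AM-HM this is at least $D + \frac{2h^2}{a+b}$. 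Hmm, I should be careful: a cleaner route is to note $|x_1 z| + |z x_2| \ge \sqrt{D^2 + 4h^2}$ (this is exactly the sum of distances from $z$ to the two foci being minimized along the perpendicular bisector configuration — actually the minimum of $\sqrt{a^2+h^2}+\sqrt{(D-a)^2+h^2}$ over $a$ is $\sqrt{D^2+4h^2}$, attained at $a = D/2$). So for every $z \in Q_x$ with perpendicular foot inside the segment, $\sqrt{D^2 + 4h^2} \le |x_1z| + |zx_2| \le (1+\epsilon)D$, hence $4h^2 \le ((1+\epsilon)^2 - 1)D^2 \le 3\epsilon D^2$ (for $\epsilon$ small), so $h \le \sqrt{3\epsilon}\,D/2 < \sqrt{\epsilon}\,D$. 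Since $D = |x_1x_2| \le \sqrt{2} < 2$ (diameter of the unit square), we get $h < 2\sqrt{\epsilon}$, which is the desired bound; combined with Claim~\ref{clm:inside} ($Q_x \subseteq U$), this gives $Q_x \subseteq \be(x_1x_2, 2\sqrt{\epsilon})$.

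The main obstacle I anticipate is the boundary case where the perpendicular foot $z'$ of some $z \in Q_x$ falls outside the segment $x_1x_2$ (i.e., ``beyond'' $x_1$ or $x_2$). In that regime the clean identity $\min_a(\sqrt{a^2+h^2}+\sqrt{(D-a)^2+h^2}) = \sqrt{D^2+4h^2}$ no longer directly applies, and one must instead observe that if $z'$ is past $x_1$ by an amount $s > 0$, then $|x_1 z| + |z x_2| \ge 2s + D$ roughly (the path must backtrack), which together with $|x_1z|+|zx_2| \le (1+\epsilon)D$ forces $s = O(\epsilon D)$ and then $h^2 \le |x_1z|^2 = s^2 + h^2 - $ ... — more carefully, $h \le |x_1 z| \le$ (something small), so $z$ is within $O(\sqrt{\epsilon})\cdot$(small) of $x_1$ itself and hence trivially within the bell. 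A second, more mundane subtlety is that I am using the definition of the bell $\be(x_1x_2,r)$ as points of $U$ at distance $\le r$ from the \emph{segment} $x_1x_2$ (not the line), so the distance $h$ I bound should be distance-to-segment; but distance-to-segment equals distance-to-line when the foot is interior and equals distance-to-nearest-endpoint otherwise, and the boundary-case analysis above already controls exactly that. Finally I would double check the constant: $(1+\epsilon)^2 - 1 = 2\epsilon + \epsilon^2 \le 3\epsilon$ for $\epsilon \le 1$, giving $h \le \frac{\sqrt{3\epsilon}}{2} D \le \frac{\sqrt{3\epsilon}}{2}\sqrt{2} = \sqrt{3/2}\sqrt{\epsilon} < 2\sqrt{\epsilon}$, so the stated radius $2\sqrt{\epsilon}$ comfortably holds.
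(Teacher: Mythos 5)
Your proposal is correct and is essentially the paper's argument: the paper also compares the $(1+\epsilon)$-stretch upper bound $|x_1z|+|zx_2|\le w(Q_x)\le(1+\epsilon)|x_1x_2|$ against the lower bound $\sqrt{|x_1x_2|^2+4h^2}$ obtained by reflecting $x_1$ across a line parallel to $x_1x_2$ at distance $h$, which is exactly your identity $\min_a\bigl(\sqrt{a^2+h^2}+\sqrt{(D-a)^2+h^2}\bigr)=\sqrt{D^2+4h^2}$. The only cosmetic difference is that the paper argues by contradiction at a crossing point of the bell's long boundary (using $Q_x\subseteq U$ to rule out escaping past the short boundaries, which sidesteps your segment-versus-line endpoint case), whereas you bound the distance of every path point directly.
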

\begin{proof}
Suppose for contradiction that $Q_x$ contains a point outside $\be(x_1x_2,\sqrt{2\epsilon})$. By Claim~\ref{clm:inside}, $Q_x$ must intersect a long boundary, say $L$, of $\be(x_1x_2,\sqrt{2\epsilon})$ at a point $t$. Let $p$ be the reflection point of $x_1$ over the line defined by $L$ (see Figure~\ref{fig:bell}(a)). By the triangle inequality, we have
\begin{equation*}
\begin{split}
    w(Q_x) &\geq~ |tx_1|  + |tx_2|  ~=~ |tp| + |tx_2| \geq |px_2|
    ~=~ \sqrt{|x_1x_2|^2 + |x_1p|^2}
    ~=~ \sqrt{|x_1x_2|^2 + 16\epsilon}\\
    &\geq~ |x_1x_2|\sqrt{1 + 8\epsilon} \qquad \mbox{since } |x_1x_2|\leq\sqrt{2}\\
    &>~ (1+\epsilon)|x_1x_2| \qquad \mbox{~~since }\epsilon < 1,
\end{split}
\end{equation*}
which contradicts the fact that $Q_x$ is a $(1+\epsilon)$-spanner path for the pair $x_1,x_2$.
\QED
\end{proof}

Since the acute angle between $x_1x_2$ and $N$ is at least $\pi/4$, we have:
\begin{observation}\label{obs:bell-on-boundary}
The segments $\be(x_1x_2,2\sqrt{\epsilon})\cap N$, $\be(x_1x_2,2\sqrt{\epsilon})\cap S$ have length at most  $4\sqrt{2\epsilon}$ each.
\end{observation}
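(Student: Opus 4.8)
The plan is to argue purely locally near each of $x_1$ and $x_2$, exploiting the fact stated just above the observation: the acute angle between the segment $x_1x_2$ and the side $N$ (equivalently $S$) is at least $\pi/4$, since $x_1$ and $x_2$ lie on opposite sides of the unit square $U$ and hence differ by exactly $1$ in the direction orthogonal to $N$ and by at most $1$ in the direction of $N$. I will bound the length of $\be(x_1x_2,2\sqrt{\epsilon})\cap N$; the bound for $\be(x_1x_2,2\sqrt{\epsilon})\cap S$ then follows by the identical argument with the roles of $x_1,x_2$ (and of $N,S$) swapped, since $x_1$ lies on the segment on the $N$ side and $x_2$ lies on the segment on the $S$ side, and the acute angle of $x_1x_2$ with $S$ equals that with $N$.

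First I would fix an arbitrary point $p$ on $N$ and let $\delta=|x_1p|$ be its distance to $x_1$ measured along $N$. Let $\ell$ be the line through $x_1$ and $x_2$. Since $x_1x_2\subseteq\ell$, we have $\mathrm{dist}(p,x_1x_2)\ge \mathrm{dist}(p,\ell)$. Now $x_1\in\ell\cap N$, and the acute angle $\phi$ between $\ell$ and $N$ satisfies $\phi\in[\pi/4,\pi/2]$; since the segment $x_1p$ lies along $N$, elementary trigonometry gives $\mathrm{dist}(p,\ell)=\delta\sin\phi\ge\delta\sin(\pi/4)=\delta/\sqrt{2}$. Hence $\mathrm{dist}(p,x_1x_2)\ge\delta/\sqrt{2}$ for every $p\in N$. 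Consequently, if $p\in\be(x_1x_2,2\sqrt{\epsilon})$, then $\mathrm{dist}(p,x_1x_2)\le 2\sqrt{\epsilon}$ by definition of the bell, which forces $\delta\le 2\sqrt{2}\sqrt{\epsilon}=2\sqrt{2\epsilon}$. Thus every point of $\be(x_1x_2,2\sqrt{\epsilon})\cap N$ lies within distance $2\sqrt{2\epsilon}$ of $x_1$ along $N$, so this set is contained in a sub-segment of $N$ of length $2\cdot 2\sqrt{2\epsilon}=4\sqrt{2\epsilon}$, as claimed; applying the same reasoning at $x_2$ yields the bound for $S$.

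There is essentially no real obstacle here, as the statement is a short self-contained planar fact. The only point one must be slightly careful about is to lower bound the distance from $p$ to the whole \emph{line} $\ell$ rather than to the \emph{segment} $x_1x_2$: doing so automatically accounts for all of the relevant geometry of the bell near $x_1$ (the rectangular strip part, the circular cap of radius $2\sqrt{\epsilon}$ around $x_1$, and any clipping of $\be(x_1x_2,2\sqrt{\epsilon})$ by $U$), since each of these is contained in the $2\sqrt{\epsilon}$-neighborhood of $\ell$ and the above estimate applies verbatim.
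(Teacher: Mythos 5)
Your proof is correct and takes essentially the same route as the paper: the paper does not write out a proof of this observation, merely prefacing it with the remark that the acute angle between $x_1x_2$ and $N$ is at least $\pi/4$, and you correctly fill in the implicit computation that this angle bound forces any point of $N$ within perpendicular distance $2\sqrt{\epsilon}$ of the segment (or its supporting line) to lie within $2\sqrt{2\epsilon}$ of $x_1$ along $N$, giving the $4\sqrt{2\epsilon}$ length bound. The one point where you are slightly more careful than the paper's terse assertion — lower-bounding the distance to the full line $\ell$ rather than to the segment, so as to absorb the cap region of the bell near $x_1$ — is exactly the right thing to do and makes the argument airtight.
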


\begin{figure}[!htb]
        \center{\includegraphics[width=0.6\textwidth]
        {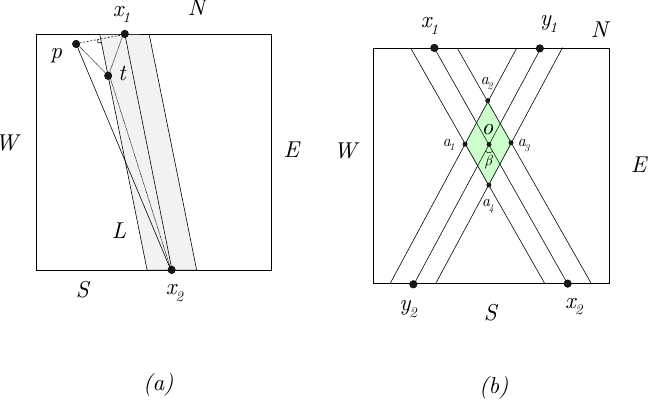}}
        \vspace{-0.4cm}
        \caption{(a) The shaded region is the bell $\be(x_1x_2,2\sqrt{\epsilon})$. (b) The parallelogram $A$ in the proof of Lemma~\ref{lm:intersect-length} is green shaded.}
        \label{fig:bell}
      \end{figure}
Let $y_1,y_2$ be two points in $P_1$ and $P_2$, respectively, such that $|\{y_1,y_2\}\cap \{x_1,x_2\}|\leq 1$. By Observation~\ref{obs:bell-on-boundary} and since the minimum pairwise distance between points in $P$ is at least $c\sqrt{\epsilon \log(\frac{1}{\epsilon})}$ for some sufficiently big constant $c$, we have:
\begin{observation}\label{obs:bell-disjoint}
If $x_1x_2 \cap y_1y_2 = \emptyset$, then  $\be(x_1x_2, 2\sqrt{\epsilon})\cap  \be(y_1y_2,2\sqrt{\epsilon}) = \emptyset$
\end{observation}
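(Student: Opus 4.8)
The statement to prove is Observation~\ref{obs:bell-disjoint}: if two segments $x_1x_2$ and $y_1y_2$ (with $x_1,y_1\in P_1$ on $N$, $x_2,y_2\in P_2$ on $S$) are disjoint, then their $2\sqrt{\epsilon}$-bells are disjoint.

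\medskip

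The plan is to argue by contrapositive: if the two bells intersect, then the two segments must be ``close'' in a quantified sense, and this closeness, combined with the fact that points of $P$ are separated by $c\sqrt{\epsilon\log(1/\epsilon)}$, will force the segments to actually intersect. First I would recall that each bell is a tube of radius $2\sqrt{\epsilon}$ around its axis segment; if $\be(x_1x_2,2\sqrt\epsilon)$ and $\be(y_1y_2,2\sqrt\epsilon)$ share a point $z$, then $\dist(z,x_1x_2)\le 2\sqrt\epsilon$ and $\dist(z,y_1y_2)\le 2\sqrt\epsilon$, so by the triangle inequality the two axis segments come within Hausdorff-type distance $4\sqrt\epsilon$ of each other at some pair of witness points. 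In particular there are points $u\in x_1x_2$ and $v\in y_1y_2$ with $|uv|\le 4\sqrt\epsilon$. The key geometric feature I would exploit is that both segments cross the unit square from $N$ to $S$ at an angle of at least $\pi/4$ to $N$ (as noted just before Observation~\ref{obs:bell-on-boundary}): two such ``nearly vertical'' chords that come within $4\sqrt\epsilon$ of each other at an interior point, but do not cross, must have endpoints on $N$ (and on $S$) that are themselves within $O(\sqrt\epsilon)$ of each other.

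\medskip

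To make that last step precise I would set up coordinates with $N$ the top side and $S$ the bottom side of $U$, parametrize each segment by height, and use the slope bound coming from the $\ge\pi/4$ angle to show that the horizontal gap between the two segments changes by at most a bounded factor as we move from the interior witness point to $N$ (or to $S$). Concretely: since $|uv|\le 4\sqrt\epsilon$ and $u,v$ lie on lines of slope magnitude $\le 1$ in the ``$x$-as-function-of-$y$'' sense, the horizontal separation of the two segments is everywhere $O(\sqrt\epsilon)$ provided the segments don't cross (if they crossed, we'd be done immediately since then $x_1x_2\cap y_1y_2\neq\emptyset$, contradicting the hypothesis). Hence $|x_1y_1|=O(\sqrt\epsilon)$ and $|x_2y_2|=O(\sqrt\epsilon)$. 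But $x_1,y_1\in P_1$ are distinct points of the net spaced $c\sqrt{\epsilon\log(1/\epsilon)}$ apart, and $c\sqrt{\epsilon\log(1/\epsilon)}\gg O(\sqrt\epsilon)$ for $c$ large enough and $\epsilon$ small; so we must have $x_1=y_1$, and similarly $x_2=y_2$, i.e.\ the two pairs coincide. This contradicts the standing assumption $|\{y_1,y_2\}\cap\{x_1,x_2\}|\le 1$ under which the observation is stated. Therefore the bells are disjoint.

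\medskip

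The main obstacle I expect is the quantitative ``no-crossing implies bounded horizontal separation throughout'' step: one has to be a little careful that the witness points $u,v$ guaranteed by the intersecting bells may lie anywhere along the segments (not conveniently near $N$ or $S$), and that the parallelogram/strip region between the two non-crossing chords genuinely has width $O(\sqrt\epsilon)$ at \emph{every} height, not just at the height of $u$. This is exactly the kind of region made explicit in Figure~\ref{fig:bell}(b) and Lemma~\ref{lm:intersect-length} (the parallelogram $A$), so I would lean on that picture: between two non-crossing chords of the square that are $4\sqrt\epsilon$-close at one interior point and each tilted by $\le\pi/4$ from vertical, the enclosed region is contained in a strip of width $O(\sqrt\epsilon)$, whence the endpoint separations on $N$ and on $S$ are $O(\sqrt\epsilon)$. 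Everything else is the routine comparison $O(\sqrt\epsilon)<c\sqrt{\epsilon\log(1/\epsilon)}$, valid for $c$ sufficiently large and $\epsilon\ll 1$, together with Observation~\ref{obs:bell-on-boundary} which already records that each bell meets $N$ and $S$ in a segment of length $O(\sqrt\epsilon)$ — making the final packing contradiction immediate.
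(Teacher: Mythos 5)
Your overall strategy—contrapositive, witness points on the two segments within $4\sqrt\epsilon$ of each other, slope bound, then the net separation of $P$—is the right one, and matches what the paper's hint (``By Observation~\ref{obs:bell-on-boundary} and since the minimum pairwise distance\ldots'') is gesturing at. But there is a genuine error in the middle step. You claim that if the two non-crossing, slope-bounded chords come within $4\sqrt\epsilon$ of each other at one point, then ``the horizontal separation of the two segments is everywhere $O(\sqrt\epsilon)$,'' and hence \emph{both} $|x_1y_1|=O(\sqrt\epsilon)$ and $|x_2y_2|=O(\sqrt\epsilon)$. That is false. Parametrize each segment by height $h\in[0,1]$: since the segments do not cross, the horizontal gap $w(h)$ between them is the \emph{linear} function $w(h)=(1-h)\,|x_2y_2|+h\,|x_1y_1|$. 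Being $O(\sqrt\epsilon)$ at one interior height $h_0$ only tells you that $\min(|x_1y_1|,|x_2y_2|)\le w(h_0)=O(\sqrt\epsilon)$; the other endpoint gap can be $\Theta(1)$ (e.g., $x_1=(0.4,1),x_2=(0.5,0)$, $y_1=(0.6,1),y_2=(0.5+4\sqrt\epsilon,0)$). So the ``strip of width $O(\sqrt\epsilon)$'' picture does not hold.

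The argument is easily repaired, and in fact the repair is cleaner than what you wrote. Since both $|x_1y_1|\ge c\sqrt{\epsilon\log(1/\epsilon)}$ and $|x_2y_2|\ge c\sqrt{\epsilon\log(1/\epsilon)}$ (the endpoints are distinct, because $x_1x_2\cap y_1y_2=\emptyset$ already forces all four points to be distinct), and $w(h)$ is a convex combination of these two values, $w(h)\ge c\sqrt{\epsilon\log(1/\epsilon)}$ for \emph{every} $h$; this contradicts $w(h_0)=O(\sqrt\epsilon)$ for $c$ large and $\epsilon$ small, so the bells cannot meet. Equivalently, by contrapositive, $\min(|x_1y_1|,|x_2y_2|)=O(\sqrt\epsilon)<c\sqrt{\epsilon\log(1/\epsilon)}$ forces one endpoint pair to coincide, say $x_1=y_1$, which makes the two segments intersect at $x_1$—already a contradiction, with no need to invoke the $|\{y_1,y_2\}\cap\{x_1,x_2\}|\le 1$ hypothesis and no need to establish $x_2=y_2$ as well.
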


Let $Q_y$ be an arbitrary shortest path between $y_1$ and $y_2$ in $ST_P$.  For a given path $Q$, we will use $Q[a,b]$ to denote the subpath between $a$ and $b$ of $Q$.  Recall that $ST_P$ is planarized, and so $Q_x$ and $Q_y$ may only intersect at points that are vertices of $ST_P$. We want to upper bound the sum of weights of all subpaths shared by $Q_x$ and $Q_y$ (if any), denoted by  $w(Q_x\cap Q_y)$.
The next observation shows that this sum is maximized when $Q_x$ and $Q_y$ share a single subpath.

\begin{observation}\label{obs:single-path}
If $(Q_x\cap Q_y) \not= \emptyset$, then $w(Q_x\cap Q_y)$ is maximized when $Q_x\cap Q_y$ is a single path in $ST_P$.
\end{observation}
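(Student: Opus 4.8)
The plan is to argue that if $Q_x$ and $Q_y$ share several disjoint subpaths, one can reroute $Q_y$ (without changing its endpoints) so that the shared portion becomes a single subpath whose total weight is at least as large; since $w(Q_x\cap Q_y)$ is what we are upper-bounding, it suffices to prove the claim for the worst case, i.e.\ a single shared subpath.

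First I would set up notation. Write the maximal shared subpaths of $Q_x$ and $Q_y$, in the order they are encountered along $Q_y$, as $\sigma_1,\sigma_2,\ldots,\sigma_m$, with $\sigma_j = Q_y[a_j,b_j]$; by planarity of $ST_P$ each $\sigma_j$ is a subpath of $Q_x$ as well, say $\sigma_j = Q_x[a'_j,b'_j]$. Between consecutive shared pieces, $Q_y$ follows a ``private'' subpath disjoint (in its interior) from $Q_x$. The quantity we care about is $w(Q_x\cap Q_y) = \sum_{j=1}^m w(\sigma_j)$.

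Next, the key step: I would observe that since $Q_x$ is itself a path, the shared pieces $\sigma_1,\ldots,\sigma_m$ occur along $Q_x$ in some order and, crucially, if between two of them $Q_y$ takes a detour that rejoins $Q_x$, then the portion of $Q_x$ strictly between those two contact points is an alternative connection that $Q_y$ could use instead of its detour. Concretely, take the first and last contact vertices of $Q_y$ with $Q_x$, call them $u$ and $v$; replace the entire middle of $Q_y$ between $u$ and $v$ by the subpath of $Q_x$ between $u$ and $v$. The resulting walk $Q_y'$ from $y_1$ to $y_2$ has the same endpoints, and its intersection with $Q_x$ is now exactly the single subpath $Q_x[u,v]$. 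Since $ST_P$ is a graph with nonnegative edge weights and $Q_y$ was a \emph{shortest} $y_1$–$y_2$ path, we must have $w(Q_y') \ge w(Q_y)$; but $Q_y'$ and $Q_y$ agree outside the $[u,v]$ portion, so $w(Q_x[u,v]) \ge w(Q_y[u,v]) \ge \sum_{j} w(\sigma_j) = w(Q_x\cap Q_y)$ (the private detours of $Q_y$ only add weight). Hence the configuration in which $Q_x$ and $Q_y$ share the single subpath $Q_x[u,v]$ realizes a value of $w(\,\cdot\cap\cdot\,)$ that is at least the original one, which is exactly the assertion of the observation. A small subtlety to address: $Q_y'$ is a priori a walk, not a simple path, but one can shortcut it to a simple path without increasing weight, and this shortcutting cannot destroy the shared subpath $Q_x[u,v]$ since it is traversed contiguously; alternatively, since we only need an \emph{upper bound} on $w(Q_x\cap Q_y)$ over all choices of shortest paths, it is enough to exhibit one pair of shortest paths attaining the single-subpath pattern with weight $\ge$ the given one.

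The main obstacle I anticipate is the bookkeeping around the fact that $Q_y$ may weave in and out of $Q_x$ many times, so that ``the subpath of $Q_x$ between $u$ and $v$'' could itself contain pieces that $Q_y$ originally used as private detours in the reverse direction; one has to be careful that the rerouted object is still a valid $y_1$–$y_2$ path of no greater length, and that after planarization the notion of ``shared subpath'' is well-defined (shared \emph{edges} versus merely crossing at vertices). Handling this cleanly is the crux; everything else is a short triangle-inequality / shortest-path-optimality argument.
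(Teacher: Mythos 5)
Your rerouting is exactly the paper's proof: splice $Q_x[p,q]$ into $Q_y$ between the first and last contact points $p,q$ of $Q_x\cap Q_y$ along $Q_y$; both $Q_x[p,q]$ and $Q_y[p,q]$ are shortest $p$--$q$ paths (being subpaths of shortest paths), so they have equal weight, the spliced path $Q_y'$ is still a shortest $y_1$--$y_2$ path, and $Q_x\cap Q_y'$ is the single subpath $Q_x[p,q]$, whose weight is at least $\sum_j w(\sigma_j)=w(Q_x\cap Q_y)$. The obstacle you flag at the end is not real — since $p$ and $q$ are the first and last contact points along $Q_y$, the prefix $Q_y[y_1,p]$ and suffix $Q_y[q,y_2]$ meet $Q_x$ only at $p$ and $q$, so the spliced object is automatically a simple path; the one inequality you should make explicit is $w(Q_x[p,q])\le w(Q_y[p,q])$ (from $Q_x$ being a shortest path), which is what certifies that $Q_y'$ is still a shortest path and hence a legitimate configuration over which the maximum is taken.
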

\begin{proof}
Let $p$ and $q$ be the first and the last points along $Q_y$ that belong to $Q_x\cap Q_y$, respectively.
Since $Q_x$ and $Q_y$ are shortest paths between $x_1$ and $x_2$ and between $y_1$ and $y_2$ in $ST_P$, respectively, we have $w(Q_x[p,q]) = w(Q_y[p,q])$. Thus, we can replace $Q_y[p,q]$ by $Q_x[p,q]$ to obtain another shortest path $Q'_y$ in $ST_P$ between $y_1$ and $y_2$ such that $Q_x \cap Q'_y$ is a single path and $w(Q_x\cap Q_y) \leq w(Q_{x}\cap Q'_{y})$. \QED
\end{proof}
We define the distance between two pairs $\{x_1,x_2\}$ and $\{y_1,y_2\}$, denoted by $d(\{x_1,x_2\},\{y_1,y_2\})$, to be $\max\{|x_1y_1|, |x_2y_2|\}$. The following Lemma is central to the proof of Proposition~\ref{lm:lb-weight}.
\begin{lemma}\label{lm:intersect-length}
If $d(\{x_1,x_2\}, \{y_1,y_2\}) = j\sqrt{\epsilon}$ for some sufficiently large $j$, then $w(Q_x\cap Q_y) = O(\frac{1}{j^2})$.
\end{lemma}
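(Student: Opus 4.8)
The plan is to exploit the fact that both $Q_x$ and $Q_y$ must lie inside their respective thin bells (Claim~\ref{clm:bell}), so that any shared subpath of $Q_x$ and $Q_y$ is confined to the intersection of two thin bells, $\be(x_1x_2,2\sqrt{\epsilon}) \cap \be(y_1y_2,2\sqrt{\epsilon})$. Since $x_1x_2$ and $y_1y_2$ are two segments joining $N$ to $S$ in the unit square, each with slope of absolute value at most $1$, and their endpoints differ by at most $j\sqrt{\epsilon}$ on each of $N$ and $S$, the two segments cross at a small angle and are never more than $O(j\sqrt{\epsilon})$ apart inside $U$. The intersection region of the two bells is therefore contained in a parallelogram $A$ (the green-shaded region in Figure~\ref{fig:bell}(b)) whose "long'' direction is along (say) $x_1x_2$, of length $O(1)$, but whose "short'' direction — transverse to $x_1x_2$ — has width only $O(\epsilon / (j\sqrt{\epsilon})) = O(\sqrt{\epsilon}/j)$: this is because the two bells each have half-width $2\sqrt{\epsilon}$, and when two strips of half-width $\Theta(\sqrt{\epsilon})$ cross at angle $\Theta(j\sqrt{\epsilon})$, the diameter of their intersection along the bisector direction is $\Theta(\sqrt{\epsilon} / (j\sqrt{\epsilon})) \cdot \sqrt{\epsilon} = \Theta(\sqrt{\epsilon}/j)$ — I will need to compute this width carefully from the angle between $x_1x_2$ and $y_1y_2$, which is $\Theta(j\sqrt{\epsilon})$ because the endpoints are $\Theta(j\sqrt{\epsilon})$ apart and the segments have length $\Theta(1)$.

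Next I would use the single-subpath reduction (Observation~\ref{obs:single-path}): it suffices to bound $w(Q_x \cap Q_y)$ when the shared portion is one subpath $Q_x[p,q] = Q_y[p,q]$, so $w(Q_x\cap Q_y) = d_{ST_P}(p,q)$, which is at least $|pq|$, the Euclidean length of its two endpoints. Since $p,q \in A$, we have $|pq| \le \mathrm{diam}(A)$. Now comes the crux: a naive bound gives $|pq| = O(1)$ (the long direction of $A$), which is useless. The point is that $Q_x[p,q]$ is a subpath of a $(1+\epsilon)$-shortest path $Q_x$ for the pair $x_1,x_2$, \emph{and simultaneously} a subpath of a $(1+\epsilon)$-shortest path $Q_y$ for $y_1,y_2$. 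So the two endpoints $p,q$ are "nearly collinear'' with both $x_1,x_2$ and with $y_1,y_2$: specifically, $|x_1p| + |pq| + |qx_2| \le (1+\epsilon)|x_1x_2|$ and likewise with $y$'s, which forces $pq$ to be almost parallel to \emph{both} $x_1x_2$ and $y_1y_2$. But those two directions differ by angle $\Theta(j\sqrt{\epsilon})$. A segment that makes angle $\le \theta_x$ with $x_1x_2$ and angle $\le \theta_y$ with $y_1y_2$ can only be long if $\theta_x + \theta_y \gtrsim \angle(x_1x_2, y_1y_2) = \Theta(j\sqrt{\epsilon})$; and the $(1+\epsilon)$-stretch bound controls how much a subpath can deviate from the straight line — the transverse displacement of $pq$ from the line $x_1x_2$ relative to its length $|pq|$ satisfies (transverse)${}^2/|pq|^2 = O(\epsilon)$ by the same reflection/Pythagoras computation as in Claim~\ref{clm:bell}, so the angle of $pq$ to $x_1x_2$ is $O(\sqrt{\epsilon})$. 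Combining: if $pq$ makes angle $O(\sqrt{\epsilon})$ with $x_1x_2$ but the bell of $y$ (which contains $pq$) only extends a transverse distance $2\sqrt{\epsilon}$ from $y_1y_2$, and $x_1x_2$ leaves that transverse band at rate $\sin(\Theta(j\sqrt{\epsilon})) = \Theta(j\sqrt{\epsilon})$ per unit length, then $|pq| \le O(\sqrt{\epsilon}) / \Theta(j\sqrt{\epsilon}) = O(1/j)$. To squeeze out the second factor of $1/j$ I would instead directly bound $\mathrm{diam}(A)$: parametrize along $x_1x_2$; the constraint "inside $\be(y_1y_2,2\sqrt{\epsilon})$'' confines the parameter to an interval of length $O(\sqrt{\epsilon}/(j\sqrt{\epsilon})) = O(1/j)$... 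I expect, after doing the trigonometry honestly, a clean bound $|pq| = O(1/j)$, and then I must find the extra $1/j$ — which should come from the fact that $d_{ST_P}(p,q)$ being close to $|pq|$ combined with $p,q$ being in $A$ forces $|pq|$ itself to be $O(\sqrt{\epsilon})\cdot(1/j)/\sqrt{\epsilon}$... more precisely, I will set up the parallelogram $A$ explicitly with side lengths $\Theta(1/j)$ (along the near-common direction, bounded by how fast the bells separate) and $\Theta(\sqrt{\epsilon})$ (transverse), deduce $\mathrm{diam}(A) = \Theta(1/j)$ for $j \le 1/\sqrt{\epsilon}$, and then observe $w(Q_x\cap Q_y) = d_{ST_P}(p,q) \le |Q_x[p,q]| \le$ the part of $Q_x$'s total length budget spent inside $A$, which by the stretch bound is $\le (1+\epsilon)|x_1x_2| - |x_1p| - |qx_2|$; bounding this against $\mathrm{diam}(A)$ gives $O(1/j)$, and iterating the argument with the roles of the transverse and longitudinal directions (using that $Q_x[p,q]$ must also be nearly straight in the $y$-frame) yields the quadratic $O(1/j^2)$.

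The main obstacle I anticipate is the last step: getting the \emph{quadratic} dependence $O(1/j^2)$ rather than merely $O(1/j)$. The linear bound follows relatively directly from confining the shared subpath to a parallelogram of width $\Theta(\sqrt{\epsilon})$ and length $\Theta(1/j)$. To get the second factor I believe one must use the $(1+\epsilon)$-approximate-shortest-path property \emph{twice} — once to say $Q_x[p,q]$ barely deviates from $x_1 x_2$, and once to say it barely deviates from $y_1 y_2$ — and combine these with the $\Theta(j\sqrt\epsilon)$ angle between the two reference segments. Concretely, the "slack'' in $Q_x$'s shortest-path budget that is available to be spent on the detour $p \to q$ is at most $\epsilon|x_1x_2| = \Theta(\epsilon)$; and a detour through a region where $x_1x_2$ and $y_1y_2$ diverge must "pay'' a transverse cost. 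Balancing the $\Theta(\epsilon)$ budget against the geometry of the crossing (detour length $L$ incurs excess $\Omega(L \cdot (j\sqrt\epsilon)^2)$ by the cosine expansion $1 - \cos\theta \approx \theta^2/2$ applied to the angle between the detour and the straight segment) gives $L \cdot (j\sqrt\epsilon)^2 = O(\epsilon)$, i.e.\ $L = O(1/j^2)$. I would therefore structure the proof as: (i) set up $A$ and show $Q_x\cap Q_y \subseteq A$; (ii) reduce to a single subpath $Q_x[p,q]$ via Observation~\ref{obs:single-path}; (iii) show the direction of $\overrightarrow{pq}$ makes angle $\Omega(j\sqrt{\epsilon})$ with \emph{at least one} of $x_1x_2$, $y_1y_2$ (since those two differ by that much); (iv) WLOG it is $x_1x_2$, and then the excess length $|x_1p| + |Q_x[p,q]| + |qx_2| - |x_1x_2| \ge |pq|(1-\cos\Omega(j\sqrt\epsilon)) = \Omega(|pq| \cdot j^2\epsilon)$ must be $\le \epsilon|x_1x_2| = O(\epsilon)$, giving $|pq| = O(1/j^2)$; (v) finally $w(Q_x\cap Q_y) = d_{ST_P}(p,q)$, and since $Q_x[p,q]$ lies in the width-$\Theta(\sqrt\epsilon)$ bell it has length $O(|pq| + \epsilon) = O(1/j^2)$ for $j = O(\epsilon^{-1/2})$. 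Step (iv) is where I expect to spend the most care, since the "$\ge |pq|(1-\cos\theta)$'' inequality for the excess length of a path that starts and ends on a line needs the path to actually reach a point at transverse distance proportional to $|pq|\sin\theta$ — which is exactly what step (iii) and the angle bound supply — and one must be careful that $Q_x[p,q]$ could wiggle, but since it is a shortest path in $ST_P$ between $p$ and $q$, replacing it by the straight segment $pq$ only decreases length, so the bound $d_{ST_P}(p,q) \ge |pq|$ suffices and the excess is measured correctly against the straight-line detour.
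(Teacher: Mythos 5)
Your plan is correct and, in its final form (steps (i)--(v)), is essentially the paper's proof: the paper likewise confines the endpoints $p,q$ of the (single, by Observation~\ref{obs:single-path}) shared subpath to the parallelogram $A$, shows via ``admissible triangles'' that $q$ cannot lie within angle $\beta/8=\Omega(j\sqrt{\epsilon})$ of both directions $px_2$ and $py_2$, and then applies exactly your budget-versus-cosine balance $Q_x[p,q](1-\cos\gamma)\le\sqrt{2}\epsilon$ with $\gamma=\Omega(j\sqrt{\epsilon})$ to get $Q_x[p,q]=O(1/j^2)$. The only substantive details you leave implicit are the ones the paper devotes its auxiliary claims to: establishing that $p,q$ are at distance $\Omega(1)$ from $S$ so that the angle at $p$ between the directions to $x_2$ and $y_2$ is genuinely $\Theta(\beta)$, and the orientation case analysis when $Q_x$ and $Q_y$ traverse the shared subpath in opposite directions.
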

\begin{proof}
We assume  without loss of generality that $|x_2y_2| \ge |x_1y_1|$.
If $x_1x_2 \cap y_1y_2 = \emptyset$, then $w(Q_x \cap Q_y) = 0$ by Observation \ref{obs:bell-disjoint}, and we are done.
We henceforth assume that $x_1x_2 \cap y_1y_2 \ne \emptyset$ and let $o = x_1x_2\cap y_1y_2$. Since $|x_2y_2|\geq |x_1y_1|$, we have
$d(o,S),|ox_2|,|oy_2| \geq \frac{1}{2}$. Let $\beta = \angle x_2oy_2$.

 Let $A$ be the parallelogram given by $A = \be(x_1x_2, 2\sqrt{\epsilon})\cap \be(y_1y_2,2\sqrt{\epsilon})$, and let $a_1,a_2,a_3,a_4$ be the vertices of $A$, where $a_1,a_2,a_3,a_4$ are closest to the left, top ($N$), right and bottom ($S$) sides of the square $U$, respectively (see Figure~\ref{clm:bell}(b)).
We now bound $|oa_4|$. Since $\pi/4 \le \angle ox_2y_2 \leq \pi/2$, we have:
\begin{equation}\label{eq:beta}
    \sin(\beta) = \frac{(\sin \angle ox_2y_2) |x_2y_2|}{|oy_2|}\geq  \frac{1}{\sqrt{2}}\frac{|x_2y_2|}{|oy_2|}  \geq |x_2y_2|/2  = \frac{j\sqrt{\epsilon}}{2}.
\end{equation}

and
\begin{equation}\label{eq:beta-upper}
\sin(\beta) ~\leq~ \frac{|x_2y_2|}{|oy_2|} ~\leq~ 2j \sqrt{\epsilon}
\end{equation}

\begin{wrapfigure}{r}{0.2\textwidth}
	\vspace{-20pt}
	\begin{center}
		\includegraphics[width=0.2\textwidth]{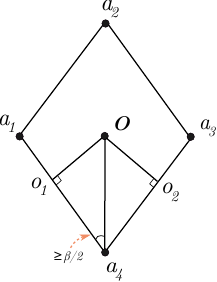}
	\end{center}
	\vspace{-10pt}
\end{wrapfigure}

Thus, by Equation~\ref{eq:useful-ineq},
\begin{equation}\label{eq:beta-bound}
\frac{j\sqrt{\epsilon}}{2} \leq \beta \leq 4j \sqrt{\epsilon}.
\end{equation}
 Let $o_1,o_2$ be the projections of $o$ onto the lines that go through the line segments $a_1a_4$ and $a_4a_3$, respectively. Since $\angle o_1a_4o + \angle oa_4o_2 = \beta$, at least one among
$\angle o_1a_4o$ and $\angle oa_4o_2$, without loss of generality  $\angle o_1a_4o$, must have degree at least $\beta / 2$.
Thus $\beta / 2 \le \angle o_1a_4o \le \pi/2$, and so
$\sin \angle o_1a_4o \ge \sin (\beta / 2) \ge (\sin \beta) / 2$, which implies that
\begin{equation*}
    |oa_4| ~=~ \frac{|oo_1|}{\sin \angle o_1a_4o} ~\leq~ \frac{2\sqrt{\epsilon}}{\sin (\beta/2)} ~\leq~ \frac{4\sqrt{\epsilon}}{\sin \beta} ~\leq~ \frac{8\sqrt{\epsilon}}{j\sqrt{\epsilon}} ~=~ \frac{8}{j}.
\end{equation*}
By the triangle inequality, we conclude that
    $d(a_4,S) \geq d(o,S) -|oa_4| \geq \frac{1}{2} - \frac{8}{j} \geq \frac{1}{4}$,
for any $j \ge 32$.

By Observation \ref{obs:single-path}, $Q_x\cap Q_y$ is a single path. Let $p$ and $q$ be its endpoints.  By Claim~\ref{clm:bell}, $p,q \in A$. Thus, $d(p,S)\geq d(a_4,S) \geq \frac{1}{4}$ and $d(q,S) \geq d(a_4,S) \geq 1/4$.
If $p = q$, then $w(Q_x \cap Q_y) = 0$ and Lemma~\ref{lm:intersect-length} holds.
Moreover, if $Q_x[p,q] \leq \sqrt{2}\epsilon$, then again the lemma must hold, since the fact that $|x_1 y_1|,|x_2 y_2| \le 1$ yields $j \leq  \frac{1}{\sqrt{\epsilon}}$.
We henceforth assume that $p \neq q$ and
\begin{equation} \label{eq:trivial-lb-Qx-Qy}
Q_x[p,q]  > \sqrt{2}\epsilon.
\end{equation}

Let $L_{x}$ (respectively, $L_y$) be the line going through $p$ and parallel to $x_1x_2$  (resp., $y_1y_2$). Let $L'_{x}$ (respectively, $L'_{y}$) be the line going through $q$ and parallel to $x_1x_2$  (resp., $y_1y_2$). Note that $\angle L_x p L_y = \angle L_x' qL_y' = \beta$.
By construction, it is readily verified that all lines $L_x,L_y,L'_x,L'_y$ intersect $S$; we henceforth define $x = L_x\cap S, y = L_y\cap S, x' = L_x' \cap S, y' = L_{y}'\cap S$.

\begin{claim}\label{clm:ypy2-and-more} All angles $\angle ypy_2, \angle xpx_2, \angle y'qy_2, \angle x'qx_2$ are at most $32\sqrt{2\epsilon}$.
\end{claim}
\begin{proof}
By symmetry, it suffices to bound $\angle ypy_2$.  Since $y \in \be(y_1y_2,2\sqrt{\epsilon})\cap S$,
Observation~\ref{obs:bell-on-boundary} yields $|y_2y| \leq 4\sqrt{2\epsilon}$. Thus, $\sin (\angle y_2 p y)\leq \frac{|yy_2|}{|py_2|} \leq \frac{|yy_2|}{d(p,S)} \leq 4|yy_2|  = 16\sqrt{2\epsilon}$. By Equation~\ref{eq:useful-ineq}, $\angle y_2 p y \leq 32\sqrt{2\epsilon}$. \QED
\end{proof}
\begin{claim}\label{clm:x2py2-lb}
$\beta/2 ~\leq ~\angle x_2py_2, \angle x_2qy_2 ~\leq~  2\beta$.
\end{claim}
\begin{proof}
By symmetry, it suffices to bound $\angle x_2py_2$.
Recall that $ j\sqrt{\epsilon}/2 \leq \beta$.  By Claim~\ref{clm:ypy2-and-more}, when $j$ is sufficiently large, it holds that:~~
$\angle x_2py_2 ~\leq~   \angle x_2px + \angle xpy +  \angle y_2py ~\leq~ \beta + 64\sqrt{2\epsilon} ~\leq~ 2\beta,$
\\$\angle x_2py_2 ~\geq~  \angle xpy  -  \angle x_2px -  \angle y_2py ~\geq~ \beta - 64\sqrt{2\epsilon} ~\geq~ \beta/2$.
\QED
\end{proof}

Let $z_1,z_2$ be two points in $P_1$ and $P_2$, respectively.  
Let $L_S$ be the line containing $S$ side of square $U$. Let $Q_z$ be the shortest path in the spanner between $z_1$ and $z_2$ in $ST_P$.  Let $a$ be any point on $Q_z$. We define the \emph{admissible triangle} of $a$ w.r.t.\ $z_2$ to be the triangle $w_1 aw_2$ such that  (a) $w_1,w_2 \in L_S$, (b)  $az_2$ is the bisector of the triangle $w_1aw_2$ and (c) $\angle w_1aw_2 = \beta/4$ (see Figure~\ref{fig:lb-ad}(a)).

\begin{figure}[!htb]
	\center{\includegraphics[width=0.9\textwidth]
		{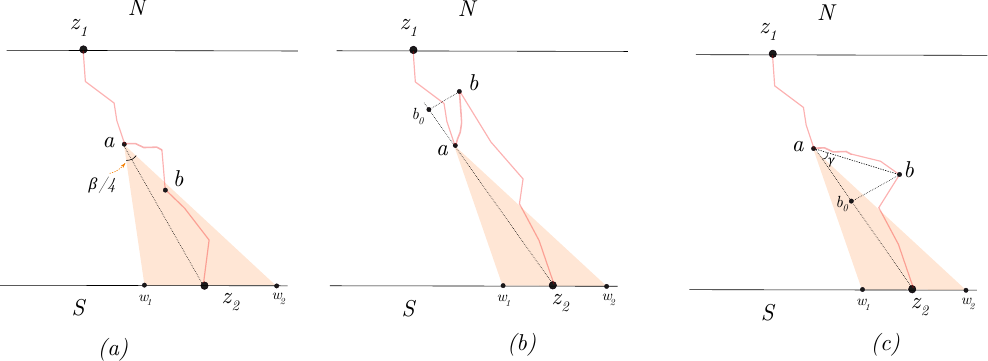}}
	\vspace{-0.4cm}
	\caption{ The light red path is $Q_z$. (a) The light orange shaded region is the admissible triangle of $a$ w.r.t.\ $z_2$. (b) The case where $b_0$ does not belong to the line segment $az_2$. (c) The case where $b$ does not belong to the admissible triangle of $a$ w.r.t.\ $z_2$.}
	\label{fig:lb-ad} 
\end{figure}

\begin{claim} \label{clm:admissible} Let $a,b$ be two points in $Q_z$ such that (1) $a \in Q_z[z_1,b]$ and (2) $w(Q_z[a,b]) >  \max\{\frac{768\sqrt{2}}{j^2}, \sqrt{2} \eps\}$. Then $b$ is in the admissible triangle of $a$  w.r.t.\ $z_2$.
\end{claim}
\begin{proof}
	Let $b_0$ be the projection of $b$ on $az_2$. Observe that $|bz_2| \geq |b_0z_2|$ by the definition of $b_0$. We claim that $b_0$ must belong to the line segment $az_2$ since otherwise, we have $|b_0z_2| \geq |az_2|$ (see Figure~\ref{fig:lb-ad}(b)). Since  $|z_1z_2| \leq \sqrt{2}$ and $w(Q_z[a,b]) > \sqrt{2} \epsilon$ by assumption (2) of the claim, we have:	
	\begin{equation*}
		\begin{split}
			w(Q_z) ~&\geq~ |z_1a| + w(Q_z[a,b]) + |bz_2|~\geq~    |z_1a| + Q_x[a,b] + |b_0z_2|  \\
			~&\geq~   |z_1a| + w(Q_z[a,b]) + |az_2| ~\geq~ |z_1z_2| +  w(Q_z[a,b])\\
			&> |z_1z_2| +  \sqrt{2}\epsilon \geq (1+\epsilon)|z_1z_2|,
		\end{split}
	\end{equation*}
	contradicting the fact that $Q_z$ is a $(1+\epsilon)$-spanner path for the pair $z_1,z_2$.

	Suppose that $b$ is not in the admissible triangle of $a$ w.r.t.\ $z_2$ (see Figure~\ref{fig:lb-ad}(c)). Then $\gamma  = \angle baz_2 \geq \beta/8 \geq \frac{j\sqrt{\epsilon}}{16}$ by Equation~\eqref{eq:beta-bound}. 
	We now show that:
	\begin{equation}\label{eq:ab-length}
		w(Q_z[a,b]) - |ab_0| \leq \sqrt{2}\epsilon
	\end{equation}
	
	If Equation~\eqref{eq:ab-length} is not true, by the triangle inequality and the fact that $|z_1z_2| \le \sqrt{2}$,
	\begin{equation*}
		\begin{split}
			w(Q_z) & \geq |z_1a| + w(Q_z[a,b]) + |bz_2|\\
			&= (|z_1a| + |ab_0| + |bz_2|) + (w(Q_z[a,b])- |ab_0|)\\
			&\geq  (|z_1a| + |ab_0| + |b_0z_2|) + (w(Q_z[a,b])- |ab_0|)\\
			&= |z_1z_2| + (w(Q_z[a,b])- |ab_0|) > |z_1z_2| + \sqrt{2} \eps \geq (1+\eps)|z_1z_2|
		\end{split}
	\end{equation*}
	contradicting the fact that $Q_z$ is a $(1+\epsilon)$-spanner path for the pair $z_1,z_2$.  Thus Equation~\eqref{eq:ab-length} must hold.
	
	Observe that $|ab_0|\leq |ab| \cos \gamma \leq Q_z[a,b]\cos \gamma$. By Equation~\eqref{eq:ab-length}, we thus have $Q_z[a,b] ~\leq~ \sqrt{2}\epsilon + Q_z[a,b] \cos \gamma$, which yields
		$Q_z[a,b] ~\leq~ \frac{\sqrt{2}\epsilon}{1-\cos\gamma} ~\leq~\frac{3\sqrt{2}\epsilon}{\gamma^2} ~\le~  \frac{768 \sqrt{2}}{j^2}$,  which contradicts assumption (2) in the claim. \QED
\end{proof}

\begin{figure}[!htb]
	\center{\includegraphics[width=0.6\textwidth]
		{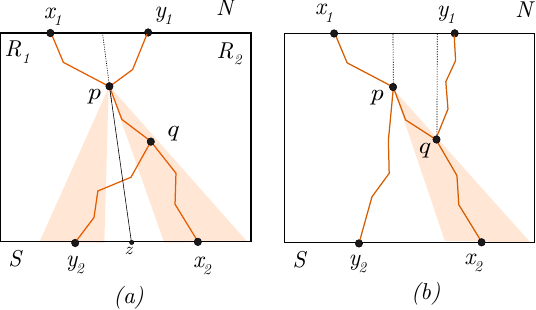}}
	\vspace{-0.4cm}
	\caption{ The light orange shaded regions are the admissible triangles. (a) Case 1: $p \in Q_y[y_1,q]$ and  (b) Case 2: $q \in Q_y[y_1,p]$.}
	\label{fig:intersect}
\end{figure}

Suppose for contradiction that Lemma~\ref{lm:intersect-length} does not hold, and specifically, that $w(Q_x[p,q]) > \frac{768\sqrt{2}}{j^2}$. This implies $w(Q_x[p,q]) > \max\{\frac{768\sqrt{2}}{j^2}, \sqrt{2} \epsilon\}$ by Equation~\eqref{eq:trivial-lb-Qx-Qy}.   Without loss of generality, we assume that $p \in Q_x[x_1,q]$. There are only two cases with respect to the position of $p$ and $q$ on $Q_y$: (1) $p \in Q_y[y_1,q]$ or (1) $q \in Q_y[y_1,p]$. See Figure~\ref{fig:intersect}.
\vspace{0.2cm}
\\\emph{Case 1: $p \in Q_y[y_1,q]$.}  By applying Claim~\ref{clm:admissible} with $(z_1,z_2)= (y_1,y_2), a = p, b = q$, it holds that $q$ is in the admissible triangle of $p$ w.r.t.\ $y_2$. Also by applying Claim~\ref{clm:admissible} with $(z_1,z_2) = (x_1,x_2), a = p, b = q$, it holds that $q$ is in the admissible triangle of $p$ w.r.t.\ $x_2$. Let $pz$ be the bisector of the angle $x_2py_2$ where $z \in S$ (see Figure~\ref{fig:intersect}(a)). Let $R_1$ and $R_2$ be the left and the right regions, respectively, of the square separated by the line containing $pz$. By Claim~\ref{clm:x2py2-lb}, $\angle zpx_2 ~= ~ \angle x_2py_2/2 ~\geq~  \beta/4$.   Since $q$  is in the admissible triangle of $p$ w.r.t.\ $x_2$, $\angle x_2 p q  \leq \beta/8$. Thus, $q \in R_2$.  By the same argument, since $q$ is  in the admissible triangle of $p$ w.r.t.\ $y_2$, $q$ must be in $R_1$, which is  a contradiction.
\vspace{0.2cm}
\\\emph{Case 2: $q \in Q_y[y_1,p]$.}  By applying Claim~\ref{clm:admissible} with $(z_1,z_2) = (x_1,x_2), a = p, b = q$,  it holds that $q$ is in the admissible triangle of $p$ w.r.t.\ $x_2$. By Claim~\ref{clm:ypy2-and-more} the acute angle between $px_2$ and $S$ is at least $\pi/4 - \angle xpx_2 ~ \geq \pi/4 - 32\sqrt{2\epsilon} ~>~ \pi/6$ when $\epsilon$ is sufficiently smaller than $1$. Since $\beta \leq \pi/2$ and $q$ is in the admissible triangle of $p$ w.r.t.\ $x_2$ , $ \angle qpx_2 ~\leq~   \beta/8 ~<~ \pi/6$. It follows that $d(p,N) < d(q,N)$. (See Figure~\ref{fig:intersect}(b).)

 By applying Claim~\ref{clm:admissible} with $(z_1,z_2) = (y_1,y_2), a= q,  b = p$, it holds that $p$ is in the admissible triangle of $q$ w.r.t.\ $y_2$. Thus by a symmetric argument as above, we get $d(q,N) < d(p,N)$, which is a contradiction.

This completes the proof of Lemma~\ref{lm:intersect-length}. \QED
\end{proof}

\paragraph{Proof of Proposition~\ref{lm:lb-weight}.}
Consider a pair $x_1\in P_1,x_2\in P_2$ of points. Let $Q_x$ be a shortest path between $x_1$ and  $x_2$ in $ST_P$.  We say that the pair $x_1,x_2$ contributes a {\em positive cost} of $w(Q_x)$ to $w(ST_P)$, and note that $w(Q_x) \ge 1$. 
Denote by $\bar P(x_1,x_2)$ the set of pairs $y_1 \in P_1, y_2 \in P_2$, such that $|\{y_1,y_2\}\cap \{x_1,x_2\}|\leq 1$.
For every pair $y_1, y_2 \in \bar P(x_1,x_2)$,
we charge a {\em negative cost} of $w(Q_x\cap Q_y)$ to $w(ST_P)$, and associate it with the pair $x_1,x_2$.
The sum of negative costs associated with pair $x_1,x_2$, denoted by $\ngc(x_1,x_2)$, is given by $\ngc(x_1,x_2) = \sum_{y_1,y_2 \in \bar P(x_1,x_2)}  w(Q_x \cap Q_y).$
\begin{observation} \label{conclude}
$w(ST_P) \ge \sum_{x_1 \in P_1, x_2 \in P_2} \left( w(Q_x) - \ngc(x_1,x_2)\right).$
\end{observation}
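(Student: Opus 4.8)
The plan is to prove this by a short edge-counting (Bonferroni-type inclusion-exclusion) argument; essentially all of the substance of the weight lower bound resides in Lemma~\ref{lm:intersect-length} and in the fact that $w(Q_x)\ge 1$, while the present observation is only the bookkeeping step that assembles them. First I would use the estimate
\[
w(ST_P)\;\ge\;w\!\left(\bigcup_{x_1\in P_1,\,x_2\in P_2} Q_x\right),
\]
i.e., $w(ST_P)$ is at least the total weight of the edges of $ST_P$ that lie on at least one of the fixed shortest paths $Q_x$; this holds simply because every $Q_x$ is a subgraph of the planarized spanner $ST_P$. Planarization is used here only to guarantee that each $Q_x$ is a simple path in the graph $ST_P$, so that the edge sets of two such paths have a well-defined intersection and $w(Q_x\cap Q_y)$ is exactly the total weight of the edges common to $Q_x$ and $Q_y$.

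Next, for each edge $e$ of $ST_P$ let $n(e)$ be the number of ordered pairs $(x_1,x_2)\in P_1\times P_2$ whose fixed shortest path $Q_x$ contains $e$. Interchanging the order of summation yields the two identities
\[
\sum_{x_1\in P_1,\,x_2\in P_2} w(Q_x)\;=\;\sum_{e} n(e)\,w(e),
\qquad
\sum_{x_1\in P_1,\,x_2\in P_2}\ngc(x_1,x_2)\;=\;\sum_{e} n(e)\bigl(n(e)-1\bigr)\,w(e).
\]
For the second identity, observe that since $P_1$ and $P_2$ are disjoint, $\bar P(x_1,x_2)$ is precisely the set of all pairs in $P_1\times P_2$ other than $(x_1,x_2)$ itself; hence, fixing an edge $e$, the ordered pairs $\bigl((x_1,x_2),(y_1,y_2)\bigr)$ with $(y_1,y_2)\in\bar P(x_1,x_2)$ and $e\in Q_x\cap Q_y$ number exactly $n(e)\bigl(n(e)-1\bigr)$, and each such ordered pair contributes $w(e)$ to $\sum_{x_1,x_2}\ngc(x_1,x_2)$.

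Finally I would compare the relevant quantities edge by edge. Writing $\mathbf 1[n(e)\ge 1]$ for the quantity that equals $1$ when $e$ lies on at least one $Q_x$ and $0$ otherwise, we have $w\!\left(\bigcup_x Q_x\right)=\sum_{e}\mathbf 1[n(e)\ge 1]\,w(e)$, so it suffices to verify the pointwise inequality
\[
\mathbf 1[n(e)\ge 1]\;\ge\;n(e)-n(e)\bigl(n(e)-1\bigr)\;=\;n(e)\bigl(2-n(e)\bigr)
\]
for every nonnegative integer $n(e)$: when $n(e)\in\{0,1\}$ both sides are equal (to $0$ and $1$, respectively), and when $n(e)\ge 2$ the right-hand side is nonpositive while the left-hand side is $1$. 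Multiplying by $w(e)\ge 0$, summing over all edges of $ST_P$, and invoking the two identities gives $w\!\left(\bigcup_{x_1,x_2} Q_x\right)\ge\sum_{x_1,x_2} w(Q_x)-\sum_{x_1,x_2}\ngc(x_1,x_2)$, and chaining this with the first estimate proves the observation. I do not expect any real obstacle; the only two points that need a moment's attention are the disjointness of $P_1$ and $P_2$ (used to pin down $\bar P(x_1,x_2)$ as exactly the pairs other than $(x_1,x_2)$) and the role of planarization (used to make the path-edge sets, and hence the overlaps $Q_x\cap Q_y$, unambiguous).
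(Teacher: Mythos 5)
Your proposal is correct and follows essentially the same route as the paper: the paper's proof is exactly the two steps $w(ST_P)\ge w\bigl(\bigcup_{x_1,x_2}Q_x\bigr)$ followed by "the inclusion-exclusion principle," and your edge-counting verification of the Bonferroni inequality $\mathbf 1[n(e)\ge 1]\ge n(e)-n(e)(n(e)-1)$ is just a careful justification of that second step. Your observations about the disjointness of $P_1,P_2$ and the role of planarization are both accurate and consistent with the paper's setup.
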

\begin{proof}
Observe that $w(ST_P) \geq w(\cup_{x_1 \in P_1, x_2\in P_2} Q_{x})$.  By the inclusion-exclusion principle,
\begin{equation*}
w(ST_P)   ~\geq~ \sum_{x_1 \in P_1, x_2 \in P_2}\left( w(Q_x)   - \sum_{y_1, y_2 \in \bar P(x_1,x_2)} w(Q_x\cap Q_y)\right)
~=~  \sum_{x_1 \in P_1, x_2 \in P_2}\left( w(Q_x)  - \ngc(x_1,x_2) \right) \inQED
\end{equation*}
\end{proof}

We next upper bound $\ngc(x_1,x_2)$.  By definition of $P$ and $\bar P(x_1,x_2)$, for any pair $y_1,y_2 \in \bar P(x_1,x_2)$,
we can write $d(\{x_1,x_2\}, \{y_1,y_2\}) = cj\sqrt{\epsilon\log(\frac{1}{\epsilon})}$ for some index $j$ satisfying $1\leq j\leq \frac{1}{c\sqrt{\epsilon\log(\frac{1}{\epsilon})}}$.
Fix an arbitrary index $j$ such that $1\leq j\leq \frac{1}{c\sqrt{\epsilon\log(\frac{1}{\epsilon})}}$, and note that there are at most $4j$ pairs $y_1,y_2 \in \bar P(x_1,x_2)$ such that $d(\{x_1,x_2\}, \{y_1,y_2\}) = cj\sqrt{\epsilon\log(\frac{1}{\epsilon})}$.
Take $k$ so that $k \sqrt{\eps}= cj\sqrt{\epsilon\log(\frac{1}{\epsilon})}$; by Lemma~\ref{lm:intersect-length}, the total contribution to $\ngc(x_1,x_2)$ by all such pairs is at most
$$O\left(\frac{1}{k^2}\right)   4j  ~=~ O\left(\frac{1}{\log (\frac{1}{\epsilon})} \cdot \frac{1}{c^2j^2}\right) 4j ~=~ O\left(\frac{1}{\log (\frac{1}{\epsilon})}\cdot \frac{1}{c^2j}\right).$$
Summing over all possible values of $j$
and using the fact that $c$ is sufficiently large,
 we get that  $\ngc(x_1,x_2)$ satisfies
\begin{equation*}
   \ngc(x_1,x_2) ~\le~ O\left(\frac{1}{c^2\log(\frac{1}{\epsilon})} \sum_{j=1}^\frac{1}{c\sqrt{\epsilon\log(\frac{1}{\epsilon})}} \frac{1}{j}\right) ~=~  O\left(\frac{1}{c^2\log(\frac{1}{\epsilon})} \log\left(\frac{1}{c\sqrt{\epsilon\log(\frac{1}{\epsilon})}}\right) \right)  ~\leq~ 1/2,
\end{equation*}
Hence $w(Q_x) - \ngc(x_1,x_2) \ge \frac{1}{2}$, and by Observation~\ref{conclude}
$$w(ST_P) ~\ge~ \sum_{x_1 \in P_1, x_2 \in P_2} \left( w(Q_x) - \ngc(x_1,x_2)\right) ~\geq~
\frac{|P_1||P_2|}{2} ~=~ \Omega\left(\frac{1}{\epsilon\log(1/\epsilon)}\right). \inQED$$

\section{Upper bounds for greedy light spanners in the Euclidean space} \label{sec:ub-spanner-Rd}

At a high level, we follow the lightness analysis framework of the greedy spanner by Borradaile, Le and Wulff-Nilsen~\cite{BLW19,BLW17};
 in what follows, we abbreviate  Borradaile, Le and Wulff-Nilsen~\cite{BLW19,BLW17} as BLW. We will focus on the  presentation of the BLW framework for
 \emph{doubling metrics}~\cite{BLW19}; we directly adapt their analysis to our setting. The \emph{doubling dimension} of a metric space is the smallest value $\rho$ such that every ball in the metric space can be covered by $2^\rho$ balls of half the radius; a metric space is called \emph{doubling} if its doubling dimension is bounded by some constant.
The doubling dimension is a generalization of the Euclidean dimension for arbitrary metric spaces, as the Euclidean space $\mathbb{R}^d$ equipped with an $\ell_p$ norm has doubling dimension $\Theta(d)$~\cite{GKL03}.
Thus, one can directly transfer the BLW analysis to $\mathbb{R}^d$. However, the lightness bound obtained in~\cite{BLW19} is $\epsilon^{-O(d)}$, with an unspecified constant behind the $O$-notation.

On the other hand, the goal here is to establish the precise constant in the exponent of $\epsilon$, specifically $O(\epsilon^{-d})$. This bound is optimal by the lower bound in Theorem~\ref{thm:lb-d}. To achieve the goal, we reformulate and refine the BLW analysis by providing several new insights that apply to Euclidean spaces. Our first insight is that by carefully tailoring the BLW analysis directly to $\mathbb{R}^d$, we can obtain a lightness bound of $O(\epsilon^{-(d+2)})$. This bound is better  than the  current best upper bound $O(\epsilon^{-2d})$  by Narasimhan and Smid~\cite{NS07} when $d\geq 3$. However, in $\mathbb{R}^2$, which is arguably the most important case, these two lightness bounds coincide at $\epsilon^{-4}$, which is far from the lower bound of $\epsilon^{-2}$. To shave the superfluous factor of $\epsilon^{-2}$, we employ several new insights  to improve BLW analysis. 
Before sketching the high-level ideas of our analysis, we briefly review the BLW approach, which we tailor to $\mathbb{R}^d$.  We denote by $S_{\gr} = S_\gr(P)$ the greedy spanner of the point set $P$.
 
\subsection{A brief review of BLW approach}\label{subsec:BLW}

Let $\bar{w} = \frac{w(\mst)}{n-1}$ be the average weight of an $\mst$ edge. BLW constructed a \emph{clustering hierarchy} $\mathcal{L}_0, \mathcal{L}_1,\ldots$:  a cluster in $\mathcal{L}_i$ is a subset of points\footnote{Clusters in BLW approach are defined to be subgraphs of the greedy spanner. Here we follow the standard definition of a cluster. } and is the union of some clusters in $\mathcal{L}_{i-1}$. Furthermore, each cluster in $\mathcal{L}_i$, called a \emph{level-$i$ cluster}, has diameter $O(L_i)$ where $L_i = \frac{L_{i-1}}{\epsilon}$ and $L_{0}  = \bar{w}$. (A diameter of a cluster is the maximum distance between any two points in $C$.)  Note here that the diameter of a level-$i$ cluster could be much smaller than $L_i$. The set of edges of the spanner is partitioned according to the clustering hierarchy: level-$i$ spanner edges have length $\Theta(L_i)$.

 Let $C$ be a level-$(i-1)$ cluster. We say that a level-$i$ edge of $S_{\gr}$ is incident to $C$ if one of its endpoints is in $C$.   By the standard packing argument, BLW showed that $C$ is incident to at most $\epsilon^{-O(d)}$ level-$i$ edges. We observe that the same packing argument in $\mathbb{R}^d$ gives a sharper upper bound of $O(\epsilon^{-d})$ on the number of level-$i$ edges incident to $C$.

They then introduce a charging argument to bound the weight of all edges via credits. Assume that somehow each level-$(i-1)$ cluster $C$ gets an amount of credit proportional to $L_{i-1}$,  namely $\Omega(\ce L_{i-1})$ for some parameter $\ce$ that depends on $\eps$ and $d$; the precise value of $\ce$ will be determined later. Recall the total weight of (at most $\epsilon^{-d}$) level-$i$ spanner edges incident to $C$ is $O(\epsilon^{-d}L_i) = O(\epsilon^{-(d+1)}L_{i-1})$. Thus if $\ce = \Omega(\epsilon^{-(d+1)})$, $C$'s credit is enough to pay for all of its incident level-$i$ spanner edges.

Roughly speaking, BLW connects the credits of clusters with the weight of the $\mst$ as follows. First, each cluster of $\mathcal{C}_0$ is allocated roughly $\ce \bar{w}$ credits, so that the total allocated credit is $\ce w(\mst)$, where $\ce$ is the aforementioned parameter whose value will be determined later. Importantly, credits are allocated once during the entire course of the analysis. That is, the total credit $\ce w(\mst)$ will be used to pay for \emph{every} spanner edge, and hence, $\ce$ is the lightness upper bound. Recall that $L_0 = \Theta(\bar{w})$, thus every level-$0$ cluster $C$ has $\Omega(\ce L_0)$ credits. As mentioned above (for a general level $i$), level-$0$ clusters pay for level-$1$ spanner edges, and they can afford this payment if $\ce$ is sufficiently large. (Note that level-$0$ spanner edges have total weight only $O(\epsilon^{-d})w(\mst)$, so we can ignore them from the charging argument.) However, to pay for level-$2$ spanner edges, level-$1$ clusters need to have credits. Since all the credit is allocated to level-$0$ clusters, level-$1$ clusters need to take partial credit out of level-$0$ clusters. Therefore, level-$0$ clusters are not allowed to use all of their credits to pay for level-$1$ spanner edges; they can only use the remaining credit (after level-$1$ clusters took out some) to do so.  The crux of the argument is to carefully balance the amount of credit level-$1$ clusters take from level-$0$ clusters so that they (level-$1$ clusters) have enough credit to pay for level-$2$  spanner edges, while the amount of leftover credit of level-$0$ clusters still suffice to pay for the level-$1$ spanner edges, and of course, to be able to apply the same principle to any level.
The key technical contribution of the BLW analysis is to achieve the balance by inductively guaranteeing the following two invariants at all levels $i\geq 1$:
\begin{itemize}[nolistsep,noitemsep]
\item[(a)]  Each cluster $X \in \mathcal{L}_{i}$ has at least $\Omega(\ce L_i)$ credits, which were taken from the clusters in $\mathcal{L}_{i-1}$.
\item[(b)] Each cluster $C \in \mathcal{L}_{i-1}$, after its credit was partly taken by   clusters in $\mathcal{L}_{i}$, has at least $\Omega(\ce \epsilon^{a} L_{i-1})$ leftover credits for some constant $a \geq 1$.
That is, $C$'s remaining credit is  at least an $\epsilon^{a}$ fraction of its total credit.
\end{itemize}

We note that the bound in Invariant (b) above of the BLW approach can be made as big as $\Omega(\ce \epsilon L_{i-1})$, but not bigger, i.e., the constant $a$ in the invariant needs to be at least 1.  Given this restriction and assuming the two invariants  are guaranteed at all levels, one can choose $\ce = \Theta(\epsilon^{-(d+2)})$ so that the remaining credit of $C$ is enough to pay for its incident level-$i$ spanner edges.  This yields a lightness upper bound of $O(\epsilon^{-(d+2)})$.  However, one highly nontrivial technical problem is that BLW cannot always guarantee invariant (b) for {\em all} the clusters in $\mathcal{L}_{i-1}$; we will elaborate more on this problem in the next section. Our main goal is to shave the $+2$ in the exponent of $\epsilon$ in the lightness bound.

\subsection{The high-level ideas of our analysis}\label{subsec:our-high-lv}

We employ the following two-step strategy.
\begin{itemize}[noitemsep]
\item{(Step 1)} We show that each cluster $C\in \mathcal{L}_{i-1}$ is incident to at most $O(\epsilon^{-d+1})$ level-$i$ spanner edges.
This shaves a $1/\epsilon$ factor from the naive bound $O(\epsilon^{-d})$ obtained by the standard packing argument.
\item{(Step 2)} We show that each cluster $C \in \mathcal{L}_{i-1}$, \emph{in most cases}, has $\Omega(\ce L_{i-1})$ leftover credits.
This shaves another $1/\epsilon$ factor from the credit lower bound achieved by the BLW approach.
\end{itemize}

The {\em in most cases} reservation in Step 2 cannot be omitted. Recall that in Invariant (b) mentioned in Subsection~\ref{subsec:BLW}, even the weaker bound of $\Omega(\ce \epsilon L_{i-1})$ is not achieved for {\em all} clusters in $\mathcal{L}_{i-1}$ by the BLW approach; in fact, it is highly nontrivial to achieve that weaker bound even in most cases.
Note that before $C$'s credit is taken out by clusters in $\mathcal{L}_{i}$, it only has $\Omega(\ce L_{i-1})$ credits. Thus,  Step 2 is essentially equivalent to showing that $C$ can keep a constant fraction of its credit to pay for its level-$i$ spanner edges.

Since each of the two steps shaves a $1/\epsilon$ factor, we ultimately obtain the optimal lightness upper bound of $O(\epsilon^{-d})$.
We now sketch the high-level ideas required for implementing each of these steps.

Let $C$ be a level-$(i-1)$ cluster and let $k$ be the number of level-$i$ spanner edges incident to $C$.  Let $C_1,\ldots, C_k$ be $C$'s {\em neighbors};
a level-$(i-1)$ cluster $C_j$ is said to be a neighbor of $C$ if there is a level-$i$ spanner edge connecting a point in $C_j$ with a point in $C$. For notational convenience, let $C_0 = C$.  Since the edges connecting $C_0$ and $C_j$ for all $j \in [k]$ have length $\Theta(L_i)$, using the fact that the greedy has stretch $1+\epsilon$, we can show that the distance between $C_p$ and $C_q$, for any $p\not= q \in \{0,\ldots, k\}$, is $\Omega(\epsilon L_i)$. Thus the standard packing argument implies that $k = O((\frac{L_i}{\epsilon L_i})^d) = O(\epsilon^{-d})$.  To shave  an $O(\epsilon^{-1})$ factor (see Lemma~\ref{lm:deg-K} below), we partition the space into $O(\epsilon^{-d+1})$ cones around an arbitrary point in $C$. Our insight is that, if the clustering hierarchy is constructed carefully, we can apply a basic property of the greedy spanner (see Fact~\ref{fact:edge-path-weight} below) to show that in each cone, at most one  cluster $C_j$  could be incident to $C$ in the greedy spanner $S_{\gr}$.  Thus the number of neighbors of $C$ is at most the number of cones, which is $O(\epsilon^{-d+1})$. This completes Step 1.

Before going into the details of Step 2, we first sketch the idea in the BLW approach used for obtaining the leftover credit bound of $\Omega(\ce \epsilon L_{i-1})$; here we take constant $a$ above to be $1$.  
For each level-$i$ cluster $X$, let $\child(X)$ be the set of level-$(i-1)$ clusters whose union is $X$. 
By a relatively simple argument, we can focus on the case where  $|\child(X)| = \Theta(\frac{1}{\epsilon})$.  The general idea is to show the existence of \emph{at least one} child of $X$ whose credit was not used for maintaining the credit lower bound for $X$. Inductively, such a child should have at least $\Omega(\ce  L_{i-1})$ credits by Invariant (a).
By distributing the credits of this child of $X$ to all other children of $X$, each child would get $\Omega(\ce L_{i-1}) /\Theta(\frac{1}{\epsilon}) = \Omega(\ce \epsilon L_{i-1})$ credits. This credit lower bound is of course insufficient for Step 2. To implement Step 2, we need to show
the existence of $\Omega(\frac{1}{\epsilon})$ children of $X$ (rather than a single child) whose credits were not used for maintaining the credit lower bound of $X$;
distributing the credits of this many children of $X$ to all other children of $X$ will provide the required leftover credit lower bound.

However, there are two technical subtleties of the charging argument that make the task challenging. To understand those subtleties, it is instructive to examine two simple ideas and explain why they fail. The first simple idea is to allow $X$  to take the credit of half of its children, say $\frac{1}{2\epsilon}$ children, assuming $|\child(X)| = \frac{1}{\epsilon}$.  Then $X$ would have at least $\frac{1}{2\epsilon}\Omega(\ce L_{i-1}) = \Omega(\ce L_i)$ credits. The first subtlety of the credit argument lies in the constant behind the $\Omega$ notation. Specifically, each cluster $X \in \mathcal{C}_i$ must have at least $g \ce L_i$ credits  for some universal constant $g$, for all $i$. Thus,  the total credit over $\frac{1}{2\epsilon}$ clusters in $\child(X)$ is only $ \frac{ g\ce L_{i-1}}{2\epsilon} =  \frac{g\ce L_i}{2}$, which is less than the credit lower bound $g \ce L_i$ required for $X$. Consequently, the second simple idea is to guarantee that $X$ has at least $\frac{2}{\epsilon}$ (rather than $\frac{1}{\epsilon}$)  children. Then, the total credit over half of $x$'s children would provide the required bound of $g \ce L_i$.
However, the second subtlety of the credit argument is that each cluster $X \in \mathcal{L}_i$ must have a {\em diameter}, denoted by $\dm(X)$, of at most $O(L_i)$, which in some cases prevents $X$ from having more than $\frac{1}{\eps}$ children.
The more children $X$ has, the bigger diameter it may have and, as a result, it should take more credits from its children to maintain a stronger credit lower bound of $\ce \dm(X)$.
This stronger lower bound is particularly useful, when applied inductively, in the complementary case where $X$ has relatively few children.
The precise credit invariant that we guarantee depends on both $\dm(X)$ and $L_i$ in the following natural way: Each cluster $X \in \mathcal{L}_i$ has at least $\ce \max\{\dm(X), L_i/200\}$ credits.

Due to the term $\ce \dm(X)$ in the credit lower bound, even showing the existence of a single child in $X$ whose credit is not used by $X$ (for maintaining its credit lower bound) is nontrivial. In the worst case, $\dm(X) \geq \sum_{C\in \child(X)}\dm(C)$ and hence,  to maintain the credit lower bound $\ce \dm(X)$, we must take the credits  of all children of $X$ since each child $C$ of $X$ is only guaranteed to have $\ce \dm(C)$ credits. The main insight of BLW  is that in this worst-case scenario, there would be no spanner edge of length $\Theta(L_i)$ connecting two different children of $X$. (There may be edges from $X$'s children to other level-$(i-1)$ clusters not in $X$, but this is not a problem because they can be paid for by the clusters not in $X$.)  If there is at least one spanner edge connecting two children of $X$, then 
BLW was able to show that there is at least one child in $X$ whose credit is not taken by $X$ (to maintain $X$'s credit invariant).

 Recall that our goal is to show the existence of at least $\Omega(\frac{1}{\epsilon})$ such children  of $X$ rather than one. Alas, this is not always possible as we have already pointed out. 
 To overcome this hurdle, we provide two novel and highly nontrivial insights:
\begin{itemize}[noitemsep]
\item{\bf Insight (1).~} We can relax the leftover credit lower bound of each child $C \in X$, from $\Omega(\ce L_{i-1})$ (as stated in Step 2)
to $\Omega(\ce \epsilon L_{i-1}\cdot \deg_{i-1}(C) )$ where $\deg_{i-1}(C)$ is the number of level-$i$ edges incident to $C$. When $|\deg_{i-1}(C)| \ll 1/\epsilon$, this lower bound is much smaller than   $\Omega(\ce L_{i-1})$.

\item{\bf Insight (2).~} We identify a special type of structure --- the precise definition will be given  in Section~\ref{app:details-ub-spanner} --- and  introduce the notion of \emph{debt} to handle the payment of level-$i$ edges incident to $X$  when  $X$ has such special structure.   If $X$ does not have such structure, we can show that $\sum_{C \in \child(X)} \dm(C) - \dm(X) = \Omega(t L_{i-1})$, where $t$ is the number of level-$i$ spanner edges between $X$'s children. This diameter surplus allows us to prove the leftover credit lower bound of Insight (1).
\end{itemize}

To be able to apply Insights (1) and (2), we first need to  identify all possible ``easy cases'' where the constructed clusters have a sufficiently large leftover credit. As a result, our cluster construction is significantly more technical and intricate than the construction of BLW. Before getting into the details of our analysis in Section~\ref{app:details-ub-spanner}, we next state a couple of known facts concerning the greedy spanner.

\subsection{Basic facts concerning the greedy spanner}

The following fact concerning the greedy spanner $S_{\gr}$ will be extensively used. We assume that the stretch in the greedy spanner $S_{\gr}$ is $(1+s\epsilon)$, for a sufficiently large constant $s \gg 1$.  We can recover a stretch of $(1+\epsilon')$ by setting $\epsilon' = s\epsilon$, thereby incurring a constant factor overhead in the   lightness bound.  

\begin{fact} \label{fact:edge-path-weight} For any edge $e \in S_{\gr}$  and any path $P$ in $S_{\gr} \setminus \{e\}$ 
between $e$'s endpoints, $(1 + s\epsilon)w(e) < w(P)$.
\end{fact}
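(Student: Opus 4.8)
The plan is to argue by contradiction. Suppose $e=(u,v)\in S_{\gr}$, so $w(e)=|uv|$, and suppose there is a path $P$ in $S_{\gr}-e$ between $u$ and $v$ with $w(P)\le(1+\epsilon)w(e)$. (Note $P$ has at least two edges, as the only one‑edge $u$–$v$ path is $e$ itself, which is absent from $S_{\gr}-e$; in particular the maximum‑weight edge of $P$ is not $e$.) Throughout I would use that the greedy algorithm processes pairs in non‑decreasing order of distance and inserts the pair $xy$ exactly when the current partial spanner contains no path of weight $\le(1+\epsilon)|xy|$ between $x$ and $y$; fix a tie‑breaking rule among pairs of equal distance, and write $S'$ for the partial spanner just before a given edge is considered.

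The first step is a structural dichotomy for $P$. Let $f=(a,b)$ be a maximum‑weight edge of $P$ and let $P_1,P_2$ be the two subpaths of $P-f$ (from $u$ to $a$ and from $b$ to $v$, say). Either (i) $w(f)<w(e)$, in which case every edge of $P$ has weight $<w(e)$ and hence was inserted before $e$ was considered, so $P\subseteq S'$ and $d_{S'}(u,v)\le w(P)\le(1+\epsilon)|uv|$, contradicting the insertion of $e$; or (ii) $w(f)\ge w(e)$, in which case every edge of $P_1\cup P_2$ has weight at most $w(P_1)+w(P_2)=w(P)-w(f)\le(1+\epsilon)w(e)-w(e)=\epsilon\,w(e)<w(e)$. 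In case (ii), let $g\in\{e,f\}$ be whichever edge the greedy process inserts later and let $h$ be the other; then when $g$ is considered, $S'$ already contains $h$ together with every edge of $P_1\cup P_2$ (each of weight $<w(e)\le w(g)$). If $g=f=(a,b)$, then $S'$ contains the $a$–$b$ walk $\mathrm{rev}(P_1)\cdot e\cdot\mathrm{rev}(P_2)$, of weight $w(P_1)+w(e)+w(P_2)=w(P)-w(f)+w(e)\le\epsilon\,w(e)+w(e)=(1+\epsilon)w(e)\le(1+\epsilon)w(f)$, contradicting the insertion of $f$. If $g=e$, then $g$ is processed no earlier than $f$, forcing $w(e)=w(f)$, and $S'$ contains $h=f$ together with $P_1\cup P_2$, hence all of $P$, so $d_{S'}(u,v)\le w(P)\le(1+\epsilon)|uv|$, contradicting the insertion of $e$. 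Every case yields a contradiction, so $w(P)>(1+\epsilon)w(e)$.

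The only genuinely fiddly point, and the main obstacle to a clean write‑up, is the bookkeeping around ties in the processing order; this is precisely why case (ii) routes through "whichever of $e,f$ is inserted later." Under the harmless assumption that all pairwise distances are distinct, the branch $g=e$ cannot occur and the whole argument collapses to: every edge of $P$ has weight $<w(e)$, hence $P\subseteq S'$, a contradiction. I would state the tie‑breaking convention explicitly at the outset to keep the presentation tidy, and otherwise keep the proof to the two short cases above.
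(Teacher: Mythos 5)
Your proof is correct and takes essentially the same route as the paper: identify the latest (equivalently, heaviest) edge on the cycle $e\circ P$, apply the greedy insertion criterion to it, and transfer the resulting inequality back to $e$. The paper phrases this via the \emph{last-added} edge of $P$ rather than a maximum-weight edge, which absorbs the tie-breaking bookkeeping you flagged as fiddly and yields a noticeably shorter write-up.
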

\begin{proof}
Let $e'$ be the last edge examined by the greedy algorithm among the edges of $P$ and $e$; it is possible that $e' = e$. 
By the description of the greedy algorithm, $w(e) \leq w(e')$.
Moreover, by the time the algorithm examines edge $e'$, all edges of the path $P' := P \circ e \setminus e'$, which connects the endpoints of $e'$, have already been added to the greedy spanner. By the description of the greedy algorithm, we have $(1+s\epsilon)w(e') < w(P')$, hence
	\begin{equation*}
	(1 + s\epsilon)w(e) ~\leq~ (1+s\epsilon)w(e') ~<~ w(P') ~=~ w(P) + w(e) - w(e') ~\leq~ w(P),
	\end{equation*}
as required.\QED
\end{proof}

We will use the following sparsity bound of the greedy spanner in our analysis.

\begin{lemma}[Narasimhan and Smid~\cite{NS07}, Lemma 14.2.2]\label{lm:sparsity-spanner-d}  For any $n$ and for any set of $n$ points in $\mathbb{R}^d$, every vertex of $S$ has degree $O(\epsilon^{1-d})$. In particular, $|E(S_{\gr})| = O(\epsilon^{1-d}n)$.
\end{lemma}

\subsection{Light greedy spanners in $\mathbb{R}^d$: A proof of Theorem~\ref{thm:ub-d}}\label{app:details-ub-spanner}

Let $\bar{w} = \frac{w(\mst)}{n-1}$ be  the average weight of $\mst$ edges. The main goal of this section is to show the following lemma.

\begin{lemma}\label{lm:weight-reduction} Let $\delta  = 4\cdot 10^{-4}$,  $J = \lceil \log_{1+\delta}\frac{1}{\epsilon} \rceil$ and $I = \lceil \log_{1/\epsilon} n \rceil$.  Fix an arbitrary index $j \in [0,J]$ 
	and let $S^j = \cup_{i=0}^{I} S_i^j$ where $S_i^j = \{e \in S_{\gr}: \frac{(1+\delta)^j\bar{w}}{\epsilon^{i}} \leq w(e) < \frac{(1+\delta)^{j+1}\bar{w}}{\epsilon^{i}}\}$. It holds that:
	\begin{equation*}
	w(S^j ) = O(\epsilon^{-d})w(\mst)
	\end{equation*}
\end{lemma}

The value of $\delta$ in Lemma~\ref{lm:weight-reduction} is somewhat arbitrary; that is, any sufficiently small constant works.  We next argue that Lemma~\ref{lm:weight-reduction} yields Theorem~\ref{thm:ub-d}.

\begin{claim}
If Lemma~\ref{lm:weight-reduction} is true, then Theorem~\ref{thm:ub-d} holds.
\end{claim}
\begin{proof}
First, by Lemma~\ref{lm:sparsity-spanner-d}, the greedy spanner has $O(\epsilon^{-d+1} n)$ edges, hence the total weight of the spanner edges of weight at most $\bar{w}/\epsilon$ is bounded by  $O(\epsilon^{-d+1} n \frac{\bar{w}}{\epsilon}) ~=~ O(\epsilon^{-d})w(\mst)$. Next, we observe that $\cup_{j=0}^J S^j$ contains every edge of weight at least $\frac{\bar{w}}{\epsilon}$ of $S_{\gr}$ since $w(e) \leq w(\mst)$ for every edge $e$. Thus, if Lemma~\ref{lm:weight-reduction} is true, then $w(S^j) ~=~ O(\epsilon^{-d})w(\mst)$ for every $j \in [0,J]$, hence $$w(S_{\gr}) ~=~ (J+1) O(\epsilon^{-d})w(\mst) ~=~ O(\epsilon^{-d}\log_{1+\delta}\frac{1}{\epsilon})w(\mst) ~=~ \tilde O_\eps(\eps^{-d}) w(\mst),$$ and so Theorem~\ref{thm:ub-d} holds.\QED	
\end{proof}

We now proceed to proving Lemma~\ref{lm:weight-reduction}. 
Fix an arbitrary index $j \in [J]$.
We call edges in $S_i^j$ \emph{level-$i$ edges}. Edges at the same level have equal weights up to a $(1+\delta)$ factor for a very small constant $\delta$, while edges at a higher level have weights larger by at least a factor of $\frac{1}{\epsilon} / (1+\delta) \approx \frac{1}{\epsilon}$.  
Let $L_i = \frac{(1+\delta)^{j+1} \bar{w}}{\eps^i}$; $L_i$ is an upper bound on the weight of level-$i$ edges. Here we abuse  by using $L_i$ as BLW; $L_i$ in our paper is defined differently. 

As described in Section~\ref{subsec:our-high-lv}, to bound $w(S^j)$, we will construct a hierarchy of clusters  $\{\mathcal{L}_0,\mathcal{L}_1,\ldots, \}$, 
assign appropriate credit to each cluster, and use the assigned credit to pay for every edge of $w(S^j)$; thus, the total weight of $w(S^j)$ will be bounded by the total amount of credit. 

Our construction becomes simpler if every $\mst$ edge has weight at most $\bar{w}$. To guarantee this property, we subdivide each $\mst$ edge $e$ of weight more than $\bar{w}$ into $\lceil\frac{w(e)}{\bar{w}}\rceil$ edges of weight at most $\bar{w}$. The subdividing points are called \emph{virtual points}. The clusters in our construction will also include virtual points; however, virtual points are not incident to any spanner edge and hence, they are only used to simplify the argument for paying spanner edges using credits. Next, we allocate each $\mst$ edge (of length at most $\bar{w}$) $\ce \bar{w}$ credits. 
\begin{observation}\label{obs:allocated-credit} The total amount of allocated credit is $O(\ce w(\mst))$.
\end{observation}
\begin{proof}
	The total amount of allocated credit is:
	\begin{equation*}
		\sum_{e \in \mst} \ce \bar{w}\lceil \frac{w(e)}{\bar{w}} \rceil \leq  \sum_{e \in \mst} \ce (w(e) + \bar{w}) = \ce w(\mst) + \ce \sum_{e \in \mst} \bar{w} = 2\ce w(\mst) ,
	\end{equation*} 
where in the above equation, we iterate over edges in $\mst$ \emph{before} the subdivision by virtual points.
	\QED
\end{proof}

We will use this amount of allocated credit, by carefully assigning it to clusters during the course of the analysis. The following two invariants will be  inductively guaranteed in our cluster construction for each $0 \le i \le I$:
\begin{itemize}[noitemsep]
		\item \textbf{(I1)~} $\dm(C) \leq g L_i$ for all $C \in \mathcal{L}_i$, where $g = 34$.
	\item \textbf{(I2)~} Each cluster $C \in \mc_i$ has at least $\ce \max\{\dm(C), \zeta L_i\}$ credits, where $\zeta = \frac{1}{200}$.
\end{itemize}
The constant values of $\zeta$ and $g$ are also somewhat arbitrary; all we need is that $\zeta$ will be sufficiently smaller than $1$ and $g$ will be sufficiently larger than $1$.

Maintaining both invariants while guaranteeing leftover credits to pay for edges in $S^j$ is a delicate task. To facilitate the task, we will associate each cluster $C$ with a subgraph of $S_{\gr}[C]$, which is a subgraph of $S_{\gr}$ induced by points in $C$; 
it is is easier to inductively bound the diameter of a subgraph of $S_{\gr}[C]$ than  bounding the diameter of $C$ directly. 
Clearly $\dm(C)$ is at most the diameter of the associated subgraph, and hence, we it suffices to guarantee both invariants for the subgraph rather than for $C$; that is, replacing $\dm(C)$ in Invariants (I1) and (I2) with the diameter of the subgraph of $S_\gr$ associated with $C$. We shall abuse the notation and use $C$ also for referting to the subgraph of $S_{\gr}$ associated with $C$.

One key idea in our cluster construction is the notion of \emph{debt}, which helps us in handling a certain structural case, as mentioned in the paragraph that discusses Insight (2) in Section~\ref{subsec:our-high-lv}. 
Intuitively, the debt of a cluster is the total weight of spanner edges incident to points in the cluster
that were left ``unpaid'' at lower levels of the construction. However, only clusters that can potentially gain credit in subsequent levels of the construction are allowed to have debt; we identify such clusters, called \emph{debted clusters},  
by examining the way in which they are connected via $\mst$ edges. To this end, we maintain a \emph{cluster tree} $\mathcal{T}_i$ for each level $i$, as described next.

\begin{itemize}[noitemsep]
	\item \textbf{(I3)~}   There is a cluster tree $\mathcal{T}_i$ whose nodes correspond to level-$i$ clusters and edges correspond to $\mst$ edges connecting level-$i$ clusters, such that debted clusters are leaves of $\mathcal{T}_i$ and each has debt at most $ 4g^2\zeta^{-2}\epsilon^{-2}(\sum_{j=1}^{i}L_j)$. Additionally, the credit of edges of $\mathcal{T}_i$ has not been  assigned to any cluster in levels less than $i$.
\end{itemize}

That is, we only allow leaves of $\mathcal{T}_i$ to have debt, and the debt of each cluster is not too big.  

To construct a set of clusters at level 0 satisfying invariants (I1-I3), we prove the following lemma.

\begin{lemma}\label{lm:base-case}  There is a vertex partition, including virtual points,
	such that each vertex set $C$ induces a subtree $T_C$ of $\mst$ such that $L_0 \leq \dm(T_C)\leq 6L_0$, and the total credit of all edges in $T_C$ is at least $\ce \max\{\dm(T_C),\zeta L_0\}$.
\end{lemma}
\begin{proof}
	We greedily break the $\mst$ into subtrees of diameter at least $L_0$ and at most $6L_0$ in two steps. In Step 1, in, we iteratively break a minimal subtree of diameter at least $L_0$ (and at most $2L_0$ since each edge has length at most $L_0$) from a tree of diameter at least $L_0$. After the first step, each remaining subtree, say $X$, has diameter at most $L_0$ and has an edge, say $e$ to a subtree $T_C$ formed in Step 1. In Step 2, we augment $X$ and $e$ to $T_C$. The augmentation in Step 2 (additively) increases the diameter of $T_C$ by at most $4L_0$ since $e$ has length at most $L_0$. Thus, $T_C$ has diameter at most $6L_0$ after Step 2.
	
	For each subtree $T_C$, let $D$ be a path realizing the diameter of $T_C$. Clearly, $D$ is a subpath of $\mst$. Recalling that each $\mst$ edge (of length at most $\bar{w}$) is allocated $\ce \bar{w}$ credits and since $\dm(T_C)\geq L_0$, it follows that the total credit of edges in $D$ is at least $\ce w(D) = \ce \dm(T_C) = \max\{\dm(T_C), \zeta L_0\}$. \QED
\end{proof}

 We take the level-$0$ clusters $\mathcal{L}_0$ to be vertices in the subtrees $T_C$ provided by Lemma~\ref{lm:base-case}.  Invariant (I1) is satisfied since $g > 6$. Invariant (I2) is satisfied by assigning the credit of the $\mst$ edges in $C$ to $C$. 
The cluster tree $\mathcal{T}_0$ has a node corresponding to a subtree of $\mst$ and each edge between two nodes is the edge connecting two corresponding subtrees.  Since level-$0$ clusters are disjoint subtrees of $\mst$ and have no debt,  Invariant (I3) is satisfied.

\subsubsection{Level-$i$ cluster construction}

We refer to level-$(i-1)$ clusters as \emph{$\epsilon$-clusters}. We simply use {\em clusters} to refer to  level-$i$ clusters. Likewise, we refer to level-$i$ spanner edges as \emph{spanner edges}, unless specified otherwise. Let $\mathcal{K}$ be the {\em cluster graph}, where each vertex of $\mathcal{K}$ corresponds to an $\epsilon$-cluster and each edge of $\mathcal{K}$ corresponds to a spanner edge connecting the two respective $\epsilon$-clusters. Let $\dk$ be the maximum degree of vetices  in $\mathcal{K}$. BLW used
the standard packing argument to show (Lemma 3.1 in~\cite{BLW19}) that $\dk = \epsilon^{-O(\ddim)}$, for metrics of doubling dimension $\ddim$. By adapting their proof to point sets in $\mathbb{R}^d$, specifically, using Lemma~\ref{lm:packing-Euclidean}, one can derive an improved upper bound of $O(\epsilon^{-d})$ on $\dk$. 
Our first insight is that the degree of $\mathcal{K}$ is smaller than this bound by a factor of $\eps$. 

\begin{wraptable}{r}{ 9cm}
	\centering
	\begin{tabular}{|l | l|}
		\hline
		$s$  & constant in the stretch $t = 1+s \epsilon$; $s\geq 10g+3$. \\ \hline
		$g$ & constant in Invariant (I1); $g = 34$. \\ \hline
		$\zeta$ & constant in Invariant (I3); $\zeta = 1/200$. \\ \hline
		$\delta$ & constant in Lemma~\ref{lm:weight-reduction}; $\delta = 4\cdot 10^{-4}$. \\ \hline
	\end{tabular}
	\caption{\footnotesize{Important constants used in cluster construction.}}
	\label{table:const}
\end{wraptable}

\begin{restatable}{lemma}{degK} \label{lm:deg-K} 
	$\mathcal{K}$ is a simple graph with $\dk = O(\epsilon^{-d+1})$, where $s \geq 40g + 3$, $\delta = \frac{1}{8(6g+2)}$ and $\epsilon \ll \frac{1}{s}$.
\end{restatable}

Recall 
our assumption that the spanner has stretch $(1+s \epsilon)$, for a sufficiently big constant $s$ (see Table~\ref{table:const}). 
The proof of Lemma~\ref{lm:deg-K} is somewhat technical, and is deferred to Section~\ref{subsec:Proof-Deg-K}.  

Let $\mathcal{T}_{i-1}$ be the cluster tree of $\epsilon$-clusters as guaranteed inductively by Invariant (I3) for level $i-1$. Let $\mathcal{G} = \mathcal{K}\cup \mathcal{T}_{i-1}$. We denote by $\mv(\mathcal{G})$ and $\me(\mathcal{G})$ the vertex and edge sets of $\mg$, respectively. 
Since $\mv(\mathcal{K}) = \mv(\mathcal{T}_{i-1})$, $\mathcal{T}_{i-1}$  is a spanning tree of $\mathcal{G}$. We refer to vertices of $\mathcal{G}$ as nodes. Instead of constructing level-$i$ clusters explicitly, we construct a collection of disjoint subgraphs of $\mathcal{G}$, where each subgraph can be mapped to a level-$i$ cluster in a natural way. Each node $\pzc[x]\in \mathcal{G}$ will be assigned a weight in the following way:
\begin{equation}\label{eq:weight-node}
	w(\pzc[x]) = \max\{\dm(\pzc[x]), \zeta L_{i-1}\};	
\end{equation}
 here $\dm(\pzc[x])$ is the diameter of the $\eps$-cluster $\pzc[x]$. Thus, edges and nodes of $\mathcal{G}$ are both weighted, where the weight of each edge is the Euclidean distance between its endpoints.  Given a path $\mathcal{P}$ in $\mathcal{G}$, we define the \emph{augmented weight} of $\mathcal{P}$, denoted by $\adm(\mathcal{P})$, to be  the total weight of  nodes and edges in $\mathcal{P}$. That is:
\begin{equation}\label{eq:def-adm}
\adm(\mathcal{P}) = \sum_{\mathpzc{x} \in \mv(\mathcal{P})}w(\mathpzc{x}) + \sum_{e\in \me(\mathcal{P})} w(e)
\end{equation}
   
The \emph{augmented distance} betweeen any two nodes of $\mathcal{G}$ is the minimum augmented weight of any path between them.   The {\em augmented diameter} of a subgraph $\mc$ of $\mathcal{G}$, denoted by $\adm(\mathcal{C})$, is the maximum augmented distance (in $\mc$) between any two nodes in $\mc$. 
Instead of bounding the diameter of a level-$i$ cluster, we can bound the augmented diameter of the corresponding subgraph $\mc$  of $\mathcal{G}$ since the diameter of the cluster is at most $\adm(\mc)$. If we can guarantee that $\mc$ has at least $\ce\max\{\adm(\mc),\zeta L_i\}$ 
credits and $\adm(\mc) \leq gL_i$, then the corresponding cluster will satisfy Invariants (I1) and (I2). 
Thus, we can work exclusively with subgraphs of $\mathcal{G}$ without reference to the corresponding level-$i$ clusters.

 For each node  $\pzc[x] \in \mv(\mathcal{G})$, denote by $\cred(\pzc[x])$ the credit of $\pzc[x]$. Similarly, the credit of an edge $e\in \me(\mg)$ is denoted by $\cred(e)$. Since edges of $\mathcal{K}$ are not $\mst$ edges, $\cred(e) = 0$ for every edge $e \in \me(\mathcal{K})$.
 
  Note by Invariants (I2) and (I3) for level $i-1$ that:
 \begin{equation}\label{eq:credit-nodes-edges}
 	\cred(\pzc[x]) \geq \ce w(\pzc[x]) \quad \mbox{and} \quad \cred(e) \geq \ce w(e) \quad \forall e \in \me(\mathcal{T}_{i-1})
 \end{equation}

 Let $\mathcal{C}$ be a subgraph of $\mathcal{G}$; we denote by $\cred(\mathcal{C})$ the total credit of nodes and edges in $\mathcal{C}$. That is,
 \begin{equation}\label{eq:cred-subgraph}
 	\cred(\mathcal{C}) =  \sum_{\mathpzc{x} \in \mv(\mathcal{C})}\cred(\mathpzc{x}) + \sum_{e\in \me(\mathcal{C})} \cred(e) 
 \end{equation}
Similarly, for a susbet of nodes $\mathcal{X}\subseteq \mv(\mathcal{G})$, we define $\cred(\mathcal{X}) = \sum_{\pzc[x] \in \mv(\mathcal{G})}\cred(\pzc[x])$.
 
\begin{lemma}\label{lm:prop-T} $\mathcal{T}_{i-1}$ satisfies the following: 
	\begin{enumerate}[nolistsep,noitemsep]
	\item For any path $\mathcal{P}$ of $\mathcal{T}_{i-1}$,  $\cred(\mathcal{P})  \geq \ce \adm(\mathcal{P})$.
		\item Each leaf node of $\mathcal{T}_{i-1}$ has debt at most $O(\epsilon^{-1})L_i$. Internal nodes have no debt.
		\item $\zeta \epsilon L_i \leq w(\pzc[x]) \leq g\epsilon L_i$ for every node $\pzc[x] \in \mathcal{T}_{i-1}$.
	\end{enumerate} 
\end{lemma}
\begin{proof}
Item (1) follows directly from Equation~\eqref{eq:credit-nodes-edges}. 
For Item (2), invariant (I3) implies that each leaf node of $\mathcal{T}_{i-1}$ has debt at most
\begin{equation*}
\begin{split}
4g^2 \zeta^{-2} \epsilon^{-2} \left(\sum_{a=1}^{i-1}L_a \right) &= ~ O(\epsilon^{-2})\left(\sum_{a=0}^{i-1}\epsilon^{a}\right)L_{i-1}~=  ~O(\epsilon^{-2})L_{i-1}  ~=~ O(\epsilon^{-1})L_i
\end{split}
\end{equation*}
when $\epsilon \leq 1/2$. Non-leaf nodes of $\mathcal{T}_{i-1}$ have  no debt by Invariant (I3).

Item (3) follows from the definition of the weight (Equation~\eqref{eq:weight-node}) and Invariant (1) for level $i-1$.\QED
\end{proof}

By Item (2) of Lemma~\ref{lm:prop-T} and Lemma~\ref{lm:deg-K}, the debt of any $\epsilon$-cluster $\mathpzc{x}$ does not exceed the (worst-case) total weight of all spanner edges incident to $\mathpzc{x}$ by more than a constant factor, when $d=2$. When $d \geq 3$, this debt is negligible compared to the worst-case bound on the total weight of all incident  spanner edges.

\paragraph{Stopping condition.} If $\adm(\mathcal{T}_{i-1}) \leq  gL_i$, we stop the cluster construction after level $i-1$, i.e., the level-$i$ cluster construction doesn't do anything, and this is the end of the cluster construction. Note that there can be no level-$j$ edges for any $j \ge i+1$, since a level-$j$ spanner edge, if  any, has length at least $\frac{L_{i+1}}{(1+\delta)} \geq  \frac{L_i}{2\epsilon} \gg g L_i$ when $\epsilon \ll \frac{1}{g}$; contradicting that each edge is the shortest path between its endpoints. 
We use all the credit of each $\epsilon$-cluster, say $\pzc[x]$, to pay for all of its incident level-$i$ spanner edges and debt. By Invariant (I1), $\pzc[x]$ has at least $\ce \zeta\epsilon L_i$ credits. Thus, by Lemma~\ref{lm:deg-K} and Item (2) of Lemma~\ref{lm:prop-T}, $\pzc[x]$'s credit is sufficient when  $\ce ~=~ \Omega(\epsilon^{-d})$.

Henceforth, we assume that $\adm(\mathcal{T}_{i-1}) > g L_i$. Our construction has five steps. As mentioned above, we will focus on the construction of subgraphs of $\mathcal{G}$, and each subgraph is mapped to a corresponding level-$i$ cluster in a natural way. In what follows, we will abuse notation by referring to subgraphs of $\mathcal{G}$ as clusters

\begin{definition}[Leftover Credit]\label{def:leftover} For each subgraph $\mathcal{C}$ of $\mathcal{G}$ constructed in the following steps that corresponds to a level-$i$ cluster, we assign at least $\ce \max\{\mathcal{C},\zeta L_i\}$ credits from nodes and $\mst$ edges in $\mathcal{C}$ to $\mathcal{C}$ so that it satisfies Invariant (I2). The remaining credit of nodes and $\mst$ edges in $\mathcal{C}$ is called the \emph{leftover credit} of $\mathcal{C}$.
\end{definition}

\paragraph{Step 0: Type-0 clusters.} A node $\pzc[x] \in \mathcal{G}$ is called a  \emph{high degree} node if its degree  in $\mathcal{G}$ is at least $2g\zeta^{-1} \epsilon^{-1}$; otherwise it is called a \emph{low degree} node. 

\begin{lemma}[Type-0 Clusters]\label{lm:Type-0} We can construct a collection $\mathbb{H}_0$ of subgraphs of $\mathcal{G}$, called Type-0 clusters, such that:
	\begin{enumerate}[noitemsep,nolistsep]
		\item Every high degree node and its neighbors in $\mathcal{G}$ are contained in subgraphs of $\mathbb{H}_0$. 
		\item Each subgraph $\mathcal{C}\in \mathbb{H}_0$ has $\adm(\mathcal{C})\leq 16L_i$ and  contains a high degree node $\pzc[x]$ and all of its neighbors.
		\item Let $\mathcal{F}_{i-1}$ be the forest obtained from $\mathcal{T}_{i-1}$ by removing every node in subgraphs of $\mathbb{H}_0$. Then, every node in $\mathcal{F}_{i-1}$ is incident to $2g\zeta^{-1} \epsilon^{-1} = O(\epsilon^{-1})$ spanner edges,  and every tree $\mathcal{T} \subseteq \mathcal{F}_{i-1}$ has $\adm(\mathcal{T}) \geq \zeta L_i$.
	\end{enumerate}
\end{lemma}
\begin{proof} We construct $\mathbb{H}_0$ in several steps. Initially, every node is  \emph{unmarked}.
	
\begin{wrapfigure}{r}{0.5\textwidth}
	\vspace{-30pt}
	\begin{center}
		\includegraphics[width=0.30\textwidth]{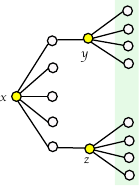}
	\end{center}
	\caption{\footnotesize{$\pzc[x]$ and its neighbors are grouped in step 1. Nodes $\pzc[y]$ and $\pzc[z]$ are augmented to the cluster of $\pzc[x]$ in step 2, and their neighbors in the green-shaded region are augmented in step 3.}}
	\label{fig:Type0}
		\vspace{-15pt}
\end{wrapfigure}

\begin{enumerate}
	\item For each high-degree unmarked  node $\pzc[x]$ whose neighbors are all unmarked, we form a subgraph from  $\pzc[x]$, its neighbors and $\pzc[x]$'s incident edges in $\mathcal{G}$, and then add the subgraph to  $\mathbb{H}_0$;
	we mark $\pzc[x]$ and all its neighbors.
	We repeat until every high-degree node has at least one marked neighbor.
	\item For each remaining high-degree unmarked node $\pzc[y]$,  where at least one neighbor of $\pzc[y]$ in $\mathcal{G}$ was marked in step 1,  
	we mark $\pzc[y]$ and then add an edge connecting $\pzc[y]$ and one arbitrary
	marked neighbor  of $\pzc[y]$, denoted by $\pzc[x]$, to the subgraph containing $\pzc[x]$. 
	We repeat this step until every high-degree node is marked.
	\item For each unmarked node $\pzc[y]$ that is a neighbor of at least one high-degree node, we mark $\pzc[y]$ and then add an edge connecting $\pzc[y]$ and one arbitrary high-degree neighbor of $\pzc[y]$, denoted by $\pzc[x]$, to the subgraph containing $\pzc[x]$. We repeat this step until every node $\pzc[y]$ with a high-degree neighbor is marked.
	\item  Let $\mathcal{T}$ be any connected component of the forest obtained from $\mathcal{T}_{i-1}$ by removing all marked nodes from $\mathcal{T}_{i-1}$. If $\mathcal{T}$  has augmented diameter less than $\zeta L_i$, it must be connected by an $\mst$ edge, say $e$, to a node, say $\pzc[x]$, marked in steps 1-3. We add $\mathcal{T}$ and $e$ to the subgraph containing $\pzc[x]$, and then mark every node of $\mathcal{T}$. We repeat this step until every tree in the forest induced by the unmarked nodes has augmented diameter at least $\zeta L_i$. We denote the forest by $\mathcal{F}_{i-1}$.
\end{enumerate}

Clearly, Items (1) and (3) of the lemma statement follow directly from the construction. As for Item (2), 
for every subgraph $\mathcal{C}\in \mathcal{H}_0$, by the construction in Step 1, $\mathcal{C}$ contains a high degree node and all of its neighbors in $\mathcal{G}$. 
Thus, it remains to bound the augmented diameter of $\mathcal{C}$.

By construction, after step 3, $\mathcal{C}$ has hop-diameter at most $6$ (see Figure~\ref{fig:Type0}).
Since each node has weight at most $g\epsilon L_i$ and each edge has weight at most $\max\{\bar{w},L_i\} = L_i$, $\adm(\mathcal{C}) \leq 6L_i + 7\epsilon g L_i$. 
In step 4, $\mathcal{C}$ is augmented by trees of augmented diameter at most $\zeta L_i$ in a star-like structure via $\mst$ edges, which implies that:
\begin{equation*}
\adm(\mathcal{C}) \leq 6L_i + 7\epsilon g L_i + 2\bar{w} + 2\zeta L_i \leq 16 L_i
\end{equation*}
In the above equation, 
we use the fact that $\bar{w} \leq L_i$, $\epsilon \ll \frac{1}{g}$ and $\zeta < 1/2$.	\QED
\end{proof}

The purpose of constructing Type-$0$ clusters is to guarantee that in subsequent steps of the construction, the number of spanner edges incident to a node is $O(\epsilon^{-1})$ by Item (3) of Lemma~\ref{lm:Type-0}, which is useful for $d > 2$. 
When $d=2$, Lemma~\ref{lm:deg-K} implies that each node has at most $O(\epsilon^{-1})$ incident spanner edges, and hence, there is no need to construct Type-0 clusters.

Clearly, Type-0 clusters satisfy Invariant (I1) since $g = 42$, as the diameter of a subset of points is at most the diameter of its associated subgraph that spans the set of points.  
We remark that our construction of Type-0 clusters has not finished yet;  in the following steps of the construction  
we further augment Type-$0$ clusters by adding more subtrees of $\mathcal{T}_{i-1}$ via $\mst$ edges (see Lemma~\ref{lm:Invariant-I1}). However, this augmentation blows up the diameter of Type-0 clusters additively by at most $18 L_i$ and hence the final diameter bound is still in check.

Given that Type-0 clusters satisfy Invariant (I1), we can show that the credit lower bound invariant (I2) can be maintained while  all spanner edges incident to Type-0 clusters can be paid for by leftover credits (see Definition~\ref{def:leftover}).   Indeed, this follows from a more general lemma stated below.

\begin{lemma}\label{lm:Type0-payfor} 
	Let $\mathcal{C}$ be a cluster containing at least $\frac{2g}{\zeta \epsilon}$ nodes  such that $\adm(\mathcal{C}) \geq gL_i$, i.e, $\mathcal{C}$ satisfies Invariant (I1)   $\mathcal{C}$ satisfies Invariant (I1). Then the leftover credit of $\mathcal{C}$ after maintaining  Invariant (I2) can pay for their incident spanner edges and debt when $\ce  = \Omega(\epsilon^{-d})$.
\end{lemma}
\begin{proof}
Let $\mathcal{Y}\subseteq \mv(\mathcal{C})$ be a set of (arbitrary) $\frac{g}{\zeta\epsilon}$ nodes in $\mathcal{C}$. By Invariant (I2), it holds that: 
\begin{equation*}
	\cred(\mathcal{Y}) \geq  \frac{g}{\zeta\epsilon} \cdot \ce(\zeta L_{i-1}) \frac{g}{\zeta\epsilon} \cdot \zeta\epsilon\ce L_i ~=~ g\ce L_i ~\geq~ \ce \max\{\adm(\mathcal{C}), \zeta L_i\}
\end{equation*}
Thus, $\cred(\mathcal{Y})$ is sufficient to maintain Invariant (I2) of $\mathcal{C}$.

Let $\mathcal{Z}$ be a set of another $\frac{g}{\zeta\epsilon}$ nodes in $\mathcal{C}$. We redistribute $\cred(\mathcal{Z})$ to all nodes in $\mathcal{Y}\cup \mathcal{Z}$, each gets at least $\zeta\ce \epsilon L_i/2$ credits. Note that each node in $\mathcal{C}\setminus \{\mathcal{Y}\cup \mathcal{Z}\}$ has at least $\zeta\ce \epsilon L_i$ credits by Invariant (I2) for level $i-1$. That is, each $\epsilon$-cluster in $\mathcal{C}$ has at least $\zeta\ce \epsilon L_i/2$ leftover credits after maintaining Invariant (I2) for $\mathcal{C}$.  Since each node is incident to at most $O(\epsilon^{-d+1})$ level-$i$ edges, each of which has weight at most $L_i$, and has debt, if any, at most $O( \epsilon^{-1})L_i$, its leftover credit can pay for its incident spanner edges and debt if $\ce = \Omega(\epsilon^{-1}(\epsilon^{-d+1} + \epsilon^{-1})) = \Omega(\epsilon^{-d})$. \QED
\end{proof}

Recall that by Item (3) of  Lemma~\ref{lm:Type-0}, every tree  $\mathcal{T}\in \mathcal{F}_{i-1}$ has an augmented diameter at least $\zeta L_i$.

\paragraph{Step 1: Type-I clusters and contracted nodes.~} 
We say a node $x$ in a tree $T$ is \emph{$T$-branching} if it is incident to at least $3$ edges in $T$. When the tree $T$ is clear from the context, we simply call $x$ a \emph{branching} node. 
The construction in this step uses the following tree clustering lemma, whose proof is deferred to Section~\ref{subsec:Proof-Tree-Cluster}.

\begin{restatable}[Tree Clustering]{lemma}{treeclustering} \label{lm:tree-clustering} 
	Let $T$ be a tree with node and edge weights. Let $L, \beta, \eta, \gamma$ be parameters where $\eta \ll \gamma \ll 1$ and $\beta \geq 1$. Suppose that for any node $v\in T$ and any edge $e\in T$, $w(e) \leq w(v) \leq \eta L$ and $w(v)\geq (\eta L)/\beta$. There is a polynomial-time algorithm that finds a collection of node-disjoint  subtrees $\mathcal{F} = \{T_1,\ldots,T_k\}$ of $T$ such that:
	\begin{enumerate}[noitemsep]
		\item[(1)] $\adm(T_i) \leq 190\gamma L$ for each $1\leq i \leq k$.
		\item[(2)] Each branching node is contained in some tree in $\mathcal{F}$. 
		\item[(3)] Each tree $T_i$ contains a $T_i$-branching node $b_i$ and three paths $P_1,P_2,P_3$ that intersect only at $b_i$ and are otherwise node-disjoint,
		such that $\adm(P_1\cup P_2) = \adm(T_i)$ and $\adm(P_3 \setminus \{b_i\})= \Omega(\adm(T_i)/\beta)$. We call $b_i$ the \emph{center} of $T_i$.
		\item[(4)] Let $\widehat{T}$ be obtained by contracting each subtree of $\mathcal{F}$ into a single node. Then each $\widehat{T}$-branching node corresponds to a subtree  of augmented diameter at least $\gamma L$.
	\end{enumerate}
\end{restatable}

\begin{figure}[!htb]
	\center{\includegraphics[width=1.0\textwidth]{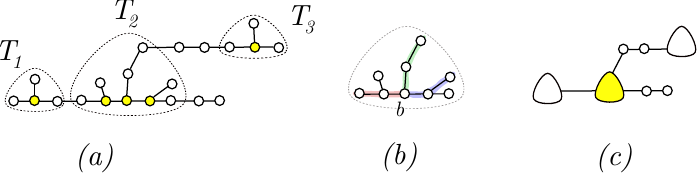}}
	\caption{(a) A tree $T$ and a set of trees $\mathcal{F}  = \{T_1,T_2,T_3\}$ as described in Lemma~\ref{lm:tree-clustering}; yellow nodes are branching vertices of $T$. (b) The tree $T_2\in \mathcal{F}$ and its center $b$; three paths $P_1,P_2,P_3$ are highlighted by three different colors. (c) The tree $\widehat{T}$ obtained from $T$ by contracting each tree in $\mathcal{F}$ into a single node; the contracted nodes have triangular shape. The branching (yellow) node of $\widehat{T}$ has augmented diameter at least $\gamma L$.}
	\label{fig:TC}
\end{figure} 

See Figure~\ref{fig:TC} for an illustration of Lemma~\ref{lm:tree-clustering}. Let $\cluster(T,L,\eta,\gamma,\beta)$ be the set of subtrees obtained by applying the construction of Lemma~\ref{lm:tree-clustering} to a tree $T$ with parameters $L,\eta,\gamma,\beta$. In the following lemma, we will apply the construction of Lemma~\ref{lm:tree-clustering} to each tree $\mathcal{T}\in \mathcal{F}_{i-1}$; note that  by Equation~\eqref{eq:weight-node} and Invariant (I1) for level $i-1$, each node has weight at least $\zeta \epsilon L_{i}$ and at most $g\epsilon L_i$. 

\begin{lemma}\label{lm:tree-credits} Let $\mathbb{U} = \cup_{\mathcal{T} \in \mathcal{F}_{i-1}}\cluster(\mathcal{T}, L_i, g\epsilon, \zeta, \frac{g}{\zeta})$.   Let $\mathcal{T}$ be any tree in $\mathbb{U}$. Then $\cred(\mathcal{T}) =  \ce \adm(\mathcal{T})+  \Omega(\ce \epsilon L_i |\mathcal{V}(\mathcal{T})|).$

\end{lemma}
\begin{proof} 
Let $\pzc[b]$ be the center of $\mathcal{T}$ and  $\mathcal{P}_1,\mathcal{P}_2, \mathcal{P}_3$ be three paths of $\mathcal{T}$ that intersect at $\pzc[b]$, as guaranteed by Item (3) of  Lemma~\ref{lm:tree-clustering}. Note that $\adm(\mathcal{T}) = \adm(\mathcal{P}_1 \cup \mathcal{P}_2)$.   Let $\mathcal{Q}_3 = \mathcal{P}_2\setminus \{\pzc[b]\}$.

Observe that $\sum_{\pzc[x] \in \mathcal{Q}_3}w(\pzc[x])  \geq \adm(\mathcal{Q}_3)/2$ since the edge weight is at most the vertex weight. Recall that $\zeta \epsilon L_i \leq w(\pzc[x]) \leq g\epsilon L_i$ for every node $\pzc[x] \in \mathcal{T}$; see Item (3) in Lemma~\ref{lm:prop-T}.  We have:
\begin{align*}
	 (|\mathcal{V}(\mathcal{Q}_3)|) &\geq \frac{\sum_{\pzc[x] \in \mathcal{Q}_3}w(\pzc[x])}{g\epsilon L_i}	  \geq \frac{\adm(\mathcal{Q}_3)}{2g \epsilon L_i} = \Omega(\frac{\adm(\mathcal{P}_1 \cup \mathcal{P}_2)}{2g \epsilon L_i}) \\
	&\geq \Omega(\frac{|\mv(\mathcal{P}_1 \cup \mathcal{P}_2)| \epsilon \zeta L_i}{2g\epsilon L_i}) = \Omega(|\mv(\mathcal{P}_1 \cup \mathcal{P}_2)|)
\end{align*}
This implies $|\mv(\mathcal{Q}_3)| = \Omega(|\mv(\mathcal{P}_1\cup \mathcal{P}_2\cup \mathcal{Q}_3)|)$.  Thus, it holds that:
\begin{align*}
	\sum_{\pzc[x]\in \mv(\mathcal{T})} w(\pzc[x]) + \sum_{e\in \me(\mathcal{T})}w(e) &\geq  \adm(\mathcal{P}_1 \cup\mathcal{P}_1) + \adm(\mathcal{Q}_3) + \sum_{\pzc[x] \in \mathcal{T}\setminus (\mathcal{P}_1\cup \mathcal{P}_2 \cup \mathcal{Q}_3)} w(\pzc[x])\\ 
	&  \geq \adm(\mathcal{T}) + \zeta \epsilon L_i (|\mathcal{V}(\mathcal{Q}_3)|) + \zeta \epsilon L_i(|\mathcal{V}(\mathcal{T}) - |\mv(\mathcal{P}_1\cup \mathcal{P}_2\cup \mathcal{Q}_3)|)\\
	& \geq \adm(\mathcal{T}) + \zeta \epsilon L_i (\Omega(|\mv(\mathcal{P}_1\cup \mathcal{P}_2\cup \mathcal{Q}_3)|)) +  \zeta \epsilon L_i(|\mathcal{V}(\mathcal{T}) - |\mv(\mathcal{P}_1\cup \mathcal{P}_2\cup \mathcal{Q}_3)|)\\
	&= \adm(\mathcal{T}) + \Omega(\epsilon L_i|\mv(\mathcal{T})|)
\end{align*}

This implies that $\cred(\mathcal{T})\geq \ce \adm(\mathcal{T}) + \Omega(\ce \epsilon L_i |\mv(\mathcal{T})|)$ as desired.\QED
\end{proof}

Let $\mathbb{U}$ be the collection of trees provided by Lemma~\ref{lm:tree-credits}.
Note that by Lemma~\ref{lm:tree-clustering}, the augmented diameter of every tree in $\mathbb{U}$ is at most $190\zeta L_i$;
a tree of $\mathbb{U}$ is said to have a {\em high} augmented diameter if its  augmented diameter is at least $\zeta L_i$.
In the following lemma, we form Type-I clusters from the high diameter trees of $\mathbb{U}$ and show that the remaining trees of $\mathbb{U}$ induce a special structure. 
A forest $F$ is called a \emph{linear forest} if every tree in $F$ is a path. 

\begin{lemma}[Type-I Clusters]\label{lm:Type-I}  Let $\mathbb{H}_1$ be the set of trees in  $\mathbb{U}$, called Type-I clusters, whose augmented diameter is at least $\zeta L_i$ and at most $190\zeta L_i$. Then, for any cluster $\mathcal{C}\in \mathbb{H}_1$, $$\cred(\mathcal{C}) \geq \ce(\adm(\mathcal{C}) + \Omega(L_i)).$$

Furthermore, let $\mathbb{U}^{-} = \mathbb{U}\setminus \mathbb{H}_1$ and  $\wmc[F]_{i-1}$ be the forest obtained from $\mathcal{F}_{i-1}$ by removing every node in $\mathbb{H}_1$ and contracting every tree in $\mathbb{U}^{-}$ into a single node. Then $\wmc[F]_{i-1}$ is a linear forest.
\end{lemma}
\begin{proof}
	 By Item (3) of Lemma~\ref{lm:tree-clustering}, $\cred(\mathcal{C}) \geq \ce (\adm(\mathcal{C}) + \Omega(\adm(\mathcal{C})\zeta/g)) = \ce(\adm(\mathcal{C}) + \Omega(L_i))$ as claimed.
	
	For any tree $\mathcal{T}\in \mathcal{F}_{i-1}$, let $\widehat{\mathcal{T}}$ be the tree obtained by contracting each subtree of $\mathcal{T}$ in $\mathbb{U}$ into a single node. By Item (4) of Lemma~\ref{lm:tree-clustering}, if we remove from $\widehat{\mathcal{T}}$ the contracted nodes corresponding to  subtrees of augmented diameter at least $\zeta L_i$, we obtain a linear forest; this implies the lemma.
	\QED
\end{proof}

We remark that some nodes of $\wmc[\mathcal{F}]_{i-1}$ might be nodes of $\mathcal{F}_{i-1}$, i.e, they are \emph{uncontracted nodes}. Additionally, in the construction of the following steps, Type-I clusters (defined in Lemma~\ref{lm:Type-I}) could be augmented further.  We call nodes of $\wmc[F]_{i-1}$ that are contracted from trees in $\mathbb{U}^{-}$ \emph{contracted nodes}. We say that a spanner edge is incident to a contracted node if it is incident to a node in the corresponding tree in $\mathbb{U}^-$.

\paragraph{Assigning weights and credits to contracted nodes of $\wmc[F]_{i-1}$.~} For each contracted node $\widehat{\pzc[x]}\in \wmc[F]_{i-1}$ and the corresponding tree $\mathcal{T}\in \mathcal{U}^{-}$, we assign:
\begin{equation}\label{eq:credit-x-hat}
	w(\widehat{\pzc[x]}) = \adm(\mathcal{T}) \quad \mbox{and} \quad \cred(\widehat{\pzc[x]}) = \ce \adm(\mathcal{T}).
\end{equation}

By Lemma~\ref{lm:tree-credits}, after the credit of a tree $\mathcal{T}$ in $\mathbb{U}^{-}$ is assigned to the corresponding contracted node, each node $\pzc[x] \in \mathcal{T}$ has $\Omega(\ce \zeta L_i) = \Omega(\epsilon\ce L_i)$ 
 leftover credits. By Lemma~\ref{lm:deg-K}, the leftover credit of $\pzc[x]$ can pay for all of its incident spanner edges when $\ce = \Omega(\epsilon^{-d})$. Thus, we can regard spanner edges incident to contracted nodes as \emph{paid edges}.

Let $E_{a}$ be spanner edges incident to a Type-$a$ cluster for $a = 0,1$. 
Let $E_{\cn}$ be the set of edges incident to contracted nodes (of $\wmc[F]_{i-1}$ in Lemma~\ref{lm:Type-I}). The set of remaining spanner edges that are incident to uncontracted nodes is denoted by $E_{\un}$. In the following step of the construction, we form new clusters in such a way that $\epsilon$-clusters have sufficient leftover credits to pay for edges in $E_{\un}$. The reason we focus on edges in $E_{\un}$ is because we argue later  in Lemmas~\ref{lm:Type-0},~\ref{lm:Type-I} and~\ref{lm:tree-credits} that edges in $E_0,E_1$ and $E_{\cn}$ can be paid for by using the leftover credit of one of its endpoints.

\paragraph{Step 2: Type-II clusters}

Let $\widehat{\pzc[x]}$ be a node of $\wmc[F]_{i-1}$. Let $\wmc[I](\widehat{\pzc[x]},r; \wmc[F]_{i-1})\subseteq \wmc[F]_{i-1}$ be a minimal subpath of a path, say $\wmc[P]$, of $\wmc[F]_{i-1}$ containing $\widehat{\pzc[x]}$ and all nodes of augmented distance at most $r$ from $\widehat{\pzc[x]}$; that is, for every node $\widehat{\pzc[y]} \in \wmc[I][\widehat{\pzc[x]},r; \wmc[F]_{i-1}]$, $\adm(\wmc[P][\widehat{\pzc[x]} ,\widehat{\pzc[y]}])\leq r$. 
  We say that $\widehat{\pzc[x]}$ is \emph{$r$-deep} if $\wmc[I](\widehat{\pzc[x]},r; \wmc[F]_{i-1})$ does not containcany of the two endpoints of $\wmc[P]$. We say that two nodes $\widehat{\pzc[x]},\widehat{\pzc[y]}$ are \emph{$r$-far} from each other if $\wmc[I](\widehat{\pzc[x]},r; \wmc[F]_{i-1})\cap  \wmc[I](\widehat{\pzc[y]},r; \wmc[F]_{i-1}) = \emptyset$. In particular, if $\widehat{\pzc[x]}$ and $\widehat{\pzc[y]}$ belong to different paths of $\wmc[F]_{i-1}$, then they are $r$-far from each other for any $r > 0$. We say that an edge $e \in E_{\un}$ is \emph{$r$-clusterable} w.r.t.\ $\wmc[F]_{i-1}$ if its two endpoints are $r$-deep and if they are $r$-far from each other.

 \begin{lemma}[Type-II Clusters]\label{lm:Type-II} 
 	We can construct a collection of  subgraphs $\mathbb{H}_2$ of $\mathcal{G}$, called Type-II clusters, such that
	for each  subgraph $\mathcal{C} \in \mathbb{H}_2$, we have:
 	\begin{enumerate}[nolistsep,noitemsep]
 		\item  $\mathcal{C}$  corresponds to two subpaths of $\wmc[F]_{i-1}$ connected by a $(2L_i)$-clusterable edge. 
 		\item $L_i/2 \leq \adm(\mathcal{C}) \leq 9L_i$.
 		\item $\cred(\mathcal{C}) \geq \ce \left(\adm(\mathcal{C})  + \Omega(L_i)\right)$.
 	\end{enumerate}
 Furthermore, 	let $\wmc[F]_{i-1}^{1}$ be the forest obtained from $\wmc[F]_{i-1}$ by removing every node in $\mathbb{H}_2$.  Then there is no $(2L_i)$-clusterable edge in $E_{\un}$ w.r.t $\wmc[F]_{i-1}^{1}$.
 \end{lemma}
\begin{proof}
	The construction is greedy.  
	
	\begin{wrapfigure}{r}{0.35\textwidth}
		\vspace{-25pt}
		\begin{center}
			\includegraphics[width=0.35\textwidth]{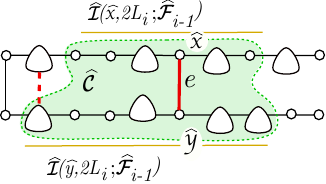}
		\end{center}
		\vspace{-20pt}
		\caption{\footnotesize{A cluster $\wmc[C]$ formed from 
				a $(2L_i)$-clusterable edge $e$ and the two subpaths of $\wmc[F]_{i-1}$ enclosed in the green-shaded region.}}
		\label{fig:Step3-opt}
	\end{wrapfigure}

	If there is an edge $e \in E_{\un}$ that is $(2L_i)$-clusterable w.r.t $\wmc[F]_{i-1}$, we form a new cluster $\widehat{\mathcal{C}} = \wmc[I](\widehat{\pzc[x]},2L_i; \wmc[F]_{i-1}) \cup  \wmc[I](\widehat{\pzc[y]},2L_i; \wmc[F]_{i-1}) \cup \{e\}$, where $\widehat{\mathpzc{x}}$ and $\widehat{\mathpzc{y}}$ are endpoints of $e$. (See Figure~\ref{fig:Step3-opt}.) Let $\mathcal{C}$ be obtained from $\widehat{\mathcal{C}}$ by uncontracting the contracted nodes; we then add $\mathcal{C}$ to $\mathbb{H}_2$.  Next, we remove all nodes in $\widehat{\mathcal{C}}$ from $\wmc[F]_{i-1}$ and repeat this step until it no longer applies. By the greedy nature of  the construction, when this step finishes, 
	 there is no $(2L_i)$-clusterable edge in $E_{\un}$ w.r.t the resulting $\wmc[F]_{i-1}^{1}$.

	Note  that we update $\widehat{\mathcal{F}}_{i-1}$ every time a new cluster is formed, so the set of $(2L_i)$-clusterable edges change accordingly. That is, a $(2L_i)$-clusterable edge may not be  $(2L_i)$-clusterable after removing some nodes of $\wmc[F]_{i-1}$.
	
	Item (1) follows directly from the construction. For Item (2), observe that, since $\mathcal{C}$ contains a level-$i$ edge $e$, $\adm(\mathcal{C})\geq w(e)\geq \frac{L_i}{1+\delta} ~\geq~ L_i/2$ since $\delta \leq 1$. 
	For the upper bound, we note that for any subpath $\wmc[P] \subseteq \wmc[F]_{i-1}$, the corresponding subtree, denoted by $\mathcal{P}$, obtained from $\wmc[P]$ by uncontracting the  contracted nodes will have $\adm(\mathcal{P}) \leq \adm(\wmc[P])$. (The weight function on the nodes of $\wmc[P]$ is defined in Equation~\eqref{eq:credit-x-hat}). Thus, it holds that
	\begin{equation*}
		\adm(\mathcal{C}) \leq  \adm(\wmc[I](\widehat{\pzc[x]},2L_i; \wmc[F]_{i-1})) + \adm(\wmc[I](\widehat{\pzc[y]},2L_i; \wmc[F]_{i-1})) + w(e) \leq 4L_i + 4L_i + L_i = 9 L_i~.
	\end{equation*}  
 	Item (2)   follows, and it thus remains to prove Item (3).

	Let $\wmc[D]$ be a path realizing the augmented diameter of $\wmc[C]$. Observe that $\adm(\wmc[D]) = \adm(\wmc[C]) \geq \adm(\mathcal{C})$. The analysis splits into two cases:
		\begin{itemize}
		\item Case 1: $\wmc[D]$ does not contain $e$ where $e$ is the $(2L_i)$-clusterable edge in $\mathcal{C}$. 
		Then, $\cred(\wmc[D]) \geq \ce \adm(\wmc[D])$ by the way we assign credits to nodes of $\wmc[F]_{i-1}$ (see Equation~\eqref{eq:credit-x-hat}). Since $\wmc[D]$ does not contain $e$, at least one endpoint of $e$, without loss of generality $\widehat{\pzc[x]}$, does not belong to $\wmc[D]$.  Thus, $\wmc[I](\widehat{\pzc[x]},2L_i; \wmc[F]_{i-1}) \cap \wmc[D] = \emptyset$. Recall that $\adm(\wmc[I](\widehat{\pzc[x]},2L_i; \wmc[F]_{i-1}))$ is a  subpath of a path of $\wmc[F]_{i-1}$ containing $\widehat{\pzc[x]}$ and all nodes of augmented distance at most $2L_{i}$ from $\widehat{\pzc[x]}$. Thus, by the minimality,  it holds that $$\adm(\wmc[I](\widehat{\pzc[x]},2L_i; \wmc[F]_{i-1})) ~\geq~ 4L_i - 2\bar{w} - 2(\zeta L_i) ~\geq~ 4L_i - 2L_i - L_i ~=~ L_i~; $$ where in the first inequality, we subtract the weight of two $\mst$ edges and two endpoint nodes of $\adm(\wmc[I](\widehat{\pzc[x]},2L_i; \wmc[F]_{i-1})$.	
	We thus have that $\cred(\wmc[I](\widehat{\pzc[x]},2L_i; \wmc[F]_{i-1})) \geq \ce \adm(\wmc[I](\widehat{\pzc[x]},2L_i; \wmc[F]_{i-1})) \ge \ce L_i$. It follows that $$\cred(\wmc[C]) ~\geq~ \cred(\wmc[D]) + \cred(\wmc[I](\widehat{\pzc[x]},2L_i; \wmc[F]_{i-1}))  ~\ge~ \ce (\adm(\wmc[D])  + L_i) ~\geq~ \ce  (\adm(\mathcal{C})  + L_i) .$$
		
		\item Case 2: $\wmc[D]$ contains $e$. Let  $\wmc[P]_1, \wmc[P]_2$ be the two subpaths of $\wmc[I](\widehat{\pzc[x]},2L_i; \wmc[F]_{i-1})$ sharing the same endpoint $\widehat{\pzc[x]}$ such that $\wmc[P]_1 \cup  \wmc[P]_2 = \wmc[I](\widehat{\pzc[x]},2L_i; \wmc[F]_{i-1})$. We define the two subpaths  $\wmc[Q]_1, \wmc[Q]_2$ of $\wmc[I](\widehat{\pzc[y]},2L_i; \wmc[F]_{i-1})$ sharing the same endpoint $\widehat{\pzc[y]}$ such that $\wmc[Q]_1 \cup  \wmc[Q]_2 = \wmc[I](\widehat{\pzc[y]},2L_i; \wmc[F]_{i-1})$. 
		Observe that at least two paths among four paths $\wmc[P]_1, \wmc[P]_2, \wmc[Q]_1, \wmc[Q]_2$,~w.l.o.g $\wmc[P]_2$ and $\wmc[Q]_2$, that only intersect $\wmc[D]$ at their endpoints. 
		Note that $e$ has length at most $L_i$ and has no credit; recall that only $\mst$ edges have credits. We assign the credit of $\wmc[P]_2\setminus \{\pzc[x]\}$ to  $e$.  Since $e$ is $(2L_i)$-clusterable, $\adm(\wmc[P]_2) \geq 2L_i - \bar{w} - (\zeta L_i)$ 
		and the endpoints of $e$ are uncontracted nodes. Thus, $$\adm(\wmc[P]_2\setminus \{\pzc[x]\}) ~\geq~ 2L_i - 2\bar{w} - g\epsilon L_i - (\zeta L_i) ~\geq~ 2L_i - (g+2)\epsilon L_i - \zeta L_i ~\geq~ L_i$$ when $\epsilon \ll \frac{1}{g}$ and $\zeta < 1/2$; here we use the fact that $\bar{w} \leq L_{i-1} = \eps L_i$. Thus, $e$ is assigned at least $\ce L_i \geq \ce w(e)$ credits, implying that:
		\begin{equation*}
			\cred(\wmc[D]) = \sum_{\pzc[x] \in \mv(\mathcal{D})} \cred(\pzc[x]) + \sum_{e'  \in \me(\mathcal{D})} \cred(e')\geq \ce \adm(\wmc[D]). 
		\end{equation*}
		Observe that $\adm(\wmc[Q]_2\setminus \{x\}) \geq L_i$ (by the same reason that  $\adm(\wmc[P]_2\setminus \{x\}) \geq L_i$ shown above). 
		Thus, $\cred(\wmc[Q]_2\setminus \{x\}) \geq \ce L_i$; this implies $$\cred(\wmc[C]) \geq \cred(\wmc[D]) + \ce L_i \geq \ce(\wmc[D] + L_i) \geq  \ce(\adm(\mathcal{C})  + L_i) .$$
	\end{itemize}
	
	In both cases, we have shown that $\cred(\mathcal{C}) \geq \ce (\adm(\mathcal{C}) + L_i)$. Observe that by the way we assign weights to contracted nodes of $\mathcal{C}$, we have:
	\begin{equation*}
		\sum_{\pzc[x] \in\mv( \mathcal{C})}w(\pzc[x]) + \sum_{e'\in \me( \mathcal{C})}w(e') \geq 	\sum_{\widehat{\pzc[x]} \in\mv( \wmc[C])}w(\widehat{(\pzc[x]}) + \sum_{e'\in \me( \wmc[C])}w(e'),
	\end{equation*}
	 and hence, $\cred(\mathcal{C}) \geq \cred(\wmc[C]) \geq (\adm(\mathcal{C}) + L_i)$
	as claimed.\QED
\end{proof}

\paragraph{Step 3: Augmenting existing clusters and constructing Type-III clusters.~} Let $\wmc[P]$ be a path in $\wmc[F]^{1}_{i-1}$ (in Lemma~\ref{lm:Type-II}). We consider two cases:

\begin{itemize}
	\item \textbf{Case 1: $\adm(\wmc[P]) < 8L_i$.~}   Let $\mathcal{P}$ be obtained from $\wmc[P]$ by uncontracting the contracted nodes. Let $\mathcal{C}$ be a cluster in $\mathbb{H}_0\cup \mathbb{H}_1\cup \mathbb{H}_{2}$ that $\mathcal{P}$ is connected to via an $\mst$ edge $e$. Such an edge $e$ must exist since $\mathcal{T}_{i-1}$ is a spanning tree of $\mathcal{G}$.  We augment $\mathcal{C}$ by adding $\mathcal{P}$ and $e$ to $C$.  
	\item  \textbf{Case 2: $\adm(\wmc[P]) \geq 8L_i$.~}   Let $\widehat{\mathcal{P}}_1, \widehat{\mathcal{P}}_2$ be the two minimal prefix/suffix subpaths of $\widehat{\mathcal{P}}$, each of augmented diameter at least $2L_i$. Let $\mathcal{P}_j$ be obtained from $\widehat{\mathcal{P}}_j$ by uncontracting the contracted nodes, $j \in \{1,2\}$; $P_j$ is not necessarily a path anymore (due to uncontraction). 
	 If $\mathcal{P}_j$, for each $j \in \{1,2\}$, is connected to a cluster $\mathcal{C} \in \mathbb{H}_0\cup \mathbb{H}_1\cup \mathbb{H}_{2}$ via an $\mst$ edge $e$, we augment $\mathcal{C}$ by adding   $\mathcal{P}_j$ and $e$ to $\mathcal{C}$. ($\mathcal{C}$ is arbitrarily chosen among clusters that $\mathcal{P}_j$ is connected to.)
	 Otherwise, we form a \emph{Type-III cluster} from $\mathcal{P}_j$ (see Figure~\ref{fig:Type3}). Let $\mathbb{H}_3$ be the set of all Type-III clusters.
\end{itemize}

\begin{figure}[!htb]
	\center{\includegraphics[width=0.9\textwidth]{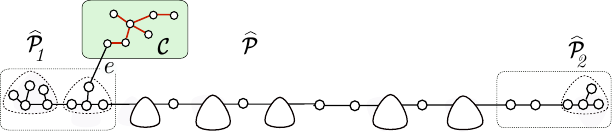}}
	\caption{Two minimal prefix/suffix subpaths $\wmc[P]_1, \wmc[P]_2$ of a path $\wmc[P]$, and the corresponding trees $\mathcal{P}_1,\mathcal{P}_2$ obtained by uncontracting their contracted nodes (which have triangular shape  in the figure). The subpath $\mathcal{P}_1$ is augmented to $\mathcal{C}$ since it has an MST edge to $\mathcal{C}$; there could be multiple such clusters, and we choose one of them arbitrarily. $\mathcal{P}_2$ becomes a Type-III cluster.}
	\label{fig:Type3}
\end{figure} 

We apply the construction in both cases above to each path of $\wmc[F]^{1}_{i-1}$. Let $\wmc[F]^2_{i-1}$ be obtained from $\wmc[F]^{1}_{i-1}$ by removing nodes that are augmented to an existing cluster or grouped to Type-III clusters.

Unlike Type-$a$ clusters for $a \in \{0,\mbox{I}, \mbox{II}\}$, we cannot guarantee that nodes in Type-III clusters have non-zero leftover credits. However, we can show later that any Type-III cluster, say $\mathcal{C}$, would be a leaf of the level-$i$ cluster tree $\mathcal{T}_i$ in Invariant (I3) (the formal proof is provided in Lemma~\ref{lm:leaves-Ti}), and hence the weight of level-$i$ edges (in $E_{\un}$) incident to nodes in $\mathcal{C}$ can be deposited to the debt account of $\mathcal{C}$. Recall that only leaves of $\mathcal{T}_i$ have such a debt account by Invariant (I3)
 The challenge then is to argue that the total debt of $\mathcal{C}$ is in check as imposed by Invariant (I3).

\begin{lemma}\label{lm:structure-F2} (1) Each path $\wmc[P]\in \wmc[F]^2_{i-1}$ has  $ 2L_i \leq \adm(\wmc[P]) \leq 3L_i$, and (2) there is no edge in $E_{\un}$ connecting two different paths of $\wmc[F]^2_{i-1}$.
\end{lemma}
\begin{proof} The lower bound $2L_i \leq \adm(\wmc[P])$  follows directly from the construction. Note that each prefix/suffix of the path in Case 2 has augmented diameter at most $2L_i$  plus the weight of an MST edge and a node by the minimality. Thus, 
	$\adm(\wmc[P]) ~\leq~ 2L_i + \bar{w} + \zeta L_i \leq 2L_i + (\zeta+\epsilon) L_i < 3L_i$ when $\epsilon \leq \frac{1}{2}$ and $\zeta < 1/2$; this implies Item (1). 
	
For Item (2), if there were such an edge in $E_{\un}$, then, due to the construction in Case 2 of Step 3, it would be a $(2L_i)$-clusterable edge in $\wmc[F]^1_{i-1}$, contradicting Lemma~\ref{lm:Type-II}.	\QED
\end{proof}

\paragraph{Step 4: Type-IV clusters.~} By Lemma~\ref{lm:structure-F2}, for any path $\wmc[P] \in \wmc[F]^2_{i-1}$ and any edge $e\in E_{\un}$  incident to a node in $\wmc[P]$, either both endpoints of $e$ belong to $\wmc[P]$ or the other endpoint of $e$ (the one not in $\wmc[P]$) belongs to a Type-$a$ cluster for some $a\in \{0,\mbox{I},\mbox{II},\mbox{III}\}$. 
In the latter case, we can pay for $e$ using the leftover credit of the other endpoint of $e$ not in $\wmc[P]$. In the former case, where
both endpoints of $e$ belong to $\wmc[P]$,
we form a new Type-IV cluster in such a way that the endpoints of $e$ have leftover credits to pay for $e$.

\begin{lemma}[Type-IV Clusters]\label{lm:Type-IV} Let $\wmc[P]$ be a path in $\wmc[F]^2_{i-1}$ and $E_{\un}(\wmc[P])$ be the edges of $E_{\un}$
with both endpoints in $\wmc[P]$. 
Let $\mathcal{P}$ be obtained from $\wmc[P]$ by uncontracting its contracted nodes. Let $\mathcal{W}$ be the set of nodes in $\mathcal{P}$ that correspond to the contracted nodes of $\wmc[P]$.  We can construct a set of clusters, denoted by $\mathbb{C}_4(\wmc[P])$, such that: 	
	\begin{enumerate}
		\item  The clusters in $\mathbb{C}_4(\wmc[P])$ contain every node in $\mathcal{P}$.
		\item For every cluster $\mathcal{C}\in \mathbb{C}_4(\wmc[P])$, $\zeta L_i \leq \adm(\mathcal{C})\leq 20L_i$. Furthermore, $\mathcal{C}$ is a subtree of $\mathcal{P}$ and some edges in  $E_{\un}(\wmc[P])$  whose both endpoints are in $\mathcal{C}$.
		\item  There is an orientation of edges in $E_{\un}(\wmc[P])$  
		such that if the total number of out-going edges incident to nodes in a cluster $\mathcal{C}\in  \mathbb{C}_4(\wmc[P])$ is $t$ for any $t\geq 0$, then:
\begin{equation}\label{eq:TypeIV-credit}
	 \cred(\mathcal{C}) \geq \ce\left(\adm(\mathcal{C}) + \Omega(t\epsilon + |\mathcal{C}\cap \mathcal{W}|) \epsilon L_i\right) 
\end{equation}
		\end{enumerate}
	Clusters in $\mathbb{H}_4 = \cup_{\wmc[P]\in \wmc[F]^2_{i-1}}\mathbb{C}_4(\wmc[P])$ are called Type-IV clusters.
\end{lemma}

\begin{figure}[!htb]
	\center{\includegraphics[width=0.9\textwidth]{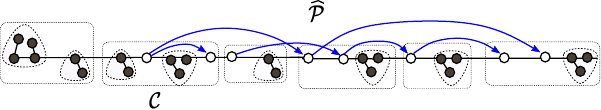}}
	\caption{The path $\wmc[P]$, a cluster $\mathcal{C}$, and a set of (blue) edges in $E_{\un}(\wmc[P])$. White nodes are uncontracted nodes and black nodes are those in contracted nodes. 
		Edges in $E_{\un}(\wmc[P])$ are only incident to uncontracted (white) 
		nodes and are oriented during the construction of Type-IV clusters. (The construction of the orientation is given in Subsection~\ref{subsec:Type-IV}.) 
		 The amount of leftover credit $\mathcal{C}$ has is proportional to the number of (black) nodes in $\mathcal{C}\cap \mathcal{W}$ and the number of out-going edges from nodes in $\mathcal{C}$; there could be edges with both endpoints in $\mathcal{C}$.}
	\label{fig:Type4}
\end{figure} 

In Subsection~\ref{subsec:Type-IV}, we give the details of the construction of Type-IV clusters and the orientation of edges in $E_{\un}(\wmc[P])$ for the path $\wmc[P]$ in Lemma~\ref{lm:Type-IV}. 

Intuitively, Type-IV clusters are constructed from each path $\wmc[P]$ of $\wmc[F]^2_{i-1}$ separately.  See Figure~\ref{fig:Type4} for an illustration. Edges in $E_{\un}(\wmc[P])$ will be oriented along with our construction of Type-IV clusters in such a way that Equation~\eqref{eq:TypeIV-credit} holds.  While it is not hard to see that the diameter bounds --- Item (2) --- follow directly from the construction ( in Subsection~\ref{subsec:Type-IV}), 
it is highly nontrivial to lower bound the amount of leftover credit that $\mathcal{C}$ has as stated in Item (3).  This is the crux of the argument. An interesting special case is when $\mathcal{C}\cap \mathcal{W} = \emptyset$, that is, $\mathcal{C}$ contains no node in the contracted nodes of $\wmc[P]$. Item (3) means that, in this special case, the more edges outgoing from $\mathcal{C}$, the more leftover credit $\mathcal{C}$ will have. That is, Item (3) provides the formal condition behind Insight (2) that was discussed in Section~\ref{subsec:our-high-lv}.

Lemma~\ref{lm:Type-IV} completes the construction of level-$i$ clusters. Table~\ref{table:cluster-summary} summarizes important properties of each type of clusters in our construction.

\begin{table}[h!]
	\centering
	\begin{tabular}{| l | p{13.5cm} |}
		\hline \\[-1em]
		\textbf{Cluster Type} & \textbf{Important Property} \\ \hline \\[-1em]
		Type-0 &  Every high degree node and its neighbors are grouped into Type-0 clusters.\\
		\hline \\[-1em]
		Type-I &   Each Type-I cluster is a subtree of $\mathcal{T}_{i-1}$ of augmented diameter at least $\zeta L_i$ (and at most $g L_i$) and has three internally node-disjoint paths of augmented diameter $\Omega(L_i)$ each. \\
		\hline \\[-1em]
		Type-II &   Each Type-II cluster is of the form $ \wmc[I](\widehat{\pzc[x]},2L_i; \wmc[F]_{i-1}) \cup  \wmc[I](\widehat{\pzc[y]},2L_i; \wmc[F]_{i-1}) \cup \{e\}$ where $e$ is a $(2L_i)$-clusterable edge.  \\
		\hline \\[-1em]
		Type-III &  Each Type-III cluster corresponds to a prefix/suffix subpath of a path $\wmc[P]$ in the forest $\wmc[F]^1_{i-1}$. \\
		\hline \\[-1em]
		Type-IV &  Each Type-IV cluster $\mathcal{C}$ has an amount of leftover credit  proportional to the number of out-going edges in the orientation of $E_{\un}(\wmc[P])$. \\
		\hline
	\end{tabular}
	\caption{Four types of clusters and their important properties.}
	\label{table:cluster-summary}
\end{table}

\subsubsection{Cluster Invariants and a proof of Lemma~\ref{lm:weight-reduction}}
We now argue that level-$i$ clusters satisfy all three invariants while level-$i$ spanner edges can be paid for by leftover credits.

\begin{lemma}\label{lm:Invariant-I1}  $\adm(\mathcal{C}) \leq 34 L_i$ for any cluster  $\mathcal{C} \in \cup_{i=0}^{4}\mathbb{H}_i$.
\end{lemma}
\begin{proof}
Recall that the maximum weight of an $\mst$ edge (after subdivision) is at most $\bar{w} \leq  L_0 \leq \epsi L_i$ for $i\geq 1$.  We observe that, since the augmentation of Type-$a$ clusters for $a \in \{0, \mbox{I}, \mbox{II}\}$ in Step 3  is via $\mst$ edges in a star-like way, the diameter of the augmented clusters increases by at most $ 2\bar{w} + 2\cdot 8L_i \leq 18L_i$. Thus, by Lemmas~\ref{lm:Type-0},~\ref{lm:Type-I}, and~\ref{lm:Type-II}, the diameters  of clusters formed in Step 0, Step 1 and Step 2 clusters are at most $16L_i, 190\gamma L_i$ and $9L_i$, respectively; note that $190\gamma < 1$. Thus, it holds that:
\begin{equation*}
\adm(\mathcal{C}) \leq 16L_i + 18L_i = 34 L_i
\end{equation*} 

By the construction in Step 3, each Type-III cluster is a minimal suffix/prefix of a path $\widehat{\mathcal{P}}$ of augmented diameter at least $2L_i$. Since each node of  $\widehat{\mathcal{P}}$ has weight at most $\zeta L_i$ by the construction in Step 1 (Lemma~\ref{lm:Type-I}), and each edge has weight at most $\bar{w}$, a Type-III cluster has augmented diameter at most $2L_i +   \bar{w} + \zeta L_i ~\leq~ 4L_i$. Type-IV clusters have augmented diameter at most $20L_i$ by Lemma~\ref{lm:Type-IV}. Thus, in every case, Lemma~\ref{lm:Invariant-I1} holds.\QED
\end{proof}

By Lemma~\ref{lm:Invariant-I1}, Invariant (I1) is satisfied. We now focus on Invariants (I2) and (I3). By construction, except for Type-III clusters, nodes in clusters of any other type have positive leftover credits. We will show in Lemma~\ref{lm:leaves-Ti} that Type-III clusters correspond to leaves of the cluster tree $\mathcal{T}_i$ at level $i$. Hence, by Invariant (I3), Type-III clusters are allowed to have debt of $4g^2\zeta^{-2}\epsilon^{-2}\sum_{j-1}^i L_j$. We regard this debt as the \emph{debt credit} of each Type-III cluster. We then argue that we can pay for level-$i$ spanner edges incident to any Type-III cluster using the debt credit of that cluster.

\paragraph{Constructing the cluster tree $\mathcal{T}_i$.~} Recall that every level-$i$ is associated with a subgraph of $\mathcal{G}$,  and that  $\mathcal{T}_{i-1}$ is a spanning tree of $\mathcal{G}$.  We contract each level-$i$ cluster into a single node, then  $\mst$ edges in the resulting graph induce a connected spanning subgraph. Let $\mathcal{T}_i$ be an arbitrary spanning tree of the resulting graph that only contains $\mst$ edges.

\begin{lemma}\label{lm:leaves-Ti}
Type-III clusters are leaves of $\mathcal{T}_i$. 
\end{lemma}
\begin{proof}
	By Case 2 of Step 3 in the construction, each Type-III cluster $\mathcal{C}$ corresponds to a suffix/prefix of a path $\wmc[P]$, and that cluster $\mathcal{C}$ is not connected by any $\mst$ edge to other clusters before the construction of Step 4; indeed, otherwise $\mathcal{C}$ would be augmented to another cluster via $\mst$ edges. Since the remaining subpath of $\wmc[P]$ after the construction in Step 3 will be grouped via Type-IV clusters by Lemma~\ref{lm:Type-IV}, each Type-III cluster is connected to a Type-IV cluster by an $\mst$ edge, and is thus a leaf of $\mathcal{T}_i$. \QED
\end{proof}

We are now ready to show that Invariants (I2) and (I3) are satisfied 
and prove that, in addition, level-$i$ spanner edges can be paid for.  
\begin{lemma}\label{lm:I2-main}  Let $\ce = \Omega(\epsilon^{-d})$ for $d\geq 2$. If every cluster $\mathcal{C} \in \cup_{i=0}^{4}\mathbb{H}_i$ takes exactly $\ce\max\{\adm(\mathcal{C}),\zeta L_i\}$ credits from nodes and $\mst$ edges in $\mathcal{C}$, then:
	\begin{enumerate}[nolistsep,noitemsep]
		\item Every level-$i$ spanner edge can be paid for by either leftover credit or debt credit (of Type-III clusters).
		\item Every leaf node of $\mathcal{T}_{i-1}$ in a level-$i$ cluster $\mathcal{C}$ can pay for its debt by using either its leftover credit or its debt credit. In the latter case, $\mathcal{C}$ is a Type-III cluster.  
	\end{enumerate}
Furthermore, the total debt of each leaf of $\mathcal{T}_i$ is at most $4g^2\zeta^{-2}\epsilon^{-2}\sum_{j=1}^i L_j$.
\end{lemma}
\begin{proof}
By Lemma~\ref{lm:Type0-payfor}, nodes in Type-0 clusters, and more generally, clusters with at least $2g(\zeta\epsilon)^{-1}$ nodes, can maintain Invariant  (I2) and pay for incident spanner edges and debt.  Thus, it remains to consider   clusters with at most $2g(\zeta\epsilon)^{-1}$ nodes. Note that every node not in a Type-$0$ cluster is incident to at most $2g\zeta^{-1}\epsilon^{-1}$ edges by Item (3) in Lemma~\ref{lm:Type-0}  and has debt at most $O(\epsilon^{-1} L_i)$ by Item (3) in  Lemma~\ref{lm:prop-T}.  Thus, the total number of incident level-$i$ spanner edges and the total debt of nodes  in a cluster considered henceforth are $4g^2(\zeta\epsilon)^{-2}$ and $O(\epsilon^{-2} L_i)$, respectively. We consider three cases.

\begin{figure}[!htb]
	\center{\includegraphics[width=0.9\textwidth]{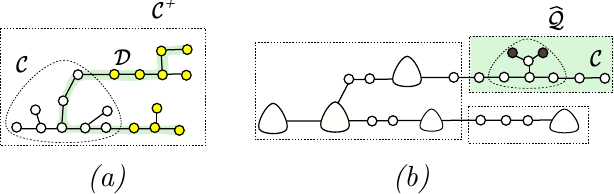}}
	\caption{(a) A Type-II cluster $\mathcal{C}$. $\mathcal{C}^{+}$ is obtained from $\mathcal{C}$ by the augmentation in Step 3 via $\mst$ edges that adds subtrees of yellow vertices to $\mathcal{C}$; the path $\mathcal{D}$  realizing the diameter of $\mathcal{C}^+$ is highlighted green. (b) A cluster $\mathcal{C}$ is obtained from $\wmc[Q]$ by uncontracting the contracted nodes. $\mathcal{C}$ has two leaves, while only one leaf node is white;  the other two leaf nodes are black since they belong to a contracted node. }
	\label{fig:leaf2}
\end{figure} 

\hspace{0.2cm}

\noindent \textbf{Case 1: Type-I and Type-II clusters.~} Let $\mathcal{C}$ be a Type-I or Type-II cluster formed in Lemma~\ref{lm:Type-I} or Lemma~\ref{lm:Type-II}, respectively. Let $\mathcal{C}^{+}$ be obtained by the augmentation of $\mathcal{C}$ in Step 3 (see Figure~\ref{fig:leaf2}(a)).  Let $\mathcal{D}$ be the path realizing the diameter of $\mathcal{C}^+$. Observe that the augmentation is in a star-like way, and by adding subtrees of $\mathcal{F}_{i-1}$ to $\mathcal{C}$ via $\mst$ edges. Thus, $\mathcal{D}' = \mathcal{D}\cap \mathcal{C}$ is a path. We take exactly $\ce \adm(\mathcal{D}')$ credits from $\mathcal{C}$ and all the credit of nodes and ($\mst$) edges of $\mathcal{D}\setminus \mathcal{D}'$; thus the total credit taken is:
\begin{equation*}
\ce \adm(\mathcal{D}') + \ce \left(\sum_{e\in E(\mathcal{D})\setminus E(\mathcal{D}')}w(e) +  \sum_{\pzc[x]\in \mathcal{V}(\mathcal{D})\setminus \mathcal{V}(\mathcal{D}')}w(\pzc[x])\right) \geq \ce \adm(\mathcal{D})  = \ce \max\{\adm(\mathcal{D}), \zeta L_i\}
\end{equation*}
Since we take  $\ce \adm(\mathcal{D}') \leq \ce \adm(\mathcal{C})$ credits from $\mathcal{C}$, by Lemma~\ref{lm:Type-I} and Lemma~\ref{lm:Type-II},  the total leftover credit of nodes and $\mst$ edges in $\mathcal{C}$ is $\Omega(\ce L_i)$. Since nodes in $\mathcal{C}^+$ are incident to at most $4g^2(\zeta\epsilon)^{-2}$ edges (of length at most $L_i$ each) and have at most $O(\epsilon^{-2}L_i)$ total debt,  the leftover credit is sufficient to pay for these edges and debt when $\ce = \Omega(\epsilon^{-2})$.

\hspace{0.2cm}

\noindent  \textbf{Case 2: Type-III clusters.~} Let $\mathcal{C}$ be a Type-III cluster and $\widehat{\mathcal{Q}}$ be a subpath of $\wmc[F]^1_{i-1}$ corresponding to $\mathcal{C}$ in the construction of Step 3; $\mathcal{C}$ is obtained from  $\widehat{\mathcal{Q}}$ by uncontracting the contracted nodes. Observe that $\cred(\wmc[Q]) \geq \ce \adm(\wmc[Q]) \geq \ce \adm(\mathcal{C})$;	thus, $\mathcal{C}$ can take exactly $\ce \adm(\mathcal{C})$ credits from nodes and $\mst$ edges in $\mathcal{C}$ to maintain Invariant (I2).

Since $\mathcal{C}$ is incident to at most $4g^2(\zeta\epsilon)^{-2}$ level-$i$  spanner edges of weight at most $L_i$ each,  and it is a leaf of $\mathcal{T}_i$ by Lemma~\ref{lm:leaves-Ti}, we can pay for these edges using the debt credit of $\mathcal{C}$; by Invariant (I3), $\mathcal{C}$ has $\sum_{j=1}^{i} 4g^2(\zeta\epsilon)^{-2} L_j$ debt credits. Thus, we have:
 
\begin{observation}\label{obs:debt-C}
	 The leftover debt credit of $\mathcal{C}$ after paying for incident spanner edges is $\sum_{j=1}^{i-1} 4g^2(\zeta\epsilon)^{-2} L_j$, which is equal to the debt of \emph{exactly one} node in $\mathcal{C}$.
\end{observation}

Next, we pay for the debt of nodes in $\mathcal{C}$ (if any), and  there could be many of them who have non-zero debt. We have noted that the remaining debt credit of $\mathcal{C}$ can only pay for the debt of exactly one node  in $\mathcal{C}$; our goal is to show that the debt of other nodes can be paid for by other means. Note that $\mathcal{C}$ is a subtree of $\mathcal{F}_{i-1}$ and that only leaves of $\mathcal{F}_{i-1}$ have debt by Invariant (I3).  We say that a leaf node $\mathcal{F}_{i-1}$ in $\mathcal{C}$ is \emph{black} if it is contained in the subtree corresponding to a contracted node in $\widehat{\mathcal{Q}}$; otherwise it is {\em white} (see Figure~\ref{fig:leaf2}(b)). We claim that:
\begin{claim}\label{clm:structure-TypeIII}
	Every leaf of $\mathcal{F}_{i-1}$ in $\mathcal{C}$ is black, except for one node, which is the (uncontracted) endpoint node of $\widehat{\mathcal{Q}}$.
\end{claim}
\begin{proof}
Recall that  $\widehat{\mathcal{Q}}$ is a suffix/prefix subpath of a path $\wmc[P]$ in Step 3. Thus, the only (uncontracted) node of $\widehat{\mathcal{Q}}$ that is a leaf of $\mathcal{F}_{i-1}$ is the endpoint node. Thus, other leaves in $\mathcal{C}$ must be in contracted nodes of $\widehat{\mathcal{Q}}$; this implies the claim. 
\QED
\end{proof}

Let $\pzc[x]$ be a black node in $\mathcal{C}$, and $\mathcal{T}$ be the subtree of $\mathcal{F}_{i-1}$ corresponding to the contracted node  that contains $\pzc[x]$. Let $\mathcal{D}$ be a  path realizing the diameter of $\mathcal{C}$.  Note that $\mathcal{C}$ is a subtree of $\mathcal{F}_{i-1}$; this implies that $\mathcal{D}_{\mathcal{T}} = \mathcal{D}\cap \mathcal{T}$ is a  simple path. Since we take exactly $\ce \adm(\mathcal{C}) = \ce \adm(\mathcal{D})$ credits from nodes and edges of $\mathcal{C}$, for each subtree  $\mathcal{T}$  corresponding to a contracted node in $\wmc[Q]$, we take exactly $\ce \adm(\mathcal{D}_{\mathcal{T}})$ credits from $\mathcal{T}$.  By Lemma~\ref{lm:tree-credits}, each node in $\mathcal{T}$ has at least $\Omega(\ce \epsilon L_i)$ credits left. In particular, $\pzc[x]$ has $\Omega(\ce \epsilon L_i)$ credits as leftover. Since  $\pzc[x]$ has at most $O(\epsilon^{-1} L_i)$ debt by Item (2) in Lemma~\ref{lm:prop-T}, the leftover credit of $\pzc[x]$ is sufficient to pay for the debt when $\ce = \Omega(\epsilon^{-2})$. Thus, we only need to pay for the debt of the (only one) non-contracted node in Claim~\ref{clm:structure-TypeIII}, which can be done by using the leftover debt credit of $C$ by Observation~\ref{obs:debt-C}.
\hspace{0.2cm}

\noindent  \textbf{Case 3: Type-IV clusters.~} Let $\mathcal{C}$ be a Type-IV cluster. By  Item (3) of Lemma~\ref{lm:Type-IV}, each node $\pzc[x]\in \mathcal{C}$ in a contracted node has $\Omega(\ce\epsilon L_{i})$ credits. Thus, it can pay for (at most $O(\eps^{-1})$) incident spanner edges and its debt (of at most $O(\epsilon^{-1}L_i)$ by Lemma~\ref{lm:prop-T}) when $\ce = \Omega(\epsilon^{-d})$. This implies that unpaid edges incident to $\wmc[F]^{2}_{i-1}$ are incident to uncontracted nodes only; these edges are in $E_{\un}$. By Lemma~\ref{lm:structure-F2}, unpaid edges have both endpoints in the same path of $\wmc[F]^2_{i-1}$. Let $\widehat{\mathcal{P}}$ be such a path, and $E_{\un}(\widehat{\mathcal{P}})$ be the set of unpaid edges incident to nodes in  $\widehat{\mathcal{P}}$; we use the notation used in Lemma~\ref{lm:Type-IV} here. Note that the construction of Type-IV clusters is applied to each path of $\wmc[F]^2_{i-1}$ separately, and that uncontracted nodes of  $\widehat{\mathcal{P}}$ have no debt since they are not leaves of $\mathcal{T}_{i-1}$.

Following the notation used in Lemma~\ref{lm:Type-IV}, each Type-IV cluster  in $\mathbb{C}_4(\wmc[P])$ pays for its out-going edges in $E_{\un}(\wmc[P])$. (The orientation of edges in  $E_{\un}(\wmc[P])$ is from Item (3) of Lemma~\ref{lm:Type-IV}.)
 Since there are $t$ incident out-going edges (of total weight $tL_i$) while the leftover credit is $\Omega(t \ce \epsilon^2L_{i})$, the leftover credit is sufficient to pay for the out-going edges when $\ce = \Omega(\epsilon^{-2})$.

In summary, after guaranteeing Invariants (I2) and (I3), every level-$i$ spanner edge can be paid for by leftover and debt credits when $\ce = \max\{\epsilon^{-d},\epsilon^{-2}\} = \Omega(\epsilon^{-d})$. 	\QED
\end{proof}

\paragraph{Proof of Lemma~\ref{lm:weight-reduction}.~}  Lemmas~\ref{lm:Invariant-I1},~\ref{lm:leaves-Ti} and~\ref{lm:I2-main} imply that, by choosing $\ce = \Theta(\epsilon^{-d})$, we can maintain Invariants (I1), (I2) and (I3) for level-$i$ clusters and, at the same time, pay for all  level-$i$ spanner edges incident to $\epsilon$-cluster using the leftover credits.  Thus,  $w(S^j) \leq \ce w(\mst) = O(\epsilon^{-d} w(\mst))$.\QED

\subsubsection{The construction of Type-IV clusters}\label{subsec:Type-IV}

In this section, we construct a set of Type-IV clusters as claimed in Lemma~\ref{lm:Type-IV}. The construction is applied to each path $\widehat{\mathcal{P}}\in \wmc[F]^2_{i-1}$ separately. By Lemma~\ref{lm:structure-F2}, $\adm(\widehat{\mathcal{P}}) \geq 2L_i$. The construction has two steps.

\paragraph{Step 1: tiny clusters.~} We greedily break $\widehat{\mathcal{P}}$ into subpaths, each of augmented diameter at least $\zeta L_i$ and at most $3\zeta L_i$; note that each edge of  $\widehat{\mathcal{P}}$ has weight at most $\bar{w} \leq \epsilon L_i \leq \zeta L_i$. We then regard each broken subpath as a \emph{tiny cluster}. Let $\mathbb{C}_{tiny}$ be the set of all tiny clusters.

Since $\zeta = 1/200$, the augmented diameter of each tiny cluster is much smaller than $L_i/2$. Thus, there is no edge in $E_{\un}(\widehat{\mathcal{P}})$ with both endpoints in the same tiny cluster. Furthermore, since no edge in $E_{\un}(\widehat{\mathcal{P}})$ is $(2L_i)$-clusterable by Lemma~\ref{lm:Type-II},  we have:

\begin{observation}\label{obs:tiny-structure} For any two endpoints $\widehat{\pzc[x]},\widehat{\pzc[y]}$ of an edge $e \in E_{\un}(\widehat{\mathcal{P}})$, $\widehat{\mathcal{I}}(\widehat{\pzc[x]}, 2L_i; \wmc[P])\cap \widehat{\mathcal{I}}(\widehat{\pzc[y]}, 2L_i; \wmc[P]) \not=\emptyset$.
\end{observation}

Let  $\dwh{\mathcal{P}}$ be the path obtained from $\widehat{\mathcal{P}}$ by contracting each tiny cluster in to a single node.  Given a node $\dwh{\pzc[x]} \in \dwh{\mathcal{P}}$,  we say that an edge $e \in E_{\un}(\wmc[P])$ \emph{shadows} $\dwh{\pzc[x]}$ if  $\dwh{\pzc[x]}$ lies on the subpath of $\dwh{\mathcal{P}}$ between $e$'s endpoints (see Figure~\ref{fig:clustering-tiny}). By definition, edges incident to $\dwh{\pzc[x]}$ shadow $\dwh{\pzc[x]}$.

\paragraph{Step 2: construct Type-IV clusters and orient edges} We iteratively construct Type-IV clusters, orient edges of $E_{\un}(\wmc[P])$ and mark nodes of $\dwh{\mathcal{P}}$ along the way.  Let  $\dwh{\pzc[x]} \in \dwh{\mathcal{P}}$ be an unmarked node  incident to a maximum number of \emph{unoriented edges} in $E_{\un}(\wmc[P])$. Let $E_{\dwh{\pzc[x]}}\subseteq E_{\un}(\wmc[P])$ be the set of unoriented edges shadowing $\dwh{\pzc[x]}$; $E_{\dwh{\pzc[x]}}$ could be empty. Let  $\dwh{\mathcal{Q}}$ be the minimal subpath of  $\dwh{\mathcal{P}}$ that contains $\dwh{\pzc[x]}$ and the endpoints of every edge in $E_{\dwh{\pzc[x]}}$.
 (If $E_{\dwh{\pzc[x]}} = \emp$ then $\dwh{\mathcal{Q}}$ contains a single node $\dwh{\pzc[x]}$.)
We regard $\dwh{\mathcal{Q}} \cup E_{\dwh{\pzc[x]}}$ as a Type-IV cluster. (We slightly abuse notation here; to be precise, the Type-IV cluster is obtained by uncontracting every node of $\dwh{\mathcal{Q}}$ and adding edges in $ E_{\dwh{\pzc[x]}}$.) Every unoriented edge incident to a node $\dwh{\pzc[y]}$ in $\dwh{\mathcal{Q}}$ will be oriented as out-going from $\dwh{\pzc[y]}$; edges with both endpoints in $\dwh{\mathcal{Q}}$ are oriented arbitrarily. (See Figure~\ref{fig:clustering-tiny} for an illustration.) We then remove nodes of $\dwh{Q}$ from $\dwh{P}$ and repeat this step to remaining subpaths of $\dwh{P}$ until every node of $\dwh{\mathcal{P}}$ is grouped into a Type-IV cluster.

Note that removing nodes of $\dwh{Q}$ from  $\dwh{P}$ could break $\dwh{P}$ into two subpaths, say $\dwh{P}_1$ and $\dwh{P}_2$. Since the endpoints of every edge shadowing $\dwh{\pzc[x]}$ are in $\dwh{Q}$, there is no edge connecting a node in $\dwh{P}_1$ and a node in $\dwh{P}_2$. Thus, the construction in Step 2 can be recursively applied to each subpath $\dwh{P}_1$ and $\dwh{P}_2$ until every node of $\dwh{\mathcal{P}}$ is grouped into a Type-IV cluster.

\begin{figure}[htb]
	\center{\includegraphics[width=0.9\textwidth]{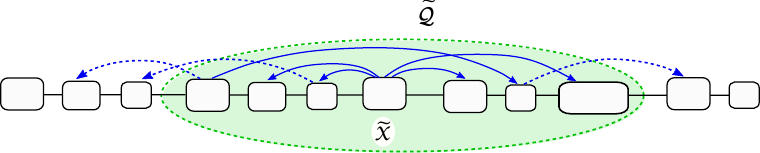}}
	\caption{Rectangular nodes are tiny clusters. Every unoriented edge between a node  in $\dwh{\mathcal{Q}}$ and a node outside $\dwh{\mathcal{Q}}$   will be oriented as out-going from $\dwh{\mathcal{Q}}$ (the dashed edges). Edges with both endpoints in $\dwh{\mathcal{Q}}$ are oriented arbitrarily. Solid edges in this figure are edges in $E_{\dwh{\pzc[x]}}$.}
	\label{fig:clustering-tiny}
\end{figure}

This completes the construction of Type-IV clusters. We now prove claims stated in Lemma~\ref{lm:Type-IV}.

\paragraph{Proof of Lemma~\ref{lm:Type-IV}.~}  Observe that Item (1) of Lemma~\ref{lm:Type-IV} holds by construction. 	

\begin{claim} \label{clm:edm-IV}
	$\zeta L_i \leq \adm(\dwh{\mathcal{Q}}) \leq 20L_i$
\end{claim}
\begin{proof}
	The lower bound of $ \adm(\dwh{\mathcal{Q}}) $ follows directly from the construction.  Let $\dwh{\pzc[y]}$ be an endpoint of $\dwh{\mathcal{Q}}$. By Observation~\ref{obs:tiny-structure}, $\adm(\dwh{\pzc[x]}, \dwh{\pzc[y]}) \leq 4L_i + \adm(\dwh{\pzc[x]}) + \adm(\dwh{\pzc[y]}) \leq (4+6\zeta)L_i \leq 10L_i$. Thus, $\adm(\dwh{\mathcal{Q}}) \leq 20L_i$.\QED
\end{proof}

Observe that Item (2) in Lemma~\ref{lm:Type-IV} follows directly from Claim~\ref{clm:edm-IV}. We now focus solely on lower bounding the total credit of edges and nodes in a Type-IV cluster. Indeed, this is the most difficult part in proving Lemma~\ref{lm:Type-IV}.

Let $\dwh{\mathcal{C}}$ be a Type-IV cluster where $\dwh{\mathcal{C}}  = \dwh{\mathcal{Q}} \cup E_{\dwh{\pzc[x]}}$ as described in Step 2. Let $\wmc[Q]$ be the  obtained from $\dwh{\mathcal{Q}}$ by uncontracting tiny clusters. Let $\mathcal{Q}$ be obtained from $\widehat{\mathcal{Q}}$ by uncontracting contracted nodes. We color  a node of $\mathcal{Q}$ \emph{black} if it belongs to a contracted node of $\wmc[Q]$; otherwise, we color the node \emph{white}.  In the same way, we denote by $\widehat{\mathcal{C}}$ (resp. $\mathcal{C}$) be obtained from $\dwh{\mathcal{C}}$ by uncontracting (resp. $\widehat{\mathcal{C}}$) tiny clusters (contracted nodes).   Note that by construction, $\widehat{\mathcal{Q}}$ is a path of $\widehat{\mathcal{F}}^{2}_{i-1}$, but $\mathcal{Q}$ may not be a path (see Figure~\ref{fig:Type4-uncontract}).

\begin{figure}[!htb]
	\center{\includegraphics[width=0.9\textwidth]{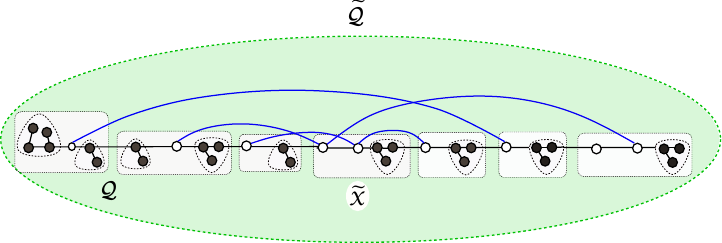}}
	\caption{Blue edges are edges in $E_{\dwh{\pzc[x]}}$; both endpoints of an edge in  $E_{\dwh{\pzc[x]}}$ are white nodes.}
	\label{fig:Type4-uncontract}
\end{figure}

Note by construction that, $\dwh{\pzc[x]}$ is incident to the most number of unoriented edges among all tiny clusters in $\dwh{\mathcal{C}}$. Since  each tiny cluster has augmented diameter at least $\zeta L_i$, by Claim~\ref{clm:edm-IV}, we have:

\begin{claim} \label{clm:deg-Type-IV}
	Let $E_{\dwh{\mathcal{C}}}$ be the subset of unoriented edges incident to tiny clusters in  $\dwh{\mathcal{C}}$. Then $|E_{\dwh{\mathcal{C}}}| = O(1) |E_{\dwh{\pzc[x]}}|$.
\end{claim}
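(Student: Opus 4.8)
This is a pure counting statement, and the plan is to bound $|E_{\dwh[C]}|$ by (number of tiny clusters lying on $\dwh[\mathcal{Q}]$) times (maximum number of unmarked edges incident to a single tiny cluster at the moment $\dwh[C]$ is created), and then to bound each of the two factors separately, the first using Claim~\ref{clm:edm-IV} and the second using the greedy selection rule of Step~9.

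First I would bound the number of nodes of $\dwh[\mathcal{Q}]$. The nodes of $\dwh[\mathcal{Q}]$ are tiny clusters, each of effective diameter at least $\zeta L_i$ by the construction in Step~8; since the effective diameter of a path is the sum of the weights of its nodes, and the weight of a node of $\dwh[\mathcal{P}]$ is the effective diameter of the corresponding tiny cluster, Claim~\ref{clm:edm-IV} gives
\[ \#\{\text{nodes of } \dwh[\mathcal{Q}]\} ~\le~ \frac{\edm(\dwh[\mathcal{Q}])}{\zeta L_i} ~\le~ \frac{20 L_i}{\zeta L_i} ~=~ 20\,\zeta^{-1}. \]

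Next I would invoke Step~9. At the step in which $\dwh[C]$ is constructed, $\wmc[X]$ is by definition the node of $\dwh[\mathcal{P}]$ incident to the largest number of currently unmarked edges of $E_{tiny}$; hence \emph{every} currently unmarked tiny cluster is incident to at most $|E_{\wmc[X]}|$ unmarked edges. A marked node of $\dwh[\mathcal{Q}]$, on the other hand, is incident to no unmarked edge at all, since an unmarked edge has both endpoints unmarked. As every edge of $E_{\dwh[C]}$ is an unmarked edge with at least one endpoint among the tiny clusters of $\dwh[\mathcal{Q}]$, summing the incident-unmarked-edge count over the nodes of $\dwh[\mathcal{Q}]$ yields
\[ |E_{\dwh[C]}| ~\le~ \#\{\text{nodes of } \dwh[\mathcal{Q}]\}\cdot |E_{\wmc[X]}| ~\le~ 20\,\zeta^{-1}\,|E_{\wmc[X]}| ~=~ O(\zeta^{-1}|E_{\wmc[X]}|), \]
which is the claim. (The sum may double-count edges that have both endpoints on $\dwh[\mathcal{Q}]$, but this only strengthens the inequality.)

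There is no genuine obstacle here; the only point demanding care is the bookkeeping of the dynamic marking process — ``unmarked'' must always be read relative to the state of the construction at the moment $\dwh[C]$ is formed, and one must observe that the minimality built into the definition of $\dwh[\mathcal{Q}]$ (which is exactly what brings Claim~\ref{clm:edm-IV} into play) can only force in interior tiny clusters, each of which is either already marked, hence contributes no unmarked edge, or still unmarked, hence obeys the same $|E_{\wmc[X]}|$ cap as every other node by the maximality of $\wmc[X]$.
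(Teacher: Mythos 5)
Your proof is correct and follows essentially the same route as the paper's (short) proof: bound the number of tiny clusters in $\dwh[C]$ by $O(\zeta^{-1})$ via Claim~\ref{clm:edm-IV} together with the $\zeta L_i$ lower bound on tiny-cluster effective diameter, and then bound the number of unmarked edges per tiny cluster by $|E_{\wmc[X]}|$ via the greedy maximality in Step~9. Your version is merely more explicit — in particular, you correctly pin down that ``unmarked'' is evaluated at the moment $\dwh[C]$ is created, that maximality of $\wmc[X]$ over all of $\dwh[\mathcal{P}]$ suffices, and that any already-marked node swept in by the minimality of $\dwh[\mathcal{Q}]$ contributes zero unmarked edges — all of which the paper leaves implicit.
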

\begin{proof}
	Since each tiny cluster has augmented diameter at least $\zeta L_i$, by Claim~\ref{clm:edm-IV}, there are at most $O(\zeta^{-1}) = O(1)$ tiny clusters in $\dwh{\mathcal{C}}$. Thus, the claim follows from the fact that $\dwh{\pzc[x]}$ is incident to most number of unoriented edges among all tiny clusters in $\dwh{\mathcal{C}}$. \QED
\end{proof}

Let $\mathcal{D}$ be the path realizing the diameter of $\mathcal{C}$; $\mathcal{D}$ may contain (level-$i$) spanner edges. Recall that no spanner edge in $\mathcal{C}$ is incident to a black node.  Thus, for each subtree $\mathcal{T}\in \mathbb{U}^{-}$ (in Lemma~\ref{lm:Type-I}) of $C$, $\mathcal{T}\cap \mathcal{D}$ is a path, i.e, $\mathcal{T}\cap \mathcal{D}$ has exactly one connected component. Let $\wmc[D]$ be obtained from $\mathcal{D}$ by contracting every subpath of  $\mathcal{D}$ in a contracted node of  $\wmc[Q]$. Observe that each subpath being contracted contains only black nodes, and that $\wmc[D]$ is a simple path.  Similarly, let $\dwh{\mathcal{D}}$ be obtained from $\wmc[D]$ by contracting tiny clusters into nodes; $\dwh{\mathcal{D}}$ also is a simple path.

Suppose that $\mathcal{D}$ contains a level-$i$ spanner edge $e$; we replace $e$ by the subpath of $\mathcal{F}_{i-1}$ connecting two endpoints of $e$. We repeat this replacement until we obtain a \emph{walk} of $\mathcal{F}_{i-1}$ between $\mathcal{D}$'s endpoints.  Let $\mathcal{D}_0$ be the (simple) path obtained by simplifying the walk; that is, $\mathcal{D}_0$ is obtained by removing closed subwalks on the walk. Since $\mathcal{D}_0$ is a subpath of $\mathcal{C}$ between two endpoints of $\mathcal{D}$, it holds that:
\begin{equation}\label{eq:D-vs-D0}
\adm(\mathcal{D}_0) \geq \adm(\mathcal{D})
\end{equation}

 By construction, $\mathcal{D}_0$ is a subpath of $\mathcal{F}_{i-1}$; this implies  $\cred(\mathcal{D})_0 \geq \ce \adm(\mathcal{D}_0)$.
 
\begin{lemma}\label{lm:credit-D0} if $\mathcal{C}$ has at most $2g\zeta^{-1}\epsilon^{-1}$ nodes, then $$\ce(\adm(\mathcal{D}_0)) + \sum_{\pzc[x] \in \mathcal{C}\setminus \mathcal{D}_0}\cred(\pzc[x])= \ce\left(\adm(\mathcal{C}) + \Omega(t\epsilon^2 L_i)\right)~.$$
\end{lemma}

 We will show that Lemma~\ref{lm:credit-D0} implies imply Item (3) of Lemma~\ref{lm:Type-IV}.
 
 \begin{lemma} If Lemma~\ref{lm:credit-D0} holds, then  $$\cred(\mathcal{C}) =  \ce\left(\adm(\mathcal{C}) + \Omega(t\epsilon + |\mathcal{C}\cap \mathcal{W}|) \epsilon L_i\right) .$$
 \end{lemma}
\begin{proof} Note that $\mathcal{C}\cap \mathcal{W}$ is the set of nodes contained in contracted nodes of $\wmc[C]$. If $\mathcal{C}$ has at least  $2g\zeta^{-1}\epsilon^{-1}$ nodes, since $\adm(\mathcal{C}) \leq gL_i$ and each node has at least $\zeta \epsilon L_i$ credits by Invariant (I2) for level-$(i-1)$, $\cred(\mathcal{Y}) \geq \ce \adm(\mathcal{C})$ for any set $\mathcal{Y}$ of $g\zeta^{-1}\epsilon^{-1}$ nodes of $\mathcal{C}$. Since $|\mv(\mathcal{C})| \geq 2 |\mathcal{Y}|$, it holds that:
	\begin{align*}
		\cred(\mathcal{C})  \geq \ce (\adm(\mathcal{C})) + |\mv(\mathcal{C} \setminus \mathcal{Y})|\zeta \epsilon L_i =  \ce(\adm(\mathcal{C})) + \Omega(\epsilon L_i|\mv(\mathcal{C})|)
	\end{align*}
	Clearly, $|\mv(\mathcal{C})| \geq |\mathcal{C} \cap \mathcal{W}|$ and observe that $t =  O(|\mv(\mathcal{C})|/\epsilon)$ since each node in $\mathcal{C}$ has degree $O(\epsilon^{-1})$. Thus, $ \Omega(\epsilon L_i|\mv(\mathcal{C})|) = \Omega(t\epsilon + |\mathcal{C}\cap \mathcal{W}|) \epsilon L_i$.
	
	We now consider the case that  $\mathcal{C}$ has at most $2g\zeta^{-1}\epsilon^{-1}$ nodes. We will show that:
	\begin{equation}\label{eq:blacknodes-C}
	 \cred(\mathcal{C}) =	\ce\left(\adm(\mathcal{C}) + \Omega(|\mathcal{C} \cap \mathcal{W}|) \epsilon L_i\right)
	\end{equation}
 The lemma then follows from Equation~\eqref{eq:blacknodes-C} and Lemma~\ref{lm:credit-D0}.
 
 Our argument to establish Equation~\eqref{eq:blacknodes-C} is similar to the proof of Case 2 in Lemma~\ref{lm:I2-main}.  Let $\pzc[x]$ be a node in $\mathcal{C}\cap \mathcal{W}$ --- $\pzc[x]$ is a black node ---  and $\mathcal{T}$ be the subtree of $\mathcal{F}_{i-1}$ corresponding to the contracted node  that contains $\pzc[x]$. Observe that $\mathcal{D}_{\mathcal{T}} = \mathcal{D}_0\cap \mathcal{T}$ is a  simple path. We take exactly $\ce \adm(\mathcal{D}_0) \geq \ce \adm(\mathcal{C})$ credits from nodes and edges of $\mathcal{D}_0$ by taking exactly $\ce \adm(\mathcal{D}_{\mathcal{T}})$ credits from $\mathcal{T}$ for each subtree  $\mathcal{T}$  corresponding to a contracted node in $\wmc[C]$.  By Lemma~\ref{lm:tree-credits}, each node in $\mathcal{T}$ has at least $\Omega(\ce \epsilon L_i)$ credits left. In particular, $\pzc[x]$ has $\Omega(\ce \epsilon L_i)$ credits as leftover. Thus, the total amount of leftover credits of nodes in $\mathcal{C}\cap \mathcal{W}$ is $\Omega(\ce|\mathcal{C} \cap \mathcal{W}|\epsilon L_i)$; this implies Equation~\eqref{eq:blacknodes-C}.
	\QED
	
\end{proof}

Henceforth, we focus on proving Lemma~\ref{lm:credit-D0}. Note that $t$ is the number of out-going edges incident to nodes in $\mathcal{C}$. Lemma~\ref{lm:credit-D0} trivially holds when $t = 0$ (in this case $\mathcal{C}$ is a tiny cluster) as $\adm(\mathcal{D}_0) \geq \adm(\mathcal{C})$ by Equation~\eqref{eq:D-vs-D0}. Thus, we assume that $t\geq 1$.  Our proof uses the fact that the spanner has stretch $1+s\epsilon$. First, we claim that:

\begin{claim}\label{clm:detour-saving} Let $\pzc[x]$ and $\pzc[y]$ be  endpoints of a level-$i$ spanner edge $e$. Let $\mathcal{P}$ be any path between $\pzc[x]$ and $\pzc[y]$ in $\mathcal{G} \setminus \{e\}$.  Then $\adm(\mathcal{P}) \geq w(e) + 3g\epsilon L_i$ when $s \geq 6g$.
\end{claim}
\begin{proof}
	Here we note that $\mathcal{P}$ could contain spanner edges. Since $\mathcal{P}$ induces a path of length at most $\adm(\mathcal{P})$ in $S_\gr$ between $e$'s endpoints, by Fact~\ref{fact:edge-path-weight}, $
	\adm(\mathcal{P})  \geq (1+s\epsilon)w(e) ~\geq~ w(e) + s/2\epsilon L_i ~\geq~ w(e) + 3g\epsilon L_i$. \QED
\end{proof}

We observe the following immediate corollary of Claim~\ref{clm:detour-saving}.

\begin{corollary}\label{cor:diam-path-IV}
	If $\mathcal{D}$ contains both endpoints of an edge $e \in E_{\dwh{\pzc[x]}}$, then $e \in \mathcal{D}$.
\end{corollary}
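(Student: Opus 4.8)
\textbf{Proof proposal for Corollary~\ref{cor:diam-path-IV}.}
The plan is to argue by contradiction using Lemma~\ref{lm:detour-saving} together with the fact that $\mathcal{D}$ is a \emph{diameter path} of $C$, i.e.\ a path realizing the maximum $\dm(\cdot)$ among all paths in $C$. Suppose $\mathcal{D}$ contains both endpoints $\pzc[x]$ and $\pzc[y]$ of some edge $e \in E_{\wmc[X]}$, but $e \notin \mathcal{D}$. Then the subpath $\mathcal{D}[\pzc[x],\pzc[y]]$ of $\mathcal{D}$ is a path in $\mathcal{G}\setminus\{e\}$ between $e$'s endpoints, so by Lemma~\ref{lm:detour-saving} we have $\dm(\mathcal{D}[\pzc[x],\pzc[y]]) \geq w(e) + 3g\epsilon L_i > w(e)$.

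The key step is then to construct a competitor path with strictly larger $\dm(\cdot)$ than $\mathcal{D}$, which contradicts maximality of the diameter path. First I would write $\mathcal{D} = \mathcal{D}[a,\pzc[x]] \circ \mathcal{D}[\pzc[x],\pzc[y]] \circ \mathcal{D}[\pzc[y],b]$, where $a,b$ are $\mathcal{D}$'s endpoints (and $\pzc[x]$ appears before $\pzc[y]$ along $\mathcal{D}$, without loss of generality). Now replace the middle portion $\mathcal{D}[\pzc[x],\pzc[y]]$ by the single edge $e$: this yields a walk $\mathcal{D}[a,\pzc[x]] \circ e \circ \mathcal{D}[\pzc[y],b]$ in $C$ (note $e \in E(\mathcal{K})$ is an edge of $\mathcal{K}$, hence of $\mathcal{G}$, and since both endpoints are already in $\dwh[C]$ and $e \in E_{\wmc[X]}$ it lies in $C$). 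Since we are comparing $\dm$-values and $\dm(\mathcal{D}[\pzc[x],\pzc[y]]) > w(e)$, replacing the middle portion by $e$ \emph{decreases} the total weight --- so this direct substitution goes the wrong way. The correct move is the reverse: the edge $e$ together with $\mathcal{D}[\pzc[x],\pzc[y]]$ forms a cycle, and I should instead re-examine whether $\mathcal{D}$ is genuinely the diameter path. The cleanest argument is: consider the walk $W := \mathcal{D}[a,\pzc[x]] \circ e \circ \mathcal{D}[\pzc[y],\pzc[x]] \circ \mathcal{D}[\pzc[x],b]$ --- no, this revisits $\pzc[x]$. Let me instead use the standard fact that in a graph with a distinguished subgraph $C$, if $\mathcal{D}$ is a path maximizing $\dm$ and $e\in E(C)\setminus E(\mathcal{D})$ has both endpoints on $\mathcal{D}$, then the \emph{two} arcs of the cycle $e\cup\mathcal{D}[\pzc[x],\pzc[y]]$ can be spliced into the two ``tails'' $\mathcal{D}[a,\pzc[x]]$ and $\mathcal{D}[\pzc[y],b]$ to produce a longer path unless $\dm(e) \geq \dm(\mathcal{D}[\pzc[x],\pzc[y]])$. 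Concretely, if $\dm(\mathcal{D}[a,\pzc[x]]) \geq \dm(\mathcal{D}[\pzc[y],b])$ then the path $\mathcal{D}[a,\pzc[x]] \circ e \circ (\text{reverse of } \mathcal{D}[\pzc[y],b])$ has $\dm$-value $\dm(\mathcal{D}[a,\pzc[x]]) + w(e) + \dm(\mathcal{D}[\pzc[y],b])$; comparing with $\dm(\mathcal{D}) = \dm(\mathcal{D}[a,\pzc[x]]) + \dm(\mathcal{D}[\pzc[x],\pzc[y]]) + \dm(\mathcal{D}[\pzc[y],b])$, this is \emph{smaller}, so again the wrong direction. The resolution is that the genuinely longer path uses $\mathcal{D}[\pzc[x],\pzc[y]]$ \emph{in place of} $e$, and the statement we actually want is contrapositive to Lemma~\ref{lm:detour-saving}: I would argue that if $e\in E_{\wmc[X]}$ had both endpoints on $\mathcal{D}$ but $e\notin\mathcal{D}$, then $\mathcal{D}':=\mathcal{D}[a,\pzc[x]]\circ e\circ\mathcal{D}[\pzc[y],b]$ is a strictly \emph{shorter} path through the same endpoints $a,b$ than $\mathcal{D}$, which is fine and not a contradiction on its own; the contradiction comes from noting that then $\mathcal{D}[\pzc[x],\pzc[y]]$, re-attached as a detour, would let us \emph{extend} $\mathcal{D}'$ beyond $a$ or $b$.

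Let me state the clean version I would actually carry out. By Lemma~\ref{lm:detour-saving}, $\dm(\mathcal{D}[\pzc[x],\pzc[y]]) \geq w(e) + 3g\epsilon L_i$. Consider the closed walk $\mathcal{W}$ obtained by concatenating $\mathcal{D}$ with $e$ traversed backwards from $\pzc[y]$ to $\pzc[x]$ and then $\mathcal{D}[\pzc[x],\pzc[y]]$ traversed forwards --- this is not what I want either. I will instead simply invoke the following elementary observation, which is all that is needed: since $\mathcal{D}$ is the (longest) diameter path of $C$ and $e$ connects two of its vertices, the graph $\mathcal{D}\cup\{e\}$ has a cycle through $e$ and $\mathcal{D}[\pzc[x],\pzc[y]]$; removing from this cycle the lighter of the two arcs $e$ and $\mathcal{D}[\pzc[x],\pzc[y]]$ would, if that arc were $e$, change nothing, but if that arc were $\mathcal{D}[\pzc[x],\pzc[y]]$ (i.e.\ $\dm(\mathcal{D}[\pzc[x],\pzc[y]]) < w(e)$), then the path $\mathcal{D}[a,\pzc[x]]\circ e\circ\mathcal{D}[\pzc[y],b]$ would be strictly longer than $\mathcal{D}$, contradicting that $\mathcal{D}$ is a diameter path. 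Hence $\dm(\mathcal{D}[\pzc[x],\pzc[y]]) \geq w(e)$. But that inequality holds regardless by Lemma~\ref{lm:detour-saving}, so no contradiction arises this way --- meaning the actual content of the corollary is slightly different from what I first assumed.

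\textbf{The correct, short argument.} I now believe the corollary is immediate and the proof is: suppose $\mathcal{D}$ contains both endpoints $\pzc[x],\pzc[y]$ of $e\in E_{\wmc[X]}$ but $e\notin\mathcal{D}$. Then $P:=\mathcal{D}[\pzc[x],\pzc[y]]$ is a path in $\mathcal{G}\setminus\{e\}$ between $e$'s endpoints, so by Lemma~\ref{lm:detour-saving}, $\dm(P)\geq w(e)+3g\epsilon L_i$. Replace the arc $P$ of $\mathcal{D}$ by the single edge $e$: the result $\mathcal{D}':=\mathcal{D}[a,\pzc[x]]\circ e\circ\mathcal{D}[\pzc[y],b]$ is a simple path in $C$ (it visits each node at most once, since $\mathcal{D}$ did and $e$'s endpoints $\pzc[x],\pzc[y]$ are the splice points), and moreover the \emph{effective diameter} argument for the reverse operation applies to $\mathcal{D}_0$. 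The resolution is that the corollary is invoked with $\mathcal{D}$ being specifically the diameter path, and the intended use is that $\mathcal{D}_0$ (obtained from $\mathcal{D}$ by replacing spanner edges with $\mathcal{T}_{i-1}$-paths) satisfies $\dm(\mathcal{D}_0)\geq\dm(\mathcal{D})$ precisely because of Lemma~\ref{lm:detour-saving}; thus the honest content is that $e\in E_{\wmc[X]}$ with both endpoints on $\mathcal{D}$ forces $e\in\mathcal{D}$ \emph{because otherwise $\mathcal{D}$ was not chosen as a diameter path in the first place}: the path $\mathcal{D}[a,\pzc[x]]\circ P'\circ\mathcal{D}[\pzc[y],b]$ where $P'$ is obtained by first routing through $e$ and then detouring — the main obstacle, as this meandering shows, is pinning down which path is longer, and I expect the actual proof in the paper to simply say: if $e\notin\mathcal{D}$ then swapping $e$ for $\mathcal{D}[\pzc[x],\pzc[y]]$ gives a shorter path, hence $\mathcal{D}[\pzc[x],\pzc[y]]$ could be extended, contradicting maximality; the one-line formal version is that $\mathcal{D}$ being a diameter path and $e$ a chord with $w(e)<\dm(\mathcal{D}[\pzc[x],\pzc[y]])$ (guaranteed by Lemma~\ref{lm:detour-saving}) together imply $\dm(\mathcal{D}[a,\pzc[x]]\circ e\circ\mathcal{D}[\pzc[y],b]) < \dm(\mathcal{D})$, which is consistent, so the real statement must be that \emph{some} path containing $e$ beats $\mathcal{D}$ unless $e$ is already on $\mathcal{D}$ --- and that path is $\mathcal{D}[a,\pzc[x]]\circ(\pzc[x]\stackrel{\mathcal{T}_{i-1}}{\leadsto}\pzc[y])\circ$ (rest), with the extra length $3g\epsilon L_i > 0$ being exactly the slack that makes it strictly longer.

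I would close the proof with that last line once the bookkeeping of which arc to keep is done carefully; the main obstacle is purely this combinatorial splicing, with no geometry beyond Lemma~\ref{lm:detour-saving}.

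\QED
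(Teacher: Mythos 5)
Your attempt contains the right ingredients but never closes, and the reason is a single misreading: you take the diameter path $\mathcal{D}$ to be a path \emph{maximizing} $\dm(\cdot)$ over all paths in $C$, i.e.\ a longest path. In this paper the diameter of a cluster is the usual graph diameter (the maximum over vertex pairs of the \emph{shortest-path} distance, with node and edge weights), and the diameter path is a \emph{shortest} path between the pair of nodes realizing that maximum; this is exactly how it is used a few lines later in the proof of Claim~\ref{clm:at-least-one-extra} (``Since $\mathcal{D}$ is shortest, $\dm(\mathcal{D}_1)\geq\dm(\mathcal{D})$''). With the correct reading, the computation you perform and then discard as ``the wrong direction'' is precisely the paper's proof: if $e\notin\mathcal{D}$ but both endpoints $\pzc[x],\pzc[y]$ lie on $\mathcal{D}$, then $\mathcal{D}[\pzc[x],\pzc[y]]$ is a path in $\mathcal{G}\setminus\{e\}$ between $e$'s endpoints, so Lemma~\ref{lm:detour-saving} gives $\dm(\mathcal{D}[\pzc[x],\pzc[y]])\geq w(e)+3g\epsilon L_i$, and splicing $e$ in place of that subpath yields a path $\mathcal{D}'=\mathcal{D}[a,\pzc[x]]\circ e\circ\mathcal{D}[\pzc[y],b]$ between the same endpoints with
$\dm(\mathcal{D}')\leq\dm(\mathcal{D})+w(e)-\bigl(w(e)+3g\epsilon L_i\bigr)+w(\pzc[x])+w(\pzc[y])\leq\dm(\mathcal{D})-g\epsilon L_i<\dm(\mathcal{D})$,
contradicting that $\mathcal{D}$ is a shortest path between its endpoints. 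The $+w(\pzc[x])+w(\pzc[y])\leq 2g\epsilon L_i$ correction (the splice retains the two endpoint nodes) is why the lemma's slack of $3g\epsilon L_i$, rather than any positive slack, is needed.

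Because of the misreading, your write-up cycles through several splicings, each time correctly observing that the substitution produces a shorter path and each time rejecting that as useless; the ``correct, short argument'' paragraph still does not commit to a conclusion and ends without a contradiction being derived. As submitted, the proof is not complete. The fix is one sentence: identify what ``shortest'' means for $\mathcal{D}$, and the very first computation you wrote down finishes the argument.
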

\begin{proof} Suppose otherwise. Let $\mathpzc{x}$ and $\mathpzc{y}$ be the endpoints of $e$.  Since $e\not\in\mathcal{D}$, we replace the subpath of $\mathcal{D}$ between $\pzc[x]$ and $\pzc[y]$ by $e$ to obtain new path $\mathcal{D}'$ of $C$ of augmented diameter:
	\begin{equation}\label{eq:D-vs-Dprime}
	\begin{split}
	\adm(\mathcal{D}') &= \adm(\mathcal{D}) + w(e) - \adm(\mathcal{D}[\mathpzc{x},\mathpzc{y})] + w(\pzc[x]) + w(\pzc[y]) \\
	&\leq \adm(\mathcal{D}) + w(e) - (w(e) + 3g\epsilon L_i) + 2g\epsilon L_i < \adm(\mathcal{D}),
	\end{split}
	\end{equation}
	by Claim~\ref{clm:detour-saving}. This contradicts that $\mathcal{D}$ is the diameter path of $\mathcal{C}$.\QED
\end{proof}

We now claim a weaker version of Lemma~\ref{lm:credit-D0} which conveys the main intuition of the proof.

\begin{claim}\label{clm:at-least-one-extra} $\ce(\adm(\mathcal{D}_0)) + \sum_{\pzc[x] \in \mathcal{C}\setminus \mathcal{D}}\cred(\pzc[x]) \geq\ce\left( \adm(\mathcal{C}) + \Omega(\epsilon L_i)\right)$.
\end{claim}
\begin{proof}
	If $\mathcal{D}$ does not contain any edge in $E_{\dwh{\pzc[x]}}$, then by Corollary~\ref{cor:diam-path-IV}, at least one node, say $\pzc[y]$, incident to some edge in  $E_{\dwh{\pzc[x]}}$ is not in $\mathcal{D}_0$.  This  node is a white node since edges in $E_{\dwh{\pzc[x]}}$ are incident to white nodes only.  By Invariant (I2) for level $i-1$, $\cred(\pzc[y]) \geq \ce\zeta L_{i-1} =\Omega(\ce \epsilon L_i)$; this implies the claim.

	Suppose that $\mathcal{D}$ contains an edge $e$ with two endpoints $\pzc[u], \pzc[v]$. Let $\mathcal{D}_1$ be obtained from $\mathcal{D}_0$ by replacing the subpath $\mathcal{D}_0(\pzc[u], \pzc[v])$ by $e$. By Claim~\ref{clm:detour-saving}, it holds that:
	\begin{equation*}
	\begin{split}
	\adm(\mathcal{D}_0)  - \adm(\mathcal{D}_1) \geq 3 g\epsilon L_i  - w(\pzc[u]) - w(\pzc[v]) \geq 3 g\epsilon L_i -  2g\epsilon L_i = \Omega(\epsilon L_i).
	\end{split}
	\end{equation*}
	Since $\mathcal{D}$ is a diameter path, $\adm(\mathcal{D}_1) \geq \adm(\mathcal{D}) = \adm(\mathcal{C})$. Thus, $$\adm(\mathcal{D}_0)  - \adm(\mathcal{C}) ~\geq~ \adm(\mathcal{D}_0)  - \adm(\mathcal{D}_1) ~=~ \Omega(\epsilon L_i);$$ the claim holds.\QED
\end{proof}

Observe that Claim~\ref{clm:at-least-one-extra} implies Lemma~\ref{lm:credit-D0} when $t = O(\frac{1}{\epsilon})$.  However, the number of edges out-going from a Type-IV cluster could be up to $\Omega(\epsilon^{-2})$. The following lemma help us handle the case where $t \gg 1/\epsilon$.

\begin{lemma}\label{lm:t-node-IV}
	Let $\mathpzc{u}$ be a node in $\mathcal{C}$ that is incident to  $k$ edges in $E_{\dwh{\pzc[x]}}$. If $\mathpzc{u} \in \mathcal{D}_0$, then
	\begin{equation*}
	\ce(\adm(\mathcal{D}_0)) + \sum_{\pzc[x] \in \mathcal{C}\setminus \mathcal{D}_0}\cred(\pzc[x]) \geq \ce\left(\adm(\mathcal{C}) + \Omega(k \epsilon L_i)\right)
	\end{equation*}
\end{lemma}
\begin{proof}
	Let $E_{\pzc[u]}$ be set of edges in $E_{\dwh{C}}$ incident to $\pzc[u]$.  Let $\mathcal{Z}$ be the set of other endpoints of edges in $E_{\pzc[u]}$.
	Let $\mathcal{D}_0^{left}$ and $\mathcal{D}_0^{right}$ be two subpaths of $\mathcal{D}_0\setminus \{\pzc[u]\}$. Let $\mathcal{Z}^{left} = \mathcal{D}_0^{left}\cap \mathcal{Z}$ and $\mathcal{Z}^{right} = \mathcal{D}_0^{right}\cap \mathcal{Z}$ and  $\mathcal{Z}^{free} = \mathcal{Z} \setminus \{\mathcal{Z}^{left} \cup \mathcal{Z}^{right}\}$ (see Figure~\ref{fig:left-right-free}). 
	\begin{figure}[!htb]
		\center{\includegraphics[width=0.7\textwidth]
			{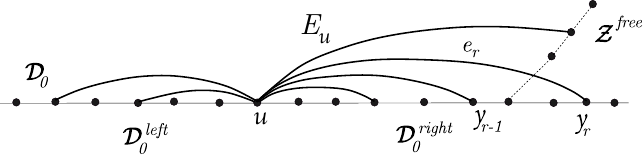}}
		\caption{Dashed edges are in $\mathcal{D}_0$ while solid edges are not in $\mathcal{D}_0$. Any endpoint  not in $\mathcal{D}_0$  of an edge in $E_{\pzc[u]}$ belongs to $\mathcal{Z}^{free}$.}
		\label{fig:left-right-free}
	\end{figure}
	
	We now focus on $\mathcal{D}_0^{right}$. Let $r = |\mathcal{Z}^{right}|$, and assume that $r > 0$. Then, $\mathcal{D}$ contains an edge $e_r \in E_{\pzc[u]}$ with an endpoint in $\mathcal{Z}^{right}$; there is only one such edge since $\mathcal{D}$ is a simple path. Let $\pzc[y]_r$ be another endpoint of $e_r$. Then $\pzc[u]$ is to the left of $\pzc[y]_r$ on $\mathcal{D}_0^{right}$   and by  Corollary~\ref{cor:diam-path-IV}, no node in $\mathcal{Z}^{right}$ is to the right of $\pzc[y]_r$ on $\mathcal{D}_0^{right}$. That is, $\mathcal{Z}^{right}\subseteq \mathcal{D}_0[\pzc[u], \pzc[y]_r]$ (see Figure~\ref{fig:left-right-free}).
	\begin{claim} \label{clm:e-r-length}
		$\adm(\mathcal{D}_0[\pzc[u], \pzc[y]_r])  \geq w(e_r) + (|\mathcal{Z}^{right}| +2)g\epsilon L_i$.
	\end{claim}
	\begin{proof}
		Let $\{e_1,e_2,\ldots, e_{r-1}\}$ be other edges of $E_{\pzc[u]}$ with endpoints $\pzc[y]_1, \ldots, \pzc[y]_{r-1}$ in $\mathcal{Z}^{right}$ where $\pzc[y]_j$ is to the left $\pzc[y]_{j+1}$ on $\mathcal{D}_0^{right}$, $1\leq j\leq r-2$. We prove the claim by induction on $r$. The claim is true when $r = 1$ by Claim~\ref{clm:detour-saving}. By the induction hypothesis, $ \adm(\mathcal{D}_0[\pzc[u], \pzc[y]_{r-1}]) \geq w(e_{r-1}) + (r+1)g\epsilon L_i$. Let $\mathcal{P} = \pzc[u]\circ e_{r-1}\circ \mathcal{D}_0[y_{r-1},y_r]$ be a path from $\pzc[u]$ going through $e_{r-1}$ and following $\mathcal{D}_0$ to $y_r$. By Claim~\ref{clm:detour-saving}, we have $\adm(\mathcal{P}) \geq w(e_r) + 3g\epsilon L_i$. Thus,
		\begin{equation}
		\begin{split}
		\adm(\mathcal{D}_0[\pzc[u], \pzc[y]_r]) &= \adm(\mathcal{D}_0[\pzc[u], \pzc[y]_{-1}]) + \adm(\mathcal{D}_0[\pzc[y]_{r-1}, \pzc[y]_r]) - w(\pzc[y]_{r-1}) \\
		&\geq w(e_{r-1}) + (r+1)g\epsilon L_i + \adm(\mathcal{D}_0[\pzc[y]_{r-1}, \pzc[y]_r]) - w(\pzc[y]_{r-1})\\
		&\geq w(\pzc[x]) + w(e_{r-1})  + \adm(\mathcal{D}_0[\pzc[y]_{r-1}, \pzc[y]_r]) + (r+1)g\epsilon L_i  -w(\pzc[y]_{r-1}) - w(\pzc[x])\\
		&\geq \adm(\mathcal{P}) + (r-1) g\epsilon L_i  \\
		&\geq w(e_r) +  (r-1) g\epsilon L_i + 3g\epsilon L_i  \qquad \mbox{(by Claim~\ref{clm:detour-saving})}\\
		&=  w(e_r) +  (r+2) g\epsilon L_i \inQED
		\end{split}
		\end{equation}
	\end{proof}

	Let $\mathcal{D}^{right}_1$ be the path obtained from $\mathcal{D}_0^{right}$ by replacing the path $\mathcal{D}^{right}_0[\pzc[u], \pzc[y]_{r-1}]$ by $e_r$. Then Claim~\ref{clm:e-r-length} implies that:
	\begin{equation*}
		 \adm(\mathcal{D}_0^{right})  - \adm(\mathcal{D}^{right}_1) \geq \ce(r+2)g\epsilon L_i - \ce(w(\pzc[u]) + w(\pzc[y_r])) \geq r\ce g\epsilon L_i
	\end{equation*}
	
	We now consider $\mathcal{D}_0^{left}$. Let $e_{\ell}$ be the edge of $E_{\pzc[u]}$  in $\mathcal{D}$. We construct $\mathcal{D}^{left}_1$ exactly the same way we construct $\mathcal{D}^{right}_1$: replace the subpath of $\mathcal{D}_0^{left}$ between $e_{\ell}$'s endpoints by $e_\ell$.   By the same argument, we have $ \adm(\mathcal{D}_0^{left})  - \adm(\mathcal{D}^{left}_1) ~\geq~ \ell\ce g\epsilon L_i$.  Let $\mathcal{D}_1 = \mathcal{D}_1^{left}\circ \mathcal{D}_1^{right}$. Since $\mathcal{D}$ is a diameter path, $\adm(\mathcal{D}_1) \geq \adm(\mathcal{D})$. Thus, we have:
	 
	 \begin{equation*}
	 \begin{split}
	 \ce(\adm(\mathcal{D}_0)) + \sum_{\pzc[y] \in \mathcal{C}\setminus \mathcal{D}_0} \cred(\pzc[y]) &\geq \ce\left(\adm(\mathcal{D}_1)  + (|\mathcal{Z}^{left}| + |\mathcal{Z}^{right}|)g\epsilon L_i\right) + |\mathcal{Z}^{free}|\zeta \epsilon L_i\\
	 &\geq \ce\adm(\mathcal{D}) + \Omega(\underbrace{|\mathcal{Z}^{free}| + |\mathcal{Z}^{left}| + |\mathcal{Z}^{right}|}_{= |\mathcal{Z}| = k})\epsilon L_i\\
	 &= \ce\adm(\mathcal{C}) + \Omega(k\epsilon L_i), 
	 \end{split}
	 \end{equation*}
	as claimed. \QED
\end{proof}

\noindent We are now ready to prove Lemma~\ref{lm:credit-D0}.
\vspace{0.2cm}
\begin{proof}[Proof of Lemma~\ref{lm:credit-D0}]
	Observe that the lemma holds when $t = 0$; thus, we can assume that $t > 0$. Let $c_0$ be a constant such that $E_{\dwh{\pzc[x]}} \geq \frac{t}{c_0}$; $c_0$ exists by Claim~\ref{clm:deg-Type-IV}. 

	Recall that every node in $\mathcal{C}$ is incident to at most $\upsilon \stackrel{\footnotesize{\mathrm{def.}}}{=} 2g\zeta^{-1}\epsilon^{-1}$ edges. If $t\leq 2\upsilon$, then $t\epsilon \leq 4g\zeta^{-1} = O(1)$, implying that $\Omega(\epsilon L_i) = \Omega(t\epsilon^2 L_i)$. Thus, the lemma follows directly from Claim~\ref{clm:at-least-one-extra}.
	
	Henceforth, we assume that $t \geq 2\upsilon$. Let $\wmc[X]$ be the set of nodes in the subpath of $\wmc[P]$ in $\dwh{\pzc[x]}$.  Let  $\mathcal{X}$ be the set of nodes obtained by uncontracting contracted nodes in $\wmc[X]$. Let $\mathcal{Y} \subseteq \mathcal{X}$ be the set of nodes where  each node in $\mathcal{Y}$ is incident to at least   $\frac{t}{2\upsilon c_0}$ edges in $E_{\dwh{\pzc[x]}}$.
	
	\begin{claim}  $|\mathcal{Y}| \geq \frac{t}{2\upsilon c_0}$.
	\end{claim}
	\begin{proof}
		Suppose otherwise, then nodes in $\mathcal{Y}$ are incident to less than $\frac{t}{2\upsilon c_0} \upsilon = \frac{t}{2 c_0}$ edges in $E_{\dwh{\pzc[x]}}$. Since $|\mathcal{X}|\leq \frac{2g}{\zeta\epsilon} = \upsilon$ by the assumption  in Lemma~\ref{lm:credit-D0} that $\mathcal{C}$ has at most $\upsilon$ nodes,   nodes in $\mathcal{X}\setminus \mathcal{Y}$ are incident to less than $\frac{t}{2\upsilon c_0} \upsilon = \frac{
			t}{2 c_0}$  edges in $E_{\dwh{\pzc[x]}}$. This contradicts that $|E_{\dwh{\pzc[x]}}| \geq \frac{t}{c_0}$.\QED
	\end{proof}
	
	If $\mathcal{D}_0 \cap \mathcal{Y} = \emptyset$, then $|\mathcal{C}\setminus \mathcal{D}_0| \geq \frac{t}{2\upsilon c_0} = \Omega(t\epsilon)$. Thus, by Invariant (I2) for level $i-1$, $$\sum_{\pzc[x] \in \mathcal{C}\setminus \mathcal{D}_0}\cred(\pzc[x])\geq | \mathcal{C}\setminus \mathcal{D}_0| \epsilon L_i = \Omega(t\epsilon^2 L_i).$$
	
	Otherwise, Lemma ~\ref{lm:credit-D0} follows by applying Lemma~\ref{lm:t-node-IV} with $k = \frac{t}{2\upsilon c_0} = \Omega(t\epsilon)$.  \QED
\end{proof}

\subsection{Deferred proofs}

\subsubsection{Proof of Lemma~\ref{lm:deg-K}}\label{subsec:Proof-Deg-K}

We restate the lemma here for convenience.

\degK*

Recall that $(1+s\epsilon)$ is the stretch of $S_{\gr}$ and $\delta$ is the parameter in Lemma~\ref{lm:weight-reduction}.  The main idea is that for each $\epsilon$-cluster $\pzc[x]$, we partition the space into $O(\epsilon^{-d+1})$ cones around an arbitrary point $p \in \pzc[x]$ and show that, for each cone, there is at most one spanner edge from $p$ to another point (belonging to some other $\epsilon$-cluster) in the cone.

\begin{claim}\label{clm:K-simple}$\mathcal{K}$ is a simple graph when $s\geq 16g$.
\end{claim}
\begin{proof}
	The proof is similar to the proof by BLW (see Lemma 3.1 in~\cite{BLW19}); we include the details here for completeness.  Suppose that there are two parallel edges $e_1 = (x_1,y_1), e_2 = (x_2,y_2)$ where $x_1,x_2$ are in the same $\epsilon$-cluster $\pzc[x]$, and $y_1,y_2$ are in the same $\epsilon$-cluster $\pzc[y]$. Note that $L_i/2 \leq L_i/(1+\delta) \leq |x_1y_1|, |x_2y_2| \leq L_i$.
	
	By Invariant (I1), $\pzc[x]$ and $\pzc[y]$ have diameters at most $g\epsilon L_i$. Thus, there is a path $P_x$ ($P_y$) in $S_{\gr}$ between $x_1$ and $x_2$ ($y_1$ and $y_2$) of weight at most $(1+\epsilon)g\epsilon L_i \leq 2g\epsilon L_i$ since $\eps \leq 1$. Let $Q$ be the path between $x_2$ and $y_2$ composed of $P_x$, edge $x_1y_1$, and $P_y$. By the triangle inequality, we have $|x_2y_2|\geq |x_1y_1| - w(P_x) - w (P_y) ~\geq ~ |x_1y_1| - 4g\epsilon L_i$. Thus, it holds that:
	\begin{align*}
		(1+s\epsilon)|x_2y_2| &\geq (1+s\epsilon)( |x_1y_1| - 4g\epsilon L_i)\\
		& \geq |x_1y_1| + s\epsilon L_i/2 - 4g\epsilon L_i \\
		&\geq  |x_1y_1| + 4g\epsilon L_i \geq  |x_1y_1| + w(P_x) + w(P_y) = w(Q), 
	\end{align*}
contradicting Fact~\ref{fact:edge-path-weight}. 	\QED
\end{proof}

We now focus on bounding $\dk$.  Let $\pzc[x]_0$ be an $\epsilon$-cluster in $\mathcal{K}$ of maximum degree and $\pzc[x]_1, \pzc[x]_2,\ldots,\pzc[x]_{\dk}$ be $\pzc[x]_0$'s neighbors. Let $x_i$ be a point of $P$ in $\pzc[x]_i$.  Note that edge $e_i$ in  $\mathcal{K}$ connecting $\pzc[x]_0$ and $\pzc[x]_i$ has length at most $L_i$ and at least $\frac{1}{(1+\delta)}L_i \geq (1-\delta)L_i$. We denote by $\dm(\pzc[x])$ the diameter of an $\epsilon$-cluster $\pzc[x]$.  Since  $\dm(\pzc[x]_i)\leq gL_{i-1} = g\epsilon L_i$ by invariant (I2), we have:
\begin{equation}\label{eq:x0xi-up-dist}
	L_i(1 - \delta - 2g\epsilon) \leq  |x_0x_i| \leq (2g\epsilon + 1)L_i
\end{equation}
for every $ i \in [\dk]$.  Thus, by Equation~\ref{eq:x0xi-up-dist}, every point in $X = \{x_1,x_2,\ldots, x_p\}$ lies in the annulus $A = B_d(x_0,L_i(1+2g\epsilon))\setminus B_d(x_0,L_i(1-\delta - 2g\epsilon))$ (see Figure~\ref{fig:annulus}). 

\begin{figure}[!htb]
	\center{\includegraphics[width=0.7\textwidth]
		{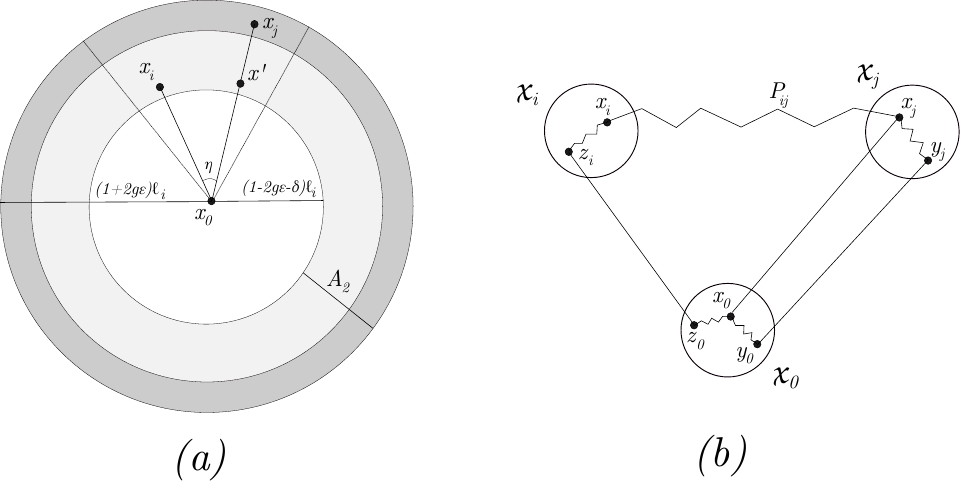}}
	\caption{An illustration for the proof of Lemma~\ref{lm:deg-K}. Zigzag paths are the paths in $S_{\gr}$ between the corresponding endpoints}
	\label{fig:annulus}
\end{figure}

We first prove Lemma~\ref{lm:deg-K} when $d = 2$ to convey the intuition of our argument.

\begin{claim}\label{clm:deg-K-R2}
	$|X| \leq  \frac{1}{\epsilon}$ when $d = 2$.
\end{claim}
\begin{proof}
	First, we divide the circle $B_2(x_0,L_i(1+2g\epsilon))$ into $\Theta(\frac{1}{\epsilon})$ equal sectors where the angle between two radii of the sector is $\Theta(\epsilon)$. To prove the claim, we will show that each sector contains at most one point of $X$.
	
	Suppose that there are two points $x_i,x_j $ of $X$ in the same sector. Then the angle $\eta = \angle x_ix_0x_j$ is at most $\epsilon$.~W.l.o.g, we assume that $x_0x_i \leq x_0x_j$. Our strategy is to show that there is a path connecting $x_0$ and $x_j$ that does not contain $x_0x_j$ and has stretch at most $(1+s\epsilon)$; this contradicts Fact~\ref{fact:edge-path-weight}. The intuition behind the proof is that when $\delta$ is sufficiently small, the distance between $x_i$ and $x_j$ is at most $O(\delta L_i)$. Since there is a good stretch path between $x_i$ and $x_j$ in the spanner, this path with the edge $x_0x_i$ would form an $x_0$-to-$x_j$ path of length at most $(1+s\epsilon)|x_0x_j|$.

	Let $x'$ be the point on the segment $x_0x_j$ such that $|x_0x'| = |x_0x_i|$ (see Figure~\ref{fig:annulus}(a)). We have:
	\begin{equation}\label{eq:angle-dist}
		|x'x_i| ~= ~2|x_0x_i| \sin(\eta/2)  ~\leq~ |x_0x_i| \eta ~\leq ~\epsilon |x_0x_i|
	\end{equation}

	Since both $x'$ and $x_j$ are in the annulus $A_2 = B_2(x_0,L_i(1+2g\epsilon))\setminus B_2(x_0,L_i(1-\delta - 2g\epsilon))$, we have $|x_jx'| \leq  L_i(\delta + 4g\epsilon)$. Thus, by the triangle inequality,
	\begin{equation}\label{eq:xixj-length}
		\begin{split}
			|x_ix_j|~ &\leq ~|x_ix'| + |x'x_j|~ \leq ~  \epsilon|x_0x_i| +   L_i(\delta + 4g\epsilon) \\
			&\leq L_i(1+2g\epsilon)\epsilon+   L_i(\delta + 4g\epsilon) \\
			&= L_i(\delta + (4g+1)\epsilon + 2g\epsilon^2) \\
			&\leq (6g+2)\delta L_i \qquad \mbox{(since }\epsilon \ll \delta \ll 1)\\
			&= L_i/8 \leq |x_0x_j|/4  \quad \mbox{(since } |x_0x_j| \geq \frac{L_i}{1+\delta} \geq L_i/2).
		\end{split}
	\end{equation}
	Let $P_{ij}$ be a path of weight at most $(1+s\epsilon)|x_ix_j|$ between $x_i$ and $x_j$ in $S_{\gr}$ (see Figure~\ref{fig:annulus}(b)). Note that $\epsilon \ll \frac{1}{s}$, so $s\epsilon < 1$. That implies $|x_ix_j| ~\geq~ \frac{w(P_{ij})}{1+s\epsilon} ~\geq ~(1-s\epsilon)w(P_{i,j})$ and
	\begin{equation}\label{eq:Pij-vs-x0xj}
		w(P_{ij}) ~\leq~ 2|x_ix_j|~ \leq ~|x_0x_j|/2
	\end{equation}
	by Equation~\eqref{eq:xixj-length}.  We have:
	\begin{equation}\label{eq:x0xj-vs-x0xi}
		\begin{split}
			(1+s\epsilon)|x_0x_j| &= |x_0x'| + |x'x_j| + s\epsilon|x_0x_j| \\ &\geq |x_0x_i| + |x_ix_j| - |x_ix'| + s\epsilon|x_0x_j| \quad\mbox{ (by the triangle inequality)}\\
			&\geq |x_0x_i| + (1-s\epsilon)w(P_{ij})- |x_ix'| + s\epsilon|x_0x_j|\\
			&\geq  (|x_0x_i| + w(P_{ij})) + s\epsilon |x_0x_j| - \epsilon|x_0x_i| -  s\epsilon|x_0x_j|/2 \qquad \mbox{ (by Equations~\eqref{eq:Pij-vs-x0xj} and~\ref{eq:angle-dist})}\\
			&\geq (|x_0x_i| + w(P_{ij})) + (s/2 -1)\epsilon|x_0x_j|\\
		\end{split}
	\end{equation}
	Let $z_0z_i$ and $y_0y_j$ be $\pzc[x]_0$-to-$\pzc[x]_i$ and $\pzc[x]_0$-to-$\pzc[x]_j$ edges in $\mathcal{K}$, respectively.  Let
	\begin{equation*}
		P_{0j} = (y_0\stackrel{S_{\gr}}{\leadsto}x_0)\circ (x_0\stackrel{S_{\gr}}{\leadsto}z_0) \circ z_0z_i\circ (z_i\stackrel{S_{\gr}}{\leadsto}x_i)\circ (x_i\stackrel{S_{\gr}}{\leadsto}x_j)\circ (x_j\stackrel{S_{\gr}}{\leadsto}y_j)
	\end{equation*}
	be a $y_0$-to-$y_j$-path in $S$ between $y_0$ and $y_j$. We have:
	\begin{equation}\label{eq:P0j-vs-x0xi}
		\begin{split}
			w(P_{0j}) &= 2g\epsilon L_i + |z_0z_i| + (w(P_{ij}) + 2g\epsilon L_i)\\
			&\leq 4g\epsilon L_i + (|x_0x_i| + 2g\epsilon L_i) + w(P_{ij}) \\
			&\leq  (|x_0x_i| + w(P_{ij})) + 6g\epsilon L_i
		\end{split}
	\end{equation}
	
	Thus, it holds that:
	\begin{equation*}
		\begin{split}
			(1+s\epsilon)|y_0y_j| &\geq (1+s\epsilon)(|x_0x_j| - 2g\epsilon L_i)\\
			&\geq (1+s\epsilon)|x_0x_j|  - 4g\epsilon L_i \quad \mbox{ (since }s\epsilon < 1)\\
			&\geq (|x_0x_i| + w(P_{ij})) + (s/2 -1)\epsilon|x_0x_j| - 4g\epsilon L_i \qquad \mbox{ (by Equation~\eqref{eq:x0xj-vs-x0xi})}\\
			&\geq w(P_{0j})  + (s/2 -1)\epsilon|x_0x_j| - 10g\epsilon L_i  \qquad \mbox{(by Equation~\eqref{eq:P0j-vs-x0xi})}\\
			&\geq w(P_{0j}) + (s/4 - 1/2 - 10g)\epsilon L_i \quad \mbox{ (since }|x_0x_j| \geq L_i/2)\\
			&> w(P_{0j}) \qquad \mbox{ since }s\geq 40g+3
		\end{split}
	\end{equation*}
	This contradicts that $y_0y_j$ is an edge of $S_{\gr}$ by Fact~\ref{fact:edge-path-weight}.\QED
\end{proof}
To show the generalized version of Claim~\ref{clm:deg-K-R2} for general $d$, we consider a set of $(d,\epsilon/2)$-spherical code $A_{d,\epsilon/2}$ (see Definition~\ref{def:spherical-code}). By using standard volume argument (see Lemma~\ref{lm:spherical-code-up}), we have:
\begin{equation}\label{eq:A-size}
	|A_{d,\epsilon/2}| = \Theta(\epsilon^{-d+1}))
\end{equation}
For each point $c \in A_{d,\epsilon/2}$, we define a spherical sector $S_c$ with angle $\epsilon$ and apex $x_0$ that has $c$ as the middle point of the cap of $S_c$. We claim that:
\begin{claim}\label{clm:X-cone-C}
	There is at most one point of $X$ in $S_c$.
\end{claim}
\begin{proof}
	Suppose  for contradiction that there are two points $x_i$ and $x_j$ of $X$ in $S_c$. Then, $\angle x_ix_0x_j \leq \epsilon$. Assume that $|x_0x_i| \leq |x_0x_j|$. Let $y_0y_j$ be the $\pzc[x]_0$-to-$\pzc[x]_j$ edge in $\mathcal{K}$. By exactly the same argument as in the proof of Claim~\ref{clm:deg-K-R2}, we conclude that there is a path of weight at most $(1+s\epsilon)|y_0y_j|$ in $S_\gr$, which contradicts that $y_0y_j$ is an edge of $S_\gr$; the claim follows. \QED
\end{proof}

Lemma~\ref{lm:deg-K} then follows directly from Claim~\ref{clm:X-cone-C} and Equation~\eqref{eq:A-size}.

\subsubsection{Tree clustering: Proof of Lemma~\ref{lm:tree-clustering}}\label{subsec:Proof-Tree-Cluster}

In this section, we prove Lemma~\ref{lm:tree-clustering}. We say a node $x$ \emph{$T$-branching} if $x$ has at least three neighbors in $T$. When the tree is clear from the context, we simply say that $x$ is branching. We denote by $|xy|$ the \emph{augmented distance} between two nodes $x,y \in T$. (The augmented distance between two nodes $x$ and $y$ in $T$ is the augmented length of the path between $x$ and $y$ in $T$.) 

\begin{definition}[Branching Radius]\label{def:rx}
	For each branching node $x$, the branching radius of $x$ is the largest positive $r$ such that  there exist \emph{three} internally-node disjoint paths   $P_1,P_2,P_3$ of $T$ such that:
	\begin{enumerate}[noitemsep]
		\item $P_1,P_2,P_3$ share the same endpoint $x$.
		\item  $r - 2\eta L \leq   \adm(P_1), \adm(P_2), \adm(P_3) \leq r$.
	\end{enumerate}
We denote the branching node of $x$ by $r(x)$.
\end{definition}

\begin{wrapfigure}{r}{0.35\textwidth}
	\vspace{-20pt}
	\begin{center}
		\includegraphics[width=0.35\textwidth]{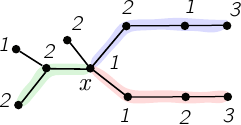}
	\end{center}
	\vspace{-10pt}
	\caption{\footnotesize{Each edge has weight $1$ and each node has weight as annotated in the figure, with $\eta L = 3$. The branching radius $r(x) = 13$; three paths realizing the branching radius of $x$ are highlighted by different colors. One path (colored green) has augmented length $7$ while two other paths have augmented length $10$.}}
	\label{fig:branching-rad}
\end{wrapfigure}

The slack $-2\eta L$ is due to that edges and nodes of $T$ has weight at most $\eta L$. (See Figure~\ref{fig:branching-rad} for an illustration.) Clearly, by definition, $r(x)\geq 2\eta L$ for all $x$. We have:

\begin{observation}\label{obs:branching-rad} Given a branching node $x$,  let $P_1,P_2,P_3$ be any three internally-node disjoint paths sharing the same node $x$ such that $ \adm(P_1) \leq \adm(P_2) \leq \adm(P_3)$. Then (1) $\adm(P_1) \leq r(x) - 2\eta L$ and (2) if $\adm(P_1) \geq \adm(P_3) - 2\eta L$, then $\adm(P_3) \leq r(x)$. 
\end{observation}
\begin{proof} If $\adm(P_1) > r(x) - 2\eta L$, then $$\adm(P_1), \adm(P_2), \adm(P_3) > r(x) - 2\eta L,$$ contradicting the definition of branching radius; this implies (1). If  $\adm(P_1) \geq \adm(P_3) - 2\eta L$, then $\adm(P_3) \geq \adm(P_1), \adm(P_2), \adm(P_3) >  \adm(P_3) - 2\eta L$. Thus, by the definition of branching radius, $r(x) \geq \adm(P_3)$; this implies (2).\QED
\end{proof}

For each branching node $x$, we define $B_x = \{y \in V(T) : |xy| \leq \max\{r(x), \zeta L\}\}$ be a ball center at $x$ in the metric induced by the augmented distance in $T$.  We abuse the notation here by using $B_x$ to denote the subtree of $T$ induced by nodes in $B_x$. Let

\begin{equation}\label{eq:F}
	F = \bigcup_{x \mbox{ is $T$-branching}} B_x.
\end{equation}

Clearly, $F$ is forest whose trees are subtrees of $T$. We will construct a set of subtrees $\mathcal{F}$ that partition the vertex set of $F$ such that $\mathcal{F}$ has all properties claimed by Lemma~\ref{lm:tree-clustering}.

We say a $T$-branching node $x$ \emph{subsumed} by a $T$-branching node $y \not= x$ if
\begin{align}\label{eq:subsume}
	B_x & \subsetneq B_y\\
	\mbox{OR} \quad  B_x &=  B_y  \wedge (r(x) < r(y)) \label{eq:subsum-eq}
\end{align}

The subsumed relationship defines a partial order $\preceq$ on the set of branching nodes of $T$: $x\preceq y$ if $x$ is subsumed by $y$. 

We will apply our construction to each tree $X \subseteq F$ to get a set of subtrees $\mathcal{F}_X$ and our final set $\mathcal{F}$ will be:
\begin{equation}\label{eq:F-final}
	\mathcal{F} = \cup_{X\subseteq F} \mathcal{F}_X
\end{equation} 

\paragraph{Construction of $\mathcal{F}_X$.~}
Let $W$ be the set of all maximal elements in the partial order $\preceq$ defined by the subsuming/subsumed relationship restricted to branching nodes of $X$.  Initially, we mark every node in $X$  \emph{ungrouped}.  There are four steps in our construction. In Step 1 and Step 2, we guarantee that every branching node $x$ with $r(x) \geq \zeta L_i$ is grouped into a subtree in $\mathcal{F}_X$. In Step 3, we deal with branching node with $r(x)  < \zeta L_i$. In this step, we only consider $x$ such that there exists three internally node-disjoint paths of augmented length roughly $r(x)$. The remaining nodes are handled in Step 4, where we merge each  tree of ungrouped nodes (obtained from $X$ by removing grouped nodes)  to an \emph{adjacent tree} in $\mathcal{F}_X$; two (node-disjoint) trees are adjacent if  there is an edge between them. To guarantee that the diameter is bounded by $95\gamma L$,\footnote{We do not try to optimize the constant $95$ here.} for each  tree of ungrouped nodes, we must show that there is an adjacent tree in $\mathcal{F}_X$ of roughly the same diameter; this is the most technical part of our construction.

\paragraph{Step 1:}  Let $x \in W$ be a node  such that $r(x) \geq \gamma L$, if $B_x$ contains no grouped node, we form a new tree $B_x$, add it to $\mathcal{F}_X$ and mark every node of  $B_x$ as \emph{grouped}. We repeat this step until it no longer applies. 

\paragraph{Step 2:}  Let $x \in \mathcal{W}$ such that $r(x) \geq \gamma L$. Then, there must be at least one node in $B_x$ that is marked grouped in Step 1. Let $y$ be the grouped node closest (in augmented distance) to $x$. Ties are broken by the lexicographic order. We include every (ungrouped) node of $X[x, y)$ to the tree in $\mathcal{F}_X$ containing $y$. We then mark every node of  $X[x,y)$ as grouped. (Some nodes of $B_x$ may remain ungrouped.)

The tie-breaking rule in Step 2 guarantees that if $X[x, y) \cap X[x', y') \not=\emptyset$ and $y$ and $y'$ are nodes closest to $x$ and $x'$, respectively, then $y = y'$ and hence, every node of $X[x, y) \cup X[x', y')$ are grouped into the same tree in Step 1.

We remove all grouped nodes in $W$, sort remaining nodes in $\mathcal{W}$ by the non-increasing order of $r(x)$, and proceed to Step 3. By the construction in Step 1 and Step 2, every node $x \in W$ has $r(x) < \gamma L$.

\paragraph{Step 3:}  For each node $x$  in the sorted order in $W$, if there are three internally disjoint path $P_1,P_2,P_3$ starting from $x$ such that (a) $r(x) - 2\eta L \leq \adm(P_1),\adm(P_2), \adm(P_3) \leq r(x)$ and (b) there is no grouped node in $P_1\cup P_2 \cup P_3$, we then choose $P_1,P_2,P_3$ of maximal augmented length subject to (a) and (b);  add the tree $T_x = P_1\cup P_2\cup P_3$ to $\mathcal{F}_X$; and mark every node of $T_x$ as grouped. We call $x$ the \emph{core} node of $T_x$. 

For clarity, let $\mathcal{F}^i_X$ be $\mathcal{F}_X$ after Step $i$, $1\leq i\leq 3$.  We show the following structure of remaining ungrouped nodes of $X$.

\begin{lemma} \label{lm:tree-adj-tree-edm} Let $Y$ be the forest obtained by removing all grouped nodes in $X$. Let $T'$ be any tree in $Y$ and $\adj(T')$ be the tree of maximum augmented diameter in $\mathcal{F}^3_X$ adjacent to $T'$. Then $$\adm(T') \leq 8\adm(\adj(T')).$$
\end{lemma}
\begin{proof}
	Let $D$ be the diameter path of $T'$; $D$ is the path realizing the augmented diameter of $T'$. We say a ball $B_x$ \emph{cut} $D$ if $D \cap B_x \not= \emptyset$. We say that $B_x$ \emph{internally cuts} $D$ if it cuts $D$ and none of $D$'s endpoints belongs to the ball.  We say that $B_x$ is \emph{$D$-maximal} if there is no other ball $B_y$ such that $B_x\cap D \subsetneq B_y\cap D$.  (Note that $B_x\cap D$ is a single subpath of $D$ since $T$ is a tree.) 
	
	\begin{wrapfigure}{r}{0.30\textwidth}
		\vspace{-20pt}
		\begin{center}
			\includegraphics[width=0.30\textwidth]{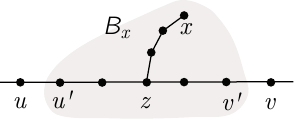}
		\end{center}
		\vspace{-10pt}
	\end{wrapfigure}

	\begin{claim} \label{clm:D-maximal-cut} Any  $D$-maximal ball $B_x$  must contain at least one endpoint of $D$.
	\end{claim}
	\begin{proof}
		Suppose for contradiction that there exists a $D$-maximal ball $B_x$ that internally cuts $D$; by definition $B_x$ does not contain any endpoint of $D$. We can assume that $x \in W$ since otherwise, we can just choose a node $y$ subsuming $x$ in $W$; by the definition of subsumed, $B_x \subseteq B_y$.

		Let $u,v$ be such that $D(u,v) \subseteq B_x$ and $D[u,v]$ is maximal. Let $u'$ and $v'$ be neighbor of $u$ and $v$, respectively, on $D[u,v]$.  Since $u,v \not \in B_x$, there must be a branching node $z$ in $D(u,v)$ where $X[z, x]$ is internally disjoint from $D[u,v]$.  (It is possible that $x = z$.)
		
		\begin{observation}\label{obs:zu-zv}$|zu'| \leq |zv'| + 2\eta L$ and $|zv'| \leq |zu'| + 2\eta L$.
		\end{observation}
		\begin{proof}
			Let $r  = \min(r(x),\zeta L_i)$ and $k = |xz| - w(z)$. Note that $|xu'| \leq r$ and $|zu'| ~=~ |xu'| - (|xz| - w(z)) ~\leq~  r - k$ and that:
			\begin{equation*}
				|xv'| + 2\eta L \geq |xv| \geq r,
			\end{equation*}
		which implies $|zv'| = |xv'| - (|xz| - w(z)) \geq r - k - 2\eta L \geq |zu'| - 2\eta L$. Symmetrically, it holds that $|zu'|\geq |zv'| - 2\eta L$. \QED
		\end{proof}
	We now continue the proof of Claim~\ref{clm:D-maximal-cut}. We consider two cases:
	
		\vspace{0.25cm}
		\noindent\textbf{Case 1: $x = z$.~} Note that $r(x) < \zeta L$ since otherwise, it was grouped in Steps 1 or 2, and hence $x$ is not present in $T'$. Thus, $|xu|, |xv| > r(x)$ since $u$ and $v$ are not in $B_x$. 		Since $x$ is branching, there must be a node $w$ such that $X[x,w]$ is internally disjoint from $X[x,u]$ and $X[x,v]$ and $|xw|\geq r(x)- 2\eta L$. If there is no node in $X[x,w]$ that is marked grouped, then there are three internally node-disjoint paths $X[x,u'], X[x,v'], X[x,w]$ of augmented length  in $[r(x)-2\eta L, r(x)]$ such that no path contains a grouped node; contradicting the construction in Step 3.
		
			\begin{wrapfigure}{r}{0.30\textwidth}
			\vspace{-20pt}
			\begin{center}
				\includegraphics[width=0.30\textwidth]{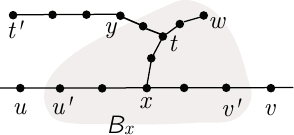}
			\end{center}
			\vspace{-10pt}
		\end{wrapfigure}
		
		Thus, a node, say $t\in X[x,w]$ is grouped to $T_y$, a tree with core  node $y$ in $\mathcal{F}$; we can assume $r(y) \geq r(x)$ since we process nodes in $W$ by the decreasing order of branching radius, and that $t$ is the first node on $X[x,w]$ when walking from $y$ to $x$. Since $y$ is branching, there must be node $t' \in T_y$ such that $X[y,t']$ is internally disjoint from $X[y,t]$, and $|yt'| \geq r(y) - 2\eta L$. This implies that:
		\begin{equation*}
			|xt'| > |tt'| \geq |yt'| \geq r(y) - 2\eta L \geq r(x) - 2\eta L,
		\end{equation*}
		where the first inequality is due to $x\not= t$ (while it could be that $y=t$). Thus, three paths $X[x,u], X[x,v], X[x,t']$ are internally node-disjoint and have minimum length strictly larger than $r(x) - 2\eta L$, contradicting Item (1) in Observation~\ref{obs:branching-rad}.

		\vspace{0.25cm}
		\noindent\textbf{Case 2: $x \not= z$.~} We first observe that:
		\begin{observation}\label{obs:rz-vs-rx} $r(z) > r(x)$.
		\end{observation}
		\begin{proof}
			Assume otherwise, that $r(z) \leq r(x)$. Let $w$ be a node such that $X[x,z]$ and $X[x,w]$ are internally disjoint and $|xw| \geq r(x) - 2\eta L$; $w$ exists since $x$ is a branching node. Observe that:
			
			\begin{equation*}
				|zw| > |xw| \geq r(x) - 2\eta L \geq r(z) - 2\eta L
			\end{equation*}

			Observe that $\max\{|zu'|, |zv'|\} \leq r(z)$ since otherwise, say $|zu'| > r(z)$, and hence, by Observation~\ref{obs:zu-zv}, $|zv'| \geq |zu'| -2\eta L > r(z) - 2\eta L $. That is, three paths $X[z,w], X[z,u'], X[z,v']$ have minimum augmented length strictly larger than $r(z)- 2\eta L$, contradicting Item (1) of Observation~\ref{obs:branching-rad}.
			
			\begin{wrapfigure}{r}{0.30\textwidth}
				\vspace{-25pt}
				\begin{center}
					\includegraphics[width=0.30\textwidth]{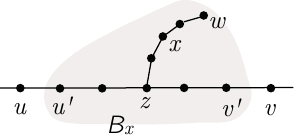}
				\end{center}
				\vspace{-20pt}
			\end{wrapfigure}

			Additionally, $\min(|zu|, |zv|) \leq r(z)$ since otherwise, three paths $X[z,w], X[z,u], X[z,v]$ have minimum augmented length strictly larger than $r(z) $, contradicting Item (1) of Observation~\ref{obs:branching-rad}.~W.l.o.g, we assume that $|zu| \leq r(z)$ and hence, $D[u,v'] \in B_z$, contradicting the $D$-maximality of $B_x$.\QED
		\end{proof}
		Next, we observe that:
		\begin{observation}\label{obs:Bx-Bz} $B_x\subseteq B_z$.
		\end{observation}
		\begin{proof}
			Suppose otherwise; let $t$ be any node in $B_x$ such that $|zt| > r(z)$. By Observation~\ref{obs:rz-vs-rx}, $z\not\in X[x,t]$. Thus, $X[z,t]$ is internally node-disjoint from $X[z,u], X[z,v]$. By Item (1) of Observation~\ref{obs:branching-rad}, $\min(|zu|, |zv|) \leq r(z) - 2\eta L < r(z)$. W.l.o.g, assume that $|zu| < r(z)$ and since $|zv'|\leq r(x) < r(z)$ (again by Observation~\ref{obs:rz-vs-rx}), $D[u,v'] \in B_z$, contradicting the $D$-maximality of $B_x$. \QED
		\end{proof}
	 We now complete the proof of Claim~\ref{clm:D-maximal-cut}.	By Observation~\ref{obs:rz-vs-rx} and Observation~\ref{obs:Bx-Bz}, we conclude that $z$ subsumes $x$ and hence, $x$ is not in $W$. This is a contradiction as we assumed earlier that $x \in W$. \QED	
	\end{proof}

	We are now continuing the proof of Lemma~\ref{lm:tree-adj-tree-edm}. Let $x\in W$ be such that $B_x$ maximally cuts $D$; there must be such a ball since by the definition of $F$, every node of $F$ is contained in some ball centered at a node in $W$. Let $D[u,v]$ be the maximal subpath of $D$ that belongs to $B_x$. Clearly, $|uv|\leq 2\max\{r(x),\zeta L_i\}\leq 2\gamma L$ since both $u,v$ are in $B_x$. The following claim is the key in showing that $\adm(T') \leq 8\adm(\adj(T'))$.

	\begin{claim} \label{clm:Duv-length}
		$|uv| \leq  4 \adm(\adj(T'))$.
	\end{claim}
	\begin{proof}
		Let $z$ be the first node on $D[u,v]$ when we walk from $x$ to $u$.~W.l.o.g, we assume that $|zu| \geq |zv|$. We consider two cases:
		
			\begin{wrapfigure}{r}{0.30\textwidth}
			\vspace{-20pt}
			\begin{center}
				\includegraphics[width=0.30\textwidth]{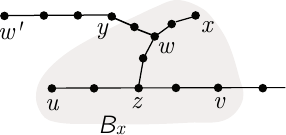}
			\end{center}
			\vspace{-10pt}
		\end{wrapfigure}
		\vspace{0.25cm}
		\noindent\textbf{Case 1: there is a node $w$ in $X[x,z]$ that is marked as grouped.~} Let $y \in W$ be such that $w\in B_y$. If $r(y)\geq\gamma L$, the claim holds since $|uv| \leq 2\gamma L$. Thus, we can assume that $r(y) < \gamma L$. If $|zu| > 2r(y)$, then for any $w' \in B_y$,
		\begin{equation*}
			\begin{split}
			|xw'|&\leq |xw|  + |wy|  + |yw'| \leq |xw| + 2r(y) \\ &< |xw| + |zu| \leq |xu| \leq r(x),
			\end{split}
		\end{equation*}
		which implies $B_y \subseteq B_x$ and $r(y) < r(x)$. This means $y$ is subsumed by $x$, contradicting that $y\in W$. We conclude that $|zu| \leq 2r(y)$. Since $|zv|\leq |zu|$, it holds that $|uv| \leq 4 r(y) \leq 4 \adm(\adj(T'))$; the claim follows.
		
		 \vspace{0.25cm}
		 \noindent\textbf{Case 2: every node in $X[x,z]$ that is ungrouped.~} This implies $x \in T'$, and hence $r(x) < \gamma L$ by the construction of Steps 1-2.  Since in Step 3, we added trees to $\mathcal{F}_X$ in the decreasing order of branching radius, there must be a node $y$ such that $r(y)\geq r(x)$ and $w \in B    _x\cap T_y$ is marked grouped. Assume that $w$ is closest to $x$ among nodes in $B_x\cap T_y$.
		 
		 If no node in $X[x,w)$ is marked grouped, then $r(x) \leq r(y) \leq \adm(\adj(T'))$. Since $|uv|\leq 2r(x)$, the claim holds.

		 Otherwise, there exists a node $a \in X[x,w]$ that is marked grouped, and that $a\in T_b$ for some node $b\in W$. W.l.o.g, we assume that $a$ is the first node on the path from $b$ to $x$ on $X[x,w]$. We only need to consider the case $|zu| > 2r(b)$, since otherwise, the claim holds. This implies that $r(x)\geq |zu| > 2r(b)$.
		 
		 If $a = b$, let $t_1,t_2$ be such that $X[t_1,t_2] = B_b \cap X[x,w]$ (see Figure~\ref{fig:case2}(a)). Observe that $x \not \in X[t_1,t_2]$ since $x$ is ungrouped and $w \not\in X[t_1,t_2]$ since $w \in B_y$. Thus,  $X[t_1,t_2]\subseteq X(x,w)$, and hence for any $t_3 \in B_b$, $|bt_3| \leq |bw| < r(x)$. This implies that $|xt_3| \leq |xw| \leq r(x)$. Hence, $B_b\subseteq B_x$ while $r(x) > r(b)$. This means $b$ is subsumed by $x$, contradicting that $b \in W$. 
		 
		 \begin{figure}[!htb]
		 	\center{\includegraphics[width=0.7\textwidth]
		 		{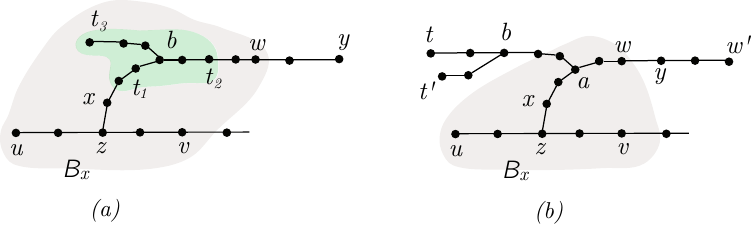}}
		 	\caption{(a) The case $a = b$, and (b) the case $a\not= b$.}
		 	\label{fig:case2}
		 \end{figure}

	  
		 We now assume that $a\not = b$ (see Figure~\ref{fig:case2}(b)). Let $t$ be such that $X[b,t]$ is internally disjoint from $X[b,a]$ and $|bt|\geq r(b) - 2\eta L$. Then, $|at| > r(b) - 2\eta L$. Let $w'$ be such that $X[y,w']$ is internally disjoint from $X[y,w]$ and $|yw'| \geq r(y) - 2\eta L$; $w'$ exists by the definition of $r(y)$ and the fact that $y$ is the core of $T_y$. Then $|ww'| \geq r(y) - 2\eta L > 2r(b)-2\eta L$.  Since $|au| > |zu| \geq 2r(b)$, $X[a,t], X[a,u], X[a,w']$ are three internally node-disjoint path of augmented length strictly larger than $r(b) - 2\eta L$. Thus, $r(a) > r(b)$ by Item (1) of Observation~\ref{obs:branching-rad}.  The following observation, with the fact that $r(a) > r(b)$, implies that $b$ is subsumed by $a$, contradicting that $b\in W$.
		 
		 \begin{observation}\label{obs:Bb-Ba} $B_b\subseteq B_a$.
		 \end{observation}
		 \begin{proof}
		 	Let $t'$ be any node in $B_b\setminus B_a$. Since $r(a) > r(b)$, $a \not\in X[b,t']$. Thus, $X[a,t']$ are internally disjoint from $X[a,u]$ and $X[a,w']$. Since $|at'| \leq 2r(b)$ and $|au|, |aw'| > 2r(b) - 2\eta L$, $r(a) > |at'|$ by the definition of branching radius. But that implies $t' \in B_a$, a contradiction.\QED
		 \end{proof}
		 As we noted above, Observation~\ref{obs:Bb-Ba} completes the proof of Claim~\ref{clm:Duv-length}. \QED
	\end{proof}
	
	We are now finishing the proof of Lemma~\ref{lm:tree-adj-tree-edm}. By Claim~\ref{clm:D-maximal-cut},  there are at most two $D$-maximal balls $B_x, B_y$, each contains one endpoint of $D$  and that $D\subseteq B_x\cup B_y$ because every node in $\mathcal{D}$ must belong to some ball by the definition of $F$. Thus, Lemma~\ref{lm:tree-adj-tree-edm} follows directly from Claim~\ref{clm:Duv-length}.\QED
\end{proof}

Equipped with Lemma~\ref{lm:tree-adj-tree-edm}, in the last step of the construction, we simply augment each remaining tree (of ungrouped nodes) to an adjacent tree of grouped nodes that have maximum augmented diameter. 

\paragraph{Step 4} Let $Y$ be the forest obtained from $X$ by removing all grouped nodes after Step 3. For each tree $T'$ in $Y$, we augment $\adj(T')$ by adding $T'$ and the edge connecting $\adj(T')$ and $T$ to $\adj(T')$.

Step 4 completes the construction of tree clustering. We now show that all claims in Lemma~\ref{lm:tree-clustering} hold; we restate the lemma below.

\treeclustering*
\begin{proof}
 Let $T_x$ be a tree in $F_X$ with core node $x$; the core node of a tree may not be the center as described in Item(3) of Lemma~\ref{lm:tree-clustering}. Observe that Item (2) in Lemma~\ref{lm:tree-clustering} follows directly from the construction, specifically, from the definition of $F$ (Equation~\ref{eq:F}).

\vspace{0.25cm}
\noindent\textbf{Proof of Item (1).~} Let $T^{i}_x$ be $T_x$ after step $i$ for each $i \in \{1,2,3,4\}$; $T^4_x = T_x$. If $T_x$ is formed in Step 3, then $T^1_x = T^2_x = \emptyset$; otherwise, $T^2_x = T^3_x$. If $T^1_x \not= \emptyset$, then $\adm(T^1_x)\geq \gamma L - 2\eta L \geq \gamma L/2$ since $\eta \ll \gamma$. The augmentation in Step 2 increases the augmented diameter of $T^1_x$ by at most $2\gamma L$. This implies $\adm(T^2_x) \leq 5\adm(T^1_x)$. 

Let $k_e$ be the upper bound on the weight of edges in $T_x$. Clearly, $k_e \leq \adm(T^3_x)$ since the weight of every edge is at most the weight of (any) node. By Lemma~\ref{lm:tree-adj-tree-edm}, the augmentation to $T^3_x$ is by a star-like way via edges of $X$, it holds that $\adm(T^4_x)\leq \adm(T^3_x) + 2k_e + 16\adm(T^3_x) \leq 19 \adm(T^3_x)$. Thus, $\adm(T^4_x) \leq 38\gamma L$ if $T_x$ is formed in Step 3; otherwise, $\adm(T^4_x) \leq 19 \adm(T^2_x) \leq 95 \adm(T^1_x) \leq 190 \gamma L$.

\vspace{0.25cm}
\noindent\textbf{Proof of Item (3).~} Let $D$ be the diameter path of $T_x$. Since $x$ is branching, one of the three paths from $x$ to a node, say $t$, of augmented length $|xt|$ is internally node-disjoint from $D$, and that the path from $t$ to a node in $D$ must go through $x$. Let $b$ the node closest to $x$ on $D$; it is possible that $x = b$. Two paths $P_1,P_2$ in Item (3) of Lemma~\ref{lm:tree-clustering} are subpaths of $D$ sharing the same endpoint $b$ such that $P_1\cup P_2 = D$. Let $P_3$ be $T_x[b,t]$. 

	\begin{wrapfigure}{r}{0.30\textwidth}
	\vspace{-20pt}
	\begin{center}
		\includegraphics[width=0.30\textwidth]{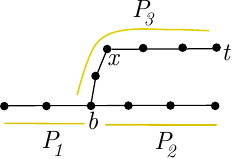}
	\end{center}
	\vspace{-20pt}
\end{wrapfigure}

If $r(x) < 8\eta L$, then $\adm(P_3\setminus \{b\}) \geq w(t) \geq \frac{\eta L}{\beta}$. From the proof of Item (1), we deduce that $\adm(T_x) = \adm(T^4_x) \leq 19\adm(T^3_x) \leq 38 r(x) = O(\eta L)$; note that $T^1_x = T^2_x = \emptyset$ in this case. Thus, $\adm(P_3\setminus \{b\}) = \Omega(\adm(T_x)/\beta)$. 

Otherwise; $r(x)\geq 8\eta L$. Observe that $|bt|\geq |xt| \geq (r(x) - 2\eta L)$. Thus, we have:

\begin{equation*}
	\adm(P_3\setminus \{b\}) \geq (r(x) - 2\eta L) - 2\eta L \geq r(x)/2,
\end{equation*}
since $r(x)\geq 8\eta L$.  From the proof of Item (1), we deduce that $\adm(T_x) = \adm(T^4_x) \leq 19\adm(T^3_x) \leq 38 r(x)$ if $T_x$ is formed in Step 3, and that $\adm(T_x) \leq 95\adm(T^1_x) \leq 190 r(x)$ if $T_x$ is formed in Step 1. Thus, $\adm(T_x) = O(r(x))$ and hence, $\adm(P_3\setminus \{b\}) = \Omega(\adm(T_x)) = \Omega(\adm(T_x)/\beta)$ since $\beta \geq 1$.  

\vspace{0.25cm}
\noindent\textbf{Proof of Item (4).~} If $T_x$ is formed in Step 1, then $\adm(T_x) \geq 2(\gamma L - 2\eta L) -\eta L\geq \gamma L$ since $\eta \ll \gamma$. Item (4) then follows from the following claim.

\begin{claim} \label{clm:deg2-zoomout}
	If $T_x$ is formed in Step 3, then its corresponding node has degree $2$ in $\widehat{T}$.
\end{claim}
\begin{proof}
Suppose otherwise; there exist three nodes $x_1,x_2,x_3$ incident to $T_x$ in $T$. We consider two cases:

\vspace{0.25cm}
\noindent\textbf{Case 1: $T[x,x_1], T[x,x_2], T[x,x_3]$ are pairwise internally disjoint.~}  Let $T_i$ be the subtree of $T\setminus \{x\}$ containing $x_i$, $i \in [1,3]$. Let $y_i\in T_i$ be the furthest node (in augmented distance) from $x$. W.l.o.g, we assume that $|xy_1| \leq \min(|xy_2|, |xy_3|)$. Then $r(x) > |xy_1|$ by Observation~\ref{obs:branching-rad}, and hence $T_1\subseteq B_x$. Since $T_1\subseteq F$ (in Equation~\eqref{eq:F}) and $x_1\not\in T_x$, there must be $y\in W$ such that $x_1\in T_y$. Since $T_1\subseteq B_x$, we conclude that $B_y\subseteq B_x$ and that $r(x) > r(y)$. That implies $y$ is subsumed by $x$, contradicting that $y\in W$.

\vspace{0.25cm}
\noindent\textbf{Case 2: $T[x,x_1], T[x,x_2], T[x,x_3]$ are not pairwise internally disjoint.~} W.l.o.g, we assume that $T[x,x_1]$ and $T[x,x_2]$ shares a node $z\not=x$. We choose $z$ to be the furthest node from $x$ (in augmented distance). For notational convenience, let $x_0 = x$. Clearly $T[z,x_0], T[z,x_1], T[z,x_2]$ are pairwise internally disjoint. Let $T_i$ be the subtree of $T\setminus \{z\}$ containing $x_i$, $i \in [0,2]$. Let $y_i\in T_i$ be the furthest node (in augmented distance) from $z$. W.l.o.g, we assume that $|zy_1| \leq |zy_2|$.

\begin{wrapfigure}{r}{0.30\textwidth}
	\vspace{-25pt}
	\begin{center}
		\includegraphics[width=0.30\textwidth]{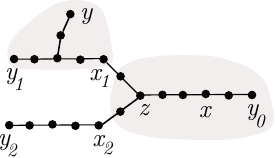}
	\end{center}
	\vspace{-25pt}
\end{wrapfigure}

If $|zy_0| \geq |zy_1|$, then $r(z) > |zy_1|$ and hence $T_1\subseteq B_z$. This implies $T_1\subseteq F$, and since $x_1\not\in T_x$, there must exist $y\in W$ such   that  $x_1 \in B_y$. By the same argument in Case 1, we deduce that $B_y\subseteq B_z$ and $r(z) > r(y)$. This means $y$ is subsumed by $z$, contradicting that $y\in W$.

Thus, we can assume that $|zy_0|\leq |zy_1|$ and hence $r(z) > |zy_0|$. This implies $T_0\subseteq B_z$ and hence, $B_x\subseteq  B_z$ and $r(z) > r(x)$. That is, $x$ is subsumed by $z$, contradicting that $x\in W$. \QED
\end{proof}
Claim~\ref{clm:deg2-zoomout} completes the proof of Lemma~\ref{lm:tree-clustering}. \QED
\end{proof}

\paragraph{Acknowledgements.}
The second-named author is grateful to Michael Elkin, Ofer Neiman and Michiel Smid for fruitful discussions.
Both authors are indebted to the anonymous referees for their thorough and helpful comments, 
which greatly helped us to improve the presentation of the paper.

\bibliographystyle{plain}
\bibliography{spanner}

\pagebreak
\appendix
\section{A Simple Upper Bound on Spherical Code}\label{app:Code}

\begin{lemma}\label{lm:spherical-code-up} Let $C$ be a $(d,\theta)$-spherical code. Then $|C| = O_d((\theta)^{d-1})$ when $\theta \leq 1/2$.
\end{lemma}
\begin{proof} We use the volume argument to derive an upper bound on $C$. By construction, the surface of the hyperspherical cap, say $C_{d,\theta}$, centered at $x$ of with cap angle $\theta$ does not contain any point of $C$. The surface area (see Li~\cite{Li11}) of $C_{d,\theta}$, denoted by $A_d(C_{d,\theta})$, is:
	\begin{equation*}
		A_d(C_{d,\theta}) = \frac{1}{2}A_d(1)I_{\sin^2\theta}((d-1)/2,1/2)
	\end{equation*}
	where $A_d(1)$ is the surface area of $\mathbb{S}_{d}$ and $I_{x}(a,b)$ is the regularized incomplete beta function. By definition, $I_{x}(a,b) = \frac{B_x((d-1)/2,1/2)}{B((d-1)/2,1/2)} $ where $B(a,b)$ ($B_x(a,b))$) is the (incomplete) beta function. By taking Taylor expansion at $0$, $B_x(a,1/2) = x^a(1/a + \frac{1/2x}{a} + O(x^2)/a) \geq x^{a}/a$. Thus, we have
	\begin{equation*}
		A_d(C_{d,\theta}) = \frac{A_d(1)}{B((d-1)/2,1/2)}\frac{(\sin\theta)^{d-1}}{d}
		\geq \frac{A_d(1)}{B((d-1)/2,1/2)}\frac{\theta^{d-1}}{d2^d}~.
	\end{equation*}
Here we use $\sin(\theta) \geq \theta/2$ when $\theta \leq 1/2$.  Since the total surface area of the unit sphere is $A_d(1)$, we have:
	\begin{equation*}
		|C| \leq \frac{A_d(1)}{A_d(C_{d,\theta})} = d 2^d B((d-1)/2,1/2)(\theta)^{d-1} = O_d((\theta)^{d-1}), 
	\end{equation*}
as desired. \QED
\end{proof}

\end{document}